\documentclass[11pt]{article}
\usepackage[margin=1in]{geometry}
\usepackage{amsmath, amssymb, amsthm, mathtools}
\usepackage{bbm}
\usepackage{hyperref}
\usepackage{cleveref}
\usepackage{braket}
\usepackage{tikz-cd}
\usetikzlibrary{arrows.meta, positioning}
\usepackage[numbers,square]{natbib}
\usepackage{pifont}
\usepackage{enumitem}
\theoremstyle{plain}

\newtheorem{theorem}{Theorem}[section]
\newtheorem{proposition}[theorem]{Proposition}
\newtheorem{lemma}[theorem]{Lemma}
\newtheorem{corollary}[theorem]{Corollary}

\newtheorem{claim}[theorem]{Claim}

\usepackage{thmtools}
\usepackage{refcount}
\usepackage[normalem]{ulem}
\newcounter{formalstmt}
\newenvironment{formalas}[1]{%
  \stepcounter{formalstmt}%
  \renewcommand{\themaintheorem}{\getrefnumber{#1}}%
  \renewcommand{\themainproposition}{\getrefnumber{#1}}%
  \renewcommand{\themaincorollary}{\getrefnumber{#1}}%
  }%
  {\addtocounter{maintheorem}{-1}}

\declaretheoremstyle[
    headfont=\bfseries,
    notefont=\bfseries, 
    bodyfont=\normalfont,
    postheadspace=0.5em,
    spaceabove=6pt,
    spacebelow=18pt,
    qed=$\diamond$ 
]{defstyle}

\declaretheorem[
    style=defstyle,
    name=Definition,
    numberlike=theorem
]{definition}

\declaretheoremstyle[
    headfont=\bfseries,
    notefont=\bfseries, 
    bodyfont=\normalfont,
    postheadspace=0.5em,
    spaceabove=6pt,
    spacebelow=18pt,
    qed=$\diamond$ 
]{example}

\declaretheorem[
    style=example,
    name=Example,
    numberlike=theorem
]{example}

\newtheoremstyle{maintheoremstyle}
  {\topsep}
  {18pt}
  {\itshape}
  {}
  {\bfseries}
  {.}
  {0.5em}
  {}

\theoremstyle{maintheoremstyle}
\newtheorem{maintheorem}{Theorem}
\newtheorem{mainproposition}[maintheorem]{Proposition}
\newtheorem{maincorollary}[maintheorem]{Corollary}

\newtheorem{informaltheorem}[maintheorem]{Theorem (informal)}
\newtheorem{informalproposition}[maintheorem]{Proposition (informal)}
\newtheorem{informalcorollary}[maintheorem]{Corollary (informal)}

\usepackage{xcolor}

\newcommand{\physrate}{\eta}

\newcommand{\pcth}{p_{\mathrm c}}

\newcommand{\effrate}{p_{\mathrm{eff}}}

\newcommand{\memerr}{\epsilon_{\mathrm{mem}}}
\usepackage[affil-it]{authblk}

\theoremstyle{remark}
\newtheorem{remark}[theorem]{Remark}

\begin{document}

\title{Autonomous Quantum Memories from Classical Stochastic Processes}

\author[1]{Nadav Drechsler}
\author[1]{Yuval Oreg}
\author[2]{Dorit Aharonov}
\affil[1]{Department of Condensed Matter Physics, Weizmann Institute of Science, Rehovot, Israel}
\affil[2]{School of Computer Science and Engineering, The Hebrew University of Jerusalem, Jerusalem, Israel}
\date{\today}

\maketitle

\begin{abstract}

The study of quantum memories out of thermal equilibrium is motivated both by the need for memories that withstand realistic noise in fault-tolerant quantum computers and by the fundamental question of when quantum memory can emerge in autonomous many-body systems. We establish a general equivalence between CSS-based quantum memories subject to Pauli noise and classical memories via two associated probabilistic processes. In the asynchronous setting, where the quantum dynamics is generated by a Lindbladian, we identify a broad class of syndrome-based Lindbladians for which the quantum memory problem is equivalent to memory in continuous-time classical Markov dynamics. In the synchronous setting, we introduce syndrome-based probabilistic cellular automata (SPCAs) and establish a one-to-one correspondence between their classical memories and CSS-based quantum memories under local quantum channels. This framework provides a systematic route for deriving and constructing quantum memories from classical probabilistic processes, and recovers known results, including the fact that the 4D toric code is a quantum memory in the presence of thermal noise. 

As an application, we give the first rigorous proof that the 4D toric code with Toom’s rule is a quantum memory in the presence of local Pauli noise, provided the error is below a constant threshold. We prove this both for the corresponding synchronous, probabilistic cellular automata (PCA) type quantum dynamics, as well as for a Floquet Lindbladian. We further show that more generally, classical memory in an SPCA can be lifted to a quantum memory under Floquet Lindbladian dynamics; whether this can be done with a time-independent Lindbladian remains open. Our results provide a general framework for relating classical and quantum memory, shed light on the conditions for fully autonomous quantum memories, and may guide the construction of provable autonomous quantum memories in low dimensions.

\end{abstract}

\section{Introduction}\label{sec:intro}

\subsection{Motivation and context: thermal self-correcting quantum memories}

The problem of noise leading to the degradation of quantum information is almost as old as the field of 
quantum computation itself. 
The standard solution of quantum fault tolerance \cite{shor_fault-tolerant_1996,aharonov_fault-tolerant_1997,kitaev_quantum_1997,knill_resilient_1998,aliferis_quantum_2006} makes use of active quantum error correction which suppresses  environmental noise and preserves encoded quantum information. Many fault tolerant constructions rely on measurements of multi-qubit parity operators in the $X$ and $Z$ basis, combined with the processing of syndrome data via classical decoding algorithms, and the execution of reactive Pauli corrections~\cite{terhal_quantum_2015}.
These constructions involve an external controller to perform the classical calculations, and also to control the choice of the operations. Crucially, these choices strongly depend on the exact time step as well as the exact location in space.  

Motivated both by practical realizations as well as fundamental physics questions, a major question arose - can the retainment of quantum information for long times, be achieved in an autonomous way: without an external controller, without involving interaction with a classical computer, and in a translationally invariant (TI) and time independent way. The internal dynamics would suppress noise passively, with no measurement, decoding, or reactive correction at all~\cite{brown_quantum_2016,bombin_self-correcting_2012}. 
Such systems are far closer to systems that are the subject of study in condensed matter physics; the physics of quantum memory should thus be understood using such models. The same framework could be useful for the study of the stability of these models, a key property of noise-robust phases of matter~\cite{coser_classification_2019,rakovszky_defining_2024}. Further, such translationally invariant and time independent more physically motivated models are likely to be more amenable to mathematical and physics analysis tools, unlike the more controlled and engineered  quantum fault tolerant constructions. 
We call this direction the study of {\it autonomous quantum information processing devices.} Here we are interested in autonomous quantum {\it memories}, but one might consider universal quantum computation performed by such systems. 

Research in this area has focused over the years mainly on systems that exhibit quantum memory in thermal equilibrium, under the name {\it self correcting quantum memory}~\cite{terhal_quantum_2015}; to avoid confusion, we will often refer here to such systems as {\it thermal self correcting quantum memories}. A thermal self-correcting quantum memory is defined to be a many body quantum system, whose quantum particles are often (though not always) set on a low dimensional lattice. At $0$ temperature, the dynamics is governed by a Hamiltonian, so that information can be encoded in its degenerate ground space, which serves as the {\it code space}. At non-zero temperature the system interacts with a {\it thermal bath} that induces thermal noise; no other noise is assumed.  The idea of thermal self correction is that
at sufficiently low temperatures, the bath drives the system toward its Gibbs state, which is concentrated near the ground space, through local relaxation processes satisfying quantum detailed balance; in the weak-coupling limit this is the Davies generator~\cite{davies_markovian_1974}. Protection then rests on energy barrier between code words; when those barriers are high enough, this relaxation process makes it overwhelmingly unlikely for the original codeword
to drift to a different one. In such systems
the thermal noise processes serve a dual role - their local transitions, obeying detailed balance, constitute both the noise process causing errors, as well as the source and way for the system to correct the effect of these errors.  
The four-dimensional toric code provides a celebrated example: Alicki, N. Horodecki, P. Horodecki and R. Horodecki famously proved that the model described by a local Lindbladian comprising of the Hamiltonian terms corresponding to the 4D toric code weakly coupled to a thermal bath exhibits an exponential quantum memory lifetime 
\cite{alicki_thermal_2010}. 

Unfortunately, quantum thermal self-correction turns out to be  constrained by dimensional limitations: Bravyi and Terhal \cite{bravyi_no-go_2009} showed that macroscopic energy barriers do not exist in two-dimensional geometrically local stabilizer codes. Establishing a construction in three dimensions has remained a long-standing challenge for a long time. Earlier three-dimensional models, such as Haah's cubic code~\cite{haah_local_2011}, have only logarithmic barriers~\cite{bravyi_energy_2011}, leading to memory times that grow with system size only up to a temperature-dependent cutoff size~\cite{bravyi_quantum_2013}; Michnicki's welded code reaches a power-law barrier without translation invariance~\cite{michnicki_3-d_2014}. 
.
In a recent breakthrough result, Balasubramanian, Davydova, and Lin \cite{balasubramanian_passive_2026-1} achieved a geometrically local 3D Hamiltonian with exponential-time thermal memory, tremendous progress along this long standing open question. However their constructed Hamiltonian is inherently non-translationally invariant and therefore does not lie in the autonomous setting. It remains open whether autonomous 3D thermal quantum memory exists. 

\subsection{Limitations on thermal equilibrium quantum memories}
From the condensed matter physics perspective, thermal equilibrium quantum memories constitute a very appealing model, studied within the established area of quantum systems in thermal equilibrium. However,
restricting the dynamics of the systems to correspond to a Hamiltonian coupled to a thermal bath, implies that they suffer from important limitations
as autonomous quantum memories.

First, as mentioned above, the geometrical dimensionality of thermal equilibrium quantum memories must be at least $3$, by the no-go result of Bravyi and Terhal \cite{bravyi_no-go_2009}. The fact that thermal equilibrium quantum self correction is impossible in lower dimension makes their practical implementation  challenging.  We note that many current quantum computation implementations are 
restricted to a $2D$ plane ~\cite{kim_evidence_2023,google_quantum_ai_and_collaborators_quantum_2025}. 
This geometrical bottleneck for the thermal equilibrium setting is not fundamentally quantum: it already appears, in a sharper form, at the classical level.
One-dimensional classical systems governed by local, reversible
stochastic processes (respecting detailed balance) cannot retain information at positive noise rates, due
to the unconstrained proliferation of domain
walls~\cite{taati_reversible_2021}. Breaking detailed balance circumvents this
limitation: Gács demonstrated that a one-dimensional probabilistic
cellular automaton with irreversible update rules can reliably store
classical information against general local noise below a positive threshold~\cite{gacs_reliable_2001}.
In two dimensions, Toom's rule achieves an even stronger result by
the same mechanism: a local, translationally invariant, irreversible
majority-vote update that erodes islands of error and preserves a
stored bit for an exponential time below a finite noise
threshold~\cite{toom_stable_1980}. 
In the quantum setting, while Bravyi and Terhal's no go result forbids thermal equilibrium quantum memories in $2D$ or $1D$ systems, 
the fault-tolerance theorems of Aharonov and
Ben-Or~\cite{aharonov_fault-tolerant_2008} and Gottesman~\cite{gottesman_fault-tolerant_2000} imply that quantum memory at constant noise rate is achievable even in one spatial
dimension, as long as one is allowed to use a controller,  and repeatedly reset ancilla qubits. 
We note that the fault tolerant constructions completely break 
spatial and time translation invariance and are thus not examples of quantum autonomous systems. However 
given the rich set of tools in the quantum complexity literature for achieving translational invariance~\cite{gacs_reliable_2001,gottesman_quantum_2013,cubitt_undecidability_2015,bausch_complexity_2017}
one is naturally led to suspect \cite{aharonov_physics_2023} that it might be possible to carry over the low dimensional fault tolerant error correction results to the autonomous memory setting. 
Indeed, a recent breakthrough result \cite{dunnweber_quantum_2026} does exactly this, leading to a beautiful construction of autonomous quantum memory in $2D$. This proves that the more general setting of autonomous quantum memories evades the restrictions of the thermal equilibrium
settings.

From a practical point of view, there is an even more pressing problem, that casts strong doubts regarding the practical relevance of the thermal noise assumption. In all known thermal self-correcting constructions, the Hamiltonian that governs the system is $k$-local with $k=4$ or more. It is thus not expected that in physical implementations, such a Hamiltonian could be the {\it physical} Hamiltonian governing the system. 
Instead, it is expected that in physics implementations, the self correcting Hamiltonian will be  an {\it abstraction}, namely the {\it effective} Hamiltonian resulting from an underlying hardware physics governed by a different Hamiltonian or Lindbladian. If this is indeed the case, there is no longer a good reason to believe that the dominant noise in the system is thermal with respect to the {\it effective} Hamiltonian. 
 (in fact, in engineered settings, many noise processes often come into play, and the thermal assumption will probably break down even with respect to the physical hardware Hamiltonians). 
Importantly, once dissipative non-thermal noise is present (e.g., depolarizing noise) quantum systems
governed by a unitary ideal dynamics, namely by a Hamiltonian, cannot exhibit memory for more than logarithmic time~\cite{aharonov_limitations_1996,pastawski_how_2009}. Handling realistic noise models thus requires lifting the requirement that the ideal dynamics be described by a Hamiltonian. 

Last but not least, from a condensed matter physics point of view, focusing on thermal equilibrium systems 
seems to unnecessarily restrict the rich set of physical systems that might be capable of exhibiting autonomous quantum memory and possibly noise robust phases of matter. Removing the entropy accumulating in the system due to noise, by reset gates or other engineered means, is long known to be necessary for quantum fault tolerance protecting against general local noise \cite{aharonov_limitations_1996, pastawski_how_2009}.
The thermal equilibrium requirement, while a significant simplification, does not allow incorporating the physics analogue of this crucial ingredient of engineered entropy removal.

The above are all strong motivations to turn the focus of study of quantum self correcting memories beyond what had been the central focus of research in the area for many years, namely thermal equilibrium quantum self correction, to Lindbladians based quantum systems (that in general do not need to satisfy detailed balance with respect to some Hamiltonian), or, in the synchronous case, to local quantum channels that are not necessarily unitary (those can be viewed as non-unitary quantum cellular automata). 
We turn to such autonomous quantum systems next. 

\subsection{Autonomous quantum memories}

\subsubsection{Fully Autonomous quantum memories - time-independent Lindbladian}
The possibility of scalable autonomous quantum memories was introduced by Pastawski, Clemente, and Cirac~\cite{pastawski_quantum_2011}, who proposed to use engineered local dissipation to autonomously protect a topological quantum memory. They constructed a four-dimensional toric-code model with local dissipative dynamics inspired by Toom's rule, and numerically found a regime in which the encoded information has an exponentially increasing lifetime with system size. This work established engineered dissipation as a route toward autonomous self-correction, complementary to passive thermal self-correction. 
Pastawski, Clemente and Cirac \cite{pastawski_quantum_2011} 
modeled an autonomous quantum information process
using a standard Master equation: 
\begin{equation}
\frac{d\rho}{dt}=\mathcal{L}_{diss}(\rho) + \mathcal{L}_{noise}(\rho)
\end{equation}
with both Lindbladians $\mathcal{L}_{diss}$ and $ \mathcal{L}_{noise}$
consisting of a sum of local terms, independent of location and time;  $\mathcal{L}_{diss}$ is supposed to capture the engineered dissipative part of the Lindbladian, on which we have control, while $ \mathcal{L}_{noise}$ captures the uncontrolled part, namely the noise. 
There is no external controller, and the system is translationally invariant both in time and space, justifying the term {\it autonomous}. 

We note that thermal self-correction is a special case, with coupling to a cold bath locally favoring kicks that lower the energy, respecting detailed balance (We show this to be a special case below (Remark~\ref{rem:thermal})). 
Indeed, Alicki's proof that the 4D Toric code is a thermal equilibrium quantum memory \cite{alicki_thermal_2010} can be cast within this model. 

The autonomous correction mechanism can be pictured as a collection of abstract ratchets: a fixed, local rule recognizes which configurations are locally correct and lets a noise-driven kick take effect only when it pushes the state toward one of them, with no external controller measuring the state and deciding on a correction. 
We note that the dynamics generated by a 
Lindbladian can be viewed as
an {\it asynchronous}, stochastic process; 
this is to be contrasted with the 
synchronous setting to be discussed soon.

 We believe that the study of quantum autonomous information processing systems in the more general, Lindbladian based setting,  
is a more faithful model within which the physics of quantum self correction, and more generally, quantum fault tolerance, in autonomous physical systems should be understood~\cite{aharonov_physics_2023}.
This direction is tightly related to the emerging field of active quantum matter, to be discussed in the related work Subsection below~\ref{subsec:related-work}.  

The intuition behind the proposal for autonomous quantum memory by Pastawski et al. \cite{pastawski_quantum_2011} builds on 
two known results: 
first, that the 4D Toric code is already known by Alicki et al. to support
exponential memory lifetimes under thermal, detailed-balance-respecting
dynamics~\cite{alicki_thermal_2010}. 
Second, that a discrete-time probabilistic
cellular automaton based on Toom’s rule suppresses local faults in the 4D toric code below a finite
noise threshold, as shown by Ahn~\cite[\S 7]{ahn_extending_2004}. Pastawski et al. introduced engineered dissipative terms to the Lindbladian, that implement Ahn's $4D$ Toom update rule.
Numerical evidence provided by Pastawski, Clemente, and Cirac~\cite{pastawski_quantum_2011}
supported memory retention under the resulting Lindblad
dynamics, however a rigorous proof that the system 
indeed exhibits quantum memory remained open.

\subsection{PCA-type autonomous quantum memories}
While the Lindbladian framework is physically the more natural one (Subsection~\ref{sec:WhyLindbladians}), it is also natural to study autonomous quantum memories in the \emph{synchronous} setting, whose analysis is often much simpler, both in the quantum and in the classical case. We refer to such models as \emph{PCA-type}, where PCA stands for probabilistic cellular automata: they are translationally invariant, apply the same local update rule at every round, and are geometrically local, but assume a global clock coordinating the updates. The word ``type'' reflects the many variants of this setting in the literature.

One way to define a synchronous non-unitary model is roughly as follows (see Subsection~\ref{subsec:sync-setup} for a rigorous definition).  
At every time step a local, not-necessarily-unitary, translationally
invariant local quantum channel is applied. In between every two consecutive time steps, a local noise channel is applied. 
This model parallels the notion of probabilistic noisy cellular automata in the classical world; related variants of non-unitary quantum cellular automata appear in~\cite{farrelly_review_2020,wintermantel_unitary_2020,nigmatullin_directed_2021,guedes_quantum_2024}.

Lake~\cite{lake_local_2025} gave a PCA-type autonomous memory based on the 2D surface code with a provable threshold,
but with a caveat: it relies on $O(\mathrm{polylog}\,L)$ noiseless
classical bits per qubit (he also discussed a possible strengthening using a time independent Lindbladian). In a recent work, 
Dünnweber, Styliaris and Trivedi \cite{dunnweber_quantum_2026}
then provided a construction of PCA-type quantum autonomous 
memory in $2D$, making major progress on the question mentioned above, 
of reducing dimensionality for autonomous memory in quantum devices. The construction of ~\cite{dunnweber_quantum_2026} builds on the fault-tolerant construction of Aharonov and Ben-Or~\cite{aharonov_fault-tolerant_2008} and turns its hierarchical structure and system-size-dependent schedule into translationally invariant dynamics, using the self-simulation techniques of G\'acs~\cite{gacs_self-correcting_1989,gacs_reliable_2001}. 
This result confirms the intuition~\cite{aharonov_physics_2023} suggested by early works providing 
quantum fault tolerance in low dimensions
\cite{aharonov_fault-tolerant_2008, gottesman_fault-tolerant_2000}, and shows that indeed the more general setting of autonomous quantum memories can evade restrictions of the thermal equilibrium settings.

We mention that in both the classical and the quantum cases, deducing asynchronous results from the synchronous setting is notoriously difficult. 
In particular, a synchronous PCA-type memory should not be viewed as automatically implying a fully autonomous memory; the passage to asynchronous dynamics generally requires new synchronization or stability arguments, adapted to the specific system.

\subsubsection{Lindbladians or channels}\label{sec:WhyLindbladians}

In the quantum case, the synchronous and asynchronous formalisms differ primarily in their temporal organization. In the synchronous setting, a fixed quantum channel is applied in discrete, globally coordinated rounds. By contrast, a time-independent Lindbladian generates continuous-time evolution without externally imposed global update rounds and, in a quantum-trajectory representation, can be viewed as describing local jumps occurring at stochastic times.

A synchronous PCA-type model thus relies on a global clock, whereas a time-independent Lindbladian describes a fully autonomous system (that of course may be engineered in advance to possess the required properties). 
From a physics point of view, working with Lindbladians
is desirable, due to 
physical faithfulness: an autonomous system is not expected to come
equipped with global synchronization. 
Furthermore, even if a global clock is assumed, it is unclear how to implement a PCA-like dynamics in the lab. A Lindbladian provides a physics embedding of the discrete update rule in continuous-time local evolution. 
Hence, the holy grail is to design Lindbladian based autonomous quantum
memories.

To the best of our knowledge, 
{\it outside of the thermal-equilibrium setting}, there is no known 
rigorous proof of a fully autonomous quantum memory -- i.e., one that is described by a time independent, translationally invariant, local Lindbladian -- in any geometrical dimension.  
Results along these lines are known only for Lindbladian constructions that relax one or more of the requirements imposed here. Those include the results of ~\cite{lake_local_2025,lake_fast_2025} 
that assumed 
a noiseless classical processing layer; the results of  ~\cite{chirame_stabilizing_2024} that assumed heralded noise, and the results of ~\cite{fujii_measurement-free_2014} in which the local terms depend on parameters growing with the system size. 
Dünnweber, Styliaris, and Trivedi~\cite{dunnweber_quantum_2026} recently outlined a continuous-time Lindbladian realization of their two-dimensional PCA-type autonomous quantum memory construction, but leave a formal extension of their memory result to the continuous-time setting for future work.

\subsection{Informal statement of main results}
 \subsubsection{Quantum-classical memory equivalences}
Our first main contribution is to provide a general framework within which autonomous quantum memories can be studied and constructed beyond the thermal equilibrium self correction setting.
To this end we first provide reductions, both in the PCA-type and in the fully autonomous type, from the analysis of memories in the quantum domain, to that of Markov processes in the classical domain. This quantum to classical reduction enables solving the quantum memory problem by turning to the easier associated classical memory problem.  

We begin with the setting most directly motivated by physics, namely the fully autonomous quantum-memory setting.
We provide a general equivalence between fully autonomous quantum memory (described by a time independent Lindbladian), 
and classical memory of two induced continuous-time stochastic processes.

We consider here a logical qubit encoded by a CSS code \cite{calderbank_good_1996,steane_multiple-particle_1996}. 
The Lindbladian describes an autonomous 
error correction process: 
the engineered correction dynamics are implemented by syndrome-conditioned dissipative jumps that occur with no external controller.
Our equivalence holds for a wide class of Lindbladians, 
that we call 
\emph{syndrome-based Lindbladians} (Definition~\ref{def:syndrome-based-lindbladian}): in each of the two CSS sectors ($X,Z$), the
total rate at which a qubit is flipped depends only on that sector's syndrome. 
We identify a strong algebraic structure in the evolution of observables propagating under such  Lindbladians, that 
guarantees that each sector evolves as
a classical continuous-time Markov process on its error
configurations (bit strings of length $n$);  we call these continuous time Markov processes the \emph{induced sector processes}. We have: 

\begin{informaltheorem}[Fully autonomous quantum--classical memory equivalence]\label{thm:main-async-informal}
Consider a CSS code evolving under a syndrome-based Lindbladian.
The evolution induces two classical continuous-time Markov processes,
one describing the $X$-error sector and one describing the
$Z$-error sector. Fix a time $t$. Suppose that at every readout time up to $t$, the probability that the $X$-sector process has
changed its decoded logical bit is at most $q_X$, and the
corresponding probability in the $Z$ sector is at most $q_Z$.
Then the encoded qubit is a quantum $(q_X+q_Z)$-memory up to time~$t$.

Conversely, if the encoded qubit is a quantum $\epsilon$-memory up
to time $t$, then in each of the two induced classical sector
processes the probability of changing the decoded logical bit is at
most $\epsilon$, at every readout time up to $t$.
\end{informaltheorem}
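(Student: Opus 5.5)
The plan is to work in the Heisenberg picture with the Pauli basis. Write $\rho(t)=e^{t(\mathcal{L}_{diss}+\mathcal{L}_{noise})}(\rho_0)$ for an initial logical state $\rho_0$ in the code space.

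\textbf{Step 1: the two sector processes.} The first step is the algebraic lemma behind Definition~\ref{def:syndrome-based-lindbladian}. Conjugate by a fixed Pauli error $E=X^{a}Z^{b}$ and expand the state in the CSS error basis $\{X^aZ^b\ket{\bar\psi}\}$, modulo stabilizers. We want to show that the action of the syndrome-based Lindbladian on these diagonal error components is a classical generator $Q_X\oplus Q_Z$, with $Q_X$ acting on $a\in\{0,1\}^n$ and $Q_Z$ acting on $b$. Since the flip rates in each sector depend only on that sector's syndrome, the jump terms send $X^aZ^b\rho_0 Z^bX^a$ to a convex mixture of such terms with classical rates. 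We will also check that coherences between different error classes either decay or never arise from a diagonal initial error distribution.

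The resulting factorization is the key structural statement:
\[
\rho(t)=\sum_{a,b}P^X_t(a)\,P^Z_t(b)\,X^aZ^b\rho_0 Z^bX^a .
\]
Because each sector's rates do not depend on the other sector's configuration, the joint Markov process is a product. The identification uses two facts: $X$-type jumps commute with $Z$-syndromes up to sign, and the syndrome projectors are diagonal in this basis.

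\textbf{Step 2: the direct implication.} Let $D_X$ and $D_Z$ be the sector decoders used at readout. Recovery consists of applying the decoder corrections $X^{D_X(a)}Z^{D_Z(b)}$, so the final state is $\bar X^{\ell_X}\bar Z^{\ell_Z}\rho_0\bar Z^{\ell_Z}\bar X^{\ell_X}$, where $\ell_X,\ell_Z$ are the logical bits flipped in each sector. The recovered channel is then a logical Pauli channel with error probability at most
\[
\Pr[\ell_X\neq0\ \vee\ \ell_Z\neq0]\le q_X+q_Z ,
\]
by the union bound. Since the fidelity or trace distance of a Pauli channel from the identity is bounded by its total non-identity weight, this gives a quantum $(q_X+q_Z)$-memory at each readout time $\le t$.

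\textbf{Step 3: the converse.} Take as inputs the logical eigenstates $\ket{\bar 0}$ and $\ket{\bar +}$. On $\ket{\bar 0}$, the logical $Z$ measurement after recovery is affected only by the $X$ sector, so the error probability equals $\Pr[\ell_X=1]$. By the memory assumption, its deviation is at most $\epsilon$. The $Z$ sector is handled symmetrically using $\ket{\bar+}$. Here we must use whatever definition of $\epsilon$-memory the paper adopts, for instance a worst-case trace distance for the best decoder. We will note that the optimal quantum recovery for such a product Pauli mixture can be taken to be a product of sector decoders, for example via maximum likelihood in each sector. This is legitimate because the classical memory statement only requires the existence of a good decoder.

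\textbf{Main obstacle.} The hard part is Step~1. We must rigorously show that off-diagonal terms, meaning coherences between different error cosets generated by the Lindbladian's Hamiltonian part or by jump operators with non-Pauli Kraus structure, do not contribute. We must also show that the two sectors decouple exactly, including for $Y$-type noise. I would first try to prove that the space of operators spanned by $\{X^aZ^b\rho_0Z^bX^a\}$ is invariant under the Lindbladian and that the generator restricted to it equals $Q_X\otimes I+I\otimes Q_Z$. I would treat $Y$ errors as simultaneous flips whose rates still factor, by the syndrome-based definition.
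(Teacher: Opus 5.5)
Your Step~1 ``key structural statement'' $\rho(t)=\sum_{a,b}P^X_t(a)P^Z_t(b)\,X^aZ^b\rho_0Z^bX^a$ is false. The check you plan for it, that the generator is the Kronecker sum $Q_X\otimes I+I\otimes Q_Z$, would therefore fail.

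\textbf{Why the product claim fails.} A $Y_i$ jump moves $(a,b)\mapsto(a\oplus\hat e_i,\,b\oplus\hat e_i)$ in a single transition. Such jumps are already present in depolarizing noise as $\mathcal D[Y_i]$, and no Kronecker-sum generator contains this transition. Under pure $Y$ noise started at $(\mathbf 0,\mathbf 0)$ one has $a=b$ at all times, so the joint law is as far from a product as possible. Moreover, Definition~\ref{def:syndrome-based-lindbladian} constrains only the \emph{totals} $\Gamma_i^{(X)}(t)\in\mathcal A_X$ and $\Gamma_i^{(Z)}(t)\in\mathcal A_Z$. Individual syndrome functions lie in $\mathcal A$ and may read both syndromes. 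For example, with projectors $\Pi^X\in\mathcal A_X$ and $\Pi^Z\in\mathcal A_Z$, take a $Y_i$ jump with $f^\dagger f=\Pi^X\Pi^Z$, a $Z_i$ jump with $\Pi^X(\mathbb I-\Pi^Z)$, and an $X_i$ jump with $(\mathbb I-\Pi^X)\Pi^Z$. This gives $\Gamma_i^{(X)}=\Pi^X$ and $\Gamma_i^{(Z)}=\Pi^Z$, yet the joint dynamics is correlated. So ``each sector's rates do not depend on the other sector's configuration'' holds only for the marginal flip rates, and it does not imply independence. The paper explicitly notes that closure is per-sector and that it never uses the joint statistics of the two sector processes. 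Treating $Y$ errors as simultaneous flips ``whose rates still factor'' does not work.

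\textbf{What you actually need, which is true.} The mixture part of Step~1 is easier than you fear. Since $H(t)$ and every $f$ lie in $\mathcal A$, and $\sigma_{a,b}:=X^aZ^b\rho_0Z^bX^a$ lives in a single joint syndrome sector $s=(H_Xb,H_Za)$ on which $f$ acts as a scalar $f(s)$, two things hold. The Hamiltonian term annihilates $\sigma_{a,b}$, and $\mathcal D[P_if](\sigma_{a,b})=|f(s)|^2\,(P_i\sigma_{a,b}P_i-\sigma_{a,b})$. Hence no coherences ever arise, and $\rho(t)=\sum_{a,b}P_t(a,b)\,\sigma_{a,b}$ for a \emph{joint} Markov chain on $(a,b)$; no rate condition is needed for this.

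The rate condition enters only to make each marginal Markov. From $(a,b)$, the total rate into the block $\{b'=b\oplus\hat e_i\}$ is $\Gamma_i^{(X)}(t)(H_Xb)$, independent of $a$. So the chain is strongly lumpable, and its $b$-marginal is exactly the induced $X$-sector process of Definition~\ref{def:ctmc}; symmetrically for $a$.

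\textbf{Consequences for Steps 2 and 3.} Your Step~2 union bound never used independence. With lumpability replacing the product claim, it gives $F_{\min}\ge p_I\ge 1-q_X-q_Z$. This is a valid Schr\"odinger-picture alternative to the paper's route, which uses the Heisenberg intertwining $\Phi_X\circ\mathcal W^*(t,0)=M^{(X)}(t,0)\circ\Phi_X$ on $\mathcal V_X$ plus logical-Pauli covariance. Your route even determines $p_Y$ via the joint chain, which the paper's sector observables do not.

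Two smaller points. For $k>1$, the residual normalizer element left after decoding may also act on the other logical qubits; this is harmless only after $\operatorname{Tr}_{\mathrm{rest}}$. In Step~3, drop the ``best decoder'' discussion. The paper's $F_{\min}$ is computed with the same fixed readout decoders that define $b_\sigma$, so evaluating the fidelity on $Z_Q$- and $X_Q$-eigenstates gives $\Pr[b_\sigma=1]\le\epsilon$ directly; this is the paper's bound $F_{\min}\le(1+\lambda_\sigma)/2$. Your claim that optimal recovery factorizes into sector decoders is false under $Y$-correlated noise. And if the quantum decoder were optimized, the converse for the fixed classical readout would not follow.
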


The formal statement is Theorem~\ref{thm:main-async-formal} in Section \ref{sec:results}.  Thus,
within the class of syndrome-based Lindbladian dynamics, quantum
memory of the designated logical qubit is equivalent to memory in the
two associated continuous time classical processes.
This equivalence reduces the problem of constructing 
and analyzing autonomous quantum memories in CSS based systems, to proving memory in the induced classical continuous-time processes, which is potentially significantly easier.  

This result generalizes and formalizes in the Lindbladian setting a long assumed common intuition in the quantum computation community, 
regarding the "decoupling" to $X$ and $Z$ sectors in CSS based physical evolutions, 
inspired for example by Alicki et al.\ thermal memory result \cite{alicki_thermal_2010} and by the well known fact that in the ideal noiseless cases, CSS error correction can be done separately for the $X$ and $Z$ errors \cite{calderbank_good_1996,sarma_gauge_2013}. 
A general result along these lines has not, to our knowledge, previously been stated and proved with its precise assumptions and scope of applicability clearly identified. The central notion of syndrome-based Lindbladians that we introduce, and the algebraic structure that we prove for the dynamics with respect to such Lindbladians, enable us to formulate and prove such a general result. The required definitions and the proof are provided in Sections~\ref{sec:setup},\ref{sec:async}. 
As an example of the applicability of this result we show in Remark~\ref{rem:alicki}, that Alicki et al.'s result \cite{alicki_thermal_2010} of quantum thermal memory for the 4D toric code
follows as a direct application

However, for the purposes of constructing new quantum memory constructions, a reduction to classical asynchronous processes might be problematic, since 
it is often the case that 
the analysis of asynchronous classical systems could be 
extremely challenging. We thus 
provide a synchronous, PCA-type variant of the equivalence result. To this end we consider PCA-type autonomous quantum memories. Such a system is defined by a local translationally invariant quantum channel iterated in synchronous rounds with noise between rounds.  We impose additional conditions on the channel, namely, that it is syndrome-based (in full analogy with our syndrome based Lindbladians). Our synchronous equivalence connects memory in such PCA-type quantum systems to memories in the associated 
classical probabilistic cellular automata (PCA), that are also syndrome based. Such classical PCAs, which we call {\it syndrome based probabilistic cellular automata} and denote by 
SPCA, are a central definition in our work (See Def.~\ref{def:syndrome-pca-classical} for a formal definition); constructing memories in such systems is a "recipe" to constructing PCA-type quantum memory. 
The equivalence proof relies on similar syndrome-based 
algebraic structure as that of Theorem \ref{thm:main-async-informal}.

\begin{informaltheorem}[Equivalence between PCA-type autonomous quantum memory and  classical SPCAs
]\label{thm:main-spca-sync-informal}
Consider a CSS code whose two classical component codes admit local SPCAs. First,  there exists a local translationally invariant syndrome-based quantum channel, which, under local Pauli noise, induces, in each CSS
sector, the discrete-time classical Markov process corresponding to the respective SPCA. 

 If the \(X\)- and
\(Z\)-sector SPCAs $q_X-$ and $q_Z$-remember their logical bit up to round $m$, respectively, then the quantum system is a PCA-type $(q_X+q_Z)$-quantum memory up to round $m$.
Conversely, if the resulting local quantum  channel defines a PCA-type autonomous quantum $q$-memory up to round $m$, then each of the two induced classical sector processes $q$-remembers their logical bit up to round $m$.
\end{informaltheorem}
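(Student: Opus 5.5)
The plan has two parts, mirroring the statement: an explicit construction of the quantum channel from the two SPCAs, followed by a two-directional memory comparison that reuses the sector-decoupling algebra behind Theorem~\ref{thm:main-async-informal}.

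\textbf{Construction.} Let the $X$-sector SPCA be a local rule that, from the $Z$-type syndrome in a neighbourhood of qubit $j$, outputs a flip probability $p^X_j(s_Z)$; the $Z$-sector SPCA similarly gives $p^Z_j(s_X)$. I will define the quantum round $\Phi = \Phi_Z\circ\Phi_X$. Here $\Phi_X$ coherently reads the local $Z$-stabilizer eigenvalues with the projectors $P_{s}$ onto joint local syndrome sectors and then applies the Pauli channel
\[
\rho\mapsto\sum_{s}\bigl[(1-p^X_j(s))P_s\rho P_s+p^X_j(s)X_jP_s\rho P_sX_j\bigr]
\]
in parallel over all $j$. $\Phi_Z$ is defined analogously with $Z_j$ and the $X$-type syndrome.

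The simultaneous application must reproduce the synchronous SPCA update. Flips applied in the same round commute as Paulis, but each flip changes the syndrome. I will handle this by letting every flip decision depend on the syndrome measured at the start of the round, realized via local syndrome-ancilla registers that are reset each round. This keeps $\Phi$ local and translationally invariant. Because $X$ flips commute with $X$-type stabilizers, $\Phi_X$ leaves the $X$-syndrome untouched, so it does not affect the $Z$ sector.

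\textbf{Sector factorization.} Take an initial codeword $\rho_0$ and local Pauli noise. I will show by induction over rounds that the state has the form
\[
\rho_m=\sum_{e_X,e_Z}\mu_m^X(e_X)\,\mu_m^Z(e_Z)\;X^{e_X}Z^{e_Z}\rho_0Z^{e_Z}X^{e_X},
\]
possibly up to terms that are logically equivalent modulo stabilizers. The induction step has two inputs. First, since $P_s$ commutes with all Paulis up to relabelling the syndrome, and $X^{e}$ acting on a codeword has $Z$-syndrome $\partial e$, the dephasing by $P_s$ is diagonal on the Pauli-frame ensemble. Second, the ensembles $\mu^X,\mu^Z$ evolve exactly by the SPCA transition kernels composed with the noise. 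For this I need the noise to be a Pauli channel whose $X$ and $Z$ components are independent. For correlated Pauli noise (e.g.\ $Y$ errors) the joint law is a coupling whose marginals are still the SPCA laws, and that is enough for a union bound.

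\textbf{Memory comparison.} In the forward direction, decode $\rho_m$ with an ideal decoder $D$ that applies the SPCA's logical readout in each sector. The fidelity error is at most $\Pr[\text{$X$ logical flipped}]+\Pr[\text{$Z$ logical flipped}]\le q_X+q_Z$ by a union bound over the two marginals. For the converse, prepare logical $|\bar 0\rangle$, or $|\bar +\rangle$ for the other sector. The probability that the decoded logical bit of the $X$-sector process flips equals the trace distance between $D(\rho_m)$ and $|0\rangle$ in that basis, which is at most $q$.

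\textbf{Main obstacle.} I expect the hardest step to be the factorization lemma in the presence of degenerate errors. Distinct $e_X$ differing by a stabilizer give the same quantum state, so the classical Markov chain on $e$ is only defined modulo stabilizers. I must check that the SPCA rule, depending only on the syndrome, is compatible with this quotient. It is, because the syndrome and the logical class are both stabilizer-invariant. The other delicate point is keeping the synchronous update local despite using start-of-round syndromes; there I would use reset ancillas.
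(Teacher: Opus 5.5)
Your argument is correct, but it takes a different route from the paper.

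\textbf{Your route.} You work in the Schr\"odinger picture and track the state as a mixture of Pauli-frame copies $X^{e_X}Z^{e_Z}\bar\rho\,Z^{e_Z}X^{e_X}$ of the encoded state. Every Kraus operator is a Pauli string times a syndrome-diagonal operator, so no coherences between frames arise. The frame process is then Markov, with the sector chains as its marginals. After decoding, the effective logical channel is visibly $\rho\mapsto\sum_{a,b}\Pr[a,b]\,X^aZ^b\rho Z^bX^a$, so both directions are immediate (union bound; eigenstate inputs).

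\textbf{The paper's route.} It works in the Heisenberg picture. $\mathcal Q^*$ preserves $\mathcal V_\sigma$, and $\Phi_\sigma$ intertwines it with the backward kernel $\mathsf B_\sigma$. This gives $\langle T_\sigma,(\mathcal Q^*)^r(T_\sigma)\rangle_{\mathrm{code}}=1-2\Pr[b_\sigma(\xi_r^{(\sigma)})=1]$. The Pauli form of the effective channel is derived separately from logical-Pauli covariance. The resulting bounds $\tfrac12(\lambda_X+\lambda_Z)\le F_{\min}\le\tfrac12(1+\min_\sigma\lambda_\sigma)$ are exactly your union bound and your eigenstate argument.

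\textbf{What each buys.} Your route is more elementary and even yields the joint law, hence $\lambda_Y$. The paper's uses only single-sector marginals and carries over verbatim to Lindblad generators and to the register extension.

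\textbf{Construction.} The paper needs no ancillas. A round is the single map $\sum_s Z^{\mathbf a(s)}\Pi^X_s\rho\,\Pi^X_sZ^{\mathbf a(s)}$ (and its $Z$ analogue). Each Kraus term reads one start-of-round syndrome, so the problem you raise never appears. Its $\max(R,1)$-locality is proved in the Heisenberg picture by summing out the checks far from the support. Your reset ancillas yield exactly this data channel, so they are harmless but superfluous.

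\textbf{Minor points.}
\begin{itemize}
\item The product form $\mu^X_m\mu^Z_m$ fails under $Y$ noise; as you note, only the marginals matter.
\item The degeneracy you flag is not really an obstacle: the frame chain lives on $\mathbb F_2^n$ itself, and the state is a function of the frame law.
\item To match the paper's SPCA model (deterministic rule plus stochastic-adversarial faults), you still need an easy thinning step. It places the independent flips of probability $\theta_i^{(\sigma)}\le2\eta$ on an i.i.d.\ fault set of rate $2\eta$.
\end{itemize}
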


This result is an informal summary of two results, given formally in  
Theorem~\ref{thm:main-sync} and Corollary~\ref{cor:sync-spca-memory}. 
Importantly, the synchronous equivalence theorem (Theorem \ref{thm:main-spca-sync-informal}) enables 
reducing the problem of PCA-type autonomous quantum memory, to the classical problem of proving classical memory in SPCAs. 
In the last section of the paper (Section~\ref{sec:classical-problem}) we flesh out the exact requirements on classical PCAs in order to meet the requirements on SPCAs and such that once memory is established in such a PCA, that would immediately imply a quantum memory for a given CSS code. Hopefully, this could lead to new and interesting PCA-type quantum memory constructions.  

The following Figure~\ref{fig:thms} provides a scheme of the equivalence results described so far; those correspond to the vertical arrows in the figure. 
Drawing connections between the left and right sides of the figure, namely between the synchronous and 
asynchronous settings, is both highly desirable as well as notoriously difficult. Here we provide progress on this difficult problem, in the form of the diagonal arrow in the figure, discussed in Subsection \ref{subsec:discrete-shortcut} below.

\begin{figure}[h]
\centering
\begin{tikzpicture}[>=Stealth, font=\small,
  obj/.style={draw, thick, rounded corners, fill=white, minimum width=4.4cm, minimum height=1.3cm, align=center},
  elbl/.style={font=\scriptsize, align=center, fill=white, inner sep=1.5pt}
]
\node[obj] (UL) at (-4.3, 3.2) {Syndrome-based\\ Lindbladian};
\node[obj] (UR) at ( 4.3, 3.2) {Syndrome-based\\ local quantum channel};
\node[obj] (LL) at (-4.3, 0)   {Induced continuous-time\\ sector process};
\node[obj] (LR) at ( 4.3, 0)   {Induced discrete-time sector process \\ {\scriptsize (SPCA by Corollary~\ref{cor:sync-spca-memory})}};

\draw[dotted, gray!70, thick] (-6.6,1.6) -- (6.6,1.6);
\node[gray!90, anchor=east, font=\small\bfseries] at (-6.7, 3.2) {Quantum};
\node[gray!90, anchor=east, font=\small\bfseries] at (-6.7, 0)   {Classical};
\node[anchor=south, font=\small\bfseries] at (-4.3, 4.1) {Asynchronous (continuous time)};
\node[anchor=south, font=\small\bfseries] at ( 4.3, 4.1) {Synchronous, PCA-type (discrete time)};
\draw[dotted, gray!70, thick] (0,-0.9) -- (0,4.0);

\draw[->, thick] (-4.6, 2.55) -- node[elbl, left=2pt] {induces\\ (Prop.~\NoHyper\ref{prop:reduction-async-formal}\endNoHyper)} (-4.6, 0.65);
\draw[<->, thick] (-4.0, 2.55) -- node[elbl, right=2pt] {memory iff\\ (Thm.~\NoHyper\ref{thm:main-async-formal}\endNoHyper)} (-4.0, 0.65);

\draw[->, thick] (4.0, 2.55) -- node[elbl, left=2pt] {induces\\ (Prop.~\NoHyper\ref{prop:reduction-sync-formal}\endNoHyper)} (4.0, 0.65);
\draw[<->, thick] (4.6, 2.55) -- node[elbl, right=2pt] {memory iff\\ (Main Thm.~\NoHyper\ref{thm:main-sync}\endNoHyper)} (4.6, 0.65);

\draw[->, thick, dashed] (LR.north west) -- node[sloped, above, elbl] {Floquet construction\\ (Main Thm.~\NoHyper\ref{thm:periodic-formal}\endNoHyper)} (UL.south east);

\end{tikzpicture}
\caption{The reduction diagram. Rows: quantum dynamics (top) and the
classical process it induces (bottom). Columns: continuous-time
(asynchronous) dynamics (left) and synchronous discrete-time dynamics
(right). In each column, the vertical arrows are the reduction: the
quantum dynamics induces a classical sector process
(Propositions~\ref{prop:reduction-async-formal}
and~\ref{prop:reduction-sync-formal}), and memory of one is equivalent to
memory of the other, up to a constant factor (Theorems~\ref{thm:main-async-formal} and~\ref{thm:main-sync}). The
diagonal arrow is the periodic construction of Section~\ref{sec:periodic}:
from an SPCA it builds a Lindbladian whose sampled sector
process is an SPCA run (Theorem~\ref{thm:periodic-formal}). }
\label{fig:thms}
\end{figure}

We reserve the term \emph{fully autonomous quantum memory} for the strongest setting: protection against general local noise by a time-independent, local, translationally invariant Lindbladian. Floquet and PCA-type autonomous quantum memories relax this setting by allowing, respectively, a time-periodic drive or a global clock; Table~\ref{tab:taxonomy} summarizes the terminology. Throughout, \emph{synchronous} and \emph{asynchronous} refer to the time structure of the dynamics: a local channel applied in globally clocked rounds, versus continuous-time Lindbladian evolution in which local jumps occur at stochastic times. In this sense Floquet Lindbladians are asynchronous, although their periodic schedule acts as a global clock.

\begin{table}[h]
\centering
\begin{tabular}{lll}
\hline
Dynamics & Thermal noise & General local noise \\
\hline
Synchronous local channel & PCA-type self-correcting & PCA-type autonomous \\
Time-periodic local Lindbladian & Floquet self-correcting & Floquet autonomous \\
Time-independent local Lindbladian & Self-correcting & Fully autonomous \\
\hline
\end{tabular}
\caption{Taxonomy of autonomous quantum memories. Each entry names a class of quantum memories; in all of them the dynamics is local and translationally invariant, with parameters independent of the system size.}
\label{tab:taxonomy}
\end{table}

\subsubsection{Application: Autonomous quantum memory in 4D Toric code with Toom's rule}
We now derive a "low hanging fruit" application of the above quantum-classical equivalence synchronous result, Theorem \ref{thm:main-spca-sync-informal}. 
We provide a rigorous proof of a synchronous variant of the long claimed folklore statement (proposed in \cite{pastawski_quantum_2011}) that the 4D Toric code with 4D Toom's rule corrections, exhibits an autonomous quantum memory against Pauli noise.  

\begin{informaltheorem}
[PCA-type 4D Toom quantum memory]\label{thm:toom-sync-informal}
For the 4D toric code, take the two classical correction rules to be
the 4D Toom rules. There exists a nonzero local Pauli-noise threshold below
which the encoded qubit, evolving under the resulting local quantum channel and subjected to Pauli noise in between consecutive applications of the channel, is a quantum $2Cme^{-cL}$-memory through round \(m\), for constants
\(C,c>0\) independent of the linear system size \(L\). The channel is
translationally invariant whenever the noise model is.
\end{informaltheorem}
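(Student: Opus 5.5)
The plan is to deduce the statement from the synchronous equivalence, Theorem~\ref{thm:main-spca-sync-informal} (formally Theorem~\ref{thm:main-sync} together with Corollary~\ref{cor:sync-spca-memory}), so that everything reduces to a classical statement about Toom's rule.

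\emph{Step 1: the classical sector codes.} For the 4D toric code, qubits sit on plaquettes (2-cells) of a 4D torus of linear size $L$. The $X$-sector errors are 2-chains whose syndromes are their boundaries, i.e.\ 1-chains on edges; the $Z$-sector syndromes are coboundaries on 3-cells. By Poincar\'e duality both sectors are the same classical problem: a 2-chain error with a closed 1-dimensional syndrome, which can only consist of loops. I would check that Ahn's 4D Toom rule~\cite[\S 7]{ahn_extending_2004} fits Definition~\ref{def:syndrome-pca-classical}. Each plaquette is flipped according to a local function of the syndrome in its neighbourhood, with a fixed ``north-east'' orientation. It is therefore a local, translationally invariant SPCA in each sector.

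\emph{Step 2: the quantum channel.} Applying the first half of Theorem~\ref{thm:main-spca-sync-informal} gives a local syndrome-based quantum channel. Under local Pauli noise applied between rounds, this channel induces exactly these two SPCAs as its sector processes. Translation invariance of the channel follows from that of Toom's rule. The noise model must also be translation invariant for the full induced process to be, which is the caveat in the statement. Here Pauli noise with local rate $\physrate$ induces in each sector a local stochastic flip noise of rate at most $\physrate$. Correlated $Y$ errors are harmless because only the marginals of each sector matter.

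\emph{Step 3: classical memory, the main obstacle.} I must show that each Toom SPCA, started from a zero-syndrome configuration and subject to local stochastic noise of rate $\physrate<\pcth$, changes its decoded logical bit at any single readout round with probability at most $Ce^{-cL}$. A union bound over the $m$ rounds then gives $Cme^{-cL}$. This step is genuinely classical but is the hard part. Ahn's analysis established error suppression, but I would redo it in a form giving an explicit bound on logical failure at a fixed round. The approach I would try first is a Toom/Gray-style space-time argument adapted to syndrome loops:
\begin{itemize}
\item Represent the current error, modulo stabilizers, as a union of noise-seeded clusters.
\item Show via Toom's eroder property that a cluster of a given linear extent requires a space-time ``explanation tree'' containing a number of faults proportional to that extent.
\item Count such trees with a Peierls bound: the number of trees is at most $K^{\ell}$ for size $\ell$, and each is realised with probability at most $\physrate^{\ell}$.
\item Note that a logical change at the readout, using a decoder such as a few noiseless Toom rounds or minimum weight, requires a cluster homologically nontrivial in the 4D torus, hence of extent at least $L$.
\end{itemize}
This yields $\sum_{\ell\ge L}(K\physrate)^{\ell}\le Ce^{-cL}$ for $\physrate<1/K$. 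An alternative is to exploit the fact that Toom's rule acts on 1D syndrome loops through a 2D-Toom-like shrinking on each 2-plane, and to reduce to the known 2D Toom stability~\cite{toom_stable_1980} plane by plane. The dependence between planes makes this delicate, so I would prefer the direct Peierls argument.

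\emph{Step 4: conclusion.} Both sectors $q$-remember with $q_X=q_Z=Cme^{-cL}$ through round $m$. The second half of Theorem~\ref{thm:main-spca-sync-informal} then makes the encoded qubit a PCA-type quantum $2Cme^{-cL}$-memory through round $m$, as claimed.
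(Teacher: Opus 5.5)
Your reduction is the paper's. You check that the 4D Toom predicates form $1$-local SPCAs (Lemma~\ref{lem:toom-pca-validation}), let Corollary~\ref{cor:sync-spca-memory} build the lifted channel and turn classical retention into a $(q_X+q_Z)$-memory, and supply a Toom-type threshold as the only classical input. The difference is that the paper does not redo the explanation-tree analysis. It imports Ahn's estimate (Proposition~\ref{prop:ahn-suppression}) as a one-step hazard bound, $\Pr[\tau_\sigma=r\mid\tau_\sigma\ge r]\le Ce^{-cL}$, where $\tau_\sigma$ is the first round at which the decoded string is homologically nontrivial. It notes that Ahn's counting (any $k$ prescribed violating spacetime locations occur with probability at most $p^k$) survives the stochastic-adversarial fault model, because violations are confined to the i.i.d.\ sets $F_r$. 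It takes the readout decoders to be Ahn's recovery decoders. It then sums the hazards, $\Pr[\tau_\sigma\le r]\le rCe^{-cL}$, and uses $\{b_\sigma(\mathbf e_r)=1\}\subseteq\{\tau_\sigma\le r\}$ (Lemma~\ref{lem:toom-homological-label}). Your Peierls sketch is a much less developed version of the same mathematics, so importing Ahn in this form is all your Step 3 needs.

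Two points need correcting.

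\emph{The noise rate.} A Pauli layer of rate $\eta$ (each $\theta_{P,i}\le\eta$) flips a sector bit with probability $\theta_{Z,i}+\theta_{Y,i}\le 2\eta$, not $\eta$. So the sector processes are SPCA runs at fault rate $2\eta$, and the threshold is $\eta_c=\pcth/2$. This only changes the constant.

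\emph{The per-round bound.} The target you set in Step 3 is that the decoded bit is wrong at any single round with probability at most $Ce^{-cL}$. This cannot hold uniformly in the round on a finite torus. Once a homologically nontrivial cycle is created, it has zero syndrome, is a fixed point of Toom's rule, and persists with no further faults. So the probability of a wrong readout accumulates with $r$. Were it uniform, no union bound over rounds would be needed, since memory is defined as a supremum over rounds.

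What is uniform in $r$ is the probability that such a failure is \emph{created} at round $r$. In your Peierls argument this means rooting the trees at the creation event and cutting them off at size of order $L$. Trees rooted at the readout carry no fault-density guarantee, because the part of the explanation traced back through a persisting nontrivial cycle contains no faults. The union over creation rounds up to $m$ is then exactly the hazard sum that produces the factor $m$.

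Relatedly, ``a few noiseless Toom rounds'' is not a readout decoder in the sense of Definition~\ref{def:readout-decoder}, since it need not clear the syndrome. Use minimum weight or, as the paper does, Ahn's recovery decoder.
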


The formal statement is Theorem~\ref{thm:toom-sync-formal}. Its classical input is Ahn's
memory theorem for the noisy 4D Toom rule~\cite{ahn_extending_2004}; It is easy to verify that Ahn's classical probabilistic cellular automaton indeed satisfies the condition of 
an SPCA. The PCA-type quantum-classical memory equivalence, Theorem~\ref{thm:main-spca-sync-informal}, immediately converts this classical result into a quantum one. By this we confirm that a variant of the physical system  
proposed in \cite{pastawski_quantum_2011} indeed provides a PCA-type autonomous quantum memory protecting against Pauli noise.

\subsubsection{Quantum memories with Floquet Lindbladians from classical PCA memories}\label{subsec:discrete-shortcut}

So far, we have worked either fully in the asynchronous setting, connecting memory in quantum Lindbladian systems to memory in continuous classical processes (Theorem \ref{thm:main-async-informal}),  
or fully in the synchronous setting, drawing an equivalence between PCA-type quantum memory and classical memory in the corresponding classical PCAs (Theorem \ref{thm:main-spca-sync-informal}). 
Theorem \ref{thm:main-async-informal} and
Theorem~\ref{thm:main-spca-sync-informal} hold essentially verbatim for the PCA-type setting
as for Lindbladians, with the same proof structure throughout. However, the choice of setting is very important from the point of view of applications, as discussed in Subsection \ref{sec:WhyLindbladians}.

Specifically, as described in \ref{sec:WhyLindbladians}, from a physics point of view, the object we would like to construct is a fully time-\emph{independent},
translationally invariant Lindbladian, as in 
Theorem~\ref{thm:main-async-formal}. 
By this theorem, the construction of such a Lindbladian with quantum memory would
follow immediately from an \emph{asynchronous} PCA with a nonzero
threshold --- an update rule in which sites act at independent,
uncoordinated times. 
Unfortunately, such asynchronous continuous time classical stochastic processes, are often much harder 
to analyze and control. Instead, it is often the 
case, that classical memory is only known to hold in the 
discrete cellular automaton setting; such is the case in Ahn's proof of memory in the 4D Toom's rule cellular automaton \cite{ahn_extending_2004}, which holds in the synchronous setting. Proving memory in the asynchronous setting for this model is an open problem (see Section~\ref{subsec:outlook}). 
Alas, the guarantee of a classical memory only in the PCA setting forces us to apply the equivalence in the synchronous, 
rather than the asynchronous setting, and thus derive  autonomous quantum memory in the less realistic PCA-type quantum setting, rather than a Lindbladian based one. 

A natural question arises - can one enjoy both worlds? 
Is there a way to start from the easier synchronous setting in the classical world - namely with a guarantee on a discrete time classical PCA having a memory -- and deduce 
quantum Lindbladian-based memory?

Unfortunately, we are not able to provide such a satisfying theorem. 
We provide a non-trivial partial result: starting from a classical memory in the synchronous setting, we construct a Floquet autonomous quantum memory, namely, a system described by a  
Lindbladian that is periodic in time, with a period independent of the system
size. The generator itself is geometrically local and
translationally invariant, making it more realistic to implement than a PCA-type quantum channel, which is unclear how to implement physically.

\begin{informaltheorem}[Floquet autonomous quantum memory from SPCAs]\label{thm:periodic-informal}
Consider a CSS code whose two classical component codes admit local SPCAs, and
suppose that below a fault-rate threshold $\pcth>0$, every run of each SPCA
started from $\mathbf 0$ changes its decoded logical bit with probability at most
$q_X(m,L)$, respectively $q_Z(m,L)$, at every round up to $m$. Then one can
explicitly construct a geometrically local Floquet Lindbladian, acting on the
code qubits together with local classical registers, under which, for local Pauli
noise on the code qubits and the registers that is below a threshold inherited from
$\pcth$, the encoded qubit is a quantum $(q_X(m,L)+q_Z(m,L))$-memory at cycle
boundaries through cycle $m$. The Lindbladian is translationally invariant
whenever the CSS code and both SPCAs are.
\end{informaltheorem}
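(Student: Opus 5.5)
The plan is to build the Floquet Lindbladian so that, sampled at cycle boundaries, each CSS sector of the quantum dynamics runs exactly the given SPCA under an admissible fault model. The quantum--classical equivalence then does the rest, as in the synchronous theorem (Theorem~\ref{thm:main-spca-sync-informal}).

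\emph{Construction.} One cycle of length $T$, with $T$ independent of $L$, is split into a fixed number of phases. The schedule colors the lattice into $O(1)$ sublattices so that the supports of simultaneously active local update rules are disjoint. For each sector and each color the phases are:
\begin{enumerate}[label=(\roman*)]
\item A \emph{copy} phase: a dissipative Lindbladian reset writes the local syndrome bits (parities of the relevant $X$- or $Z$-checks) into local classical registers, implemented as jump operators conditioned on stabilizer projectors.
\item An \emph{update} phase: jumps apply, conditioned on the register contents, the SPCA's local flip rule, realized as Pauli $X$ (resp.\ $Z$) on code qubits.
\item A \emph{reset} phase: the registers are returned to $\mathbf 0$.
\end{enumerate}
Each phase runs for time $\tau$ with rate $\gamma$. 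A jump that should fire then fails with probability $e^{-\gamma\tau}$, a constant. Since the update Paulis are sector-pure and conditioned only on syndromes of that sector, the generator in each phase is syndrome-based in the sense of Definition~\ref{def:syndrome-based-lindbladian}. So the structural result of Proposition~\ref{prop:reduction-async-formal} applies phase by phase. Composing over a cycle, each sector's error configuration at cycle boundaries evolves as a classical discrete-time Markov chain, as in Proposition~\ref{prop:reduction-sync-formal}.

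\emph{Identifying the sampled chain with a noisy SPCA run.} I will show that the one-cycle classical kernel equals one SPCA round composed with local faults of rate below $\pcth$. Three sources of deviation contribute:
\begin{itemize}
\item incomplete jumps, at rate $e^{-\gamma\tau}$;
\item Pauli noise on code qubits during the cycle, at rate $O(\physrate T)$;
\item noise on registers, which corrupts a syndrome copy or an update decision, also at rate $O(\physrate T)$.
\end{itemize}
Each such event affects only an $O(1)$-neighborhood of sites within the cycle. It can therefore be charged as a fault at a bounded set of SPCA sites in that round, with correlated faults dominated by an independent-faults-type bound of the kind the SPCA threshold assumes. Choosing $\gamma\tau$ large and $\physrate$ small makes the effective fault rate $\effrate<\pcth$. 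The hypothesis then gives logical-flip probability at most $q_X(m,L)$, resp.\ $q_Z(m,L)$, at every cycle boundary through cycle $m$. The forward direction of Theorem~\ref{thm:main-async-formal}, applied at cycle-boundary readout times, yields the quantum $(q_X+q_Z)$-memory.

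\emph{Main obstacle.} The hard step is the fault-model bookkeeping. Register noise and the interleaving of $X$- and $Z$-sector phases must not create non-syndrome-based dynamics, for example cross-sector conditioning, and must not create faults outside the class covered by the SPCA threshold. I would handle this in three ways:
\begin{itemize}
\item Keep separate registers for the two sectors.
\item Treat the registers as classical systems dephased in the computational basis, so that they enter only as extra classical coordinates of each sector chain.
\item Define a fault at an SPCA site in a round as ``any deviation of the sampled output from the ideal local rule on that site's light cone.'' With this definition, the independence (or $\effrate$-boundedness) of faults follows from locality, the finite number of phases, and independent Poisson noise.
\end{itemize}
Translation invariance is immediate when the code, both SPCAs and the coloring schedule are translation invariant.
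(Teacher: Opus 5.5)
Your overall architecture matches the paper's: registers that decouple reading from writing, a reduction treating the registers as extra classical coordinates of each sector chain, local faults charged to missed operations and noise, then the SPCA threshold and the forward direction of the equivalence. But the update phase as you describe it does not implement an SPCA step.

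A jump $Z_i\,g_i$, with $g_i$ a projector onto the register contents where $a_i^{(\sigma)}=1$, changes nothing it is conditioned on. During a phase of length $\tau$ it therefore fires a Poisson$(\gamma\tau)$ number of times. The net effect on qubit $i$ is the parity of that count, which is odd with probability $\tfrac12\bigl(1-e^{-2\gamma\tau}\bigr)\to\tfrac12$. So at every site that needs a correction, the correction is applied with probability about $\tfrac12$, not $1-e^{-\gamma\tau}$ as you claim. This is a systematic constant-rate deviation, and no choice of $\gamma\tau$ or $\physrate$ brings it below $\pcth$. Your $e^{-\gamma\tau}$ estimate holds only for self-disabling jumps, such as your copy and reset jumps. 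The paper fixes this with a per-site execution flag: the Execute jump $Z_iX_{E_i^\sigma}|1\rangle\langle1|_{D_i^\sigma}|0\rangle\langle0|_{E_i^\sigma}$ switches itself off after its first firing, and the flag is cleared in the next Reset stage.

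The coloring also hurts rather than helps. If each color runs its own copy--update--reset in sequence, later colors read a syndrome already changed by earlier colors' corrections. The cycle then implements a sequential (Gauss--Seidel-type) rule rather than $\mathbf e\mapsto\mathbf e\oplus\mathbf a(H\mathbf e)$, and the memory hypothesis on the synchronous SPCA says nothing about that rule. The paper uses no coloring. A single Read stage for all sites takes the snapshot before any data qubit is touched, and Execute acts only on the stored decisions, so overlapping supports are harmless.

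Your fault definition, ``any deviation of the sampled output,'' is also problematic. It depends on the state, so it is not independent of the history as the SPCA fault model requires, and it is correlated between nearby sites. The paper instead defines the failure set from the Poisson clocks alone: a missed tick of one of the site's operations, or a noise tick in its neighborhood. That set is independent of the past and $1$-dependent on a bounded-degree overlap graph. It is then dominated by a product measure via Liggett--Schonmann--Stacey, with a Strassen coupling.
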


The formal statement of the theorem is Theorem~\ref{thm:periodic-formal} in Section~\ref{sec:results}. Its 
non-trivial proof, which is one of the two main 
technical contributions of the paper, alongside the proof of the equivalence theorems (See proof overview in Subsection \ref{subsec:periodic-overview}) is given  
in Section~\ref{sec:periodic}.
While possibly counterintuitive, there are examples of time-periodic Lindbladians that do not have an 
effective time-independent Lindbladian that leads to an equivalent dynamics~\cite{schnell_is_2020}, and thus, straightforward attempts to turn the time-periodic Lindbladian into a time independent one do not seem to work;  whether the above theorem can be strengthened to provide a black-box result that starts with a PCA classical memory construction and provides a Lindbladian-based quantum memory with a time-independent
Lindbladian, remains as an important open problem
(See Subsection \ref{subsec:outlook}).

The theorem can be immediately applied to derive 
a Floquet autonomous quantum memory for the 
4D Toric code with Toom rule corrections:

\begin{informalcorollary}
[Floquet autonomous 4D Toom quantum memory]\label{thm:toom-periodic-informal}
For the 4D toric code, take the two classical correction rules to be
the 4D Toom rules. There exists a nonzero local Pauli-noise threshold below
which the encoded qubit, evolving under the resulting local, translationally
invariant time-periodic Lindbladian, is a quantum $2Cme^{-cL}$-memory at cycle
boundaries through cycle \(m\), for constants \(C,c>0\) independent of the
linear system size \(L\).
\end{informalcorollary}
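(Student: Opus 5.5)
The plan is to obtain the corollary by instantiating Theorem~\ref{thm:periodic-informal} (formally Theorem~\ref{thm:periodic-formal}) with the same classical input used for Theorem~\ref{thm:toom-sync-informal}. Three things are needed. First, the 4D toric code is a CSS code whose two component classical codes are the same up to duality. Qubits sit on plaquettes, $X$-checks on edges and $Z$-checks on cubes. In each sector the syndrome is a $1$-chain (respectively $3$-chain) of violated checks. These syndromes satisfy local conservation laws: they are boundaries of the error chain, so they are closed. Second, each sector admits a local SPCA, namely Ahn's 4D Toom rule~\cite{ahn_extending_2004}. Third, that SPCA remembers its logical bit below a constant fault threshold, with $q_X(m,L),q_Z(m,L)\le C m e^{-cL}$.

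The first step is to check that Ahn's rule is an SPCA in the sense of Definition~\ref{def:syndrome-pca-classical}. The flip decision for each qubit must be a local function of the syndrome in its neighbourhood only, not of the error configuration itself. Ahn's update flips a plaquette according to a Toom-type majority on the violated neighbouring edges in chosen ``north--east'' directions, and this depends only on local syndrome bits, so the check should be routine. Faulty updates are modelled by the SPCA fault model with rate $\pcth$. I will use the form of Ahn's theorem that bounds the probability of a logical change per round by $Ce^{-cL}$ when starting from $\mathbf 0$. A union bound over rounds then gives $q\le Cme^{-cL}$ at every round up to $m$. The same argument applies to the $Z$ sector by the duality of the 4D toric code.

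The second step is to invoke Theorem~\ref{thm:periodic-formal}. It produces a geometrically local Floquet Lindbladian on the code qubits plus local classical registers. For Pauli noise on qubits and registers below the inherited threshold, this Lindbladian is a quantum $(q_X+q_Z)$-memory at cycle boundaries through cycle $m$, so substituting the bounds gives $2Cme^{-cL}$. Translation invariance follows from the theorem's final clause, since the 4D toric code on the periodic lattice and both Toom SPCAs are translationally invariant.

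The main obstacle is matching hypotheses precisely. The noise model in Ahn's theorem has to be at least as general as the SPCA fault model that Theorem~\ref{thm:periodic-formal} feeds back. The effective per-round faults in the sampled sector process combine physical Pauli errors during the cycle, register errors, and imperfections of the Floquet schedule. These may be correlated within a cycle, though only locally, so what is needed is that Ahn's result tolerates local stochastic (or independent-but-adversarially-biased) faults. If his statement only covers i.i.d.\ faults, I would first check that Theorem~\ref{thm:periodic-formal}'s reduction produces faults of exactly that form. Failing that, I would appeal to the standard extension of Toom-type erosion proofs to local stochastic noise, which goes through via the same counting argument, and adjust $\pcth$ accordingly.
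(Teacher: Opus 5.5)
Your proposal follows the paper's route: the corollary is Remark~\ref{rem:toom-lindbladian}, which feeds the classical input of Theorem~\ref{thm:toom-sync-formal} (Lemma~\ref{lem:toom-pca-validation}, Ahn's per-round hazard bound in Proposition~\ref{prop:ahn-suppression}, and Corollary~\ref{cor:toom-classical-memory}) into Theorem~\ref{thm:periodic-formal}, giving $q_X+q_Z=2Cme^{-cL}$. On your ``main obstacle'': the locally correlated cycle faults are already dominated inside Theorem~\ref{thm:periodic-formal} (Proposition~\ref{prop:sampled-sector-spca}, via Liggett--Schonmann--Stacey), so all you need from Ahn is robustness to the stochastic-adversarial model, i.e.\ i.i.d.\ fault locations with possibly history-dependent deviations rather than i.i.d.\ flips (so your ``exactly that form'' check would not go through), and the paper justifies this by your fallback argument: the explanation-tree count only uses that any prescribed set of $k$ violating spacetime locations has probability at most $p^k$.
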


This Corollary strengthens Theorem \ref{thm:toom-sync-informal} from the PCA-type quantum memory setting 
to deriving a more physically realistic Floquet Lindbladian memory, for the 4D toric code with Toom based rules. The question of whether this can be achieved with a time independent Lindbladian, e.g., for the Lindbladian suggested in \cite{pastawski_quantum_2011}, 
is left open. Such a result would imply 
a resolution of the more general open question, of providing a fully autonomous quantum memory beyond the thermal equilibrium setting.

To summarize, we believe these results significantly clarify the landscape of autonomous quantum memories, provide a general framework within which to study and construct them, and beyond providing a resolution to the open problem of the 4D toric code memory against non-thermal noise in the PCA-type setting, make significant progress towards the construction and the clarification of the power of fully autonomic quantum memories, bringing the field significantly closer to physical
settings.

\subsection{Overview of the Main Technical Ideas in the Proofs}

We describe here the main ideas in the proofs of this work. We start with the ideas underlying the proofs of  
Theorem~\ref{thm:main-async-informal} and Theorem~\ref{thm:main-spca-sync-informal} 
(whose proofs are very essentially the same), 
and proceed to the proof of Theorem \ref{thm:periodic-informal}. 
Those proofs contain the 
ideas and concepts that constitute the main contributions of this work, from the technical point of view. 

\subsubsection{Proofs of quantum-classical memory equivalences: Theorems~\ref{thm:main-async-informal} and~\ref{thm:main-spca-sync-informal}}  
The hearts of these proofs are given, respectively, in  
Prop.~\ref{prop:reduction-async-informal} and Prop.~
\ref{prop:reduction-sync-informal}. The first proposition connects the Lindbladian in the quantum setting, to the classical continuous Markov processes in the classical setting: retaining the relevant quantum logical
correlation is equivalent to retaining the classical
decoded logical bit under the corresponding classical Markov process.  

\begin{informalproposition}[Asynchronous quantum-to-classical reduction]\label{prop:reduction-async-informal}
Consider a CSS code evolving under a syndrome-based Lindbladian.
The evolution induces two classical continuous-time Markov processes,
one describing the $X$-error sector and one describing the
$Z$-error sector.
For each sector, the autocorrelation of the corresponding decoded
quantum logical observable is exactly one minus twice the probability
that the decoded logical bit of the associated classical process has
changed between the initial and readout times.
\end{informalproposition}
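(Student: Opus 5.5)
The plan is to work in the Heisenberg picture and exploit the CSS structure of the decoded logical observable. Fix the $Z$ sector; the $X$ sector is symmetric. The relevant quantum quantity is the autocorrelation between an initial logical preparation and a readout of the decoded logical $\bar Z$. The readout measures all $Z$-type stabilizers and $\bar Z$, so the observable is diagonal in the computational basis: on a basis string $x$ it takes the value $(-1)^{\ell(x)}$, where $\ell(x)$ is the decoded logical bit. Up to the code's labelling conventions, this means the readout observable lies in the commutative algebra $\mathcal A_Z$ generated by single-qubit $Z$ operators. The first step is to express the autocorrelation as $\mathrm{tr}\big(\rho_0\, e^{t\mathcal L^\dagger}(O)\big)$, with $\rho_0$ a logical eigenstate (or the appropriate maximally mixed logical correlation state) and $O\in\mathcal A_Z$.

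The key algebraic step is an invariance lemma: for a syndrome-based Lindbladian, $\mathcal L^\dagger$ maps $\mathcal A_Z$ into itself, and its restriction is the generator of a classical continuous-time Markov chain on $\{0,1\}^n$. To show this, I would take each jump term and decompose it by how it acts on $Z$-diagonal functions. Only the $X$-type part of a jump, i.e.\ its bit-flip action, changes $Z$-eigenvalues. Conjugating a diagonal $f(Z)$ by the jumps gives off-diagonal cross terms, and these must cancel or vanish. This is where the defining condition is needed: the total flip rate of qubit $i$ depends only on the $Z$-syndrome, which is itself diagonal. Using that condition, I would show $\mathcal L^\dagger f = \sum_i r_i(\text{syndrome}(x))\,[f(x\oplus e_i)-f(x)]$ together with any multi-flip terms. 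Phase ($Z$-type) components and Hamiltonian parts acting through stabilizers commute with $\mathcal A_Z$, so they drop out. Once invariance holds, $e^{t\mathcal L^\dagger}O = e^{tQ}O$, where $Q$ is the induced sector generator, by uniqueness of solutions of the linear ODE on the finite-dimensional invariant subspace.

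Then evaluate the expectation. $\rho_0$ enters only through its diagonal $p_0(x)=\langle x|\rho_0|x\rangle$. For a codeword preparation this is the distribution over the corresponding coset (after a Hadamard-type reinterpretation for the conjugate sector if needed). This gives $\mathbb E_{x_0\sim p_0}\mathbb E[(-1)^{\ell(X_t)}\mid X_0=x_0]$. Comparing with $(-1)^{\ell(x_0)}$ and using $\mathbb E[(-1)^{\ell(X_t)\oplus \ell(X_0)}] = 1-2\Pr[\ell(X_t)\ne \ell(X_0)]$ yields the claimed identity. The signs are chosen so that the autocorrelation is normalized by the initial logical value.

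The main obstacle is the invariance lemma: verifying that the cross terms $L_k^\dagger f L_k - \tfrac12\{L_k^\dagger L_k, f\}$ stay diagonal. This is exactly where the rates must depend only on the syndrome and not on the full configuration or on coherences. I would first handle jumps of the form $L=X_S\,g(Z)$, then argue linearity over the syndrome-based decomposition in Definition~\ref{def:syndrome-based-lindbladian}. The remaining care is making sure the Hamiltonian part preserves $\mathcal A_Z$, which holds if it is a sum of stabilizers.
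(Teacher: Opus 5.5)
There is a genuine gap. The overall plan matches the paper's: Heisenberg picture, an invariant subspace containing the decoded observable, identifying the restricted generator with a backward Markov generator, ODE uniqueness, then evaluation. Your computational-basis diagonal is even exactly the paper's map $\Phi_Z$, since $\bar Z B$ with $B\in\mathcal A_Z$ acts on $|\mathbf e\rangle$ as $(-1)^{\ell_Z\cdot\mathbf e}B(H_Z\mathbf e)$.

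The gap is your invariance lemma. The algebra of \emph{all} computational-basis-diagonal operators is not invariant under a syndrome-based Lindbladian. You implicitly take jumps of the form $L=X_S\,g(Z)$, but the definition only requires $H(t)\in\mathcal A$ and $f_{P,i,\alpha}\in\mathcal A$, and $\mathcal A$ contains the $X$-stabilizers. Here are three concrete failures.
\begin{enumerate}
\item The stabilizer Hamiltonian alone is a valid syndrome-based Lindbladian. For it, $i[H,Z_j]=-2i\sum_{a:\,j\in\mathrm{supp}\,S^X_a}S^X_aZ_j$, which is off-diagonal.
\item The jump $X_if$ with $f=\mathbb I+iS^X_a$ satisfies the rate condition, since $f^\dagger f=2\,\mathbb I$. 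Yet for $j\neq i$ in the support of $S^X_a$ one gets $\mathcal D[X_if]^*(Z_j)=2i\,Z_jS^X_a-2Z_j$.
\item Opposite-sector corrections $Z_i\Pi$, with $\Pi\in\mathcal A_X$ a product projector such as the Toom projectors, can likewise produce terms $Z_jS^X_b$.
\end{enumerate}
So both mechanisms you invoke fail. Stabilizer terms do not commute with general diagonal operators. The rate condition constrains only the sum $\sum_\alpha f^\dagger f$, so it cannot make $f^\dagger g f-\tfrac12\{f^\dagger f,g\}$ diagonal when $g$ fails to commute with $f$. As a consequence, your later step ``$\rho_0$ enters only through its diagonal'' is also unsupported, because the evolved observable need not stay diagonal.

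The missing idea is to restrict to operators that commute with all of $\mathcal A$. Take the paper's $\mathcal V_Z=\mathrm{span}\{\bar ZB:B\in\mathcal A_Z\}$, or more generally the diagonal operators commuting with every $X$-stabilizer. This subspace contains $T_Z=\bar Z\varphi_Z$, because $\bar Z$ lies in the stabilizer centralizer. On it the Hamiltonian term vanishes identically. Since $f$ and $f^\dagger$ commute with both $O$ and $P_iOP_i$, every dissipator collapses to $\mathcal D[P_if]^*(O)=f^\dagger f\,(P_iOP_i-O)$. The $Z_i$-type jumps then give zero, and the $X_i$- and $Y_i$-type jumps sum to $\Gamma_i^{(Z)}(t)\,(X_iOX_i-O)$.

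Only at this point does the rate condition enter, and its role is closure, not cancellation. $\Gamma_i^{(Z)}(t)\in\mathcal A_Z$ keeps the result in $\mathcal V_Z$ and makes the rate a function of $H_Z\mathbf e$. With this restriction the rest of your argument goes through. Your evaluation against the diagonal of a random initial codeword is equivalent, by codeword-translation invariance, to the paper's evaluation at $\mathbf e=\mathbf 0$ against $\rho_{\mathrm{code}}$.
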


The formal statement is given in Section~\ref{sec:results}
(Proposition~\ref{prop:reduction-async-formal}).
We state the synchronous counterpart of
Proposition~\ref{prop:reduction-async-formal}: 

\begin{informalproposition}[Quantum-to-classical reduction for syndrome-based channels]\label{prop:reduction-sync-informal}
The same exact correspondence holds in synchronous discrete time.
A syndrome-based local translationally invariant channel induces, in each CSS
sector, a discrete-time classical Markov process describing the
evolution of the corresponding error configuration.

After any number of rounds, the autocorrelation of the decoded quantum
logical observable is exactly one minus twice the probability that the
decoded logical bit of the corresponding classical process has changed.
\end{informalproposition}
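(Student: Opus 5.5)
The plan is to work in the Heisenberg picture and follow the decoded logical observable through the rounds, exactly mirroring the asynchronous argument behind Proposition~\ref{prop:reduction-async-informal}. Fix the $Z$ sector; the $X$ sector is symmetric under exchanging the roles of $X$ and $Z$. Encode with a logical $\bar Z$ eigenstate, and let the readout observable be the decoded logical $\bar Z_{\mathrm{dec}} = \sum_{s} \Pi_s\, \bar Z\, \mathrm{sign}(\text{decoder correction at } s)$. Here $\Pi_s$ projects onto $X$-error syndrome $s$, equivalently onto the joint eigenspace of the $Z$-type stabilizers. The key structural claim is that the algebra $\mathcal A_Z$ generated by $Z$-type Paulis is invariant under the adjoint of one round, i.e.\ of the noise channel composed with the syndrome-based channel. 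Moreover, on $\mathcal A_Z$ this adjoint acts as a classical stochastic matrix on $X$-error configurations.

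To establish the invariance, I will represent elements of $\mathcal A_Z$ as functions of an $X$-error pattern $e\in\{0,1\}^n$: the operator $\sum_e f(e)\, X^e \ket{\bar 0}\!\bra{\bar 0} X^e$ restricted to the $Z$-sector, or more intrinsically the commutative algebra spanned by products of $Z_i$. I will then check two things.
\begin{itemize}
\item Pauli noise maps $Z_A \mapsto \sum_P p(P)\, P Z_A P = \sum_P p(P)(\pm Z_A)$. This is diagonal and depends only on the $X$-part of $P$, so the induced action is the classical bit-flip kernel given by the $X$-marginal of the noise.
\item The syndrome-based channel has Kraus operators of the form $X^{a}\Pi_s$ times $Z$-type or identity factors. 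By definition the total rate (here, probability) of applying $X^a$ depends only on the $Z$-stabilizer syndrome. Conjugating $Z_A$ by such Kraus operators gives $\Pi_s(\pm Z_A)\Pi_s$, again inside $\mathcal A_Z$ because the $\Pi_s$ are functions of $Z$-stabilizers. The $Z$-type and $X$-sector corrections drop out, since they commute with $Z_A$ up to sign, and the sum over them of the probabilities is normalized by the syndrome-based condition.
\end{itemize}
Together these give the round kernel $K(e\to e')$, which is exactly the induced discrete-time sector Markov process, i.e.\ the SPCA step with noise.

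With the invariance in hand, iterate $m$ rounds. In the classical picture the observable $\bar Z_{\mathrm{dec}}$ corresponds to the function $g(e) = (-1)^{\ell(e)}$, where $\ell(e)$ is the decoded logical bit of configuration $e$ (the logical class of $e$ plus the decoder's correction for its syndrome). The autocorrelation is then $\mathrm{Tr}(\rho_0\, \Phi^{*m}(\bar Z_{\mathrm{dec}})) = \mathbb E_{e_0}[\,(K^m g)(e_0)\,]\, g(e_0)$, with $e_0 = \mathbf 0$ for the ideal initial codeword. This evaluates to $\Pr[\ell(e_m)=\ell(e_0)] - \Pr[\ell(e_m)\neq \ell(e_0)] = 1 - 2\Pr[\text{decoded bit changed}]$, as claimed. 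For initial states that are arbitrary code states, I will use that $\mathcal A_Z$ observables see only the $\bar Z$-diagonal part of $\rho_0$, so the answer is the same.

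The main obstacle is the second bullet: showing that cross terms from Kraus operators mixing $X$- and $Z$-sector actions do not take $Z_A$ out of the commutative algebra or introduce a dependence on the $X$-syndrome. The syndrome-based definition (Def.~\ref{def:syndrome-pca-classical} and its quantum analogue) must be used precisely there. I expect to handle it by summing the Kraus operators over the $Z$-type factor first, using that $Z$-type Paulis commute with $Z_A$ and with the $\Pi_s$. Only the marginal flip distribution conditioned on $s$ then survives, and by assumption it depends on $s$ alone.
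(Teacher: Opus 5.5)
Your architecture (Heisenberg picture, an invariant operator space on which one round acts as the backward kernel of the sector chain, then evaluation at the vacuum) matches the paper. However, the invariant space you pick is too large, and the argument fails at that point. You take all $Z$-diagonal operators, i.e.\ the span of all $Z$-type strings $Z^{\mathbf v}$. A syndrome-based channel only requires its syndrome functions $f_{\mathbf u^X,\mathbf u^Z,\alpha}$ to lie in the full stabilizer algebra $\mathcal A$. That algebra contains the $X$-stabilizers, and $Z^{\mathbf v}$ commutes with them only when $H_X\mathbf v=\mathbf 0$. So your structural claim is false in general. For example, take the single Kraus operator $K=e^{i\theta S^X_a}$, with $S^X_a$ an $X$-check acting on qubit $1$. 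It is syndrome-based ($f=K\in\mathcal A$, $f^\dagger f=\mathbb I$), yet $K^\dagger Z_1K=Z_1(\cos2\theta\,\mathbb I+i\sin2\theta\,S^X_a)$ is not diagonal.

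Your second bullet also misreads the Kraus structure. The $X$-sector corrections are $Z^{\mathbf a(s)}\Pi^X_s$, and $\Pi^X_s$ is built from $X$-type checks, so these do not commute with $Z^{\mathbf v}$ up to sign. They act on it as the pinching $\sum_s\Pi^X_sZ^{\mathbf v}\Pi^X_s$, which annihilates $Z^{\mathbf v}$ whenever $H_X\mathbf v\neq\mathbf 0$. Summing over the $Z$-type Pauli factors first, as you propose for the ``main obstacle'', does not address this. The obstruction is the $X$-stabilizer content of $f$, not the Pauli factor.

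The missing idea is to work only with operators that commute with all of $\mathcal A$. The paper uses $\mathcal V_Z=\operatorname{span}\{\bar ZB:B\in\mathcal A_Z\}$, with $\mathcal A_Z$ generated by the $Z$-\emph{stabilizers}; it contains $T_Z=\bar Z\varphi_Z$. Take $O\in\mathcal V_Z$ and $K=X^{\mathbf u^Z}Z^{\mathbf u^X}f$. The $Z$-type factor $Z^{\mathbf u^X}$ commutes with $X^{\mathbf u^Z}OX^{\mathbf u^Z}\in\mathcal V_Z$, and that operator commutes with $f$. Hence $K^\dagger OK=f^\dagger f\,X^{\mathbf u^Z}OX^{\mathbf u^Z}$. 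Summing gives $\mathcal Q^*(O)=\sum_{\mathbf u}\mathsf P^{(Z)}_{\mathbf u}X^{\mathbf u}OX^{\mathbf u}$, and only at this point does the condition $\mathsf P^{(Z)}_{\mathbf u}\in\mathcal A_Z$ return you to $\mathcal V_Z$. Your diagonal picture can be salvaged the same way by restricting to $\operatorname{span}\{Z^{\mathbf v}:H_X\mathbf v=\mathbf 0\}$, i.e.\ functions invariant under shifts by $\operatorname{row}(H_X)$.

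This restriction also fixes your final step. Since $T_Z\,(\mathcal Q^*)^r(T_Z)\in\mathcal A_Z$ and $\operatorname{Tr}(\rho_{\mathrm{code}}A)=A(\mathbf 0)$ for $A\in\mathcal A_Z$, the autocorrelation is just evaluation at $\mathbf e=\mathbf 0$. In your version, the computational-basis distribution of the code state is supported on many strings of $\mathcal C_Z$, not concentrated at $\mathbf 0$. Replacing it by $e_0=\mathbf 0$ then needs a translation-covariance argument that you do not supply.
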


The formal statement is Proposition~\ref{prop:reduction-sync-formal} in Section \ref{sec:results}. 
The proofs of the two propositions 
are essentially the same.

Our starting point for proving the propositions is 
the  
framework of Alicki et al.~\cite{alicki_thermal_2010}, who showed 
(in the case of the $4D$ toric code in a thermal bath) that the Heisenberg evolution of the logical observables decouples into $X$ and $Z$ sectors, reducing 
quantum memory lifetime to the decay of equilibrium autocorrelation 
functions. 
The decoupling proof of~\cite{alicki_thermal_2010}
does not extend to non-equilibrium
syndrome-based Lindbladians straightforwardly, since it works directly 
on the specific quantum evolution, and no classical 
processes are even defined. 
The key to making progress is the central notion that 
we define of  {\it syndrome based Lindbladian} 
(Definition \ref{def:syndrome-based-lindbladian}). 
Roughly, these are Lindbladians whose $X$ and $Z$ jump operators depend only on the values of the $X$ and $Z$ syndromes, respectively. 
The main technical contribution is in setting up the right framework,  identifying 
the algebraic structure that such Lindbladians induce, and making use of this structure to define Markov processes on classical {\it functions}, rather than on observables. 

The initial setting is standard: 
for each classical component of the CSS code, we partition codewords into two logical cosets, labeled by the eigenvalue of the logical $X$ or $Z$ to be remembered, and assign them bit values $0$ and $1$ - these are the classical logical bits to be remembered. We associate to each component an $X$- or $Z$-sector process, respectively, initialized at a codeword but allowed to evolve to arbitrary configurations. At readout, we compute the syndrome, apply a fixed noiseless decoder, and define the {\it decoded logical bit} $b$ as the logical bit of the resulting codeword.

To proceed, we consider the notion of {\it decoded logical observable} - this is an observable that can be derived from the bare logical observable by conjugating with ideal decoding. 
The fundamental observation is that, for a CSS code, the Heisenberg
evolution of each decoded logical observable 
stays within a certain operator subspace, 
under any syndrome-based Lindbladian. More specifically, the decoded observable
$T_X$ is the bare logical $\bar X$ times an operator that when restricted to a given syndrome, is simply a syndrome-dependent sign (i.e., a scalar operator) that
lies in the $X$-stabilizer algebra $\mathcal A_X$. This means that $T_X$  belongs to the
\emph{logical sector subspace} 
 $\mathcal V_X$ spanned by the operators
$\bar XA$ with $A\in\mathcal A_X$. Any allowed term of the syndrome-based Lindbladian maps one element of $\mathcal{V}_X$ to another; hence, $\mathcal{V}_X$ is closed under its dynamics. 
Of course, all this applies to the $Z$ sector in the same way.  

The key point is that with closure established, the sector dynamics translate to a
classical description, namely, to a Markov process on \emph{classical functions} on $\mathcal{R}^{2^n}$.
We define linear maps $\Phi_X$ and $\Phi_Z$
that identify elements of $\mathcal V_X$ and $\mathcal V_Z$ with
scalar functions on the binary error vector spaces. A direct
computation shows that the Heisenberg generator, conjugated through
$\Phi_X$, coincides with the backward generator of a classical
continuous-time Markov chain on the $X$-error vectors, and that
$\Phi_X(T_X)=(-1)^{b_X}$, with $b_X$ the decoded logical value. This
gives an exact identity between the quantum decoded logical
{\it autocorrelation} and the probability that the classical logical-error
indicator has flipped, which is exactly the statement of the proposition.

The above can be summarized in one commuting
square, established as Proposition~\ref{prop:classical-repr} 
and
reproduced here (notation of Section~\ref{sec:async}):
\begin{equation*}
\begin{tikzcd}[row sep=3em, column sep=5em, ampersand replacement=\&]
O \in \mathcal{V}_X
  \arrow[r, "\Phi_X"]
  \arrow[d, "\mathcal{W}^*(t{,}0)"']
\& \Phi_X(O) \in \mathbb{R}^{\Omega_X}
  \arrow[d, "M^{(X)}(t{,}0)"] \\
\mathcal{W}^*(t{,}0)(O) \in \mathcal{V}_X
  \arrow[r, "\Phi_X"']
\& M^{(X)}(t{,}0)\,\Phi_X(O) \in \mathbb{R}^{\Omega_X}
\end{tikzcd}
\end{equation*}
Heisenberg evolution on the left; the classical sector chain
on the right; the representation map $\Phi_X$ across. This diagram is the underlying technical engine of essentially all results in this paper.

To prove the equivalence theorems 
(Theorem~\ref{thm:main-async-informal}, Theorem~\ref{thm:main-spca-sync-informal}), 
it remains to deduce a bound on the fidelity of the expectation values with respect to any logical observable, given the classical memory. 
This simply follows from showing that the effective logical channel of the entire process --- the 
combined action of encoding, noise and correction, and decoding on the 
encoded qubit --- is a Pauli channel. This in turn can be shown from a standard symmetry argument: each stage of the channel commutes with conjugation by the
logical Paulis up to at most a sign, and this symmetry is restrictive
enough to force the channel's Heisenberg adjoint to act diagonally in the
Pauli basis.
Once the channel is known to be Pauli, its worst-case fidelity is 
controlled by how close it is to the identity, which is precisely 
what the sector autocorrelations measure.

\subsubsection{Realizing classical PCAs by Floquet Lindbladians (Theorem \ref{thm:periodic-informal}).}
\label{subsec:periodic-overview}

Here we describe the main technical ideas in the proof of Theorem~\ref{thm:periodic-informal}, where we start with a pair of SPCAs
and, based on them, construct a time-periodic Lindbladian whose induced classical stochastic processes at boundaries of the periodicity cycles is  exactly that described by the SPCAs. Each cycle of this
Lindbladian consists of three stages; the cycles repeat indefinitely. In the first
stage, ``Reset'', the classical registers of every site are cleared. In the
second, ``Read'', each site reads its neighboring syndrome bits and records
in a decision register whether a correction (in the respective sector \(X\)
or \(Z\)) is needed. In the third, ``Execute'', the correction is applied
conditioned on this decision, and an execution flag ensures that it is
applied only once.

Implementing the Lindbladian makes use of 
classical noisy registers to
store the correction decision and its execution status during each
cycle. We note that formally, adding classical registers takes the generator outside of the
syndrome-based class as we defined it in \ref{def:syndrome-based-lindbladian}. This is because we preferred to keep the  definition in the first parts of the paper simple, for readability. However the definition and proofs of   Proposition~\ref{prop:reduction-async-formal}
and Theorem~\ref{thm:main-async-formal} extend to the setting of additional classical registers in a completely straightforward way (Subsection~\ref{subsec:register-extension}). 

To prove the theorem, the goal is to show that when considering the induced continuous time sector processes  
of the quantum system, given by the extended version of Theorem~\ref{thm:main-async-informal}, at the boundary of the correction cycles, namely at the end of each three-period cycle, the $X$ and $Z$ sectors 
are valid runs of the SPCAs we started with.

The challenge is that an SPCA applies its update rule synchronously at every
site, whereas the Lindbladian induces a stochastic continuous-time process,
which is random not only in its noise but also in whether and when each
correction jump occurs.

We first handle the simple case, and restrict attention to sites and cycles, such that the register's site operations all
complete their action ("reset", "read" or "execute") within the cycle time period (namely, the Poisson clock fired at least once); and furthermore, we assume in this simple case that no noise occurred in the site, nor in its nearby neighborhood, within that cycle.
It is easy to see that such a site performs
exactly its SPCA update for that cycle.  

We now have to handle the sites that do not belong to this simple case; we call those {\it faulty} sites. To proceed, we  show  that the correlations between faulty sites is very restricted. As  a first observation, we note that the faulty sites of a cycle are determined by the Poisson clocks
within that cycle, hence there are no correlations between different cycles. Secondly, spatially, correlations are confined to a constant size range. We can then apply a standard mechanism is classical probability theory, for bounding such locally correlated processes by a stochastic domination
argument, due to Liggett--Schonmann--Stacey theorem
\cite{liggett_domination_1997}. The faults in our locally correlated process can be covered by an independent fault set at a slightly larger rate (the covering requires a standard coupling argument~\cite{strassen_existence_1965}), so that the data process sampled at cycle boundaries is itself
an SPCA trajectory, with a somewhat larger noise rate.  
By taking the correction rate to be a large enough constant and the noise rate to be a small enough constant, this larger noise rate of the induced sector process that is easily related to the noise rate  of the SPCAs (Claim~\ref{clm:memory-transfer}) can be forced to be smaller than the threshold noise rate of the SPCAs.

\subsection{Related work}
\label{subsec:related-work}

Beyond the works discussed above, we mention three adjacent directions that are
not inputs to our results.

\subsubsection{Autonomous quantum error correction}

Continuous-time error correction~\cite{paz_continuous_1998} led to autonomous
schemes in which measurement and feedback are replaced by engineered
dissipation, from nanophotonic circuits~\cite{kerckhoff_designing_2010} to
bosonic and superconducting encodings~\cite{harrington_engineered_2022}. Sarma
and Mabuchi used the separability of $X$- and $Z$-type syndrome extraction to
decompose such CSS dynamics~\cite{sarma_gauge_2013}, a separability that our
sector reduction formalizes and makes exact. These schemes protect codes of fixed size, so
their lifetime does not diverge with the number of qubits. In extended systems,
measurement-free protection of the 2D surface code was proposed by coupling the
syndrome to a cooled auxiliary layer of classical
spins~\cite{fujii_measurement-free_2014}, driven-dissipative two-dimensional
candidates were supported
numerically~\cite{lieu_candidate_2024,kumar_self-correction_2026}, local
Lindbladians were shown numerically to stabilize topological order against
heralded noise~\cite{chirame_stabilizing_2024}, and upper bounds on lifetime of autonomous
memories built from Markovian decoders were derived~\cite{shtanko_bounds_2025}.

\subsubsection{Local decoders with classical processing}

Local cellular-automaton decoders for topological
codes~\cite{breuckmann_local_2016,kubica_cellular-automaton_2019}, including a
numerically studied 4D Toom-rule decoder for the 4D toric
code~\cite{breuckmann_local_2016}, assume a noiseless classical layer. Lake's
message-passing construction~\cite{lake_local_2025} has a provable threshold,
also under asynchronous updates, given $O(\mathrm{polylog}\,L)$ noiseless
classical bits per qubit, and was extended to a two-dimensional fault-tolerant
architecture assuming noiseless classical processing and a global
clock~\cite{selub_local_2026}. In the same paper, Lake provides an asynchronous construction in the same setting of  $O(\mathrm{polylog}\,L)$ noiseless
classical bits per qubit, supported by numerical analysis.   

\subsubsection{Active quantum matter}
Autonomous quantum memories are closely related to active quantum matter, where nonequilibrium dynamics can sustain steady-state currents and collective order. Khasseh et al.\ found a quantum
flocking phase in one-dimensional local, translationally invariant Lindblad
dynamics~\cite{khasseh_active_2025}. Engineered dissipation can stabilize quantum information, as shown numerically in~\cite{pastawski_quantum_2011} and rigorously here; an open question is whether analogous stabilization can instead emerge from the collective dynamics of an active phase, for instance through nonreciprocal currents that transport and annihilate topological defects, without the dynamics being explicitly designed as a decoder.

\subsection{Further Questions and Outlook}\label{subsec:outlook} 

This work makes significant progress toward autonomous quantum memories and raises several interesting and timely questions. 

\begin{itemize} 
\item Given the equivalence theorems that provide a path from classical memory to quantum memory, one is led to ask - what {\bf new autonomous quantum memories constructions} can be found, making use of the reduction to the classical PCA memory question? The exact conditions on SPCAs required for this matter, for a given CSS code \ref{sec:classical-problem}, reduce the problem to an entirely classical one, enabling us to borrow results from the well established research area of classical PCAs.   Specifically, can our framework help in achieving autonomous synchronous quantum memories in $1$D? 
And can new, less demanding, and more efficient constructions of synchronous quantum memories in $2D$ be constructed using our framework? The $2D$ memory construction of \cite{dunnweber_quantum_2026} inherently relies on a hierarchical, concatenated structure, and 
 it is not clear how to make it more efficient or applicable to other codes. It is interesting to see whether the general framework presented here can be applied to derive synchronous quantum memories for other CSS codes. 
 \item 
The current reduction framework applies to local Pauli noise channels. Addressing {\bf general Markovian noise}, which includes coherent errors and mixed off-diagonal terms, remains an open problem for autonomous quantum memories. In contrast, active fault-tolerance protocols mitigate these terms through projective syndrome measurements, which form the basis for established threshold theorems against general local noise.
As another important generalization, it is desirable to generalize the framework proven here, to other classes of codes, and possibly beyond codes, to {\bf other settings exhibiting certain symmetry restrictions}.
\item 
A fundamental open problem for autonomous architectures is {\bf structural stability}, tightly related to the study of these questions from the broader point of view of noisy phases of matter. In Hamiltonian models, stability refers to the robustness of the topological degeneracy against local Hamiltonian perturbations~\cite{bravyi_topological_2010,bravyi_short_2011}. However, the corresponding notion of stability in the dissipative Lindbladian framework is not quite clear. While physical imperfections manifest as additional local terms within the master equation—such as coherent dynamics or miscalibrated dissipative jumps—defining and proving the robustness of the macroscopic memory lifetime against superoperator perturbations of general types remains an open conceptual and analytical challenge. 
\end{itemize}

We end with two fundamental questions. 
First, an important question that this work highlights is whether autonomous quantum memories in the time-independent asynchronous setting, namely, without any periodicity, exist. 
Our Theorem \ref{thm:periodic-informal} requires the Lindbladian to be periodic in time; 
the reliance on scheduled stages necessitates a global clock. This global synchronization introduces a hidden form of non-locality. 
As mentioned in Subsubsection \ref{sec:WhyLindbladians}, to the best of our knowledge, no fully autonomous quantum memory, i.e., one that uses a time-independent local TI Lindbladian, is rigorously proven to exist (beyond the thermal equilibrium setting). 

One may hope to derive fully autonomous quantum memories from PCA-type ones; 
To the best of our knowledge, our Corollary \ref{thm:toom-sync-informal} for the 4D Toom's rule toric code, together with the 2D construction of \cite{dunnweber_quantum_2026} are the only known rigorously proven autonomous PCA-type quantum memories protecting against Pauli noise. 
Can such PCA-type results be upgraded to the fully autonomous domain? 
If it is 
possible to prove a variant of Theorem 
\ref{thm:periodic-informal} that 
outputs a fully autonomous quantum memory, with a time-independent Lindbladian, that would constitute 
a black-box converter from classical synchronous memories to quantum asynchronous ones. Interestingly, such a black-box result would also imply a black-box synchronous-to-asynchronous result that (while derived via a quantum route), lives solely within the classical domain -- a result that would be surprising contribution outside of the quantum domain.

Finally, a major question is to arrive at a more physical characterization of the conditions and requirements for 
physical systems to exhibit autonomous quantum memory. Much of the current literature in the area is based on computer-science-like, protocol-inspired constructions.  The hope is to be able to 
get a more physics-based understanding. 
Our framework bridges the languages; we note that the notion of syndrome-based Lindbladians we use here has a natural physical meaning when we consider the syndrome as the energy excitations of the system. Our hope is that this work will lead to a better understanding of the physics of quantum memory and also quantum fault tolerance.  

\subsection{Organization of the remainder of the paper}
The remainder of the paper is organized as follows.
Section~\ref{sec:setup} sets up the framework. It first introduces the
objects common to both settings: CSS codes and their sector configuration
spaces, decoded logical observables, the effective logical channel, and the
quantum and classical memory definitions. It then defines syndrome-based
Lindbladians and their induced continuous-time sector processes, and
syndrome-based quantum channels together with their induced
discrete-time sector processes, locality, and SPCAs.
Section~\ref{sec:results} states the main results formally.
Section~\ref{sec:async} proves the asynchronous reduction: the algebraic
decoupling of the two sectors, the invariance of the sector subspaces under
the Lindbladian, the classical representation and propagator intertwining,
and the conversion of the resulting autocorrelation identity into a memory
bound through the Pauli structure of the effective logical channel. It
closes by showing how the thermal memory result of Alicki et
al.~\cite{alicki_thermal_2010} for the 4D toric code is recovered
(Subsection~\ref{subsec:alicki}).
Section~\ref{sec:sync-reduction} proves the synchronous analogue and
constructs the quantum lift of a pair of SPCAs
(Subsection~\ref{subsec:sync-spca-realization}).
Section~\ref{sec:toom} verifies that the 4D Toom rules satisfy the
hypotheses of this construction, using Ahn's threshold analysis, and thereby
proves Theorem~\ref{thm:toom-sync-formal}.
Section~\ref{sec:periodic} returns to continuous time. It extends the
reduction to Lindbladians with classical registers, constructs the periodic
Lindbladian, and shows that its sector processes, sampled at cycle
boundaries, are runs of the original SPCAs, which proves
Theorem~\ref{thm:periodic-formal}.
Section~\ref{sec:classical-problem} isolates the purely classical problem
that suffices as input to both constructions.
Appendices~\ref{app:setup}, \ref{app:reduction}, \ref{app:sync},
\ref{app:registers} and~\ref{app:pca} contain the deferred proofs for
Section~\ref{sec:setup}, Section~\ref{sec:async},
Section~\ref{sec:sync-reduction},
Subsection~\ref{subsec:register-extension}, and the remainder of
Section~\ref{sec:periodic}, respectively.

\section{Setup and notation}\label{sec:setup}

This section establishes the mathematical formalism for autonomous quantum memory and its classical reduction. We define the CSS stabilizer code setting, and the class of syndrome-based Lindbladians describing a continuous-time physical dynamics depending only on the syndrome of the system. We define the induced continuous classical Markov processes and the memory time. To bound the memory time, we work in the Heisenberg picture and introduce {\it  decoded logical observables}, that incorporate the decoding map directly into the observable. 

\subsection{General setup}
\label{subsec:general-setup}

\subsubsection{CSS codes, syndromes, error vectors}
We recall the standard definitions of CSS, syndromes and error vectors, to set up the notation.  
\begin{definition}[CSS codes and stabilizer algebras]
\label{def:css-code} Let the Hilbert space of our physical system be $\mathcal{H}$.
Let $H_X \in \mathbb{F}_2^{m_X \times n}$ and $H_Z \in \mathbb{F}_2^{m_Z \times n}$ be parity-check matrices satisfying the CSS compatibility condition $H_XH_Z^T=0$, and define the classical component codes by $\mathcal{C}_X:=\ker H_X$ and $\mathcal{C}_Z:=\ker H_Z$. The rows of these matrices are permitted to be linearly dependent. An $n$-qubit CSS stabilizer code is defined by the corresponding generating sets of stabilizer operators:
\begin{equation}
    \mathcal S_X=\{ S^X_1,\dots,S^X_{m_X}\},
    \qquad
    \mathcal S_Z=\{S^Z_1,\dots,S^Z_{m_Z}\},
\end{equation}
where $S_i^X, S_i^Z\in P_n$, the Pauli group on $n$ qubits. The \emph{code space} $\mathcal H_{\mathrm{code}}\subseteq\mathcal H$ is the joint $+1$ eigenspace of all stabilizers in $\mathcal S_X\cup\mathcal S_Z$, with orthogonal projector $\Pi_{\mathrm{code}}$. 
We also define the associated Abelian stabilizer algebras $\mathcal A_X:=\mathrm{Alg}(\mathcal S_X)$ and $\mathcal A_Z:=\mathrm{Alg}(\mathcal S_Z)$, and the full stabilizer algebra $\mathcal A:=\mathrm{Alg}(\mathcal S_X\cup\mathcal S_Z)$, which is Abelian and contains both.
The stabilizer Hamiltonian is
\begin{equation}
    H = -\sum_{a=1}^{m_X} S_a^X - \sum_{a=1}^{m_Z} S_a^Z 
    \in \mathcal{A}_X + \mathcal{A}_Z.
\end{equation}
\end{definition}
More generally, the Hamiltonian may be any Hermitian element of $\mathcal A$ (Definition~\ref{def:syndrome-based-lindbladian}).

Distances between qubits are measured in the interaction graph of the
code, in which two qubits are adjacent if some stabilizer generator acts
on both. For a set of qubits \(A\), we write \(N_R(A)\) for the set of
qubits within distance \(R\) of \(A\), \(\bar A\) for its complement,
and \(\mathcal H_A\) for the Hilbert space of the qubits in \(A\).

We introduce the binary configuration spaces associated with the two CSS sectors.

\begin{definition}[Sector configuration spaces and syndromes]
Denote
\begin{equation}
\Omega_X:=\mathbb F_2^n,
\qquad
\Omega_Z:=\mathbb F_2^n .
\end{equation}
An element $\mathbf x\in\Omega_X$ is called an \emph{$X$-sector configuration}, and its syndrome is
$H_X\mathbf x\in\mathbb F_2^{m_X}.$
The space of valid $X$-syndromes is therefore
$\Sigma_X:=\operatorname{im}H_X.$
Likewise, for $\mathbf z\in\Omega_Z$, its $Z$-syndrome is $H_Z\mathbf z$, and
$\Sigma_Z:=\operatorname{im}H_Z.$
\end{definition}

The sector label follows the logical observable being tracked. Thus the
$X$ sector tracks the $X$-logical observable (defined below) and is sensitive to physical $Z$-type Pauli
flips, which are detected by the $X$ stabilizers; symmetrically, the $Z$
sector tracks the $Z$-logical observable and is sensitive to physical $X$-type flips. The
convention is summarized by
\begin{center}
\begin{tabular}{lllll}
\hline
Sector
& Tracks
& Configuration space
& Detecting stabilizers
& Syndrome map
\\
\hline
$X$
& logical $X$
& $\Omega_X=\mathbb{F}_2^n$
& $\mathcal{S}_X$
& $\mathbf{x}\mapsto H_X\mathbf{x}$
\\
$Z$
& logical $Z$
& $\Omega_Z=\mathbb{F}_2^n$
& $\mathcal{S}_Z$
& $\mathbf{z}\mapsto H_Z\mathbf{z}$
\\
\hline
\end{tabular}
\end{center}
More generally, any object related to the $X$-sector carries a superscript or
subscript $X$, except for the $Z$-Pauli errors themselves that are those relevant to the $X$-sector but retain their own
letter and are still
written with $Z$'s.

\begin{definition}[Syndrome projectors]
\label{def:syndrome-projectors}
Recall that the valid $X$-syndrome space $\Sigma_X \subseteq \mathbb{F}_2^{m_X}$ is the image of $H_X$. For a valid syndrome string $s \in \Sigma_X$, define the corresponding $X$-syndrome projector $\Pi_s^X$. This projects onto the set of states in $\cal{H}$  with syndrome $s$; this subspace is a simultaneous eigenspace of all $X$- stabilizers, where the associated eigenvalues are the relevant syndrome bit.
We thus define the $X$  syndrome projector (with respect to the syndrome $s$) as: 
\begin{equation}
    \Pi_s^X := \prod_{a=1}^{m_X} \frac{\mathbb{I} + (-1)^{s_a}S_a^X}{2}.
\end{equation}
The corresponding $Z$-syndrome projectors $\Pi_s^Z$ are defined symmetrically.

We further introduce the following notation of $A^X(s)$. 
Because the Abelian algebra $\mathcal{A}_X$ is generated by the mutually commuting stabilizers $\mathcal{S}_X$, any operator $A \in \mathcal{A}_X$ acts as a scalar multiple of the identity on $\Pi_s^X(\mathcal{H})$. Thus, for 
\(A^X\in\mathcal A_X\) and \(s\in\Sigma_X\), we can write
\begin{equation}\label{def:eigenvalue-functions}
    A^X\Pi_s^X\equiv A^X(s)\Pi_s^X.
\end{equation}
with $A^X(s)$ being the 
scalar eigenvalue of \(A^X\) acting on the \(X\)-syndrome projector. 
Note that with this
identification, products can be evaluated pointwise:
    $(A^XB^X)(s)=A^X(s)B^X(s).$ The $Z$ version \(A^Z(s)\) for $s\in\Sigma_Z$ is defined accordingly.

\end{definition}

\subsubsection{Classical logical values}
\label{subsec:classical-logical-labels}
The sector processes evolve binary strings in $\Omega_X$, which need not be
codewords. We first define the logical value carried by a codeword, then extend
this value to an arbitrary string by readout decoding. We develop the $X$
sector explicitly; the $Z$ sector is obtained by symmetry.

\begin{definition}[Logical classes and logical values]
\label{def:logical-values}
The CSS condition $H_XH_Z^T=0$ of Definition~\ref{def:css-code} places
every row of $H_Z$ in $\ker H_X=\mathcal C_X$, so
$\operatorname{row}(H_Z)\subseteq\mathcal C_X$. Two codewords
$\mathbf c,\mathbf c'\in\mathcal C_X$ lie in the same \emph{$X$-logical
class} when
$\mathbf c\oplus\mathbf c'\in\operatorname{row}(H_Z)$. Thus the
$X$-logical classes are the cosets
\begin{equation}
    [\mathbf c]_X
    :=
    \mathbf c+\operatorname{row}(H_Z)
    \in
    \mathcal C_X/\operatorname{row}(H_Z).
\end{equation}
An \emph{$X$-logical value} is a nonzero linear map
\begin{equation}\label{eq:mu-definition}
    \mu_X:
    \mathcal C_X/\operatorname{row}(H_Z)
    \longrightarrow
    \mathbb F_2,
\end{equation}
and $\mu_X([\mathbf c]_X)$ is the binary value stored by the codeword
$\mathbf c$.
\end{definition}

Requiring $\mu_X$ to be nonzero ensures that it reads out a genuine binary
value rather than the identically zero function. If
$\dim\bigl(\mathcal C_X/\operatorname{row}(H_Z)\bigr)>1$, there is more
than one admissible choice in Eq.~\eqref{eq:mu-definition}; the choice used in
this paper is fixed in Definition~\ref{def:logical-coordinate-identification}.

A string produced by the dynamics need not be a codeword. Its logical value
at readout is obtained by first decoding its syndrome.

\begin{definition}[Readout decoders]
\label{def:readout-decoder}
An \emph{$X$-readout decoder} is a map
\begin{equation}\label{eq:decoder-properties}
    \mathsf{Dec}_X:\Sigma_X\longrightarrow\Omega_X,
    \qquad
    H_X\mathsf{Dec}_X(s)=s
    \quad\text{for all }s\in\Sigma_X,
    \qquad
    \mathsf{Dec}_X(\mathbf 0)=\mathbf 0.
\end{equation}
The \emph{decoded string} associated with $\mathbf x\in\Omega_X$ is
$\mathbf x\oplus\mathsf{Dec}_X(H_X\mathbf x)$.
\end{definition}

The decoder is applied only at readout and need not be geometrically
local, in contrast with the local correction dynamics considered in
the applications below. No
linearity of $\mathsf{Dec}_X$ is assumed. By
Eq.~\eqref{eq:decoder-properties},
$H_X\bigl(\mathbf x\oplus\mathsf{Dec}_X(H_X\mathbf x)\bigr)=\mathbf 0$, so the
decoded string is a codeword and its logical value is well defined.

\begin{definition}[Decoded logical value]
\label{def:decoded-logical-value}
The \emph{decoded $X$-logical value} of a string $\mathbf x\in\Omega_X$ is
\begin{equation}\label{eq:b-definition}
    b_X:\Omega_X\longrightarrow\mathbb F_2,
    \qquad
    b_X(\mathbf x)
    :=
    \mu_X\!\left(
        \left[
            \mathbf x\oplus\mathsf{Dec}_X(H_X\mathbf x)
        \right]_X
    \right).
\end{equation}
\end{definition}

If $\mathbf w\in\mathcal C_X$ then $H_X\mathbf w=\mathbf 0$ and
$\mathsf{Dec}_X(\mathbf 0)=\mathbf 0$, so the decoded string of $\mathbf w$ is
$\mathbf w$ itself and $b_X(\mathbf w)=\mu_X([\mathbf w]_X)$; in particular
$b_X(\mathbf 0)=0$. Thus $b_X$ extends the logical value from
$\mathcal C_X$ to all of $\Omega_X$.
 It is not linear, what does hold is additivity under shifts by
codewords.
\begin{claim}[Additivity by codeword]
    The decoded logical value $b_X$ satisfies additivity for codewords:
    \begin{equation}
           b_X(\mathbf x\oplus\mathbf w)
    =
    b_X(\mathbf x)\oplus b_X(\mathbf w),
    \qquad
    \mathbf x\in\Omega_X,\;
    \mathbf w\in\mathcal C_X. 
    \end{equation}
\end{claim}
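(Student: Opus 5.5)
The plan is to unfold Definition~\ref{def:decoded-logical-value} on both sides and reduce the identity to linearity of $\mu_X$ on the quotient $\mathcal C_X/\operatorname{row}(H_Z)$. The only property of the decoder that enters is that its input, the syndrome, is unchanged by a codeword shift. No linearity of $\mathsf{Dec}_X$ is needed.

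First, I would observe that for $\mathbf w\in\mathcal C_X=\ker H_X$, linearity of $H_X$ gives
\begin{equation*}
H_X(\mathbf x\oplus\mathbf w)=H_X\mathbf x\oplus H_X\mathbf w=H_X\mathbf x .
\end{equation*}
Hence $\mathsf{Dec}_X\bigl(H_X(\mathbf x\oplus\mathbf w)\bigr)=\mathsf{Dec}_X(H_X\mathbf x)$. Write $\mathbf d:=\mathsf{Dec}_X(H_X\mathbf x)$ for this common decoder output. The decoded string of $\mathbf x\oplus\mathbf w$ is then $(\mathbf x\oplus\mathbf d)\oplus\mathbf w$. The remark after Definition~\ref{def:readout-decoder} shows that $\mathbf x\oplus\mathbf d$ is a codeword, and $\mathbf w$ is a codeword, so both summands lie in $\mathcal C_X$.

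Second, the coset map $\mathcal C_X\to\mathcal C_X/\operatorname{row}(H_Z)$, $\mathbf c\mapsto[\mathbf c]_X$, is the linear quotient map, so
\begin{equation*}
[(\mathbf x\oplus\mathbf d)\oplus\mathbf w]_X=[\mathbf x\oplus\mathbf d]_X+[\mathbf w]_X .
\end{equation*}
Applying the linear map $\mu_X$ to this equation gives
\begin{equation*}
b_X(\mathbf x\oplus\mathbf w)=\mu_X([\mathbf x\oplus\mathbf d]_X)\oplus\mu_X([\mathbf w]_X)=b_X(\mathbf x)\oplus\mu_X([\mathbf w]_X).
\end{equation*}
Finally, the remark following Definition~\ref{def:decoded-logical-value}, which uses $\mathsf{Dec}_X(\mathbf 0)=\mathbf 0$, identifies $\mu_X([\mathbf w]_X)=b_X(\mathbf w)$, and this concludes the argument.

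There is no real obstacle here. The one point that needs care is the first step: additivity comes from the decoder seeing the same syndrome, not from any structure of $\mathsf{Dec}_X$. Everything after that is linearity of the quotient map and of $\mu_X$.
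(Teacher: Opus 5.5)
Your proposal is correct and follows the paper's proof essentially verbatim. Equal syndromes give the same decoder output, so the decoded strings differ by $\mathbf w$; linearity of $\mu_X$, together with $b_X(\mathbf w)=\mu_X([\mathbf w]_X)$ (which uses $\mathsf{Dec}_X(\mathbf 0)=\mathbf 0$), then gives the claim.
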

\begin{proof}
 For $\mathbf w\in\mathcal C_X$ and $\mathbf x\in\Omega_X$,
    $H_X(\mathbf x\oplus\mathbf w)
    =
    H_X\mathbf x$,
so the decoder returns the same error vector on $\mathbf x\oplus\mathbf w$
and on $\mathbf x$, and their decoded strings differ by $\mathbf w$. Using linearity of $\mu_X$,
\begin{equation}\label{eq:observable-covariance}
    b_X(\mathbf x\oplus\mathbf w)
    =
    b_X(\mathbf x)\oplus\mu_X([\mathbf w]_X)
    =
    b_X(\mathbf x)\oplus b_X(\mathbf w).
\end{equation}
\end{proof}

\subsubsection{Quantum readout and decoded logical observables}
\label{subsec:quantum-readout}
Having defined the classical sector processes, the logical values
$\mu_X,\mu_Z$, their decoded extensions $b_X,b_Z$, and the normalized
sector error processes, we now introduce the designated quantum logical
qubit and its readout observables. Their relation to the classical
logical-error indicators is established in the following subsection.

\begin{definition}[Logical encoding]
\label{def:logical-subsystem}
The CSS code encodes $k$ logical qubits. Throughout this paper we study the ability for a system to remember a single logical qubit: prior to encoding, we designate one input
qubit $Q$ as the carrier of the logical state to be remembered, and refer to
the remaining $n-1$ physical qubits as the remainder subsystem. When $k>1$, the
other $k-1$ logical degrees of freedom are absorbed into this remainder
subsystem and held in a fixed state. This defines the tensor
decomposition of the unencoded physical Hilbert space:
\begin{equation}
    \mathcal H = \mathcal H_Q \otimes \mathcal H_{\mathrm{rest}}, \qquad \mathcal H_Q = \mathbb C^2.
\end{equation}
Let $X_Q, Y_Q, Z_Q \in \mathcal{B}(\mathcal{H}_Q)$ denote the Pauli operators acting on the designated carrier qubit $Q$ before encoding.

Let $U$ be a unitary encoding operator on $\mathcal H$ such that
$U(\mathcal H_Q\otimes|0\rangle_{\mathrm{rest}})$ lies in the code space
$\Pi_{\mathrm{code}}\mathcal H$ of Definition~\ref{def:css-code}
(for $k=1$ this image is the whole code space). We choose $U$ so that the
Pauli operators of the designated input qubit are represented on the code
space by a conjugate pair of CSS logical Pauli strings
$\bar X\in\{I,X\}^{\otimes n}$ and
$\bar Z\in\{I,Z\}^{\otimes n}$:
\begin{equation}\label{eq:encoding-condition}
    \bar X\Pi_{\mathrm{code}}
    =
    U(X_Q\otimes I_{\mathrm{rest}})U^\dagger\Pi_{\mathrm{code}},
    \qquad
    \bar Z\Pi_{\mathrm{code}}
    =
    U(Z_Q\otimes I_{\mathrm{rest}})U^\dagger\Pi_{\mathrm{code}}.
\end{equation}
\end{definition}

The second requirement is a restriction on $U$ which can always be met for a CSS code~\cite[§3.2, §4.2, §5.3]{gottesman_stabilizer_1997}.

\begin{definition}[Bare logical operators]
\label{def:bare-logicals}
The bare logical operators are the $n$-qubit Pauli strings $\bar{X}$ and $\bar{Z}$ of Eq.~\eqref{eq:encoding-condition}, which implement the encoding of the operators $X_Q$ and $Z_Q$ on the code space. Let $\ell_X, \ell_Z \in \mathbb{F}_2^n$ be $n$-bit strings that indicate the locations of the non-identity Pauli operators in $\bar{X}$ and $\bar{Z}$, respectively.
\end{definition}
We note that there could be many choices of these bare logical operators and their support. We fix one choice.

\begin{definition}[Classical logical values of the designated qubit]
\label{def:logical-coordinate-identification}
We choose the logical-value maps $\mu_X,\mu_Z$ of
Definition~\ref{def:logical-values} to be
\begin{equation}
    \mu_X([\mathbf c]_X):=\ell_X\cdot\mathbf c,
    \quad \mathbf c\in\mathcal C_X,
    \qquad
    \mu_Z([\mathbf c]_Z):=\ell_Z\cdot\mathbf c,
    \quad \mathbf c\in\mathcal C_Z,
\end{equation}
with inner products over $\mathbb F_2$.
\end{definition}

The corresponding functions $b_X$ and $b_Z$ are the decoded logical
values of Definition~\ref{def:decoded-logical-value}. When evaluated on
the normalized error processes,
$b_X(\xi_t^{(X)})$ and $b_Z(\xi_t^{(Z)})$ are the associated
logical-error indicators.

\begin{claim}[Validity of logical value choices]
\label{clm:logical-values-valid}
The maps of Definition~\ref{def:logical-coordinate-identification} are well
defined, linear and nonzero, and are therefore logical values in the sense of
Definition~\ref{def:logical-values}.
\end{claim}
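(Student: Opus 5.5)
The plan is to check the three properties in turn for $\mu_X$; the $Z$ case follows by exchanging the roles of $X$ and $Z$. \emph{Linearity} of $\mathbf c\mapsto \ell_X\cdot\mathbf c$ on $\mathcal C_X$ is immediate from bilinearity of the $\mathbb F_2$ inner product. The content is therefore in showing that this map descends to the quotient $\mathcal C_X/\operatorname{row}(H_Z)$ and is not identically zero there.

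\emph{Well-definedness.} It suffices to show $\ell_X\cdot\mathbf r=0$ for every row $\mathbf r$ of $H_Z$. Each $Z$-stabilizer generator $S^Z_a$ is the Pauli string $Z^{\mathbf r_a}$, and $\bar X=X^{\ell_X}$. Two such strings commute iff $\ell_X\cdot\mathbf r_a=0$, so we need $[\bar X,S^Z_a]=0$. By \eqref{eq:encoding-condition}, $\bar X\Pi_{\mathrm{code}}=U(X_Q\otimes I)U^\dagger\Pi_{\mathrm{code}}$, and the right-hand side maps the code space into itself; I take the image of $U(\mathcal H_Q\otimes|0\rangle)$ to be preserved, which holds because $X_Q\otimes I$ preserves $\mathcal H_Q\otimes|0\rangle$. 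Hence $\bar X$ maps some nonzero code vector $\psi$ to a code vector. If instead $\bar X$ anticommuted with $S^Z_a$, then $S^Z_a\bar X\psi=-\bar X S^Z_a\psi=-\bar X\psi$. Since $\bar X$ is unitary, $\bar X\psi\neq0$, so $\bar X\psi$ would lie in a $-1$ eigenspace of a stabilizer, contradicting $\bar X\psi\in\mathcal H_{\mathrm{code}}$. Therefore $\ell_X\cdot\mathbf r_a=0$ for all $a$, and by linearity $\ell_X\perp\operatorname{row}(H_Z)$.

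\emph{Nonzero.} We need a codeword $\mathbf c\in\mathcal C_X$ with $\ell_X\cdot\mathbf c=1$. The natural candidate is $\mathbf c=\ell_Z$. First, $\ell_Z\in\ker H_X$: the same argument as above, applied to $\bar Z=Z^{\ell_Z}$ and the $X$-stabilizers $X^{\mathbf r}$ with $\mathbf r$ a row of $H_X$, shows that $\bar Z$ commutes with every $X$-stabilizer, so $H_X\ell_Z=0$. Second, $X_Q$ and $Z_Q$ anticommute, so on the code-space image of $\mathcal H_Q\otimes|0\rangle$ we have
\[
\bar X\bar Z\psi=U(X_QZ_Q\otimes I)U^\dagger\psi=-U(Z_QX_Q\otimes I)U^\dagger\psi=-\bar Z\bar X\psi
\]
for a nonzero $\psi$. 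In this computation I use that $\bar Z\psi$ remains in the image, so the encoding identity can be applied twice. Since $\bar X$ and $\bar Z$ are Pauli strings, they either commute or anticommute globally, so they anticommute. This gives $\ell_X\cdot\ell_Z=1$, and hence $\mu_X([\ell_Z]_X)=1\neq0$. Symmetrically, $\ell_X\in\mathcal C_Z$ witnesses that $\mu_Z$ is nonzero.

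The main (mild) obstacle is the bookkeeping in the encoding condition. It only constrains $\bar X,\bar Z$ after multiplication by $\Pi_{\mathrm{code}}$, so one must justify applying it to $\bar Z\psi$, that is, that the encoded subspace is invariant. I would handle this by working with $\psi$ in $U(\mathcal H_Q\otimes|0\rangle)$, on which both encoded Paulis act invariantly.
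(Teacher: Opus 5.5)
Your proposal is correct and follows the paper's proof. Well-definedness comes from $\ell_X\perp\operatorname{row}(H_Z)$, since $\bar X$ commutes with the $Z$-stabilizers; linearity comes from the inner product; and nonvanishing comes from the witness $\ell_Z\in\mathcal C_X$ with $\ell_X\cdot\ell_Z=1$. The only difference is that you derive the commutation and anticommutation facts from the encoding condition~\eqref{eq:encoding-condition} on $U(\mathcal H_Q\otimes|0\rangle)$, whereas the paper simply asserts them.
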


\begin{proof}
The bare logical operators commute with every stabilizer, so
$H_Z\ell_X=\mathbf 0$ and $H_X\ell_Z=\mathbf 0$, and they anticommute with
each other, so $\ell_X\cdot\ell_Z=1$. If $[\mathbf c']_X=[\mathbf c]_X$, then
$\mathbf c'=\mathbf c\oplus\mathbf h$ for some
$\mathbf h\in\operatorname{row}(H_Z)$, and $\ell_X\cdot\mathbf h=0$ since
$H_Z\ell_X=\mathbf 0$; hence $\ell_X\cdot\mathbf c'=\ell_X\cdot\mathbf c$ and
$\mu_X$ is well defined. It is linear because the inner product is. Finally,
$\ell_Z\in\ker H_X=\mathcal C_X$ and
$\mu_X([\ell_Z]_X)=\ell_X\cdot\ell_Z=1$, so $\mu_X\neq0$. The $Z$-sector
argument is identical under $X\leftrightarrow Z$.
\end{proof}

The quantum memory is assessed after a final, one-shot readout. We
incorporate this procedure algebraically by pulling the final syndrome
measurement and Pauli correction back onto the logical observable, thereby
obtaining an effective observable evaluated on the physical state before
readout~\cite{alicki_thermal_2010}. We call this final operation
\emph{decoding}, reserving the term \emph{correction} for the continuous
dissipative operations applied during the evolution.

\begin{definition}[Pauli readout corrections]
\label{def:pauli-readout-corrections}
Let $\mathsf{Dec}_X$ and $\mathsf{Dec}_Z$ be the classical readout
decoders fixed in Definition~\ref{def:readout-decoder}. Their associated
Pauli readout corrections are
\begin{equation}
    R_X(s_X):=Z^{\mathsf{Dec}_X(s_X)},
    \qquad
    R_Z(s_Z):=X^{\mathsf{Dec}_Z(s_Z)},
\end{equation}
where $Z^{\mathbf v}$ and $X^{\mathbf v}$ denote the corresponding Pauli
strings supported on $\mathbf v\in\mathbb F_2^n$.
Thus $R_X(s_X)$ removes the $X$-syndrome $s_X$, while $R_Z(s_Z)$ removes
the $Z$-syndrome $s_Z$. For a joint syndrome $(s_X,s_Z)$, we take the
readout correction to be
    $R_X(s_X)R_Z(s_Z).$
\end{definition}
Since each correction commutes with the stabilizers of the opposite
sector, this joint correction maps the corresponding joint syndrome
sector to the code space. The normalization
$\mathsf{Dec}_X(\mathbf 0)=\mathsf{Dec}_Z(\mathbf 0)=\mathbf 0$ gives
    $R_X(\mathbf 0)=R_Z(\mathbf 0)=\mathbb I$.
\begin{definition}[Physical decoding channel]
\label{def:physical-decoding-channel} 
The final physical decoding channel $\mathcal{E}_{\mathrm{dec}}: \mathcal{T}(\mathcal{H}) \to \mathcal{T}(\mathcal{H})$ projects the state onto the joint eigenspaces of the two syndrome projectors $\Pi_{s}^X$ and $\Pi_{s'}^Z$ (these projectors commute --- see Remark~\ref{rem:decoder-commute} below) and applies the corresponding Pauli corrections assigned by the decoders $R_X,R_Z$.
Because the CSS decoder processes sectors independently (Definition~\ref{def:pauli-readout-corrections}), the global channel is the composition $\mathcal{E}_{\mathrm{dec}} = \mathcal{E}_{\mathrm{dec}}^{(X)} \circ \mathcal{E}_{\mathrm{dec}}^{(Z)}$. The $X$-sector decoding channel is defined by the Kraus representation:
\begin{equation}
    \mathcal{E}_{\mathrm{dec}}^{(X)}(\rho) := \sum_{s \in \Sigma_X} R_X(s) \Pi_s^X \rho \Pi_s^X R_X(s)^\dagger.
\end{equation}
Its adjoint maps a physical observable $O \in \mathcal{B}(\mathcal{H})$ via:
\begin{equation}\label{eq:adjoint-decoding}
    \mathcal{E}_{\mathrm{dec}}^{(X)*}(O) = \sum_{s \in \Sigma_X} \Pi_s^X R_X(s)^\dagger O R_X(s) \Pi_s^X.
\end{equation}
The $Z$-sector channel $\mathcal{E}_{\mathrm{dec}}^{(Z)}$ and its adjoint are defined symmetrically.
\end{definition}

\begin{remark}\label{rem:decoder-commute}
The sector decoding channels commute, $[\mathcal{E}_{\mathrm{dec}}^{(X)}, \mathcal{E}_{\mathrm{dec}}^{(Z)}] = 0$. This follows because the CSS projectors commute ($[\Pi_s^X, \Pi_{s'}^Z]=0$), and the restricted Pauli assignments ensure $R_X(s) \in \{I,Z\}^{\otimes n}$ commutes with $\Pi_{s'}^Z$ and $R_Z(s') \in \{I,X\}^{\otimes n}$ commutes with $\Pi_s^X$. While the Pauli corrections $R_X$ and $R_Z$ may anticommute, any acquired global phase cancels during conjugation, namely, when considering the operation as super-operators. Therefore, the order of composition in $\mathcal{E}_{\mathrm{dec}} = \mathcal{E}_{\mathrm{dec}}^{(X)} \circ \mathcal{E}_{\mathrm{dec}}^{(Z)}$ is arbitrary.
\end{remark}

\begin{remark}[Trace-preservation of the decoding channel]
The decoding channel $\mathcal{E}_{\mathrm{dec}}^{(X)}$ is 
trace-preserving on all of $\mathcal{H}$. Since the stabilizers 
$\{S_a^X\}$ are commuting Hermitian operators with eigenvalues $\pm 1$, 
their joint eigenspace projectors partition the identity:
$\sum_{s \in \mathbb{F}_2^{m_X}} \Pi_s^X = \mathbb{I}$.
For $s \notin \Sigma_X$, the projector $\Pi_s^X = 0$ vanishes, 
so the sum reduces to $\sum_{s \in \Sigma_X} \Pi_s^X = \mathbb{I}$. 
Since each $R_X(s)$ 
is unitary, the Kraus completeness relation 
$\sum_{s \in \Sigma_X} \Pi_s^X R_X(s)^\dagger R_X(s) \Pi_s^X = \mathbb{I}$ 
holds, confirming that $\mathcal{E}_{\mathrm{dec}}^{(X)}$ is 
trace-preserving.
\end{remark}

\begin{definition}[Decoded logical observables]
\label{def:decoded-observables}
The decoded logical observables $T_X$ and $T_Z$ incorporate the decoding by applying the adjoint decoding channel (Eq.~\eqref{eq:adjoint-decoding}) to the bare logical observables (namely, the bare logical operator from Definition~\ref{def:bare-logicals}, for operators that are the observables), producing an effective observable evaluated on the uncorrected physical state, as follows:
\begin{equation}
    T_X := \mathcal{E}_{\mathrm{dec}}^{(X)*}(\bar{X}) = \sum_{s \in \Sigma_X} \Pi_s^X R_X(s)^\dagger \bar{X} R_X(s) \Pi_s^X,
\end{equation}
\begin{equation}
    T_Z := \mathcal{E}_{\mathrm{dec}}^{(Z)*}(\bar{Z}) = \sum_{s \in \Sigma_Z} \Pi_s^Z R_Z(s)^\dagger \bar{Z} R_Z(s) \Pi_s^Z.
\end{equation}

\end{definition}

Measuring the decoded logical observables, namely taking $\mathrm{Tr}(T_X\rho)$ corresponds to first applying the decoding channel $\mathcal{E}_{\mathrm{dec}}$ on $\rho$ and then measuring the bare logical observable $\bar{X}$ (likewise for $Z$). The code-space autocorrelations of $T_X$ and $T_Z$ will be used below to
quantify the quantum memory; their relation to the classical sector
processes is established in Proposition~\ref{prop:reduction-async-formal}.

\subsubsection{Mixed code state and code-space inner product}
\begin{definition}[Mixed code state and code-space inner product] 
\label{def:reference-state}
The mixed code state $\rho_{\mathrm{code}} \in \mathcal{T}(\mathcal{H})$ is defined as the physical state corresponding to the maximally mixed state on the logical subsystem tensored with the remainder vacuum state:
\begin{equation}
    \rho_{\mathrm{code}} := U \left( \frac{1}{2} I_Q \otimes |0\rangle\langle 0|_{\mathrm{rest}} \right) U^\dagger.
\end{equation}
This reference state defines a code-space inner product for observables $A, B \in \mathcal{B}(\mathcal{H})$:
\begin{equation}
    \langle A, B \rangle_{\mathrm{code}} := \operatorname{Tr}\!\left(\rho_{\mathrm{code}} A^\dagger B\right).
\end{equation}
\end{definition}

\subsubsection{Effective logical channel and memory}

The encoding and final decoding maps are independent of the physical
dynamics acting between them.  We therefore define the effective logical
channel for an arbitrary physical CPTP map.

\begin{definition}[Effective logical channel]
\label{def:effective-logical-channel}
Let
\(
\mathcal W_*:\mathcal T(\mathcal H)\to\mathcal T(\mathcal H)
\)
be a CPTP map describing the physical evolution between encoding and
readout time, and let \(\mathcal W^*\) denote its Heisenberg adjoint.
The corresponding effective logical channel and its adjoint are
\begin{equation}\label{eq:effective-channel}
    \mathcal E_L[\mathcal W_*]
    :=
    \mathcal D_{\mathrm{dec}}
    \circ
    \mathcal W_*
    \circ
    \mathcal E_{\mathrm{enc}},
    \qquad
    \mathcal E_L^*[\mathcal W_*]
    :=
    \mathcal E_{\mathrm{enc}}^*
    \circ
    \mathcal W^*
    \circ
    \mathcal D_{\mathrm{dec}}^* .
\end{equation}
Here
\begin{align}
    \mathcal E_{\mathrm{enc}}(\rho_Q)
    &:=
    U\bigl(
        \rho_Q\otimes |0\rangle\langle0|_{\mathrm{rest}}
    \bigr)U^\dagger,
    &
    \mathcal E_{\mathrm{enc}}^*(O)
    &:=
    \langle0|_{\mathrm{rest}}
    U^\dagger O U
    |0\rangle_{\mathrm{rest}},
    \\
    \mathcal D_{\mathrm{dec}}(\rho)
    &:=
    \operatorname{Tr}_{\mathrm{rest}}
    \!\left[
        U^\dagger \mathcal E_{\mathrm{dec}}(\rho) U
    \right],
    &
    \mathcal D_{\mathrm{dec}}^*(O_Q)
    &:=
    \mathcal E_{\mathrm{dec}}^*
    \!\left(
        U(O_Q\otimes I_{\mathrm{rest}})U^\dagger
    \right).
\end{align}
\end{definition}

We use worst case fidelity to quantify memory. A system is said to have memory if its fidelity stays close to 1 over time.
\begin{definition}[Worst-case encoded fidelity]
\label{def:fidelity}
For a physical channel \(\mathcal W_*\), define
\begin{equation}
    F(\psi;\mathcal W_*)
    :=
    \left\langle\psi\right|
    \mathcal E_L[\mathcal W_*]
    \bigl(|\psi\rangle\langle\psi|\bigr)
    \left|\psi\right\rangle
\end{equation}
and
\begin{equation}
    F_{\min}[\mathcal W_*]
    :=
    \min_{|\psi\rangle\in\mathcal H_Q}
    F(\psi;\mathcal W_*).
\end{equation}
\end{definition}

\begin{definition}[Quantum memory]
\label{def:quantum-memory}
\label{def:quantum-memory-cycles}
Let
\(
\{\mathcal W_{*,a}\}_{a\in\mathcal I}
\)
be a family of physical evolution channels indexed by the allowed
readout points \(\mathcal I\).  The encoded system is a
\emph{quantum \(\memerr\)-memory on \(\mathcal I\)} if
\begin{equation}
    \inf_{a\in\mathcal I}
    F_{\min}[\mathcal W_{*,a}]
    \ge
    1-\memerr .
\end{equation}

For continuous-time dynamics up to time \(t\), we take
\(\mathcal I=[0,t]\).  For synchronous dynamics through round \(m\),
we take
\(\mathcal I=\{0,\ldots,m\}\).  For a periodic continuous-time dynamics
sampled at cycle boundaries \(t_r=r\Delta\), we take
\(\mathcal I=\{t_0,\ldots,t_m\}\), and speak of memory \emph{at cycle boundaries through cycle \(m\)}.
\end{definition}

The classical analogue of memory requires the logical bit to retain its value over time.
\begin{definition}[Classically \(q\)-remembering a function]
\label{def:classical-memory}
Let
\(
(\xi_a)_{a\in\mathcal I}
\)
be a classical stochastic process on a configuration space \(\Omega\),
indexed by a set of allowed readout points \(\mathcal I\) containing
the initial point \(a_0\).
We say that the process \emph{\(q\)-remembers}
\(b:\Omega\to S\) on \(\mathcal I\) if
\begin{equation}
    \sup_{a\in\mathcal I}
    \Pr\!\left[
        b(\xi_a)\ne b(\xi_{a_0})
    \right]
    \le q.
\end{equation}
We refer to the cases
\(\mathcal I=[0,t]\) and
\(\mathcal I=\{0,\ldots,m\}\) as remembering
\emph{up to time \(t\)} and \emph{through round \(m\)}, respectively.
\end{definition}

\subsection{Asynchronous (continuous-time) setup}
\label{subsec:async-setup}

We turn to discuss the specific dynamics of the system. We start with the more physically-motivated case of continuous time evolution.
\subsubsection{Syndrome-based Lindbladians}
We use the standard Lindblad form~\cite{alicki_thermal_2010} acting on an $n$-qubit system, in which the dissipative
part of the generator is a sum of \emph{dissipators}
\begin{equation}
    \mathcal D[L](\rho)
    =
    L\rho L^\dagger-\frac12\{L^\dagger L,\rho\},
\end{equation}
where $\rho$ is a density matrix and the \emph{jump operator}
$L\in\mathcal B(\mathcal H)$ is an arbitrary linear operator. In the generators we
consider, every jump operator is a single-qubit Pauli operator multiplied by an
element of the full stabilizer algebra $\mathcal A$. We write $P_i$ for the
single-qubit operator $P\in\{I,X,Y,Z\}$ acting on qubit $i$, with $I_i=\mathbb I$.

We present a class of Lindbladians which are in the heart of this framework. Informally, a Lindbladian is \emph{syndrome-based} when the total rate of every
physical flip in each CSS sector depends only on that sector's syndrome.

\begin{definition}[Syndrome-based Lindbladian]
\label{def:syndrome-based-lindbladian}
A Lindbladian on $\mathcal H$ is \emph{syndrome-based} for the CSS code of
Definition~\ref{def:css-code} if it admits a Lindblad representation of the form
\begin{equation}\label{eq:lindbladian}
    \mathcal L(t)(\rho)
    =
    -i[H(t),\rho]
    +
    \sum_{i=1}^n\sum_{P\in\{I,X,Y,Z\}}\sum_{\alpha}
    \mathcal D\bigl[P_i\,f_{P,i,\alpha}(t)\bigr](\rho),
\end{equation}
where $\alpha$ ranges over a finite index set, the Hamiltonian $H(t)$ is a Hermitian
element of $\mathcal A$, each \emph{syndrome function} $f_{P,i,\alpha}(t)$ is an
element of $\mathcal A$, and, for every qubit $i$ and every $t\ge0$, the
\emph{flip-rate operators}
\begin{equation}\label{eq:flip-rates}
\begin{aligned}
    \Gamma_i^{(X)}(t)
    &:=
    \sum_{\alpha}
    \Bigl(
        f_{Z,i,\alpha}(t)^\dagger f_{Z,i,\alpha}(t)
        +
        f_{Y,i,\alpha}(t)^\dagger f_{Y,i,\alpha}(t)
    \Bigr),
    \\
    \Gamma_i^{(Z)}(t)
    &:=
    \sum_{\alpha}
    \Bigl(
        f_{X,i,\alpha}(t)^\dagger f_{X,i,\alpha}(t)
        +
        f_{Y,i,\alpha}(t)^\dagger f_{Y,i,\alpha}(t)
    \Bigr),
\end{aligned}
\end{equation}
satisfy the \emph{rate condition}
\begin{equation}\label{eq:rate-condition}
    \Gamma_i^{(X)}(t)\in\mathcal A_X,
    \qquad
    \Gamma_i^{(Z)}(t)\in\mathcal A_Z.
\end{equation}
\end{definition}

The subscript $P$ of a syndrome function labels the Pauli type of its jump, while the
superscript of a flip-rate operator labels the sector, in line with the convention
of the table above. Thus $\Gamma_i^{(X)}(t)$ collects the jumps that flip qubit $i$
in the $X$-sector, those with Pauli factor $Z_i$ or $Y_i$, and $\Gamma_i^{(Z)}(t)$
those with Pauli factor $X_i$ or $Y_i$.

 For instance, if the only
jumps at qubit $i$ are $Y_i$ jumps, their total rate operator enters both
$\Gamma_i^{(X)}(t)$ and $\Gamma_i^{(Z)}(t)$; since the only Pauli string that is both
$X$-type and $Z$-type is the identity, $\mathcal A_X\cap\mathcal A_Z=\mathbb C\,\mathbb I$,
and this rate operator must be a scalar.

Note that the class of syndrome-based Lindbladians is closed under sums: the Hamiltonians add,
the jump families are joined, and the flip-rate operators add, so the rate condition
is preserved. Our motivating example is the sum of the stabilizer Hamiltonian, a
correction Lindbladian and local Pauli noise, which we now define.

\begin{definition}[Local Lindbladian]
\label{def:local-lindbladian}
A syndrome-based Lindbladian is \emph{\(R\)-local} if it admits a
representation as in Equation~\eqref{eq:lindbladian} in which, for all
\(t\), every jump operator \(P_i\,f_{P,i,\alpha}(t)\) is supported in
\(N_R(i)\), and \(H(t)=\sum_iH_i(t)\) with each \(H_i(t)\in\mathcal A\)
Hermitian and supported in \(N_R(i)\).
\end{definition}

Thus in an infinitesimal time step each qubit is influenced only by its
\(R\)-neighborhood. For Lindbladians with auxiliary registers
(Appendix~\ref{app:registers}), the registers attached to a site \(i\)
are counted as located at \(i\).

\begin{definition}[Correction Lindbladian]
\label{def:correction-lindbladian}
A \emph{correction Lindbladian} is a syndrome-based Lindbladian with $H=0$ whose jumps
are all of the form
\begin{equation}\label{eq:jump-operators}
    Z_i\,f_{Z,i,\alpha}(t)
    \quad\text{with}\quad
    f_{Z,i,\alpha}(t)\in\mathcal A_X,
    \qquad
    X_i\,f_{X,i,\alpha}(t)
    \quad\text{with}\quad
    f_{X,i,\alpha}(t)\in\mathcal A_Z,
\end{equation}
that is,
\begin{equation}
    \mathcal L_{\mathrm c}(t)
    =
    \sum_{i=1}^n\sum_{\alpha}\mathcal D\bigl[Z_i\,f_{Z,i,\alpha}(t)\bigr]
    +
    \sum_{i=1}^n\sum_{\alpha}\mathcal D\bigl[X_i\,f_{X,i,\alpha}(t)\bigr].
\end{equation}
\end{definition}

Each syndrome function of a correction Lindbladian lies in the algebra of the sector
its jump flips, so the rate condition holds jump by jump. Typically the syndrome
functions are local scaled projectors selecting prescribed syndrome patterns in the
vicinity of qubit~$i$: the jump $Z_i\,f_{Z,i,\alpha}(t)$ applies the correction $Z_i$
exactly when the nearby $X$-syndrome matches the pattern selected by
$f_{Z,i,\alpha}(t)$, and the auxiliary index $\alpha$ allows several patterns to
trigger the same correction. In the case of Pauli noise, $f$ would be proportional to identity, since it is oblivious to the system's state.

The second instance is local Pauli noise.
\begin{definition}[Pauli channel noise generator]\label{def:pauli-noise} 
The noise generator is a local Pauli noise channel,
\begin{equation}
    \mathcal{L}_{\mathrm{noise}}(\rho)
    =
    \sum_{i=1}^n\sum_{P\in\{X,Y,Z\}}
    \gamma_{P,i}\,\mathcal{D}[P_i](\rho),
\end{equation}
with constant rates $\gamma_{P,i}\ge0$ (see Remark~\ref{rem:time-dependent-noise} for time-dependent rates). We define the \emph{noise rate} as
\begin{equation}
    \physrate := \max_{i,P} \, \gamma_{P,i}.
\end{equation}
For the specific case of depolarizing noise, $\gamma_{X,i}=\gamma_{Y,i}=\gamma_{Z,i}=\physrate$.
\end{definition}

Pauli noise is syndrome-based, with scalar syndrome functions
$f_{P,i}=\sqrt{\gamma_{P,i}}\,\mathbb I$ for $P\in\{X,Y,Z\}$. Since a $Y_i$ error flips
qubit $i$ in both sectors, its flip-rate operators are $\eta_i^{(X)}\mathbb I$ and
$\eta_i^{(Z)}\mathbb I$, where
\begin{equation}\label{eq:noise-flip-rates}
    \eta_i^{(X)} := \gamma_{Z,i}+\gamma_{Y,i},
    \qquad
    \eta_i^{(Z)} := \gamma_{X,i}+\gamma_{Y,i};
\end{equation}
for depolarizing noise, $\eta_i^{(X)}=\eta_i^{(Z)}=2\physrate$. Consequently, with $H$
the stabilizer Hamiltonian, $-i[H,\cdot\,]+\mathcal L_{\mathrm c}(t)+\mathcal L_{\mathrm{noise}}$
is syndrome-based, and in its flip-rate operators the correction and the noise
contribute additively to the rate of the same flip,
\begin{equation}
    \Gamma_i^{(X)}(t)
    =
    \sum_{\alpha}f_{Z,i,\alpha}(t)^\dagger f_{Z,i,\alpha}(t)+\eta_i^{(X)}\mathbb I,
    \qquad
    \Gamma_i^{(Z)}(t)
    =
    \sum_{\alpha}f_{X,i,\alpha}(t)^\dagger f_{X,i,\alpha}(t)+\eta_i^{(Z)}\mathbb I.
\end{equation}

\begin{remark}[Time-dependent noise rates]
\label{rem:time-dependent-noise}
We take the noise rates to be constant only to lighten the notation.
Rates $\gamma_{P,i}(t)$ that are measurable and locally bounded in $t$ are
covered verbatim: Definition~\ref{def:syndrome-based-lindbladian} already
allows time-dependent syndrome functions, and Section~\ref{sec:periodic} uses the
rates only through the bound
$\physrate:=\sup_{t\ge0}\max_{i,P}\gamma_{P,i}(t)$.
\end{remark}

This completes the definition of the Lindbladian we will be using. We recall \cite{breuer_concepts_2003} that the Schr\"odinger-picture propagator $\mathcal W_*(t,t')$ generated by $\mathcal L(t)$ is defined by
\begin{equation}\label{eq:s-propagator}
    \frac{d}{dt}\mathcal W_*(t,t')
    =
    \mathcal L(t)\,\mathcal W_*(t,t'),
    \qquad
    \mathcal W_*(t',t')=\mathrm{id},
\end{equation}
so that $\rho(t)=\mathcal W_*(t,t')\rho(t')$.

Our analysis is carried out in the Heisenberg picture, which we now obtain by duality. For a superoperator $\mathcal K$ on $\mathcal B(\mathcal H)$, its adjoint $\mathcal K^*$ is defined by $\operatorname{Tr}\!\big(O\,\mathcal K(\rho)\big)=\operatorname{Tr}\!\big(\mathcal K^*(O)\,\rho\big)$ for all $\rho,O$. The adjoint of the dissipator is
\begin{equation}\label{eq:dissipator-adjoint}
    \mathcal D[L]^*(O)
    =
    L^\dagger O L-\frac12\{L^\dagger L,O\},
\end{equation}
and the Heisenberg-picture generator is
\begin{equation}\label{eq:lindbladian-heisenberg}
    \mathcal{L}^*(t)(O) = i[H(t), O]
    + \sum_{i=1}^n\sum_{P\in\{I,X,Y,Z\}}\sum_{\alpha}
    \mathcal D\bigl[P_i\,f_{P,i,\alpha}(t)\bigr]^*(O).
\end{equation}
Taking the adjoint of Eq.~\eqref{eq:s-propagator}, the Heisenberg-picture propagator $\mathcal W^*(t,t'):=\mathcal W_*(t,t')^*$ satisfies
\begin{equation}\label{eq:q-propgator}
    \frac{d}{dt}\mathcal W^*(t,t')
    =
    \mathcal W^*(t,t')\,\mathcal L^*(t),
    \qquad
    \mathcal W^*(t',t')=\mathrm{id},
\end{equation}
so that $\operatorname{Tr}\!\big(O\,\rho(t)\big)=\operatorname{Tr}\!\big(\mathcal W^*(t,t')(O)\,\rho(t')\big)$.

In Subsection~\ref{subsec:register-extension} we extend the definition of the syndrome-based Lindbladian to include auxiliary classical registers.

\begin{remark}[Thermal baths as a special case]\label{rem:thermal}
For a commuting stabilizer Hamiltonian, the standard Davies dynamics is
syndrome based. Suppose each qubit is coupled locally through $X_i$ and
$Z_i$ to independent thermal bath channels at inverse temperature $\beta$.
In the weak-coupling limit, the dissipative part of the Davies generator
has jump operators~\cite{alicki_thermal_2010}
\begin{equation}
    L_{\sigma,i}(\omega)
    =
    \sum_{\varepsilon'-\varepsilon=\omega}
    \Pi(\varepsilon)\sigma_i\Pi(\varepsilon'),
\end{equation}
with rates $g_\sigma(\omega)\ge0$ satisfying
\begin{equation}
    g_\sigma(-\omega)
    =
    e^{-\beta\omega}g_\sigma(\omega).
\end{equation}
Here $\omega$ is the energy change associated with the transition.
For a CSS stabilizer Hamiltonian, this energy change is determined by the
local syndrome. Hence
\begin{equation}
    L_{X,i}(\omega)
    =
    X_iP_{i,\omega}^{(Z)},
    \qquad
    L_{Z,i}(\omega)
    =
    Z_iP_{i,\omega}^{(X)},
\end{equation}
where
$P_{i,\omega}^{(Z)}\in\mathcal A_Z$ and
$P_{i,\omega}^{(X)}\in\mathcal A_X$
project onto the syndrome patterns with Bohr frequency $\omega$.
The Davies generator is therefore syndrome-based
(Definition~\ref{def:syndrome-based-lindbladian}), with syndrome functions
$f_{X,i,\omega}=\sqrt{g_X(\omega)}\,P_{i,\omega}^{(Z)}\in\mathcal A_Z$ and
$f_{Z,i,\omega}=\sqrt{g_Z(\omega)}\,P_{i,\omega}^{(X)}\in\mathcal A_X$, the Bohr
frequency $\omega$ playing the role of the auxiliary label $\alpha$. For independent
local bath channels the Lamb shift is syndrome diagonal and enters the Hamiltonian
$H\in\mathcal A$.
\end{remark}

\subsubsection{Induced continuous-time sector processes}
We now construct continuous-time Markov chains (CTMCs) on the sector configuration
spaces $\Omega_X$ and $\Omega_Z$, describing the evolution of the physical
sector strings.

\begin{definition}[Classical Markov sector processes]
\label{def:string-process}
We consider continuous time Markov processes (CTMCs) on the strings
in each sector separately.
The $X$-sector Markov process will be denoted $(\chi_t^{(X)})_{t \ge 0}$ and takes values $\chi_t^{(X)}\in\Omega_X$, and $(\chi_t^{(Z)})_{t \ge 0}$ is defined similarly for $Z$. Their transitions are the single-bit flips specified in Definition~\ref{def:ctmc}. The process is initialized at a codeword $\mathbf w^X_0\in\mathcal C_X$,
so that $\chi_0^{(X)}=\mathbf w^X_0$ and similarly $\chi_0^{(Z)}=\mathbf w^Z_0\in\mathcal C_Z$.
\end{definition}

In Proposition~\ref{prop:reduction-async-formal} we will prove the connection of these processes to the quantum evolution generated by the Lindbladian in Eq.~\eqref{eq:lindbladian}.
We follow the standard construction of continuous-time Markov chains from time-dependent transition rates, see e.g.~\cite[\S2.3--2.4]{yin_continuous-time_2013}. We start by defining the transition rates for these processes:

\begin{definition}[Classical Markov sector processes - transition rates]
\label{def:ctmc}
Let $\mathcal L(t)$ be a syndrome-based Lindbladian , with a chosen syndrome-based representation 
(Definition~\ref{def:syndrome-based-lindbladian}). The transitions of its induced
sector processes are the single-bit flips $\mathbf x\mapsto\mathbf x\oplus \hat{e}_i$, at
rates given by the syndrome eigenvalue functions
(Equation~\eqref{def:eigenvalue-functions}) of the flip-rate operators of
Equation~\eqref{eq:flip-rates}: for valid syndromes $s_X\in\Sigma_X$ and
$s_Z\in\Sigma_Z$,
\begin{equation}
    r_i^{(X)}(s_X,t):=\Gamma_i^{(X)}(t)(s_X),
    \qquad
    r_i^{(Z)}(s_Z,t):=\Gamma_i^{(Z)}(t)(s_Z).
\end{equation}
\end{definition}

The rates are well defined by the rate condition~\eqref{eq:rate-condition}, and
nonnegative since the flip-rate operators are positive semidefinite. Every jump that
flips qubit $i$ in a sector contributes additively to the rate of the single
transition $\mathbf x\mapsto\mathbf x\oplus \hat{e}_i$; for the sum of a correction
Lindbladian and Pauli noise,
$r_i^{(X)}(s_X,t)=\sum_\alpha\bigl(f_{Z,i,\alpha}(t)^\dagger f_{Z,i,\alpha}(t)\bigr)(s_X)+\eta_i^{(X)}$.

Using these rates, we can now define in a standard way the generators of the classical Markov sector processes. We will take here the analogue of the Heisenberg picture approach, and define the analogue of the Lindbladian in this picture 
 (see Equation \eqref{eq:lindbladian-heisenberg}). 
This turns out to be the generators for the {\it backward} propagation of {\it functions} $h_X:\Omega_X\to \mathbb{R}$ on the Markov processes, where  
these functions should be thought of as the classical analogue of quantum observables in the Heisenberg picture.  

\begin{definition}[Classical Markov sector processes - Generators and Propagators]\label{def:ctmc-generators}
Let $h_X : \Omega_X \to \mathbb{R}$ and $h_Z : \Omega_Z \to \mathbb{R}$ be 
functions of the ($X$ or $Z$) error vectors to the real numbers. We define   the infinitesimal backward generators $G_X(t)$ and $G_Z(t)$ for the time evolution of $h_X(t)$ and $h_Z(t)$ as follows 
(note that $G_X(t)h_X,G_Z(t)h_Z$ are also functions from the error vectors to scalars):  
\begin{align}\label{eq:ctmc-generator}
    (G_X(t)h_X)(\mathbf{e}) &:= \sum_{i=1}^n r_i^{(X)}(H_X\mathbf{e},t)\, \bigl(h_X(\mathbf{e}\oplus \hat{e}_i)-h_X(\mathbf{e})\bigr), \\
    (G_Z(t)h_Z)(\mathbf{e}) &:= \sum_{i=1}^n r_i^{(Z)}(H_Z\mathbf{e},t)\, \bigl(h_Z(\mathbf{e}\oplus \hat{e}_i)-h_Z(\mathbf{e})\bigr).
\end{align}

The backward transition operator $M^{(X)}(t,t')$ (the classical analogue of the Heisenberg picture propagator $\mathcal{W}^*(t,t')$ defined in Eq.~\eqref{eq:q-propgator}) is defined by
\begin{equation}\label{eq:classical-propagator}
    (M^{(X)}(t,t')h_X)(\mathbf{e})
    :=
    \mathbb E\!\left[
        h_X(\chi_t^{(X)})
        \mid
        \chi_{t'}^{(X)}=\mathbf{e}
    \right].
\end{equation}
It satisfies, for $t\ge t'$,
\begin{equation}
    \frac{d}{dt}M^{(X)}(t,t')=M^{(X)}(t,t')G_X(t),
    \qquad
    M^{(X)}(t',t')=\mathbb{I},
\end{equation}
which is Kolmogorov's forward equation for the transition matrix of a Markov chain with time-dependent generator, see~\cite[Thm.~2.5(c)]{yin_continuous-time_2013}. Similarly for $Z$.
\end{definition}

We can now formulate what is meant by storing a classical logical value.
By Definition~\ref{def:string-process}, the $X$-sector process starts from
a codeword $\mathbf w_0^X\in\mathcal C_X$. The initially stored value is
therefore
   $ b_X(\chi_0^{(X)})=\mu_X([\mathbf w_0^X]_X)$.
A readout at time $t$ returns
$b_X(\chi_t^{(X)})$, so retaining the stored logical value means that
$b_X(\chi_t^{(X)})=b_X(\chi_0^{(X)})$ with high probability, as formalized in
Definition~\ref{def:classical-memory}.

For the analysis it is convenient to measure the evolving string relative to
the initially stored codeword.

\begin{definition}[Sector error process]
\label{def:error-process}
The \emph{$X$-sector error process} is the difference between the initial codeword and the current string of the sector process
\begin{equation}\label{eq:error-process-def}
    \xi_t^{(X)}
    :=
    \chi_t^{(X)}\oplus\mathbf w^X_0 ,
    \qquad t\ge0 .
\end{equation}
Since $\chi_0^{(X)}=\mathbf w^X_0$, the error process starts at
$\xi_0^{(X)}=\mathbf 0$. In that context, we refer to a vector $\mathbf{e}_X\in\Omega_X$ as \emph{error vector}.
\end{definition}

Applying Eq.~\eqref{eq:observable-covariance} to
$\chi_t^{(X)}=\xi_t^{(X)}\oplus\mathbf w^X_0$ gives
\begin{equation}\label{eq:normalization}
    b_X(\chi_t^{(X)})\oplus b_X(\chi_0^{(X)})
    =
    b_X(\xi_t^{(X)})\oplus b_X(\xi_0^{(X)}),
\end{equation}
so the retention condition is equivalent to
$b_X(\xi_t^{(X)})=b_X(\xi_0^{(X)})=0$, whatever codeword was stored. 
\begin{definition}[logical-error indicator]
    $b_X$ evaluated along the error process, $b_X(\xi_t^{(X)})$, is called \emph{logical-error indicator}. It is evaluated to 1 when a logical flip has occurred.
\end{definition}

The error process was obtained from the string process by a fixed shift. It
remains to check that this shift leaves the stochastic dynamics unchanged,
namely that the error process is itself a sector process with the same
generator $G_X(t)$ of Definition~\ref{def:ctmc-generators}, now started at
$\mathbf 0$.

\begin{claim}[Codeword-translation invariance]
\label{clm:translation-invariance}
Let $\mathbf w\in\mathcal C_X$. Define the map \begin{equation}
    T_{\mathbf w}:\Omega_X\to\Omega_X,
    \qquad
    T_{\mathbf w}(\mathbf x):=\mathbf x\oplus\mathbf w .
\end{equation}
Then for every $h_X:\Omega_X\to\mathbb R$ and
every $t\ge0$,
\begin{equation}\label{eq:generator-intertwining}
    G_X(t)\bigl(h_X\circ T_{\mathbf w}\bigr)
    =
    \bigl(G_X(t)h_X\bigr)\circ T_{\mathbf w}.
\end{equation}
Where $G_X$ is the generator from Definition~\ref{def:ctmc-generators}. Consequently the error process of Eq.~\eqref{eq:error-process-def} is an
$X$-sector process with the same generator $G_X(t)$ and hence the same
backward transition operators $M^{(X)}(t,t')$, initialized at
$\xi_0^{(X)}=\mathbf 0$. In particular, its law is independent of the
initially stored codeword $\mathbf w^X_0$.
\end{claim}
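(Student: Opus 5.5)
The plan is to verify the intertwining identity \eqref{eq:generator-intertwining} directly from the definition of $G_X(t)$ in Eq.~\eqref{eq:ctmc-generator}, using only that $\mathbf w\in\mathcal C_X=\ker H_X$. Fix $\mathbf e\in\Omega_X$. The left-hand side evaluated at $\mathbf e$ is
\begin{equation*}
    \sum_{i=1}^n r_i^{(X)}(H_X\mathbf e,t)\bigl(h_X(\mathbf e\oplus\hat e_i\oplus\mathbf w)-h_X(\mathbf e\oplus\mathbf w)\bigr).
\end{equation*}
The right-hand side at $\mathbf e$ is $(G_X(t)h_X)(\mathbf e\oplus\mathbf w)$, which equals the same sum but with rate argument $H_X(\mathbf e\oplus\mathbf w)$. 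Since $H_X\mathbf w=\mathbf 0$, linearity of $H_X$ gives $H_X(\mathbf e\oplus\mathbf w)=H_X\mathbf e$. The rates therefore agree term by term, and the bit flips commute with the shift because $\oplus$ is commutative. This proves \eqref{eq:generator-intertwining} for all $t$.

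Next I would lift the identity from generators to propagators. Let $C_{\mathbf w}$ denote the composition operator $h\mapsto h\circ T_{\mathbf w}$, which is an invertible linear map on $\mathbb R^{\Omega_X}$. The identity says $G_X(t)C_{\mathbf w}=C_{\mathbf w}G_X(t)$. Then $\tilde M(t,t'):=C_{\mathbf w}^{-1}M^{(X)}(t,t')C_{\mathbf w}$ satisfies the same Kolmogorov equation $\frac{d}{dt}\tilde M=\tilde M G_X(t)$ with $\tilde M(t',t')=\mathbb I$. By uniqueness of solutions to this linear ODE on the finite-dimensional space $\mathbb R^{\Omega_X}$ (with locally bounded rates, as assumed), $\tilde M=M^{(X)}$. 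Hence $M^{(X)}(t,t')(h\circ T_{\mathbf w})=(M^{(X)}(t,t')h)\circ T_{\mathbf w}$.

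Finally, I would translate this into the statement about the error process. Apply the propagator identity with $\mathbf w=\mathbf w_0^X$, and note that $T_{\mathbf w}$ is an involution. For any $h$ and $t\ge t'$, conditioning $\xi$ on $\xi_{t'}=\mathbf e$ is the same as conditioning $\chi$ on $\chi_{t'}=\mathbf e\oplus\mathbf w_0^X$. This gives
\begin{equation*}
    \mathbb E[h(\xi_t)\mid\xi_{t'}=\mathbf e]=\bigl(M^{(X)}(t,t')(h\circ T_{\mathbf w_0^X})\bigr)(\mathbf e\oplus\mathbf w_0^X)=(M^{(X)}(t,t')h)(\mathbf e).
\end{equation*}
The same computation applies to multi-time conditionings, using the Markov property of $\chi$. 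So $\xi$ is Markov with transition operators $M^{(X)}$ and starts at $\mathbf 0$. Its finite-dimensional distributions, and hence its law, therefore do not depend on $\mathbf w_0^X$.

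The only step needing care is the ODE uniqueness argument, which is routine here because the state space is finite. I do not expect a genuine obstacle.
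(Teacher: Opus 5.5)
Your proof is correct and follows the paper's argument. You verify the generator identity term by term using $H_X\mathbf w=\mathbf 0$ exactly as the paper does, and obtain the consequence by conjugating with the involution $T_{\mathbf w_0^X}$; the only cosmetic difference is that the paper conjugates the generator directly (showing the shifted process has generator $G_X$), whereas you first lift to the propagators by ODE uniqueness and then compute the conditional expectations of $\xi$, which makes explicit a step the paper leaves implicit.
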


The proof is deferred to Appendix~\ref{app:setup}.

The $Z$ sector is obtained by the exchange $X\leftrightarrow Z$, which
replaces $H_X$ by $H_Z$, $\mathcal C_X$ by $\mathcal C_Z$ and
$\operatorname{row}(H_Z)$ by $\operatorname{row}(H_X)$. It yields the logical
values $\mu_Z$ on $\mathcal C_Z/\operatorname{row}(H_X)$, a readout decoder
$\mathsf{Dec}_Z$, the decoded logical value $b_Z$, and the error process
$\xi_t^{(Z)}$.

\begin{example}[The 2D toric code]
For intuition, consider the $L\times L$ toric code as a classical code on edge strings and focus on the $X$ sector. The codewords $\mathcal C_X=\ker H_X$ are closed edge loops, modulo plaquette boundaries, and we may choose $\mu_X$ to be the mod-$2$ winding number in one fixed noncontractible direction. By linearity, the trivial logical class has value $0$. We choose
$\mu_X$ to detect winding in one fixed noncontractible direction, and take
$\mathbf w_0^X$ to be a noncontractible loop in that direction, so that
$\mu_X([\mathbf w_0^X]_X)=1$. Suppose, for example, that during the evolution a single edge of this loop is flipped from $1$ to $0$, creating a small gap in the loop. The resulting string $\chi_t^{(X)}$ has two syndrome defects at the endpoints of the missing edge. A natural readout decoder fills this gap by returning that edge, so that $\chi_t^{(X)}\oplus\mathsf{Dec}_X(H_X\chi_t^{(X)})=\mathbf w^X_0$ and hence $b_X(\chi_t^{(X)})=1$: the stored logical value is unchanged. In relative coordinates,

$$
    \xi_t^{(X)}=\chi_t^{(X)}\oplus\mathbf w^X_0,
    \qquad
    \xi_0^{(X)}=\mathbf 0,
$$
the same event is represented simply by the single flipped edge, and Eq.~\eqref{eq:observable-covariance} gives

$$
    b_X(\chi_t^{(X)})
    =
    b_X(\xi_t^{(X)})\oplus\mu_X([\mathbf w^X_0]_X)
    =
    b_X(\xi_t^{(X)})\oplus 1.
$$
Thus in this example $b_X(\xi_t^{(X)})=0$.
\end{example}

From this point on we work exclusively with the error processes
$(\xi_t^{(X)})_{t\ge0}$ and $(\xi_t^{(Z)})_{t\ge0}$, with generators $G_X(t)$
and $G_Z(t)$, propagators $M^{(X)}$ and $M^{(Z)}$, and initial conditions
$\xi_0^{(X)}=\xi_0^{(Z)}=\mathbf 0$; expectations over trajectories initiated
at $\mathbf 0$ are denoted by $\mathbb E_0$. The physical dynamics acts on the stored string and has no access to the
error vector, which is defined only relative to a codeword the system does
not know. Working with the error process nevertheless loses no physical
content. The syndrome map is linear and vanishes on codewords, so shifting by
the stored codeword changes no syndrome, and hence no transition rate and no
decoder output. By Claim~\ref{clm:translation-invariance}, the error process
is therefore distributed exactly as the physical string process started at the
codeword $\mathbf 0$, and by Eq.~\eqref{eq:normalization} the value a readout
returns is recovered from it by adding the stored value
$\mu_X([\mathbf w^X_0]_X)$.

For continuous-time dynamics, the physical channel at readout time \(t\)
is the Lindblad propagator
\begin{equation}
    \mathcal W_{*,t}
    :=
    \mathcal W_*(t,0).
\end{equation}
Accordingly, we use the shorthand
\begin{equation}
    \mathcal E_L(t)
    :=
    \mathcal E_L[\mathcal W_*(t,0)],
    \qquad
    F_{\min}(t)
    :=
    F_{\min}[\mathcal W_*(t,0)].
\end{equation}

\subsection{Synchronous (discrete-time) setup}
\label{subsec:sync-setup}

We now introduce the synchronous PCA-type discrete-time counterpart of the
continuous-time dynamics of Subsection~\ref{subsec:async-setup}.
The reduction below requires only that the probability of each physical
Pauli flip in a given CSS sector depend on the state through that
sector's syndrome (Definition~\ref{def:sync-syndrome-channel}).
Locality is a separate property, introduced in
Subsection~\ref{subsubsec:sync-locality} and used only when channels are constructed from SPCAs.

\subsubsection{Syndrome-based quantum channels}

\begin{definition}[Syndrome-based quantum channel]
\label{def:sync-syndrome-channel}
A CPTP map
    $\mathcal Q_*:
    \mathcal T(\mathcal H)
    \longrightarrow
    \mathcal T(\mathcal H)$
is \emph{syndrome-based} for the CSS code if it admits a Kraus
representation 
\begin{equation}\label{eq:sync-channel}
    \mathcal Q_*(\rho)
    =
    \sum_{\mathbf u^X,\mathbf u^Z,\alpha}
    K_{\mathbf u^X,\mathbf u^Z,\alpha}
    \rho
    K_{\mathbf u^X,\mathbf u^Z,\alpha}^\dagger ,
\end{equation}
where
\begin{equation}\label{eq:sync-kraus}
    K_{\mathbf u^X,\mathbf u^Z,\alpha}
    :=
    X^{\mathbf u^Z}Z^{\mathbf u^X}
    f_{\mathbf u^X,\mathbf u^Z,\alpha},
    \qquad
    f_{\mathbf u^X,\mathbf u^Z,\alpha}\in\mathcal A,
\end{equation}
with
\(
\mathbf u^X\in\Omega_X
\)
and
\(
\mathbf u^Z\in\Omega_Z
\).
The Kraus completeness relation is
    $\sum_{\mathbf u^X,\mathbf u^Z,\alpha}
    f_{\mathbf u^X,\mathbf u^Z,\alpha}^\dagger
    f_{\mathbf u^X,\mathbf u^Z,\alpha}
    =
    \mathbb I.$

For every sector flip, define the corresponding 
\emph{sector probability operators}
\begin{equation}
    \mathsf P_{\mathbf u^X}^{(X)}
    :=
    \sum_{\mathbf u^Z,\alpha}
    f_{\mathbf u^X,\mathbf u^Z,\alpha}^\dagger
    f_{\mathbf u^X,\mathbf u^Z,\alpha},\qquad  \mathsf P_{\mathbf u^Z}^{(Z)}
    :=
    \sum_{\mathbf u^X,\alpha}
    f_{\mathbf u^X,\mathbf u^Z,\alpha}^\dagger
    f_{\mathbf u^X,\mathbf u^Z,\alpha}. 
\end{equation}
We require the \emph{syndrome based condition} per sector
\begin{equation}\label{eq:sync-sector-condition}
    \mathsf P_{\mathbf u^X}^{(X)}
    \in\mathcal A_X,
    \qquad
    \mathsf P_{\mathbf u^Z}^{(Z)}
    \in\mathcal A_Z
\end{equation}
for every
\(\mathbf u^X,\mathbf u^Z\).
We denote the Heisenberg adjoint of \(\mathcal Q_*\) by
\(\mathcal Q^*\).
\end{definition}

The superscript on \(\mathbf u^\sigma\) labels the sector whose
configuration is flipped.  Thus the physical Pauli
\(Z^{\mathbf u^X}\) produces the \(X\)-sector flip
\(\mathbf u^X\), while
\(X^{\mathbf u^Z}\) produces the \(Z\)-sector flip
\(\mathbf u^Z\). 

By the Kraus completeness,
\begin{equation}
    \sum_{\mathbf u^X}
    \mathsf P_{\mathbf u^X}^{(X)}
    =
    \mathbb I,
    \qquad
    \sum_{\mathbf u^Z}
    \mathsf P_{\mathbf u^Z}^{(Z)}
    =
    \mathbb I.
\end{equation}
Since the sector probability operators are positive and belong to the
corresponding stabilizer algebras, their syndrome eigenvalue functions
define probability distributions.

\subsubsection{Induced discrete-time sector processes}

\begin{definition}[Induced discrete sector processes]
\label{def:sync-sector-process}
For
\(s_X\in\Sigma_X\) and \(s_Z\in\Sigma_Z\), define
\begin{align}
    p_X(\mathbf u^X\mid s_X)
    &:=
    \mathsf P_{\mathbf u^X}^{(X)}(s_X),
    \\
    p_Z(\mathbf u^Z\mid s_Z)
    &:=
    \mathsf P_{\mathbf u^Z}^{(Z)}(s_Z).
\end{align}

The induced \(X\)-sector process
\((\chi_r^{(X)})_{r\ge0}\) is the discrete-time Markov chain on
\(\Omega_X\) with transition probabilities
\begin{equation}\label{eq:sync-sector-transition-X}
    \Pr\!\left[
        \chi_{r+1}^{(X)}
        =
        \mathbf e\oplus\mathbf u^X
        \,\middle|\,
        \chi_r^{(X)}=\mathbf e
    \right]
    =
    p_X(
        \mathbf u^X
        \mid
        H_X\mathbf e
    ).
\end{equation}
The \(Z\)-sector process is defined analogously.

The corresponding one-step backward transition operators are
\begin{align} 
    (\mathsf B_X h)(\mathbf e)
    &:=
    \sum_{\mathbf u^X\in\Omega_X}
    p_X(
        \mathbf u^X
        \mid
        H_X\mathbf e
    )
    h(\mathbf e\oplus\mathbf u^X),
    \label{eq:sync-backward-X}
    \\
    (\mathsf B_Z h)(\mathbf e)
    &:=
    \sum_{\mathbf u^Z\in\Omega_Z}
    p_Z(
        \mathbf u^Z
        \mid
        H_Z\mathbf e
    )
    h(\mathbf e\oplus\mathbf u^Z).
    \label{eq:sync-backward-Z}
\end{align}
Thus
\begin{equation}\label{eq:sync-backward-r}
    (\mathsf B_\sigma^r h)(\mathbf e)
    =
    \mathbb E\!\left[
        h(\chi_r^{(\sigma)})
        \,\middle|\,
        \chi_0^{(\sigma)}=\mathbf e
    \right].
\end{equation}
\end{definition}

As in the continuous-time setting, let
\(\mathbf w_0^\sigma\in\mathcal C_\sigma\) be the initially stored
sector codeword and define the corresponding error process by
\begin{equation}\label{eq:sync-error-process}
    \xi_r^{(\sigma)}
    :=
    \chi_r^{(\sigma)}
    \oplus
    \mathbf w_0^\sigma .
\end{equation}
Then
\(
\xi_0^{(\sigma)}=\mathbf0
\).
Moreover,
\(H_\sigma(\mathbf e\oplus\mathbf w_0^\sigma)=H_\sigma\mathbf e\),
so the transition kernel is unchanged by this translation.  Hence
\((\xi_r^{(\sigma)})_{r\ge0}\) is distributed as the same sector
Markov chain, exactly as in the
continuous-time translation argument of
Claim~\ref{clm:translation-invariance}.

For synchronous quantum dynamics, the physical evolution after
\(r\) rounds is
\begin{equation}
    \mathcal W_{*,r}^{\mathrm{sync}}
    :=
    \mathcal Q_*^r.
\end{equation}
Accordingly, we use the shorthand
\begin{equation}
    \mathcal E_L(r)
    :=
    \mathcal E_L[\mathcal Q_*^r],
    \qquad
    F_{\min}(r)
    :=
    F_{\min}[\mathcal Q_*^r].
\end{equation}

\subsubsection{Synchronous Pauli noise}

\begin{definition}[Synchronous Pauli noise layer]
\label{def:sync-pauli-noise}
A \emph{synchronous Pauli noise layer} is a product channel
\(\mathcal N_*=\mathcal N_{1*}\otimes\cdots\otimes\mathcal N_{n*}\) with
\begin{equation}
    \mathcal N_{i*}(\rho)
    =
    \bigl(1-\theta_{X,i}-\theta_{Y,i}-\theta_{Z,i}\bigr)\rho
    +\theta_{X,i}\,X_i\rho X_i
    +\theta_{Y,i}\,Y_i\rho Y_i
    +\theta_{Z,i}\,Z_i\rho Z_i ,
\end{equation}
where \(\theta_{P,i}\ge0\) and
\(\theta_{X,i}+\theta_{Y,i}+\theta_{Z,i}\le1\). Its
\emph{sector flip probabilities} are
\begin{equation}
    \theta_i^{(X)}:=\theta_{Z,i}+\theta_{Y,i},
    \qquad
    \theta_i^{(Z)}:=\theta_{X,i}+\theta_{Y,i}.
\end{equation}
We say that the noise layer has \emph{rate} $\eta$ if
$\theta_{P,i}\le\eta$ for every qubit $i$ and every $P\in\{X,Y,Z\}$.
\end{definition}

Writing \(Y_i=\mathrm iX_iZ_i\), the Kraus operators of
\(\mathcal N_{i*}\) are scalar multiples of \(\mathbb I\), \(Z_i\),
\(X_i\) and \(X_iZ_i\), so \(\mathcal N_*\) is syndrome-based with
scalar syndrome functions. Its induced \(X\)-sector process flips each
bit \(i\) independently with probability \(\theta_i^{(X)}\),
irrespective of the syndrome, and similarly for the \(Z\) sector.

\subsubsection{Locality}
\label{subsubsec:sync-locality}

In a local Lindbladian, each jump acts near one site with a rate read
from its neighborhood, so that in an infinitesimal time step a site is
influenced only by its neighborhood. The synchronous counterpart is a
condition on one complete round. 

\begin{definition}[Local quantum channel]
\label{def:local-sync-channel}
A CPTP map
\(\mathcal Q_*:\mathcal T(\mathcal H)\to\mathcal T(\mathcal H)\) is
\emph{\(R\)-local} if for every set of qubits $A$ there is a CPTP map
\(
    \mathcal Q_{A*}:
    \mathcal T(\mathcal H_{N_R(A)})\to\mathcal T(\mathcal H_A)
\)
such that, for every state \(\rho\),
\begin{equation}\label{eq:sync-locality-S}
    \operatorname{Tr}_{\bar A}\,\mathcal Q_*(\rho)
    =
    \mathcal Q_{A*}\!\left(
        \operatorname{Tr}_{\overline{N_R(A)}}\,\rho
    \right).
\end{equation}
\end{definition}

Thus the state of any set of qubits \(A\) after one round depends only
on the input state in the bounded neighborhood \(N_R(A)\).  Equivalently,
information can propagate by at most distance \(R\) in one synchronous
round.  This is the finite-propagation causality condition familiar
from quantum cellular automata, here imposed on a general CPTP map
without assuming reversibility, unitarity, or translation invariance.

\begin{claim}[Heisenberg form of locality]
\label{clm:sync-locality-H}
A CPTP map \(\mathcal Q_*\) is \(R\)-local if and only if, for every
set of qubits $A$ and every operator \(O\) supported on \(A\), the operator
\(\mathcal Q^*(O)\) is supported on \(N_R(A)\).
\end{claim}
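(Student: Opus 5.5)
The plan is to prove the two implications separately, each by duality between the Schrödinger-picture locality condition \eqref{eq:sync-locality-S} and the support of Heisenberg-evolved operators. The basic tool is the identity $\operatorname{Tr}\bigl(O\,\mathcal Q_*(\rho)\bigr)=\operatorname{Tr}\bigl(\mathcal Q^*(O)\,\rho\bigr)$. For $O$ supported on $A$, write $O=O_A\otimes\mathbb I_{\bar A}$; then the left side equals $\operatorname{Tr}_A\bigl(O_A\operatorname{Tr}_{\bar A}\mathcal Q_*(\rho)\bigr)$. Throughout, I will use the standard fact that an operator $M$ on $\mathcal H$ is supported on a set $B$ if and only if $\operatorname{Tr}(M\rho)$ depends on $\rho$ only through $\operatorname{Tr}_{\bar B}\rho$. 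Equivalently, $\operatorname{Tr}(M\rho)=\operatorname{Tr}(M\sigma)$ whenever $\operatorname{Tr}_{\bar B}\rho=\operatorname{Tr}_{\bar B}\sigma$, and by linearity this extends to all operators.

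\emph{($\Rightarrow$).} Assume $\mathcal Q_*$ is $R$-local. Then
\[
\operatorname{Tr}\bigl(\mathcal Q^*(O)\rho\bigr)=\operatorname{Tr}_A\bigl(O_A\,\mathcal Q_{A*}(\operatorname{Tr}_{\overline{N_R(A)}}\rho)\bigr)=\operatorname{Tr}_{N_R(A)}\bigl(\mathcal Q_{A*}^*(O_A)\,\operatorname{Tr}_{\overline{N_R(A)}}\rho\bigr).
\]
The right-hand side equals $\operatorname{Tr}\bigl((\mathcal Q_{A*}^*(O_A)\otimes\mathbb I)\rho\bigr)$. Since density matrices span all operators, this gives $\mathcal Q^*(O)=\mathcal Q_{A*}^*(O_A)\otimes\mathbb I_{\overline{N_R(A)}}$, which is supported on $N_R(A)$.

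\emph{($\Leftarrow$).} Assume the Heisenberg support condition. The argument has three steps.
\begin{enumerate}
\item If $\operatorname{Tr}_{\overline{N_R(A)}}\rho=\operatorname{Tr}_{\overline{N_R(A)}}\sigma$, then $\operatorname{Tr}_{\bar A}\mathcal Q_*(\rho)=\operatorname{Tr}_{\bar A}\mathcal Q_*(\sigma)$. Indeed, both sides pair identically against every $O_A$, because $\mathcal Q^*(O_A\otimes\mathbb I)$ is supported on $N_R(A)$. Hence the map $\operatorname{Tr}_{\overline{N_R(A)}}\rho\mapsto\operatorname{Tr}_{\bar A}\mathcal Q_*(\rho)$ is well defined on reduced states.
\item Define it concretely by fixing any state $\tau$ on $\overline{N_R(A)}$ and setting
\[
\mathcal Q_{A*}(\omega):=\operatorname{Tr}_{\bar A}\mathcal Q_*(\omega\otimes\tau).
\]
This is a composition of CPTP maps (appending $\tau$, applying $\mathcal Q_*$, partial trace), so it is manifestly CPTP.
\item Check \eqref{eq:sync-locality-S}. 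The states $\rho$ and $\operatorname{Tr}_{\overline{N_R(A)}}\rho\otimes\tau$ have the same reduction on $N_R(A)$, so step 1 applies and gives \eqref{eq:sync-locality-S}.
\end{enumerate}

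The only delicate point is step 1 together with the choice in step 2. One must obtain a genuinely CPTP $\mathcal Q_{A*}$ rather than merely a linear map on reduced states; defining it through the ancilla $\tau$ sidesteps this. One must also make sure that equality of expectations for all $O_A$ forces equality of the reduced operators, which follows from the nondegeneracy of the trace pairing on $\mathcal B(\mathcal H_A)$. Everything else is routine bookkeeping with partial traces.
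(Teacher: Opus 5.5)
Your proof is correct and follows essentially the same route as the paper: in deriving $R$-locality from the support condition, the paper likewise fixes a state on the complement of $N_R(A)$, defines $\mathcal Q_{A*}$ by appending that state, applying $\mathcal Q_*$ and tracing out $\bar A$, and uses that the pairing with $O_A$ depends on $\rho$ only through its reduction to $N_R(A)$. The only cosmetic difference is in the other direction: you write $\mathcal Q^*(O)=\mathcal Q_{A*}^*(O_A)\otimes\mathbb I$ explicitly, whereas the paper observes that $X\mapsto\operatorname{Tr}[\mathcal Q^*(O)X]$ vanishes on $\ker\operatorname{Tr}_{\overline{N_R(A)}}$, whose Hilbert--Schmidt orthogonal complement consists of the operators supported on $N_R(A)$.
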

Proof deferred to Appendix~\ref{app:setup}

If \(\mathcal Q_{1*}\) is \(R_1\)-local and \(\mathcal Q_{2*}\) is
\(R_2\)-local, then \(\mathcal Q_{2*}\circ\mathcal Q_{1*}\) is
\((R_1+R_2)\)-local, since
\(N_{R_1}(N_{R_2}(A))\subseteq N_{R_1+R_2}(A)\). A synchronous Pauli
noise layer is \(0\)-local. Locality and the syndrome-based condition
of Definition~\ref{def:sync-syndrome-channel} are independent: the
reduction of Section~\ref{sec:sync-reduction} uses only the latter.

\subsubsection{Syndrome-based probabilistic cellular automata}

Lastly we define the related classical object, the syndrome-based probabilistic cellular automaton.

\begin{definition}[Syndrome-based Probabilistic Cellular Automaton (SPCA)]
\label{def:syndrome-pca-classical}
Let $C=\ker H\subseteq\Omega:=\mathbb F_2^n$ be a classical linear
code with parity-check matrix $H$, and equip
measure distances in the interaction graph of $C$, in which two bits are adjacent if
some parity check acts on both. An
$R$-\emph{local syndrome-based PCA (SPCA) for $C$ at
stochastic-adversarial fault rate $p$} is a random map of the form
\begin{equation}
\Psi:\Omega\to\Omega,\qquad
\mathbf e_{r+1}=\Psi(\mathbf e_r)
=\mathbf e_r\oplus\mathbf a(H\mathbf e_r)\oplus\zeta_r
\end{equation}
where $\mathbf a=(a_1,\dots,a_n)$ is a vector of Boolean functions
$a_i:\operatorname{Im}H\to\mathbb F_2$, and $\zeta_r\in\Omega$ is a
random fault vector. The following conditions are imposed:
\begin{enumerate}
    \item \textbf{Syndrome-based update:} Each $a_i$ depends only on
    syndrome bits $s_j$ whose corresponding parity check has support
    within graph distance $R$ of bit $i$.
        \item \textbf{Stochastic-adversarial noise:} At each round $r$, a
    random fault-location set $F_r\subseteq\{1,\dots,n\}$ is drawn by
    including each site independently with probability $p$,
    independently of $\mathbf e_0,\dots,\mathbf e_r$ and of
    $F_0,\dots,F_{r-1}$. The fault vector $\zeta_r$ is otherwise
    unrestricted: it may be random and may depend on the entire history
    and on $F_r$, subject only to
    \begin{equation}
        \operatorname{supp}(\zeta_r)\subseteq F_r.
    \end{equation}
    Thus the deterministic rule is followed exactly away from faulty
    sites, while arbitrary deviations from it are allowed on $F_r$.
\end{enumerate}
A \emph{run} of $\Psi$ is a process $(\mathbf e_r)_{r\ge0}$ generated by this map for some choice of fault vectors $\zeta_r$ satisfying the conditions above. For $b:\Omega\to S$, we say that $\Psi$, started at $\mathbf e_0$, \emph{$q$-remembers $b$ through round $m$} if every run of $\Psi$ with initial configuration $\mathbf e_0$ $q$-remembers $b$ through round $m$ in the sense of Definition~\ref{def:classical-memory}.
\end{definition}

\section{Main results}
\label{sec:results}
Our results have three layers.

First, we establish an exact quantum--classical reduction for
continuous-time syndrome-based Lindbladian dynamics.  Each CSS sector
induces a classical continuous-time Markov process, and the decoded
quantum logical autocorrelations are exactly the logical-retention
correlations of these sector processes.  This gives 
Proposition~\ref{prop:reduction-async-formal} and the corresponding
classical--quantum memory equivalence, 
Theorem~\ref{thm:main-async-formal}.

Second, we prove the PCA-type analogue.  A
syndrome-based quantum channel induces discrete-time classical sector
processes, and the same algebraic reduction gives 
Proposition~\ref{prop:reduction-sync-formal} and 
Theorem~\ref{thm:main-sync}.  In this setting, local
syndrome-based probabilistic cellular automata can be lifted directly
to finite-range syndrome-based quantum channels.  This yields
Corollary~\ref{cor:sync-spca-memory}, which we apply to the 4D Toom
rule in Theorem~\ref{thm:toom-sync-formal}.

Finally, we bridge the synchronous classical construction back to
continuous time.  Theorem~\ref{thm:periodic-formal} shows that
any pair of local SPCAs with a nonzero memory threshold can be
implemented robustly by a geometrically local Floquet Lindbladian.
The additional work is to show that the continuous-time sector
processes, sampled at correction-cycle boundaries, remain admissible
trajectories of the original SPCAs.

\subsection{Asynchronous quantum--classical equivalence}
\label{subsec:async-results}

\begin{formalas}{prop:reduction-async-informal}
\begin{mainproposition}[Asynchronous quantum-to-classical reduction]
\label{prop:reduction-async-formal}
Let $\mathcal L(t)$ be a syndrome-based Lindbladian for a CSS code,
with Heisenberg-picture propagator $\mathcal W^*(t,0)$.  Let
$T_X,T_Z$ be the decoded logical observables associated with a
designated logical qubit, let $b_X,b_Z$ be the corresponding
logical-error indicators, and let
$(\xi_t^{(X)})_{t\ge0}$ and $(\xi_t^{(Z)})_{t\ge0}$ be the induced
sector error processes.

Then, for every $\sigma\in\{X,Z\}$ and every $t\ge0$,
\begin{equation}
    \left\langle
        T_\sigma,\mathcal W^*(t,0)(T_\sigma)
    \right\rangle_{\mathrm{code}}
    =
    \mathbb E\!\left[
        (-1)^{
            b_\sigma(\xi_t^{(\sigma)})
            \oplus
            b_\sigma(\xi_0^{(\sigma)})
        }
    \right]
    =
    1-2
    \Pr\!\left[
        b_\sigma(\xi_t^{(\sigma)})
        \ne
        b_\sigma(\xi_0^{(\sigma)})
    \right].
\end{equation}
\end{mainproposition}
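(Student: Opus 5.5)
I will prove the identity for $\sigma=X$; the $Z$ case follows by the exchange $X\leftrightarrow Z$. The plan has four steps: identify an operator subspace that contains $T_X$ and is invariant under the Heisenberg generator; represent that subspace by functions on $\Omega_X$ so that the generator becomes the classical generator $G_X(t)$; integrate; and evaluate the resulting expectation in the code state.

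\textbf{Step 1 (the sector subspace).} Set $\mathcal V_X:=\{\bar X A : A\in\mathcal A_X\}$. Using $\sum_s\Pi^X_s=\mathbb I$, each $A\in\mathcal V_X$ can be written uniquely as $\bar X\sum_{s\in\Sigma_X}a(s)\Pi^X_s$.

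First, $T_X\in\mathcal V_X$. Since $R_X(s)=Z^{\mathsf{Dec}_X(s)}$, we have $R_X(s)^\dagger\bar X R_X(s)=(-1)^{\ell_X\cdot\mathsf{Dec}_X(s)}\bar X$. Also $\bar X$ commutes with $\Pi_s^X$, because $\bar X$ is $X$-type. Hence
\[
T_X=\bar X\sum_s(-1)^{\ell_X\cdot\mathsf{Dec}_X(s)}\Pi_s^X .
\]

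Next, $\mathcal V_X$ is invariant under $\mathcal L^*(t)$. The Hamiltonian term vanishes on $\mathcal V_X$: $H\in\mathcal A$ commutes with $\mathcal A_X$ and with $\bar X$, since $\bar X$ commutes with every stabilizer. For a jump $L=P_if$ with $f\in\mathcal A$, write $\mathcal D[L]^*(\bar XA)=f^\dagger P_i\bar XAP_if-\tfrac12\{f^\dagger f,\bar XA\}$. Conjugating by $P_i$ gives $P_i\bar XAP_i=\pm\bar X A'$, where $A'$ is $A$ with the syndrome shifted, i.e.\ $A'(s)=A(s\oplus H_X\hat e_i)$ when $P_i\in\{Z_i,Y_i\}$ and $A'=A$ otherwise. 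The sign is $-1$ exactly when $P_i$ anticommutes with $\bar X$, i.e.\ when $P_i\in\{Z_i,Y_i\}$ and $(\ell_X)_i=1$.

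Everything in $\mathcal A$ commutes with $\bar XA'$, so the first term becomes $\pm\bar XA'f^\dagger f$. The anticommutator becomes $\bar XAf^\dagger f$. Summing over $\alpha$ and over $P$ at qubit $i$, the $f^\dagger f$ contributions assemble into $\Gamma_i^{(X)}$, which lies in $\mathcal A_X$ by the rate condition. The $I,X$-type jumps contribute $A f^\dagger f-Af^\dagger f=0$, so their rates drop out. The one subtlety is that $f$ lies in $\mathcal A$ rather than $\mathcal A_X$, so individual terms are not in $\mathcal V_X$. I expect this to be the main place needing care: only the summed combinations $\Gamma_i^{(X)}$ appear, and these lie in $\mathcal A_X$ by the rate condition. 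The result is
\[
\mathcal L^*(t)(\bar XA)=\bar X\sum_i\Gamma_i^{(X)}(t)\bigl(\epsilon_iA'_i-A\bigr),\qquad \epsilon_i=(-1)^{(\ell_X)_i},
\]
which lies in $\mathcal V_X$.

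\textbf{Step 2 (classical representation).} Define $\Phi_X(\bar XA)(\mathbf e):=(-1)^{\ell_X\cdot\mathbf e}A(H_X\mathbf e)$. Then $\Phi_X(T_X)(\mathbf e)=(-1)^{\ell_X\cdot(\mathbf e\oplus\mathsf{Dec}_X(H_X\mathbf e))}=(-1)^{b_X(\mathbf e)}$ by Definition~\ref{def:logical-coordinate-identification}. Comparing the formula of Step 1 with Eq.~\eqref{eq:ctmc-generator}, the sign $\epsilon_i$ is exactly the factor $(-1)^{\ell_X\cdot\hat e_i}$ and the shift matches $H_X(\mathbf e\oplus\hat e_i)$. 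This gives $\Phi_X\circ\mathcal L^*(t)=G_X(t)\circ\Phi_X$ on $\mathcal V_X$.

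$\Phi_X$ is linear; it is injective on $\mathcal V_X$ because $A$ is determined by its values on $\Sigma_X=\operatorname{im}H_X$. Because both propagators satisfy the same linear ODE, Eqs.~\eqref{eq:q-propgator} and \eqref{eq:classical-propagator}, on the finite-dimensional invariant space, uniqueness of solutions gives $\Phi_X\mathcal W^*(t,0)=M^{(X)}(t,0)\Phi_X$ on $\mathcal V_X$. This is the argument via Duhamel's formula or the product integral.

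\textbf{Step 3 (evaluation on the code state).} For $O=\bar XA\in\mathcal V_X$, $\rho_{\mathrm{code}}$ is supported on syndrome $0$, so $T_X^\dagger O\,\rho_{\mathrm{code}}$-traces reduce as follows. On syndrome $0$, $T_X$ acts as $\bar X$, and $\bar X^2=\mathbb I$. Hence $\langle T_X,O\rangle_{\mathrm{code}}=A(0)\operatorname{Tr}\rho_{\mathrm{code}}=A(0)=\Phi_X(O)(\mathbf 0)$.

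Applying this to $O=\mathcal W^*(t,0)T_X$ and using Step 2,
\[
\langle T_X,\mathcal W^*(t,0)T_X\rangle_{\mathrm{code}}=\bigl(M^{(X)}(t,0)(-1)^{b_X}\bigr)(\mathbf 0)=\mathbb E_0\bigl[(-1)^{b_X(\xi_t^{(X)})}\bigr].
\]
By Claim~\ref{clm:translation-invariance}, $\xi^{(X)}$ is the chain started at $\mathbf 0$, and $b_X(\xi_0)=0$. The last equality of the statement is then the trivial identity $\mathbb E[(-1)^Y]=1-2\Pr[Y=1]$ for a binary $Y$.
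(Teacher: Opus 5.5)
Your proposal is correct and follows essentially the same route as the paper. The steps are the same: closure of $\mathcal V_X$ under $\mathcal L^*(t)$ via the rate condition on $\Gamma_i^{(X)}(t)$, the map $\Phi_X$ intertwining $\mathcal L^*(t)$ with $G_X(t)$ and then the propagators by ODE uniqueness, and evaluation of the code-space inner product at $\mathbf e=\mathbf 0$ using $\Phi_X(T_X)=(-1)^{b_X}$. The only cosmetic difference is that you write the generator directly through the shifted eigenvalue function $A'_i$, where the paper works with the conjugate $Z_iOZ_i$.
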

\end{formalas}

Since $\xi_0^{(X)}=\mathbf 0$ and $b_X(\mathbf 0)=0$, this equals
$1-2\Pr\bigl[b_X(\xi_t^{(X)})=1\bigr]$, and similarly for $Z$.

The proof of Proposition~\ref{prop:reduction-async-formal} is given in
Section~\ref{sec:async}. The proposition directly identifies each
quantum decoded logical observable autocorrelation with the classical probability
of an incorrect classical logical readout at the same time. Thus controlling the
classical logical-error indicators at every chosen readout time controls the
corresponding quantum logical correlations pointwise in time.

Proposition~\ref{prop:reduction-async-formal} therefore couples the quantum
autocorrelation to the classical label-retention problem of
Definition~\ref{def:classical-memory}, which we now use to lower
bound the worst-case fidelity $F_{\mathrm{min}}(t)$ and establish
quantum memory:

\begin{formalas}{thm:main-async-informal}
\begin{maintheorem}[Fully Autonomous quantum--classical memory equivalence]
\label{thm:main-async-formal}
Consider a CSS code with a designated logical qubit evolving under a
syndrome-based Lindbladian \(\mathcal L(t)\), and let
\((\xi_t^{(X)})_{t\ge0}\) and \((\xi_t^{(Z)})_{t\ge0}\) be its induced
sector error processes.

If the \(X\)- and \(Z\)-sector processes respectively \(q_X\)- and
\(q_Z\)-remember their logical-error indicators \(b_X\) and \(b_Z\) up to
time \(t\), then the encoded system is a quantum
\((q_X+q_Z)\)-memory up to time \(t\).

Conversely, if the encoded system is a quantum
\(\memerr\)-memory up to time \(t\), then each induced sector
process \(\memerr\)-remembers its corresponding logical-error
indicator up to time \(t\).
\end{maintheorem}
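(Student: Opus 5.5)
The plan is to combine Proposition~\ref{prop:reduction-async-formal} with a structural fact about the effective logical channel $\mathcal E_L(t)$. The fact is that $\mathcal E_L(t)$ is a Pauli channel on the carrier qubit $Q$. Once this holds, its Heisenberg adjoint acts diagonally: $\mathcal E_L^*(t)(P_Q)=\lambda_P(t)P_Q$ for $P\in\{X,Y,Z\}$. For a Pauli channel with Pauli-error probabilities $p_I,p_X,p_Y,p_Z$ we have $\lambda_X=1-2(p_Y+p_Z)$ and $\lambda_Z=1-2(p_X+p_Y)$. The worst-case fidelity is
\[
F_{\min}=\min_{|\psi\rangle}\tfrac12\bigl(1+\textstyle\sum_P\lambda_P r_P^2\bigr)=\tfrac12\bigl(1+\min_P\lambda_P\bigr),
\]
taking the minimum over Bloch vectors $r$. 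In terms of error probabilities, $1-F_{\min}=\max(p_X+p_Y,\,p_Y+p_Z,\,p_X+p_Z)$, which is at most $p_X+p_Y+p_Z$.

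\textbf{Step 1: the Pauli-channel property.} For each logical Pauli $P_Q$, set $\bar P=U(P_Q\otimes I)U^\dagger$. Conjugation by $\bar X$ or $\bar Z$ commutes with every stage of the evolution. For the dynamics, conjugating the generator by a Pauli string maps each jump $P_if$ to $\pm P_i f'$, where $f'$ is the conjugated syndrome function. Because $\bar X,\bar Z$ commute with all stabilizers, $f'=f$ and $H$ is unchanged, so $\mathcal L(t)$ is invariant. For decoding, $\bar X$ and $\bar Z$ commute with each $\Pi_s^\sigma$ and commute up to sign with each $R_\sigma(s)$, and signs cancel under conjugation. For encoding, conjugation by $\bar X,\bar Z$ corresponds to conjugation by $X_Q,Z_Q$ on the input. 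The encoding/decoding step needs care, since $\bar X$ equals $U X_Q U^\dagger$ only on the code space. It works because the decoded state lies in the code space.

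It follows that $\mathcal E_L$ is covariant under the Pauli group on $Q$. Twirling then forces the Pauli-transfer matrix to be diagonal, i.e.\ the channel is Pauli.

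\textbf{Step 2: identify the $\lambda$'s.} We have $\lambda_X=\tfrac12\operatorname{Tr}(X_Q\,\mathcal E_L^*(X_Q))$. Using $\mathcal D^*_{\mathrm{dec}}(X_Q)=\mathcal E_{\mathrm{dec}}^*(\bar X)$ up to terms vanishing after $\mathcal E_{\mathrm{dec}}^{(Z)*}$, together with $\mathcal E_{\mathrm{dec}}^{(Z)*}$ acting trivially on the $X$-sector observable in the relevant sense, this becomes $\langle T_X,\mathcal W^*(t,0)T_X\rangle_{\mathrm{code}}$. The same holds for $Z$. I expect this bookkeeping, showing that the $Z$-sector decoder does not disturb $T_X$ when sandwiched by the code state, to be the main obstacle. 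My first attempt would be to show that $\mathcal W^*(t,0)T_X\in\mathcal V_X$, so that $T_X$ is a product of $\bar X$ with an element of $\mathcal A_X$. That element commutes with $\Pi^Z_{s'}$ and with $R_Z$ up to the sign that $\bar X$ also carries, and these signs cancel.

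Proposition~\ref{prop:reduction-async-formal} then gives $\lambda_\sigma(t)=1-2\Pr[b_\sigma(\xi^{(\sigma)}_t)=1]$. Hence $p_Y+p_Z\le q_X$ and $p_X+p_Y\le q_Z$ at every readout time up to $t$. Therefore
\[
1-F_{\min}(t)\le p_X+p_Y+p_Z\le q_X+q_Z,
\]
which proves the direct implication.

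\textbf{Converse.} A quantum $\memerr$-memory gives $F_{\min}\ge1-\memerr$. Evaluating the fidelity at the $X_Q$ eigenstate gives $\tfrac12(1+\lambda_X)\ge1-\memerr$, i.e.\ $\Pr[b_X=1]=(1-\lambda_X)/2\le\memerr$. The $Z$ sector follows identically, at every readout time up to $t$.
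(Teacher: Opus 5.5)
Your proposal is correct and follows essentially the paper's route. Logical-Pauli covariance of the dynamics, decoder and encoder makes $\mathcal E_L(t)$ a Pauli channel whose $X_Q$- and $Z_Q$-eigenvalues are the decoded-logical autocorrelations, which Proposition~\ref{prop:reduction-async-formal} turns into sector retention probabilities. Your exact formula $F_{\min}=\tfrac12(1+\min_P\lambda_P)$ is just a sharper form of the paper's bounds $\tfrac12(\lambda_X+\lambda_Z)\le F_{\min}\le\tfrac12(1+\min_{\sigma\in\{X,Z\}}\lambda_\sigma)$.

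The step you flag as the main obstacle is immediate. In $\mathcal E_L^*=\mathcal E_{\mathrm{enc}}^*\circ\mathcal W^*\circ\mathcal D_{\mathrm{dec}}^*$, the $Z$-decoder adjoint acts on $T_X=\bar X\varphi_X$ itself, not on $\mathcal W^*(t,0)T_X$. Since each $R_Z(s')$ is $X$-type and each $\Pi^Z_{s'}$ commutes with $\mathcal V_X$, it fixes $T_X$ exactly.

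Note also that in the covariance argument $\bar X,\bar Z$ must be the bare Pauli strings. These agree with $U(P_Q\otimes I)U^\dagger$ only on the code space, so you cannot define $\bar P$ as that operator, as your Step 1 does.
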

\end{formalas}

Thus, for syndrome-based Lindbladian dynamics, quantum memory of the
designated logical qubit is equivalent, up to a constant factor, to
memory of the two induced classical sector processes.  
Proposition~\ref{prop:reduction-async-formal} is proved in
Section~\ref{sec:async}, and 
Theorem~\ref{thm:main-async-formal} in Subsection~\ref{subsec:memory}.

\subsection{Synchronous quantum--classical equivalence}
\label{subsec:sync-results}

The same reduction has a synchronous discrete-time counterpart.  One
round is now a syndrome-based quantum channel $\mathcal Q_*$, and its
sector marginals define discrete-time Markov chains on the two error
configuration spaces.  The continuous-time generator intertwining is
replaced by a one-step intertwining between $\mathcal Q^*$ and the
backward transition operators of these chains.

\begin{formalas}{prop:reduction-sync-informal}
\begin{mainproposition}[Quantum-to-classical reduction for syndrome-based channels]
\label{prop:reduction-sync-formal}
Let \(\mathcal Q_*\) be a syndrome-based quantum channel for
a CSS code (Definition~\ref{def:sync-syndrome-channel}), let \(T_X,T_Z\)
be the decoded logical observables, \(b_X,b_Z\) the logical-error
indicators, and \((\xi_r^{(X)})_{r\ge0}\), \((\xi_r^{(Z)})_{r\ge0}\) the
induced sector error processes. Then, for every \(\sigma\in\{X,Z\}\) and every integer \(r\ge0\),
\begin{equation}
    \left\langle
        T_\sigma,
        (\mathcal Q^*)^r(T_\sigma)
    \right\rangle_{\mathrm{code}}
    =
    \mathbb E\!\left[
        (-1)^{
            b_\sigma(\xi_r^{(\sigma)})
            \oplus
            b_\sigma(\xi_0^{(\sigma)})
        }
    \right]
    =
    1-2
    \Pr\!\left[
        b_\sigma(\xi_r^{(\sigma)})
        \ne
        b_\sigma(\xi_0^{(\sigma)})
    \right].
\end{equation}
\end{mainproposition}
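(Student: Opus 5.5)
The plan is to follow the asynchronous route, with the generator intertwining replaced by a one-step intertwining between $\mathcal Q^*$ and $\mathsf B_\sigma$. We treat $\sigma=X$; the $Z$ case follows by $X\leftrightarrow Z$. Let $\mathcal V_X:=\{\bar X A: A\in\mathcal A_X\}$. Define $\Phi_X:\mathcal V_X\to\mathbb R^{\Omega_X}$ by $\Phi_X(\bar XA)(\mathbf e):=(-1)^{\ell_X\cdot \mathbf e}A(H_X\mathbf e)$. This is the representation map of Proposition~\ref{prop:classical-repr}; the sign records the commutation of $\bar X$ with $Z^{\mathbf e}$.

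\emph{Step 1 (decoded observable).} Each $R_X(s)=Z^{\mathsf{Dec}_X(s)}$ satisfies $R_X(s)^\dagger\bar XR_X(s)=(-1)^{\ell_X\cdot\mathsf{Dec}_X(s)}\bar X$. Since $\bar X$ commutes with $\Pi^X_s$, we get
\[
T_X=\bar X\sum_s(-1)^{\ell_X\cdot\mathsf{Dec}_X(s)}\Pi_s^X\in\mathcal V_X.
\]
Therefore $\Phi_X(T_X)(\mathbf e)=(-1)^{\ell_X\cdot(\mathbf e\oplus\mathsf{Dec}_X(H_X\mathbf e))}=(-1)^{b_X(\mathbf e)}$, using Definition~\ref{def:logical-coordinate-identification}.

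\emph{Step 2 (one-step closure and intertwining).} This is the main computation. Take $O=\bar XA$ and a Kraus operator $K=X^{\mathbf u^Z}Z^{\mathbf u^X}f$ with $f\in\mathcal A$. We have
\[
K^\dagger OK=f^\dagger Z^{\mathbf u^X}X^{\mathbf u^Z}\bar XAX^{\mathbf u^Z}Z^{\mathbf u^X}f .
\]
The factor $X^{\mathbf u^Z}$ commutes with $\bar X$ and with $A\in\mathcal A_X$, so it cancels. The factor $Z^{\mathbf u^X}$ produces the sign $(-1)^{\ell_X\cdot\mathbf u^X}$ from $\bar X$ and shifts the syndrome argument of $A$, giving $\tilde A$ with $\tilde A(s)=A(s\oplus H_X\mathbf u^X)$. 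The operator $f$ lies in the abelian algebra $\mathcal A$, commutes with $\tilde A$, and commutes with $\bar X$ up to a Pauli sign.

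The obstacle is that $f$ need not commute with $\bar X$, because $\mathcal A$ contains $Z$-stabilizers. I would first try expanding $f$ in stabilizer monomials and checking that the $\bar X$-conjugation signs cancel between $f^\dagger$ and $f$. This works because $\bar X$ commutes with all stabilizers, being a logical operator, so in fact $f^\dagger\bar X f=\bar X f^\dagger f$. The result is $K^\dagger OK=(-1)^{\ell_X\cdot \mathbf u^X}\bar X\tilde A f^\dagger f$.

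Summing over $\mathbf u^Z$ and $\alpha$ yields $\bar X\sum_{\mathbf u^X}(-1)^{\ell_X\cdot\mathbf u^X}\tilde A_{\mathbf u^X}\mathsf P^{(X)}_{\mathbf u^X}$. By the syndrome-based condition \eqref{eq:sync-sector-condition} this lies in $\mathcal V_X$. Applying $\Phi_X$ at $\mathbf e$ gives
\[
\sum_{\mathbf u^X}p_X(\mathbf u^X\mid H_X\mathbf e)(-1)^{\ell_X\cdot(\mathbf e\oplus\mathbf u^X)}A(H_X(\mathbf e\oplus\mathbf u^X)),
\]
which is $(\mathsf B_X\Phi_X(O))(\mathbf e)$. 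Hence $\Phi_X\circ\mathcal Q^*=\mathsf B_X\circ\Phi_X$ on $\mathcal V_X$. By induction the same identity holds for $(\mathcal Q^*)^r$ and $\mathsf B_X^r$.

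\emph{Step 3 (code-space pairing).} For $O=\bar XA\in\mathcal V_X$, compute $\langle T_X,O\rangle_{\mathrm{code}}$. Since $\bar X^\dagger\bar X=\mathbb I$ and $\rho_{\mathrm{code}}$ is supported in the syndrome-$\mathbf 0$ sector, we get $\langle T_X,O\rangle_{\mathrm{code}}=(-1)^{\ell_X\cdot\mathsf{Dec}_X(\mathbf 0)}A(\mathbf 0)=A(\mathbf 0)=\Phi_X(O)(\mathbf 0)$. This uses $\mathsf{Dec}_X(\mathbf 0)=\mathbf 0$ and $\operatorname{Tr}\rho_{\mathrm{code}}=1$.

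Combining the steps, $\langle T_X,(\mathcal Q^*)^rT_X\rangle_{\mathrm{code}}=(\mathsf B_X^r(-1)^{b_X})(\mathbf 0)=\mathbb E_0[(-1)^{b_X(\xi_r)}]$ by \eqref{eq:sync-backward-r}. Because $b_X(\xi_0)=b_X(\mathbf 0)=0$, this equals $1-2\Pr[b_X(\xi_r)\ne b_X(\xi_0)]$.
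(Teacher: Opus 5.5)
Your proposal is correct and follows essentially the paper's route. Your single Kraus-level computation merges Propositions~\ref{prop:sync-sector-dynamics} and~\ref{prop:sync-intertwining}: drop $X^{\mathbf u^Z}$, turn $Z^{\mathbf u^X}$-conjugation into a sign and a syndrome shift, pull $f^\dagger f$ through, and sum to $\mathsf P^{(X)}_{\mathbf u^X}$. Your Steps 1 and 3 are inline versions of Lemmas~\ref{lem:algebraic-form}, \ref{lem:logical-label-phase} and~\ref{lem:vacuum-evaluation}.

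The ``obstacle'' you raise about $f$ is not one, as you note yourself. Since $\bar X$ lies in the centralizer of the full stabilizer group, every element of $\mathcal A$ commutes with $\bar X$ exactly (Lemma~\ref{lem:hamiltonian-invariance}), not merely up to sign.
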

\end{formalas}

Proposition~\ref{prop:reduction-sync-formal} is the exact
discrete-time analogue of 
Proposition~\ref{prop:reduction-async-formal}: one synchronous quantum round
is represented by one step of the induced classical Markov chain, and
iteration gives the identity at every round.

\begin{maintheorem}[Quantum--classical memory equivalence for syndrome-based channels]
\label{thm:main-sync}
Consider a CSS code with a designated logical qubit evolving under a syndrome-based quantum channel $\mathcal Q_*$, and let
$(\xi_r^{(X)})_{r\ge0}$ and $(\xi_r^{(Z)})_{r\ge0}$ be its induced
sector error processes.

If the $X$- and $Z$-sector processes respectively $q_X$- and
$q_Z$-remember their logical-error indicators $b_X$ and $b_Z$ through round $m$, then the encoded
system is a quantum $(q_X+q_Z)$-memory through round $m$.

Conversely, if the encoded system is a quantum \(\memerr\)-memory through
round \(m\), then each induced sector process
\(\memerr\)-remembers its corresponding logical-error indicator through
round \(m\).
\end{maintheorem}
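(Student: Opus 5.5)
We derive Theorem~\ref{thm:main-sync} from Proposition~\ref{prop:reduction-sync-formal}, following the same route as the continuous-time Theorem~\ref{thm:main-async-formal}. The only dynamical input is the autocorrelation identity at each round $r\in\{0,\dots,m\}$. The remaining work concerns the logical channel $\mathcal E_L(r)=\mathcal E_L[\mathcal Q_*^r]$.

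\textbf{Step 1: the logical channel is Pauli.} We show that $\mathcal E_L(r)$ is a Pauli channel on $\mathcal H_Q$. Conjugation by $\bar X$ and by $\bar Z$ commutes with each stage up to sign.
\begin{itemize}
\item \emph{Encoding:} by Eq.~\eqref{eq:encoding-condition}, $\bar X$ and $\bar Z$ implement $X_Q$ and $Z_Q$ on the code space.
\item \emph{Round $\mathcal Q_*$:} each Kraus operator $X^{\mathbf u^Z}Z^{\mathbf u^X}f$ with $f\in\mathcal A$ commutes with $\bar X,\bar Z$ up to a sign. The sign depends only on $(\mathbf u^X,\mathbf u^Z)$, and $\bar X,\bar Z$ commute with $\mathcal A$. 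The sign cancels in $K\rho K^\dagger$.
\item \emph{Decoding:} the projectors $\Pi^X_s,\Pi^Z_s$ commute with the logicals, and the corrections $R_X(s),R_Z(s)$ are Paulis.
\end{itemize}
Hence $\mathcal E_L(r)$ is covariant under conjugation by $X_Q$ and $Z_Q$. A standard twirling argument then forces the Heisenberg adjoint to be diagonal in the Pauli basis: $\mathcal E_L^*(P)=\lambda_P P$ with real $\lambda_P$. Thus $\mathcal E_L$ is a Pauli channel with error probabilities $p_X,p_Y,p_Z$.

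\textbf{Step 2: identify the eigenvalues.} We compute the eigenvalues from the autocorrelations:
\begin{equation}
\lambda_X=\tfrac12\operatorname{Tr}\bigl(X_Q\,\mathcal E_L^*(X_Q)\bigr)=\bigl\langle T_X,(\mathcal Q^*)^r(T_X)\bigr\rangle_{\mathrm{code}},
\end{equation}
using $\mathcal D^*_{\mathrm{dec}}(X_Q)=\mathcal E^*_{\mathrm{dec}}(U(X_Q\otimes I)U^\dagger)$. Restricted to the code space this is the same as $\mathcal E^*_{\mathrm{dec}}(\bar X)$. The $Z$-sector decoding acts trivially on $T_X$ up to the code-space evaluation, since $\bar X$ commutes with $R_Z$ and with $\Pi^Z$. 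This step needs care with the Kraus sum over $Z$-syndromes and with the state $\rho_{\mathrm{code}}$. By Proposition~\ref{prop:reduction-sync-formal}, $\lambda_X=1-2\Pr[b_X(\xi^{(X)}_r)=1]$, and similarly $\lambda_Z=1-2\Pr[b_Z(\xi^{(Z)}_r)=1]$.

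For a Pauli channel, $\lambda_X=1-2(p_Y+p_Z)$ and $\lambda_Z=1-2(p_X+p_Y)$. Hence
\begin{equation}
p_Y+p_Z=\Pr[b_X=1],\qquad p_X+p_Y=\Pr[b_Z=1].
\end{equation}

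\textbf{Step 3: both directions.} The worst-case fidelity of a Pauli channel satisfies $F_{\min}\ge 1-(p_X+p_Y+p_Z)\ge 1-(p_Y+p_Z)-(p_X+p_Y)$. Under the hypothesis this is at least $1-q_X-q_Z$ for every $r\le m$, which proves the forward direction.

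For the converse, take $\psi=|+\rangle$. Then $F=(1+\lambda_X)/2=1-\Pr[b_X=1]$, so $F\ge1-\memerr$ gives $\Pr[b_X(\xi_r^{(X)})=1]\le\memerr$. Taking $\psi=|0\rangle$ gives the same bound for the $Z$ sector. Since $b_\sigma(\xi_0)=0$, this is exactly $\memerr$-remembering through round $m$ in the sense of Definition~\ref{def:classical-memory}.

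\textbf{Main obstacle.} The delicate step is Step~1 together with the identification in Step~2. In particular, we must check that covariance holds for the full composite channel, including the encoding isometry and the partial trace over the remainder subsystem. When $k>1$, the remainder's logical degrees of freedom are held fixed. The operators $\bar X,\bar Z$ for the designated qubit commute with the other logicals' stabilizer-equivalent structure, so the twirl acts only on $Q$.
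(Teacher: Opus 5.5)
Your proposal is correct and follows the paper's own route. The paper chains the logical-Pauli covariance of $\mathcal Q_*^r$ (Lemma~\ref{lem:sync-pauli-covariance}), the Pauli structure of the effective logical channel with $\lambda_X,\lambda_Z$ equal to the decoded-logical autocorrelations (Lemma~\ref{lem:pauli-diagonality}), the fidelity bounds of Proposition~\ref{prop:minimal-fidelity}, and Proposition~\ref{prop:reduction-sync-formal}, whereas you reprove these lemmas inline and phrase the bounds through $p_X,p_Y,p_Z$. One small tightening: $\mathcal E^{(Z)*}_{\mathrm{dec}}(T_X)=T_X$ holds exactly as an operator identity, not merely up to code-space evaluation, because $T_X$ commutes with every $\Pi^Z_{s}$ and every $R_Z(s)$.
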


Both synchronous statements are proved in
Section~\ref{sec:sync-reduction}.

\paragraph{Local PCA-type construction from SPCAs.}
The synchronous correspondence becomes constructive when the
classical sector processes are chosen to be local SPCAs.  Given an
SPCA rule in each sector, its quantum lift defines a synchronous
syndrome-based channel whose causal range is bounded by the locality
radius of the classical rules.  After a local Pauli-noise layer, the
induced sector processes are themselves admissible runs of the input
SPCAs.  Theorem~\ref{thm:main-sync} therefore immediately
transfers their classical memory to the quantum system.

\begin{maincorollary}[Quantum memory from local SPCAs]
\label{cor:sync-spca-memory}
Let $\mathcal{C}_X,\mathcal{C}_Z$ be a pair of CSS-compatible classical
codes with bounded-degree interaction graphs and linear size $L$. Fix a
designated logical qubit and readout decoders, with logical-error
indicators $b_X,b_Z$.

Suppose that, for each sector $\sigma\in\{X,Z\}$, the component code
$\mathcal C_\sigma$ admits an $R_\sigma$-local SPCA $\Psi_\sigma$, with
update rule $\mathbf a^{(\sigma)}$, where $R_\sigma$ is independent of
$L$. Assume that there exists $\pcth>0$, independent of $L$, such that
for every $p<\pcth$, $\Psi_\sigma$ at fault rate $p$, started at $\mathbf 0$,
$q_\sigma(m,L)$-remembers $b_\sigma$ through round $m$.

Let $\mathcal N_*$ be a synchronous Pauli noise layer of rate $\eta<\pcth/2$
(Definition~\ref{def:sync-pauli-noise}). Then there exists a syndrome-based
 quantum channel $\mathcal Q_*$, $R$-local for some $R$ independent
of $L$, such that the induced sector processes under $\mathcal Q_*$ are runs of
$\Psi_X$ and $\Psi_Z$ at fault rate $2\eta$, and the encoded system evolving
under $\mathcal Q_*$ is a quantum $q(m,L)$-memory through round $m$, with
$q(m,L)=q_X(m,L)+q_Z(m,L)$. The channel is the quantum lift of the update rules
$\mathbf a^{(X)},\mathbf a^{(Z)}$ followed by $\mathcal N_*$
(Definition~\ref{def:spca-lift}), and is translationally invariant whenever the
code pair, both SPCAs and $\mathcal N_*$ are.
\end{maincorollary}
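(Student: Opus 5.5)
The plan is to construct the quantum lift explicitly, check that it is syndrome-based and local, identify its induced sector processes with SPCA runs, and then invoke Theorem~\ref{thm:main-sync}.

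\textbf{Construction of the lift.} For each sector define the syndrome-conditioned correction
\begin{equation*}
\mathcal U_*^{(X)}(\rho)=\sum_{s\in\Sigma_X} Z^{\mathbf a^{(X)}(s)}\Pi_s^X\rho\,\Pi_s^X Z^{\mathbf a^{(X)}(s)},
\end{equation*}
and define $\mathcal U_*^{(Z)}$ in the same way with $X^{\mathbf a^{(Z)}(s)}$ and $\Pi^Z_s$. Set $\mathcal Q_*=\mathcal N_*\circ\mathcal U_*^{(Z)}\circ\mathcal U_*^{(X)}$. The correction $Z^{\mathbf a^{(X)}(s)}$ commutes with every $Z$-stabilizer, hence with every $\Pi^Z_{s'}$. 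The Kraus operators of the composite are therefore, up to phases, of the form $X^{\mathbf u^Z}Z^{\mathbf u^X}\Pi^X_{s}\Pi^Z_{s'}$ times scalar noise amplitudes, possibly with the Pauli factors reordered.

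Kraus operators applied after $\mathcal U^{(X)}$ see the syndrome $s\oplus H_X\mathbf a^{(X)}(s)$ rather than $s$. This causes no difficulty, because every projector is written in terms of the original input syndrome.

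\textbf{Syndrome-based condition.} Summing the sector probability operators gives
\begin{equation*}
\mathsf P^{(X)}_{\mathbf u^X}=\sum_s \Pr[\mathbf a^{(X)}(s)\oplus\zeta=\mathbf u^X]\,\Pi^X_s\in\mathcal A_X,
\end{equation*}
where $\zeta$ is the independent $X$-sector flip vector of the noise layer, with marginals $\theta_i^{(X)}$. This verifies Eq.~\eqref{eq:sync-sector-condition}.

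\textbf{Locality.} Rather than working with $\mathcal U^{(X)}$ directly, I would realize it as a product of commuting local pieces: each qubit $i$ is conjugated by $Z_i$ conditioned on its local syndrome bits via the local projectors. These pieces commute because they are diagonal in the syndrome and the $Z_i$ do not alter the $X$-syndrome projectors used by the other pieces. All pieces condition on the input syndrome, which is legitimate since the global sum over $s$ factorizes as a product of local conditional maps. Each piece is supported on $N_{R_X+1}(i)$.

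The Heisenberg form (Claim~\ref{clm:sync-locality-H}) then shows that $\mathcal U^{(X)}$ is $O(R_X)$-local, and likewise $\mathcal U^{(Z)}$. By the composition remark, $\mathcal Q_*$ is $R$-local with $R=O(R_X+R_Z)$, independent of $L$. Translation invariance is inherited from that of the rules, the code pair and $\mathcal N_*$.

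\textbf{Identification with SPCA runs.} The induced $X$-process (Definition~\ref{def:sync-sector-process}) is
\begin{equation*}
\xi_{r+1}=\xi_r\oplus\mathbf a^{(X)}(H_X\xi_r)\oplus\zeta_r,
\end{equation*}
where the coordinates of $\zeta_r$ are independent, fresh each round, and satisfy $\Pr[\zeta_{r,i}=1]=\theta_i^{(X)}\le 2\eta$. Taking $F_r=\operatorname{supp}(\zeta_r)$ is not by itself admissible, since the definition requires each site to enter $F_r$ with probability exactly $p$.

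The fix is a coupling. Draw $F_r$ with independent inclusion probability $p=2\eta$, and given $i\in F_r$ set $\zeta_{r,i}=1$ with probability $\theta^{(X)}_i/(2\eta)$. This reproduces the correct law and gives $\operatorname{supp}\zeta_r\subseteq F_r$. Since $F_r$ is independent of the past, this is a run of $\Psi_X$ at fault rate $2\eta<\pcth$.

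\textbf{Conclusion.} By hypothesis, the induced $X$- and $Z$-processes $q_X$- and $q_Z$-remember $b_X$ and $b_Z$. The direct direction of Theorem~\ref{thm:main-sync} then gives a quantum $(q_X+q_Z)$-memory through round $m$.

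The main obstacle I expect is the locality step: writing the global syndrome-conditioned map as a product of local maps while correctly handling the ordering of the conditioned flips.
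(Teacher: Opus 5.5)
Your construction of $\mathcal Q_*$, the computation of the sector probability operators, the thinning coupling that produces $F_r$ at rate exactly $2\eta$ (for $\eta=0$ take $F_r=\emptyset$, to avoid dividing by zero), and the final appeal to Theorem~\ref{thm:main-sync} all coincide with the paper's argument. The gap is the locality step, and locality is part of the statement.

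Your local pieces $\rho\mapsto\sum_t Z_i^{g_i(t)}\Pi_t\,\rho\,\Pi_t Z_i^{g_i(t)}$ are not diagonal in the syndrome, and they do not commute. The reason is that $Z_i$ anticommutes with every $X$-check containing qubit $i$, so $Z_i\Pi^X_sZ_i=\Pi^X_{s\oplus H_X\hat e_i}$. Those checks are in general read by the predicates of other sites; for example, in the 4D Toom rule every boundary edge of plaquette $i$ is also a positively oriented edge of other plaquettes. A piece applied after site $i$ has flipped therefore conditions on a syndrome already altered by that flip. The product of your pieces is thus an order-dependent sequential update, not the synchronous lift $\sum_s Z^{\mathbf a(s)}\Pi^X_s\rho\,\Pi^X_s Z^{\mathbf a(s)}$. ``Conditioning every piece on the input syndrome'' cannot be done by register-free local maps once earlier flips have overwritten that syndrome. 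This is exactly the obstruction that forces the decision and execution registers in Section~\ref{sec:periodic}. So the factorization you rely on is false, and $R$-locality is not established by your argument.

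No factorization is needed: verify the criterion of Claim~\ref{clm:sync-locality-H} directly on the global lift.
\begin{itemize}
\item Take $O$ supported on $A$. The factors $Z_i$ with $i\notin A$ commute with $O$, so $Z^{\mathbf a(s)}OZ^{\mathbf a(s)}=Z^{\mathbf a_A(s)}OZ^{\mathbf a_A(s)}$.
\item By locality of the rule, $\mathbf a_A(s)=\mathbf g(s|_M)$ for valid $s$. Here $M$ is the set of $X$-checks supported in $N_R(i)$ for some $i\in A$, together with the checks whose support meets $A$.
\item Extend the sum over $s$ to all of $\mathbb F_2^{m_X}$, which is allowed since $\Pi^X_s=0$ off $\Sigma_X$. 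Then factor $\Pi^X_s=\Pi_{M,s|_M}\Pi_{M^c,s|_{M^c}}$, where $\Pi_{M,t}:=\prod_{c\in M}\tfrac12\bigl(\mathbb I+(-1)^{t_c}S^X_c\bigr)$.
\item The $M^c$-projectors commute with $Z^{\mathbf g(t)}OZ^{\mathbf g(t)}$ (which is supported on $A$) and with $\Pi_{M,t}$, and they sum to the identity.
\item What remains is $\sum_t\Pi_{M,t}Z^{\mathbf g(t)}OZ^{\mathbf g(t)}\Pi_{M,t}$, which is supported on $N_{\max(R,1)}(A)$.
\end{itemize}
This is Lemma~\ref{lem:spca-lift}(ii). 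Composing the two lifts with the $0$-local noise layer then gives $R=\max(R_X,1)+\max(R_Z,1)$, independent of $L$.
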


The proof is given in
Subsection~\ref{subsec:sync-spca-realization}. Combining Theorem~\ref{thm:main-sync} with this Corollary, gives the informal Theorem~\ref{thm:main-spca-sync-informal} in Section~\ref{sec:intro}.

\paragraph{Application to the 4D Toom rule.}
For the 4D toric code, the local Toom rules provide precisely the
classical SPCAs required by Corollary~\ref{cor:sync-spca-memory}.
Section~\ref{sec:toom} verifies the SPCA structure and imports Ahn's
classical threshold estimate, which gives
\[
    q_X(m,L)=q_Z(m,L)=Cme^{-cL}.
\]
The synchronous quantum-memory result is therefore an immediate
application of the preceding corollary.

\begin{formalas}{thm:toom-sync-informal}
\begin{maintheorem}[Synchronous 4D Toom quantum memory]
\label{thm:toom-sync-formal}
For the 4D toric code of linear size $L$, let $\mathcal N_*$ be a synchronous
Pauli noise layer of rate $\eta$ (Definition~\ref{def:sync-pauli-noise}). There
exist constants $\eta_c,C,c>0$, independent of $L$, and a syndrome-based quantum channel $\mathcal Q_{\mathrm{Toom}*}$, $R$-local for some
$R$ independent of $L$, such that if $\eta<\eta_c$, the encoded system evolving
under $\mathcal Q_{\mathrm{Toom}*}$ is a quantum $q(m,L)$-memory through round
$m$, with $q(m,L)=2Cme^{-cL}$. The channel is the quantum lift of the two 4D
Toom update rules followed by $\mathcal N_*$ (Definition~\ref{def:spca-lift}),
and is translationally invariant whenever $\mathcal N_*$ is.
\end{maintheorem}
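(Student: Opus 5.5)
The plan is to obtain Theorem~\ref{thm:toom-sync-formal} as a direct instance of Corollary~\ref{cor:sync-spca-memory}. Only two classical inputs are needed: a verification that the 4D Toom rules are local SPCAs for the two component codes of the 4D toric code, and a memory bound for these SPCAs under the stochastic-adversarial noise model of Definition~\ref{def:syndrome-pca-classical}.

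\textbf{Code structure and locality.} In the standard chain-complex description, qubits sit on the 2-cells of the 4D torus. $X$-stabilizers sit on 1-cells (edges) and $Z$-stabilizers on 3-cells, or the reverse. The $X$ sector is then a classical code in which errors are 2-chains and syndromes are their boundaries, which are closed 1-chains. The $Z$ sector is the same up to Poincaré duality, with syndromes given by coboundaries on 3-cells. Each interaction graph has bounded degree, and $\mathcal C_X,\mathcal C_Z$ are CSS-compatible because $\partial\partial=0$. For the logical qubit I would take $\bar X,\bar Z$ to be a pair of dual noncontractible 2-tori meeting in one plaquette. Any readout decoders with $\mathsf{Dec}(\mathbf 0)=\mathbf 0$ will do; I would fix the ones used in Ahn's analysis.

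\textbf{Toom rules are SPCAs.} Ahn's 4D Toom update for plaquette $i$ flips $i$ according to a fixed majority-type function of the syndrome bits on a bounded set of edges adjacent to $i$, chosen in preferred directions. So $a_i$ depends only on syndrome bits whose checks lie within a fixed radius $R_\sigma$, and the rule is translationally invariant. This gives condition~1 of Definition~\ref{def:syndrome-pca-classical}.

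\textbf{Memory bound (main obstacle).} The step I expect to need the most care is matching Ahn's noise model and conclusion to condition~2 and to $q$-remembering $b_\sigma$. Ahn's threshold argument, a Toom/Gács-style sparseness or erosion argument, handles faults at independently chosen locations whose values are arbitrary and possibly adversarial. That is exactly the stochastic-adversarial model, so I would re-read his proof to confirm that the fault values are allowed to depend on history.

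The target conclusion is the following: for $p<\pcth$, with probability at least $1-Cme^{-cL}$, the error configuration at every round $r\le m$ decomposes into sparse, small islands of diameter much less than $L$. The readout decoder then returns a string homologous to the actual error, which gives $b_\sigma(\mathbf e_r)=0$. A union bound over rounds produces the factor $m$.

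If Ahn's statement is instead phrased as a bound on the probability of a logical failure, then I only need the chosen decoder to be the one his argument certifies. With this in hand, $q_X=q_Z=Cme^{-cL}$, with constants independent of $L$ and valid for all $p<\pcth$.

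\textbf{Conclusion.} Set $\eta_c:=\pcth/2$. Corollary~\ref{cor:sync-spca-memory} then yields an $R$-local syndrome-based channel $\mathcal Q_{\mathrm{Toom}*}$, namely the quantum lift of the two Toom rules followed by $\mathcal N_*$. Its induced sector processes are runs of the Toom SPCAs at rate $2\eta<\pcth$, so it is a quantum $q_X+q_Z=2Cme^{-cL}$ memory through round $m$. It is translationally invariant whenever $\mathcal N_*$ is, since the code and the Toom rules already are.
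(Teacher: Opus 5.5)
Your proposal is correct and follows the paper's own proof. You verify that the 4D Toom rules are local, translation-invariant SPCAs (Lemma~\ref{lem:toom-pca-validation}), import Ahn's per-round failure estimate, accumulate it over $m$ rounds, and conclude from Corollary~\ref{cor:sync-spca-memory} with $\eta_c=\pcth/2$. The paper makes exactly the noise-model point you anticipated: the explanation-tree bound needs only that any prescribed set of $k$ violation locations occurs with probability at most $p^k$, which holds in the stochastic-adversarial model. It also phrases the round accumulation as the hazard bound $\Pr[\tau_\sigma=r\mid\tau_\sigma\ge r]\le Ce^{-cL}$, and uses linearity of $\mu_\sigma$ to pass from a homologically trivial decoded string to $b_\sigma=0$, just as your union-bound sketch does implicitly.
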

\end{formalas}

Hence, for every fixed target error $\epsilon>0$, the number of
synchronous rounds for which the encoded system remains a quantum
$\epsilon$-memory grows exponentially with $L$.
The proof, in Section~\ref{sec:toom}, verifies that the 4D Toom rule of
Ahn~\cite[\S 7]{ahn_extending_2004} is a valid SPCA satisfying the
hypotheses of Corollary~\ref{cor:sync-spca-memory}, importing
Ahn's classical threshold result.

\subsection{Bridging PCA-type and Fully autonomous dynamics}
\label{subsec:bridge-results}

The preceding corollaries assume an ideal synchronous quantum update.
We now return to continuous-time local dynamics and ask whether the
same classical SPCAs can be used without assuming that their entire
update is executed as a single quantum channel.

This is not automatic. Local Lindblad jumps occur at stochastic
times. If each correction were conditioned directly on the current
syndrome, a correction occurring early in a round could change the
syndrome seen by later corrections, so the resulting dynamics would
not implement the synchronous SPCA rule.

Our solution is a periodically driven local Lindbladian with local
decision and execution registers. Each cycle is divided into
Reset, Read, and Execute stages. The Read stage records all local
decisions from a common syndrome configuration, and the Execute stage
subsequently applies those stored decisions. Missed operations and
physical noise generate additional local faults. The main content of
the theorem is that these faults can still be dominated by the
stochastic-adversarial SPCA fault model.

\begin{formalas}{thm:periodic-informal}
\begin{maintheorem}[Floquet autonomous quantum memory from SPCAs]
\label{thm:periodic-formal}
Let $\mathcal C_X,\mathcal C_Z$ be a pair of CSS-compatible classical codes
with bounded-degree interaction graphs and linear size $L$. Fix a designated
logical qubit and readout decoders, with logical-error indicators $b_X,b_Z$.

Suppose that, for each sector $\sigma\in\{X,Z\}$, the component code
$\mathcal C_\sigma$ admits an $R_\sigma$-local SPCA $\Psi_\sigma$, with update
rule $\mathbf a^{(\sigma)}$, where $R_\sigma$ is independent of $L$. Assume that
there exists $\pcth>0$, independent of $L$, such that for every $p<\pcth$,
$\Psi_\sigma$ at fault rate $p$, started at $\mathbf 0$, $q_\sigma(m,L)$-remembers
$b_\sigma$ through round $m$.

Then there exist a threshold $\eta_c>0$, independent of $L$, and a geometrically
local periodic Lindbladian $\mathcal L_{\mathrm{PCA}}(t)$ with period $\Delta$
independent of $L$, acting on the code qubits together with local classical
registers, such that under local Pauli noise of rate $\eta<\eta_c$ on the code
qubits and the registers, the encoded system evolving under
$\mathcal L_{\mathrm{PCA}}(t)$ is a quantum $q(m,L)$-memory at cycle boundaries
through cycle $m$, with $q(m,L)=q_X(m,L)+q_Z(m,L)$. The Lindbladian
$\mathcal L_{\mathrm{PCA}}(t)$ is translationally invariant whenever the code
pair and both SPCAs are.
\end{maintheorem}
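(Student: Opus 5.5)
The plan is to build $\mathcal L_{\mathrm{PCA}}(t)$ explicitly, extend the asynchronous reduction to cover it, and show that its sector processes sampled at cycle boundaries are runs of $\Psi_X,\Psi_Z$ at a fault rate below $\pcth$. Theorem~\ref{thm:main-async-formal} then applies on $\mathcal I=\{t_0,\dots,t_m\}$.

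\textbf{Construction.} Attach to each site $i$ and each sector $\sigma$ a classical decision bit $d_i^\sigma$ and an execution flag $g_i^\sigma$. Split each period $\Delta=3\tau$ into three stages of length $\tau$.
\begin{itemize}
\item \emph{Reset}: jumps at rate $\kappa$ drive $d_i^\sigma,g_i^\sigma\to0$.
\item \emph{Read}: jumps at rate $\kappa$ set $d_i^\sigma\leftarrow a_i^{(\sigma)}(s)$. Each jump operator is a register flip multiplied by a projector in $\mathcal A_\sigma$ onto the local syndrome patterns within distance $R_\sigma$; it is applied only when $d_i^\sigma$ disagrees with $a_i^{(\sigma)}(s)$ and $g_i^\sigma=0$.
\item \emph{Execute}: jumps at rate $\kappa$ of the form $Z_i\otimes(\text{flip }g_i^X)$, conditioned on $d_i^X=1,g_i^X=0$, and analogously $X_i$ for the $Z$ sector.
\end{itemize}
Every jump is local, and the generator is translationally invariant whenever the code and the rules are.

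\textbf{Reduction.} Next I would verify, as announced in Subsection~\ref{subsec:register-extension}, that Proposition~\ref{prop:reduction-async-formal} and Theorem~\ref{thm:main-async-formal} extend to Lindbladians with classical registers. The extended sector configuration is the pair (error string, register bits of that sector). The rate condition now reads: the flip rate of each qubit or register in sector $\sigma$ is a function of the $\sigma$-syndrome and the $\sigma$-registers only. The Pauli noise on registers is a classical bit flip and keeps this structure. The decoded observable $T_\sigma$ does not depend on the registers, so the intertwining square of Proposition~\ref{prop:classical-repr} goes through unchanged on the enlarged space.

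\textbf{Coupling to an SPCA run.} This is the main step. Realize the sector process through independent Poisson clocks, one per jump type and site, together with noise clocks. Call site $i$ \emph{good} in cycle $r$ if three conditions hold:
\begin{itemize}
\item each of its Reset, Read and Execute clocks fires at least once in its stage;
\item no noise event hits qubits or registers in $N_{R'}(i)$ during the cycle, where $R'=R_\sigma+O(1)$;
\item consequently, the syndrome read by $i$ equals $H\mathbf e_r$.
\end{itemize}
For a good site the end-of-cycle configuration satisfies $(\mathbf e_{r+1})_i=(\mathbf e_r)_i\oplus a_i(H\mathbf e_r)$. Here I use that during Read no code qubit is flipped, so the syndrome is frozen apart from noise. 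Let $F_r$ be the set of non-good sites. Then $\operatorname{supp}(\zeta_r)\subseteq F_r$, and each site is bad with probability
\[
p'\le 3e^{-\kappa\tau}+C\eta\Delta|N_{R'}|.
\]
The events are determined by clocks in a bounded neighborhood during cycle $r$, so $F_r$ is $R''$-dependent in space, independent across cycles, and independent of the past. I would invoke Liggett--Schonmann--Stacey to dominate $F_r$ by an i.i.d.\ Bernoulli$(\tilde p)$ set $\tilde F_r\supseteq F_r$, with $\tilde p\to0$ as $p'\to0$. A Strassen coupling realizes this containment on the same probability space, measurably with respect to cycle-$r$ clocks. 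This makes $(\mathbf e_r)$ a run of $\Psi_\sigma$ in the stochastic-adversarial sense, since $\zeta_r$ may depend arbitrarily on history provided it is supported in $\tilde F_r$.

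The delicate points are twofold. First, the independence of $\tilde F_r$ from $\mathbf e_0,\dots,\mathbf e_r$ must survive the coupling; this holds because the clocks of cycle $r$ are independent of the past. Second, registers corrupted at the end of cycle $r-1$ must not leak into cycle $r$. The Reset stage handles this: unless its clock misses, which is already counted as a fault, it erases the registers.

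\textbf{Parameters.} Choose $\kappa\tau$ large so that $3e^{-\kappa\tau}$ is small, then $\eta_c$ small so that $\tilde p<\pcth$ (Claim~\ref{clm:memory-transfer}). Both sectors then $q_\sigma$-remember through cycle $m$, and Theorem~\ref{thm:main-async-formal} gives the claimed $(q_X+q_Z)$-memory at cycle boundaries.
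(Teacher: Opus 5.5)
Your proposal is correct and follows essentially the same route as the paper: the Reset--Read--Execute cycle with per-site decision and execution registers, the register-extended reduction (Proposition~\ref{prop:classical-register-reduction}, Corollary~\ref{cor:classical-register-memory}), a clock representation in which a site whose stage clocks all fire and whose neighborhood sees no noise performs exactly its SPCA update (Lemma~\ref{lem:deterministic-correctness}), and Liggett--Schonmann--Stacey domination plus a Strassen coupling that replaces the locally dependent failure set by a fresh i.i.d.\ fault set. The differences are minor: your two-way, flag-gated Read makes the reset of the decision bit redundant (hence $3e^{-\kappa\tau}$ rather than the paper's $4e^{-\kappa\tau}$); the extended rate condition should be stated per joint flip $(\mathbf u,\mathbf v)$ rather than per coordinate, since Execute flips a data qubit and a register together, and your jumps do satisfy this; and the paper realizes the Strassen coupling with fresh auxiliary randomness $U_r$, independent across cycles, rather than with the cycle-$r$ clocks, which is the cleanest way to secure the independence required in Proposition~\ref{prop:sampled-sector-spca}(ii).
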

\end{formalas}

The proof is given in Section~\ref{sec:periodic}.

\section{Asynchronous quantum-to-classical reduction}
\label{sec:async}

In this section we prove Proposition~\ref{prop:reduction-async-formal}.  The basic idea is that, although the full Heisenberg dynamics acts on a very large operator space, the evolution of a decoded logical observable is much more constrained.

Consider the decoded \(X\)-logical observable.  It consists of the bare logical operator \(\bar X\), together with a coefficient that depends on the measured \(X\)-syndrome.  The special form of a syndrome-based Lindbladian preserves this structure.  Its jumps may change the syndrome and may change the sign with which \(\bar X\) appears, but they do not turn \(\bar X\) into a different logical operator.  Starting from the decoded logical observable, the Heisenberg evolution therefore remains inside a fixed logical-sector subspace \(\mathcal V_X\).  The same statement holds for the \(Z\) sector.

This observation reduces the problem considerably.  Once the logical part is fixed, the only nontrivial information left in an operator in \(\mathcal V_X\) is how its coefficient depends on the error configuration. We can therefore encode every such operator by a function on the classical space of \(X\)-error configurations. As the quantum observable evolves inside \(\mathcal V_X\), this function evolves as well.  The central point of the reduction is that its evolution is not merely analogous to a classical process: it is exactly the backward evolution generated by the induced classical Markov chain.  In this sense, the classical process keeps track of the entire Heisenberg evolution relevant to the logical observable.

The correspondence can be summarized by the commutative picture
\[
\begin{array}{ccc}
\text{decoded logical observable}
&\xrightarrow{\text{quantum Heisenberg evolution}}&
\text{evolved logical-sector observable}
\\[1mm]
\downarrow\text{ classical representation}
&&
\downarrow\text{ classical representation}
\\[1mm]
\text{function on error configurations}
&\xrightarrow{\text{classical Markov evolution}}&
\text{evolved function}.
\end{array}
\]

For the decoded logical observable, the classical representative has a direct coding-theoretic meaning.  The associated classical decoder maps each error configuration to one of the two logical classes of the component code, which we encode as \(+1\) or \(-1\).  The induced Markov process therefore tracks this classical logical value, and the intertwining identifies its retention with the quantum decoded-logical autocorrelation.

The remainder of the section makes this picture precise. First, we identify the algebraic form of the decoded logical observables and the
corresponding logical sector subspaces, and show that the full Heisenberg
generator preserves these subspaces independently
(Subsections~\ref{subsec:algebraic-structure}
and~\ref{subsec:sector-decoupling}).

Second, we construct representation maps
$\Phi_X,\Phi_Z$ from the invariant quantum subspaces to scalar functions on
the classical error spaces and prove that these maps intertwine the
Heisenberg evolution with the classical Markov propagators
(Subsection~\ref{subsec:classical-repr}). In this step the classical image
of each decoded logical observable is identified with the corresponding
binary logical-error indicator.

Finally, we show that the code-space inner product is mapped to evaluation
at the vacuum error vector. Combining these facts yields the
classical expression for the decoded-logical autocorrelations
(Subsections~\ref{subsec:expectation-values}
and~\ref{subsec:main-reduction}).

\subsection{Algebraic structure of the logical sector subspaces}\label{subsec:algebraic-structure}

The first step is to expose the algebraic structure of the decoded
observables. Since each readout correction is Pauli, its conjugation of the
corresponding bare logical operator contributes only a phase. We first
isolate this syndrome-dependent phase and then use it to factor the decoded
observables.

\begin{definition}[Readout decoder phase]
\label{def:decoder-signs}
For valid syndromes $s_X\in\Sigma_X$ and $s_Z\in\Sigma_Z$, define
$\varphi_X(s_X),\varphi_Z(s_Z)\in\{\pm1\}$ by
\begin{equation}
    R_X(s_X)^\dagger \bar X R_X(s_X)
    =
    \varphi_X(s_X)\bar X,
    \qquad
    R_Z(s_Z)^\dagger \bar Z R_Z(s_Z)
    =
    \varphi_Z(s_Z)\bar Z.
\end{equation}
The corresponding syndrome-diagonal operators are
\begin{equation}\label{eq:syndrome-diagonal-operators}
    \varphi_X
    :=
    \sum_{s_X\in\Sigma_X}\varphi_X(s_X)\Pi_{s_X}^X,
    \qquad
    \varphi_Z
    :=
    \sum_{s_Z\in\Sigma_Z}\varphi_Z(s_Z)\Pi_{s_Z}^Z,
\end{equation}
with $\Pi_{s_X}^X$ the syndrome projectors defined in Definition~\ref{def:syndrome-projectors}
\end{definition}

As in Definition~\ref{def:syndrome-projectors},
$\varphi_X(s_X)$ and $\varphi_Z(s_Z)$ are the syndrome eigenvalue functions
of the corresponding operators. The vacuum normalization of the readout
decoder gives $\varphi_X(\mathbf0)=\varphi_Z(\mathbf0)=1$.
Plugging the decoder from Definitions~\ref{def:pauli-readout-corrections}
and~\ref{def:decoder-signs}, together with the Pauli commutation relations gives the form
\begin{equation}\label{eq:decoder-phase}
    \varphi_X(s)
    =
    (-1)^{
        \ell_X\cdot\mathsf{Dec}_X(s)
    }.
\end{equation}

The decoded logical observables therefore have a particularly simple algebraic form.

\begin{lemma}[Algebraic reduction of decoded logical observables]
\label{lem:algebraic-form}
The decoded logical observables (Definition~\ref{def:decoded-observables}) reduce to the algebraic form 
\begin{equation}
    T_X=\bar X\varphi_X,
    \qquad
    T_Z=\bar Z\varphi_Z.
\end{equation}
where $\varphi_X \in \mathcal{A}_X$ and $\varphi_Z \in \mathcal{A}_Z$ are the diagonal syndrome operators  defined in Definition~\ref{def:decoder-signs}.
\end{lemma}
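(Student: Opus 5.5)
The plan is to compute $T_X$ directly from Definition~\ref{def:decoded-observables} and pull the bare logical operator out of the syndrome sum. I will treat the $X$ sector; the $Z$ sector follows verbatim under $X\leftrightarrow Z$.

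\textbf{Step 1: isolate the phase.} By Definition~\ref{def:decoder-signs}, each summand satisfies $R_X(s)^\dagger\bar X R_X(s)=\varphi_X(s)\bar X$. The phase is a sign because $R_X(s)=Z^{\mathsf{Dec}_X(s)}$ and $\bar X$ are Pauli strings, so they commute or anticommute. The sign is $(-1)^{\ell_X\cdot\mathsf{Dec}_X(s)}$, as recorded in Eq.~\eqref{eq:decoder-phase}. Substituting gives
\[
T_X=\sum_{s\in\Sigma_X}\varphi_X(s)\,\Pi_s^X\bar X\,\Pi_s^X .
\]

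\textbf{Step 2: commute $\bar X$ past the projectors.} I will show that $\bar X$ commutes with every $X$-stabilizer $S^X_a$. This holds because both are $X$-type Pauli strings, so in particular $[\bar X,\Pi^X_s]=0$ for every $s$. Then $\Pi_s^X\bar X\Pi_s^X=\bar X(\Pi_s^X)^2=\bar X\Pi_s^X$. Note that this uses only that both operators are $X$-type, not the code-space property of $\bar X$. Hence
\[
T_X=\bar X\sum_{s\in\Sigma_X}\varphi_X(s)\Pi_s^X=\bar X\varphi_X ,
\]
matching Eq.~\eqref{eq:syndrome-diagonal-operators}.

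\textbf{Step 3: membership in $\mathcal A_X$.} Each $\Pi_s^X$ is, by its defining product formula in Definition~\ref{def:syndrome-projectors}, a polynomial in the generators $S^X_a$. It therefore lies in $\mathcal A_X=\mathrm{Alg}(\mathcal S_X)$. A finite real linear combination of such projectors also lies in $\mathcal A_X$, so $\varphi_X\in\mathcal A_X$.

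The argument is routine. The only point needing care is Step 2: one must use that the relevant projectors are the $X$-syndrome projectors, which commute with the $X$-type $\bar X$. For the $Z$ sector, the analogous fact is that the $Z$-type $\bar Z$ commutes with $\Pi^Z_s$. The opposite-sector projectors would not commute in general, and they do not appear in $T_X$.
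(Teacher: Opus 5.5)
Your proposal is correct and follows essentially the same route as the paper's proof: substitute the decoder phase $\varphi_X(s)$ into Definition~\ref{def:decoded-observables}, use that the $X$-type $\bar X$ commutes with every element of $\mathcal A_X$ (in particular with each $\Pi^X_s$), and factor $\bar X$ out of the syndrome sum. Your explicit use of $(\Pi^X_s)^2=\Pi^X_s$ and your check that $\varphi_X\in\mathcal A_X$ only spell out steps the paper leaves implicit.
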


\begin{proof}[Proof idea]
 The decoded logical observable $T_X$ applies the bare logical conjugated by a syndrome conditioned Pauli correction. The bare logical $\bar X$, being $X$-type, commutes with every element of $\mathcal{A}_X$, specifically with the syndrome projectors, and thus factors out of the syndrome sum, leaving $\varphi_X$. Proof deferred to Appendix~\ref{app:reduction}.
\end{proof}

To formalize the sector decoupling, we introduce the \emph{logical sector subspaces} 
underlying the reduction. Each subspace consists of all operators of the 
form $\bar{X}A$, where $A$ ranges over the corresponding 
stabilizer algebra. The decoded logical observables $T_X$ and $T_Z$ are 
particular elements of these spaces, and the key structural fact,
established in the next subsection, is that these subspaces are invariant 
under the full Heisenberg Lindbladian dynamics.

\begin{definition}[Logical sector subspaces]
\label{def:sector-subspaces}
The logical sector subspaces  $\mathcal{V}_X, \mathcal{V}_Z \subset \mathcal{B}(\mathcal{H})$ are the vector spaces spanned by observables of the following form:
\begin{equation}
    \mathcal{V}_X := \mathrm{span} \{ \bar{X} A \mid A \in \mathcal{A}_X \}, 
    \qquad 
    \mathcal{V}_Z := \mathrm{span} \{ \bar{Z} B \mid B \in \mathcal{A}_Z \}.
\end{equation}
By Lemma~\ref{lem:algebraic-form}, the decoded logical observables belong to these subspaces.
\end{definition}
We note that $\bar{X}$ is not in $\mathcal{A}_X$, and $\bar{Z}$ not in $\mathcal{A}_Z$.

An observable $O_X=\bar X A\in\mathcal V_X$ consists of the fixed bare
logical observable $\bar X$, multiplied by a syndrome-dependent
coefficient $A\in\mathcal A_X$.  Indeed, since $A$ is diagonal in the
$X$-syndrome decomposition, on the syndrome sector $s_X$ one has
\[
    O_X|_{s_X}=A(s_X)\bar X.
\]
Thus all observables in $\mathcal V_X$ have the same logical operator
structure; only their syndrome-dependent coefficient varies.  The same
interpretation applies to $\mathcal V_Z$.

\subsection{Closed sector dynamics under the Lindbladian}
\label{subsec:sector-decoupling}
With the logical sector subspaces identified, we show that the Heisenberg generator
of a syndrome-based Lindbladian maps each subspace into itself. The Hamiltonian
term $i[H(t),\cdot\,]$ vanishes identically on $\mathcal V_X$ and $\mathcal V_Z$,
since every element of $\mathcal A$ commutes with both subspaces
(Lemma~\ref{lem:hamiltonian-invariance}, Appendix~\ref{app:reduction}). For the
dissipative part, the key fact is that a single jump acts on a sector subspace only
through conjugation by its Pauli factor, weighted by its rate operator.

\begin{lemma}[Action of a single jump on the sector subspaces]
\label{lem:jump-action}
Let $i\in\{1,\dots,n\}$, $P\in\{I,X,Y,Z\}$ and $f\in\mathcal A$. For every
$O\in\mathcal V_X$,
\begin{equation}
    \mathcal D[P_if]^*(O)
    =
    \mathbf 1_{\{P\in\{Z,Y\}\}}\;
    f^\dagger f\,\bigl(Z_iO_XZ_i-O\bigr)
\end{equation}
and for every $O_Z\in\mathcal V_Z$,
\begin{equation}
    \mathcal D[P_if]^*(O_Z)
    =
    \mathbf 1_{\{P\in\{X,Y\}\}}\;
    f^\dagger f\,\bigl(X_iO_ZX_i-O_Z\bigr)
\end{equation}
\end{lemma}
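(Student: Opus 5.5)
We prove the $X$-sector identity by direct expansion of the adjoint dissipator; the $Z$-sector statement then follows by the exchange $X\leftrightarrow Z$. By linearity it suffices to take $O=\bar X A$ with $A\in\mathcal A_X$. Write $L=P_if$ with $f\in\mathcal A$. By Eq.~\eqref{eq:dissipator-adjoint},
\[
\mathcal D[L]^*(O)=f^\dagger P_i\,O\,P_i f-\tfrac12\{f^\dagger P_i P_i f,\,O\}
= f^\dagger P_i O P_i f-\tfrac12\{f^\dagger f,\,O\},
\]
using $P_i^\dagger=P_i$ and $P_i^2=\mathbb I$. Two commutation facts then do all the work.

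\textbf{Step 1: elements of $\mathcal A$ commute with $O$.} The full stabilizer algebra $\mathcal A$ is abelian and contains $A$. The bare logical $\bar X$ commutes with every stabilizer, hence with all of $\mathcal A$. Therefore every $g\in\mathcal A$ commutes with $O=\bar XA$. This is the same fact used for Lemma~\ref{lem:hamiltonian-invariance}. Applied to $g=f^\dagger f$, the anticommutator term becomes $-f^\dagger f\,O$.

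\textbf{Step 2: conjugating by $P_i$.} A single-qubit Pauli $P_i$ either commutes or anticommutes with each stabilizer generator, so conjugation by $P_i$ preserves $\mathcal A$. Concretely, $P_i f P_i=\tilde f\in\mathcal A$, obtained by flipping the signs of the generators that anticommute with $P_i$. Hence
\[
f^\dagger P_i O P_i f=f^\dagger P_i O f P_i = f^\dagger P_i f O P_i = f^\dagger \tilde f\,P_i O P_i .
\]
We first use that $f$ commutes with $O$ (Step 1), and then insert $P_i^2=\mathbb I$. It remains to show $f^\dagger \tilde f\,P_iOP_i$ has the claimed form in each case.

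\textbf{Case $P\in\{I,X\}$.} $P_i$ commutes with $\bar X$ and with every $X$-stabilizer, so $P_iOP_i=O$. Then $f^\dagger P_i O P_i f=f^\dagger f O$ by Step 1, and the dissipator vanishes. Note that here we never need to track $\tilde f$: we commute $P_i$ through $O$ directly.

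\textbf{Case $P\in\{Z,Y\}$.} Write $Y_i\propto X_iZ_i$. The $X_i$ factor commutes with $O$ as above, so $P_iOP_i=Z_iOZ_i$ in both cases, and $Z_iOZ_i=\pm\bar X\,\tilde A\in\mathcal V_X$ with $\tilde A\in\mathcal A_X$. The subtle point is the factor $f^\dagger\tilde f$: we need it to equal $f^\dagger f$, but $\tilde f\neq f$ in general.

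To handle this, do not move $f$ across $P_i$. Instead use that $O':=P_iOP_i$ is again of the form $\pm\bar X\tilde A$. By Step 1, $O'$ commutes with $f$, so
\[
f^\dagger P_iOP_i f=f^\dagger O' f=f^\dagger f\,O'.
\]
This gives
\[
\mathcal D[P_if]^*(O)=f^\dagger f\,(Z_iOZ_i-O)
\]
for $P\in\{Z,Y\}$. In the case $P\in\{I,X\}$ we likewise have $O'=O$, giving zero.

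\textbf{Main obstacle.} The crux is not to push $f$ through $P_i$, which would produce $\tilde f$. Rather, one must conjugate $O$ first and note that the result stays in $\mathcal V_X$ and therefore still commutes with $\mathcal A$. Everything else is bookkeeping of Pauli signs: the $Y_i$ phase cancels under conjugation, and $Z_i$ anticommutes with $\bar X$ exactly when $i\in\ell_X$.
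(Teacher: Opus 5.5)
Your proposal is correct and follows the paper's route. You show that $P_iOP_i\in\mathcal V_X$: it equals $O$ for $P\in\{I,X\}$ and equals $Z_iOZ_i$ for $P\in\{Z,Y\}$. Then you use that every element of $\mathcal A$ commutes with $\mathcal V_X$, which collapses $f^\dagger P_iOP_if-\tfrac12\{f^\dagger f,O\}$ to $f^\dagger f\,(P_iOP_i-O)$.

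One minor remark: the abandoned chain in Step 2 is wrong at its first equality, since $P_iOP_if\neq P_iOfP_i$ in general. Done correctly, the $\tilde f$ route also works: $P_iOP_if=P_iO\tilde fP_i=P_i\tilde f\,OP_i=f\,P_iOP_i$. So the ``subtlety'' you flag comes only from that slip.
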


\begin{proof}[Proof idea]
Every element of $\mathcal A$ commutes with the sector subspaces
(Lemma~\ref{lem:hamiltonian-invariance}), so the adjoint dissipator collapses to
$f^\dagger f\,(P_iO_XP_i-O)$; conjugation by $P_i$ acts trivially on $O$ for $P\in\{I,X\}$ and
acts as conjugation by $Z_i$ for $P\in\{Z,Y\}$. Proof deferred to
Appendix~\ref{app:reduction}.
\end{proof}

Summing Lemma~\ref{lem:jump-action} over all jumps gives a single closed generator
for the sector-restricted dynamics using the flip rate operator from Equation~\eqref{eq:flip-rates}:

\begin{corollary}[Sector-restricted instantaneous dynamics]
\label{cor:sector-dynamics-general}
Let $\mathcal L(t)$ be a syndrome-based Lindbladian and $O_X\in\mathcal V_X$. Then
\begin{equation}\label{eq:lindblad-Z}
    \mathcal L^*(t)(O_X)
    =
    \sum_{i=1}^n
    \Gamma_i^{(X)}(t)
    \bigl(
        Z_iO_XZ_i-O_X
    \bigr).
\end{equation}
Similarly, for $O_Z\in\mathcal V_Z$,
\begin{equation}
    \mathcal L^*(t)(O_Z)
    =
    \sum_{i=1}^n
    \Gamma_i^{(Z)}(t)
    \bigl(
        X_iO_ZX_i-O_Z
    \bigr).
\end{equation}
\end{corollary}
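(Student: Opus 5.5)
The plan is to sum the single-jump identity of Lemma~\ref{lem:jump-action} over all terms of the Heisenberg generator~\eqref{eq:lindbladian-heisenberg}, after first removing the Hamiltonian part. For \(O_X\in\mathcal V_X\), Lemma~\ref{lem:hamiltonian-invariance} says every element of \(\mathcal A\) commutes with \(\mathcal V_X\). Since \(H(t)\) is a Hermitian element of \(\mathcal A\), this gives \(i[H(t),O_X]=0\), and only the dissipative sum survives.

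For the dissipative part, I apply Lemma~\ref{lem:jump-action} with \(f=f_{P,i,\alpha}(t)\) term by term:
\begin{equation*}
\mathcal D[P_if_{P,i,\alpha}(t)]^*(O_X)=\mathbf 1_{\{P\in\{Z,Y\}\}}\,f_{P,i,\alpha}(t)^\dagger f_{P,i,\alpha}(t)\,\bigl(Z_iO_XZ_i-O_X\bigr).
\end{equation*}
The terms with \(P\in\{I,X\}\) vanish. Summing over \(P\in\{Z,Y\}\) and \(\alpha\) at fixed \(i\) factors out the common operator \(Z_iO_XZ_i-O_X\). By~\eqref{eq:flip-rates}, the left factor that remains is exactly \(\Gamma_i^{(X)}(t)\). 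Summing over \(i\) then yields~\eqref{eq:lindblad-Z}.

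The \(Z\)-sector identity follows by the same computation under \(X\leftrightarrow Z\). The surviving terms are now those with \(P\in\{X,Y\}\), the conjugation is by \(X_i\), and the rate operator is \(\Gamma_i^{(Z)}(t)\).

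The only point needing care is that the operator products are placed consistently. In Lemma~\ref{lem:jump-action} the factor \(f^\dagger f\) multiplies on the left, and the collapse of the anticommutator term \(-\tfrac12\{f^\dagger f,O_X\}\) to \(-f^\dagger fO_X\) uses that \(f^\dagger f\in\mathcal A\) commutes with \(O_X\). That lemma and Lemma~\ref{lem:hamiltonian-invariance} are taken as given, so there is no real obstacle. The rate condition~\eqref{eq:rate-condition} is not needed for the identity itself. It is used afterwards to conclude that the right-hand side lies in \(\mathcal V_X\): \(Z_i\bar XAZ_i=\pm\bar XA\) for \(A\in\mathcal A_X\), since \(Z_i\) commutes with \(X\)-type stabilizers, and \(\Gamma_i^{(X)}(t)\in\mathcal A_X\) commutes with \(\bar X\).
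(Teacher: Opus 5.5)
Your proof is correct and is the same as the paper's: drop the Hamiltonian term via Lemma~\ref{lem:hamiltonian-invariance}, apply Lemma~\ref{lem:jump-action} jump by jump, and collect the coefficients of the common factor $Z_iO_XZ_i-O_X$ over $P\in\{Z,Y\}$ and $\alpha$ into $\Gamma_i^{(X)}(t)$. One aside in your closing remark is wrong, though it does not affect the corollary: $Z_i$ anticommutes with the $X$-stabilizers supported on qubit $i$, so $Z_i\bar XAZ_i=\pm\bar X\,(Z_iAZ_i)$, where $Z_iAZ_i\in\mathcal A_X$ is a syndrome-shifted element and in general not $\pm A$; membership of the right-hand side in $\mathcal V_X$ still holds for that reason.
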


\begin{proof}
We prove the $X$-sector statement. The Hamiltonian term vanishes on $O_X$ by
Lemma~\ref{lem:hamiltonian-invariance}. Applying Lemma~\ref{lem:jump-action} to each
jump in Equation~\eqref{eq:lindbladian-heisenberg}, the jumps with $P\in\{I,X\}$
vanish, and each jump $P_if_{P,i,\alpha}(t)$ with $P\in\{Z,Y\}$ contributes
$f_{P,i,\alpha}(t)^\dagger f_{P,i,\alpha}(t)\bigl(Z_iO_XZ_i-O_X\bigr)$. All these jumps
at qubit $i$ induce the same conjugation, so summing over $P\in\{Z,Y\}$ and $\alpha$
collects their coefficients into the flip-rate operator $\Gamma_i^{(X)}(t)$ of
Equation~\eqref{eq:flip-rates}. The $Z$-sector statement follows by exchanging $X$
and $Z$.
\end{proof}

For the quantum-to-classical reduction, this is the only point at which the
microscopic Lindblad representation enters; from here onward the sector dynamics
depend only on the total flip-rate operators.
The rate condition~\eqref{eq:rate-condition} has not been used so far. It enters in
the next subsection: since $\Gamma_i^{(X)}(t)\in\mathcal A_X$, the right-hand side of
Corollary~\ref{cor:sector-dynamics-general} lies again in $\mathcal V_X$
(Lemma~\ref{lem:stab-alg-factor}).

We can now state the following central proposition. It shows that the logical sector subspaces are closed under the syndrome based Lindbladian. In the next subsection, we will show that each member of this $X(Z)$-subspace has a classical description through the $X(Z)$-sector error process.

\begin{proposition}[Syndrome-based Lindbladians preserve the logical sector subspaces]
\label{prop:sector-closure}
Let $\mathcal L(t)$ be a syndrome-based Lindbladian. Then, for every $t\ge0$,
\begin{equation}\label{eq:sector-closure}
    \mathcal L^*(t)(\mathcal V_X)\subseteq\mathcal V_X,
    \qquad
    \mathcal L^*(t)(\mathcal V_Z)\subseteq\mathcal V_Z .
\end{equation}
\end{proposition}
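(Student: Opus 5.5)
We prove the $X$-sector inclusion; the $Z$-sector statement then follows by the exchange $X\leftrightarrow Z$. By linearity it suffices to treat a spanning element $O_X=\bar X A$ with $A\in\mathcal A_X$. Corollary~\ref{cor:sector-dynamics-general} already reduces the full Heisenberg generator on $\mathcal V_X$ to
\[
\mathcal L^*(t)(O_X)=\sum_{i=1}^n \Gamma_i^{(X)}(t)\bigl(Z_iO_XZ_i-O_X\bigr).
\]
Here the Hamiltonian term has already been removed via Lemma~\ref{lem:hamiltonian-invariance}, and the jumps with Pauli factor $I$ or $X$ have dropped out by Lemma~\ref{lem:jump-action}. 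So we only need to check two things: conjugation by $Z_i$ maps $\mathcal V_X$ into itself, and left multiplication by $\Gamma_i^{(X)}(t)$ does too.

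For the conjugation, write $Z_i\bar X A Z_i=(Z_i\bar X Z_i)(Z_iAZ_i)$. Since $\bar X$ is an $X$-type Pauli string, $Z_i\bar XZ_i=\pm\bar X$, with sign $(-1)^{(\ell_X)_i}$. Every $X$-stabilizer generator $S_a^X$ is also an $X$-type string, so $Z_iS_a^XZ_i=\pm S_a^X$. Conjugation is an algebra automorphism, so $Z_i\,\cdot\,Z_i$ maps the generated algebra $\mathcal A_X$ onto itself. Hence $Z_iAZ_i\in\mathcal A_X$, and $Z_iO_XZ_i=\bar X A'$ with $A'=\pm Z_iAZ_i\in\mathcal A_X$. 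Therefore $Z_iO_XZ_i-O_X=\bar X(A'-A)\in\mathcal V_X$.

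For the multiplication, this is the only place the rate condition~\eqref{eq:rate-condition} enters: $\Gamma_i^{(X)}(t)\in\mathcal A_X$. We need $\Gamma\,\bar X B=\bar X\,\Gamma B$ for $\Gamma,B\in\mathcal A_X$. This holds because $\bar X$ commutes with every $X$-stabilizer, both being $X$-type strings, and hence with all of $\mathcal A_X$. This is the content of Lemma~\ref{lem:stab-alg-factor} and Lemma~\ref{lem:hamiltonian-invariance}. Since $\Gamma B\in\mathcal A_X$ by closure of the algebra, each summand lies in $\mathcal V_X$. Summing over $i$ gives $\mathcal L^*(t)(O_X)\in\mathcal V_X$.

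The only delicate point is not to overuse Corollary~\ref{cor:sector-dynamics-general}. That corollary holds for arbitrary syndrome functions $f\in\mathcal A$ and says nothing about closure. Individual $f_{P,i,\alpha}^\dagger f_{P,i,\alpha}$ may lie in $\mathcal A$ but outside $\mathcal A_X$, for example when they contain $Z$-stabilizers, and then a single product would leave $\mathcal V_X$. It is exactly the aggregated operator $\Gamma_i^{(X)}(t)$ that the rate condition places in $\mathcal A_X$, so the argument must be carried out only after summing over $P\in\{Z,Y\}$ and $\alpha$. Once this is respected, the remaining steps are routine Pauli commutation checks.
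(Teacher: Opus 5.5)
Your proof is correct and follows essentially the same route as the paper: starting from Corollary~\ref{cor:sector-dynamics-general}, you check that conjugation by $Z_i$ sends $\bar X$ to $\pm\bar X$ and $\mathcal A_X$ into itself, and that multiplication by $\Gamma_i^{(X)}(t)\in\mathcal A_X$ preserves $\mathcal V_X$ because $\mathcal A_X$ commutes with $\bar X$. Your closing remark, that the rate condition applies only to the aggregated $\Gamma_i^{(X)}(t)$ and not to the individual $f^\dagger f$, matches the paper's own observation that this is the single point where the rate condition enters.
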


\begin{proof}
Let $O_X=\bar XF$ with $F\in\mathcal A_X$. Conjugation by $Z_i$ maps $\bar X$ to
$\pm\bar X$ and $\mathcal A_X$ into itself, so $Z_iO_XZ_i\in\mathcal V_X$. By the rate
condition~\eqref{eq:rate-condition}, $\Gamma_i^{(X)}(t)\in\mathcal A_X$, and since
$\mathcal A_X$ commutes with $\bar X$, $A\bar XF=\bar X(AF)\in\mathcal V_X$ for every
$A\in\mathcal A_X$. Each term on the right-hand side of
Corollary~\ref{cor:sector-dynamics-general} therefore lies in $\mathcal V_X$. The
$Z$-sector statement follows by exchanging $X$ and $Z$.
\end{proof}

The significance of the following invariance statement is therefore that
Heisenberg evolution preserves this form: starting from a decoded logical
observable in $\mathcal V_X$, the evolution changes only its
syndrome-dependent coefficient.

Closure is a statement about each marginal sector on its own and does not mean
that the two sectors are independent: a $Y_i$ jump flips qubit $i$ in both
sectors and enters both $\Gamma_i^{(X)}(t)$ and $\Gamma_i^{(Z)}(t)$. Our
results use each sector separately and never the joint statistics of the two
sector processes.

\subsection{Classical representation and propagator intertwining}
\label{subsec:classical-repr}
With the sector-restricted dynamics established, we now construct the 
bridge to the classical Markov processes. We define linear maps $\Phi_X$ 
and $\Phi_Z$ that translate operators in $\mathcal{V}_X$ and $\mathcal{V}_Z$ 
into scalar functions on the error vector spaces $\Omega_X$ and 
$\Omega_Z$. The main result of this subsection 
(Proposition~\ref{prop:classical-repr}) shows that the quantum Heisenberg 
evolution of a decoded logical observable, when translated by $\Phi_X$, is 
identical to evolving its classical image under the Markov propagator (Def.~\ref{def:ctmc-generators}). 
As a consequence, the quantum code-space inner product (Def.~\ref{def:reference-state}) of the decoded logical 
observables (Def.~\ref{def:decoded-observables}), which controls the quantum memory lifetime, reduces to a 
classical two-point expectation over the sector error processes (Def.~\ref{def:error-process}), 
completing the proof of Proposition~\ref{prop:reduction-async-formal}.

To represent an observable $\bar X A\in\mathcal V_X$ as a function of a
classical error vector $\mathbf{e}\in \Omega_X$ or $\Omega_Z$, we must keep track of two pieces of
information. The first is the syndrome seen by $A\in \mathcal{A}_X$ (recall from Def.~\ref{def:syndrome-projectors} that $A\in \mathcal{A}_X$ acts on any state with a well defined syndrome $s$ as a scalar). The second is the phase acquired by the bare
logical operator $\bar{X}$ under the Heisenberg dynamics (Eq.~\eqref{eq:lindbladian-heisenberg}), that applies some physical $Z$-Pauli error (Eq.~\eqref{eq:lindblad-Z}). 

\begin{definition}[Bare logical function]
For error vectors in $\mathbf e_X\in\Omega_X$ and $\mathbf e_Z\in\Omega_Z$
define
\begin{equation}
    x_{\mathrm{bare}}(\mathbf e_X)
    :=
    (-1)^{\ell_X\cdot\mathbf e_X},
    \qquad
    z_{\mathrm{bare}}(\mathbf e_Z)
    :=
    (-1)^{\ell_Z\cdot\mathbf e_Z},
\end{equation}
where $\ell_X,\ell_Z$ are the support vectors of the bare logical operators (Definition~\ref{def:bare-logicals})
\end{definition}
We will see below that these are exactly the phases acquired by $\bar X$ and $\bar Z$, respectively, under
the corresponding physical Pauli errors.

\begin{definition}[Classical representation maps]
\label{def:identification-map}
We define the maps
$\Phi_X:\mathcal V_X\to\mathbb R^{\Omega_X}$ and
$\Phi_Z:\mathcal V_Z\to\mathbb R^{\Omega_Z}$ as follows. For $A\in\mathcal A_X, B\in\mathcal A_Z$, 
\begin{align}
    [\Phi_X(\bar X A)](\mathbf e_X)
    &=
    x_{\mathrm{bare}}(\mathbf e_X)\,
    A(H_X\mathbf e_X),
    \qquad A\in\mathcal A_X,
    \\
    [\Phi_Z(\bar Z B)](\mathbf e_Z)
    &=
    z_{\mathrm{bare}}(\mathbf e_Z)\,
    B(H_Z\mathbf e_Z),
    \qquad B\in\mathcal A_Z,
\end{align}
and extend linearly to the entire algebras. These define representations of the algebra $\mathcal{V}_X$ (respectively $\mathcal{V}_Z$) in the algebra of functions 
$\mathbb R^{\Omega_X}$ (respectively, 
$\mathbb R^{\Omega_Z}$). 
\end{definition}

Applying the representation maps to the decoded logical observables and
using Lemma~\ref{lem:algebraic-form} gives
\begin{align}
    [\Phi_X(T_X)](\mathbf e_X)
    &=
    x_{\mathrm{bare}}(\mathbf e_X)
    \varphi_X(H_X\mathbf e_X),
    \\
    [\Phi_Z(T_Z)](\mathbf e_Z)
    &=
    z_{\mathrm{bare}}(\mathbf e_Z)
    \varphi_Z(H_Z\mathbf e_Z).
\end{align}

\begin{definition}[Logical-error phases]\label{def:logical-error-phase}
Denote 
\begin{equation}
    \Lambda_X:=\Phi_X(T_X) \in \mathbb R^{\Omega_X},
    \qquad
    \Lambda_Z:=\Phi_Z(T_Z) \in \mathbb R^{\Omega_Z},
\end{equation}
as the \emph{logical-error phases}.
\end{definition}
Since
$x_{\mathrm{bare}}(\mathbf0)=z_{\mathrm{bare}}(\mathbf0)=1$
and
$\varphi_X(\mathbf0)=\varphi_Z(\mathbf0)=1$,
\begin{equation}
    \Lambda_X(\mathbf0)
    =
    \Lambda_Z(\mathbf0)
    =
    1.
\end{equation}

 The classical memory problem, however, was formulated
directly in terms of the binary decoded logical values $b_X,b_Z$ (Definition~\ref{def:decoded-logical-value}). The
following important lemma shows that these independently constructed descriptions of the quantum object $\Lambda$ and the classical object $b$ coincide. 

\begin{lemma}[Binary and phase descriptions of the decoded logical error]
\label{lem:logical-label-phase}
For every error vector,
\begin{equation}
    \Lambda_X(\mathbf e)
    =
    (-1)^{b_X(\mathbf e)},
    \qquad
    \Lambda_Z(\mathbf e)
    =
    (-1)^{b_Z(\mathbf e)}.
\end{equation}
\end{lemma}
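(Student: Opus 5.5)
The plan is a direct computation from the definitions, done for the $X$ sector; the $Z$ case follows by exchanging $X\leftrightarrow Z$. By Lemma~\ref{lem:algebraic-form} and the definition of $\Phi_X$, we have
\[
\Lambda_X(\mathbf e)=x_{\mathrm{bare}}(\mathbf e)\,\varphi_X(H_X\mathbf e)=(-1)^{\ell_X\cdot\mathbf e}\,\varphi_X(H_X\mathbf e).
\]
Here $\varphi_X(H_X\mathbf e)$ means the syndrome eigenvalue function of the operator $\varphi_X$ of Eq.~\eqref{eq:syndrome-diagonal-operators}, evaluated at $s=H_X\mathbf e$. Since $\varphi_X=\sum_{s}\varphi_X(s)\Pi^X_s$ and the projectors are orthogonal and sum to the identity, this eigenvalue is exactly the scalar $\varphi_X(s)$ from Definition~\ref{def:decoder-signs}. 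One checks that $H_X\mathbf e\in\Sigma_X$ automatically, so the evaluation is defined for every $\mathbf e\in\Omega_X$.

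Next, I substitute Eq.~\eqref{eq:decoder-phase}, $\varphi_X(s)=(-1)^{\ell_X\cdot\mathsf{Dec}_X(s)}$, which gives
\[
\Lambda_X(\mathbf e)=(-1)^{\ell_X\cdot\mathbf e\,\oplus\,\ell_X\cdot\mathsf{Dec}_X(H_X\mathbf e)}=(-1)^{\ell_X\cdot(\mathbf e\oplus\mathsf{Dec}_X(H_X\mathbf e))},
\]
using bilinearity of the $\mathbb F_2$ inner product. For completeness I would re-derive Eq.~\eqref{eq:decoder-phase} briefly. Since $R_X(s)=Z^{\mathsf{Dec}_X(s)}$ and $\bar X=X^{\ell_X}$ up to the fixed choice, the Pauli commutation rule $Z^{\mathbf v}X^{\ell}Z^{\mathbf v}=(-1)^{\ell\cdot\mathbf v}X^{\ell}$ yields the sign. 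Note that $R_X$ is Hermitian, so the dagger is harmless.

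Finally, I compare this with Definition~\ref{def:decoded-logical-value} combined with Definition~\ref{def:logical-coordinate-identification}. The decoded string $\mathbf c:=\mathbf e\oplus\mathsf{Dec}_X(H_X\mathbf e)$ lies in $\mathcal C_X$ by Eq.~\eqref{eq:decoder-properties}, so
\[
b_X(\mathbf e)=\mu_X([\mathbf c]_X)=\ell_X\cdot\mathbf c.
\]
This value is well defined by Claim~\ref{clm:logical-values-valid}. The exponent in the previous display is therefore exactly $b_X(\mathbf e)$, which proves the claim.

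The only step requiring care is the sign bookkeeping. Two things must be checked: that the syndrome eigenvalue of $\varphi_X$ coincides with the scalar decoder phase, and that the phase convention in $R_X^\dagger\bar X R_X$ matches $\ell_X\cdot\mathsf{Dec}_X$ with no extra factor. Both are immediate for Pauli strings, so I expect no genuine obstacle.
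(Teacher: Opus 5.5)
Your proposal is correct and is essentially the paper's proof. Both write $b_X(\mathbf e)=\ell_X\cdot\bigl(\mathbf e\oplus\mathsf{Dec}_X(H_X\mathbf e)\bigr)$, split it bilinearly, and match the two factors with $x_{\mathrm{bare}}(\mathbf e)$ and $\varphi_X(H_X\mathbf e)$ via Eq.~\eqref{eq:decoder-phase}. You run the chain in the opposite direction and invoke bilinearity of the $\mathbb F_2$ inner product rather than ``linearity of $\mu_X$'' (slightly cleaner, since $\mathbf e$ alone need not be a codeword); the re-derivation of the decoder phase is a harmless extra.
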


\begin{proof}
We prove the $X$-sector identity. By
Definition~\ref{def:decoded-logical-value} and
Definition~\ref{def:logical-coordinate-identification}, using linearity of $\mu_X$
\begin{align}
    b_X(\mathbf e)
    &=
    \mu_X\!\left(
        \left[
            \mathbf e
            \oplus
            \mathsf{Dec}_X(H_X\mathbf e)
        \right]_X
    \right)
    \\
    &=
    (\ell_X\cdot\mathbf e)
    \oplus
    (\ell_X\cdot
    \mathsf{Dec}_X(H_X\mathbf e)).
\end{align}
Using Equation~\eqref{eq:decoder-phase} gives
\begin{align}
    (-1)^{b_X(\mathbf e)}
    &=
    (-1)^{\ell_X\cdot\mathbf e}
    (-1)^{
        \ell_X\cdot
        \mathsf{Dec}_X(H_X\mathbf e)
    }
    \\
    &=
    x_{\mathrm{bare}}(\mathbf e)\,
    \varphi_X(H_X\mathbf e)
    \\
    &=
    \Lambda_X(\mathbf e).
\end{align}
The $Z$-sector identity is identical.
\end{proof}

By Corollary~\ref{cor:sector-dynamics-general}, the sector-restricted
generator acts on an element of $\mathcal V_X$ through two elementary
operations: multiplication by flip-rate operator ($\Gamma_X$) in $\mathcal{A}_X$ and
conjugation by a physical Pauli $Z_i$. The next two lemmas show how each
operation is transported through $\Phi_X$; together they yield the
quantum--classical intertwining relation described in the diagram in Figure~\ref{fig:intertwine}.

The first action is multiplication by a member of the stabilizer algebra $\mathcal{A}$:

\begin{lemma}[Stabilizer algebra eigenvalue factorization]
\label{lem:stab-alg-factor}
If $A \in \mathcal{A}_X$ and $O = \bar{X}F \in \mathcal{V}_X$ where $F\in\mathcal{A}_X$, then $AO \in \mathcal{V}_X$ and its classical representative satisfies
\begin{equation}
    \Phi_X(AO)(\mathbf{e}_X) = A(H_X\mathbf{e}_X)\Phi_X(O)(\mathbf{e}_X),
\end{equation}
where $A(\cdot)$ is $A$'s syndrome eigenvalue function (Definition~\ref{def:eigenvalue-functions}).
\end{lemma}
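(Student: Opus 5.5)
The plan is to compute directly from the definitions. Write $O=\bar X F$ with $F\in\mathcal A_X$. The first step is to move $A$ past $\bar X$. Every element of $\mathcal A_X$ is a polynomial in the $X$-type stabilizers, and $\bar X\in\{I,X\}^{\otimes n}$ is $X$-type, so $\bar X$ commutes with all of them. This is the same commutation fact used in Lemma~\ref{lem:algebraic-form} and in the proof of Proposition~\ref{prop:sector-closure}. Hence
\[
AO=A\bar X F=\bar X (AF).
\]
Since $\mathcal A_X$ is an algebra, $AF\in\mathcal A_X$, and therefore $AO\in\mathcal V_X$. This is the membership claim.

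For the representative, apply Definition~\ref{def:identification-map} to the element $\bar X(AF)$:
\[
\Phi_X(AO)(\mathbf e_X)=x_{\mathrm{bare}}(\mathbf e_X)\,(AF)(H_X\mathbf e_X).
\]
Next use pointwise multiplicativity of syndrome eigenvalue functions (Definition~\ref{def:syndrome-projectors}). Since $AF\,\Pi^X_s=A(s)F(s)\Pi^X_s$, we get $(AF)(s)=A(s)F(s)$ for every valid $s\in\Sigma_X$. Evaluate this at $s=H_X\mathbf e_X$, which lies in $\Sigma_X=\operatorname{im}H_X$, and regroup:
\[
\Phi_X(AO)(\mathbf e_X)=A(H_X\mathbf e_X)\,x_{\mathrm{bare}}(\mathbf e_X)\,F(H_X\mathbf e_X)=A(H_X\mathbf e_X)\,\Phi_X(O)(\mathbf e_X).
\]

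The remaining step is to extend this to general elements of $\mathcal V_X$, which is a span. Both sides of the identity are linear in $O$, so it holds on all of $\mathcal V_X$.

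This last step carries the only subtlety: $\Phi_X$ must be well defined, meaning the coefficient $F$ in $\bar XF$ has to be unique. It is, because $\bar X$ is unitary, so $\bar XF=\bar XF'$ forces $F=F'$. Also, $\mathcal V_X$ is in fact equal to $\{\bar XF: F\in\mathcal A_X\}$, since $\mathcal A_X$ is a vector space. With these two facts no genuine obstacle remains.
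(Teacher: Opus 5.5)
Your proposal is correct and matches the paper's proof essentially step for step: commute $A$ past $\bar X$ to get $AO=\bar X(AF)$ with $AF\in\mathcal A_X$, then apply $\Phi_X$ and use pointwise multiplicativity $(AF)(s)=A(s)F(s)$ at $s=H_X\mathbf e_X$. Your extra remarks on well-definedness of $\Phi_X$ are accurate but not spelled out in the paper: $F$ is unique because $\bar X$ is unitary, and $\mathcal V_X=\{\bar XF:F\in\mathcal A_X\}$, so the linear-extension step is redundant.
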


\begin{proof}
Since $A, F \in \mathcal{A}_X$ and $\bar{X}$ commutes with all elements of $\mathcal{A}_X$, we reorder $AO = A\bar{X}F = \bar{X}(AF)$, and since $AF \in \mathcal{A}_X$, $AO \in \mathcal{V}_X$. Applying $\Phi_X$:
\begin{equation}
    \Phi_X(AO)(\mathbf{e}_X) = x_{\mathrm{bare}}(\mathbf{e}_X)(AF)(H_X\mathbf{e}_X) 
    = x_{\mathrm{bare}}(\mathbf{e}_X)A(H_X\mathbf{e}_X)F(H_X\mathbf{e}_X) = A(H_X\mathbf{e}_X)\Phi_X(O)(\mathbf{e}_X).
\end{equation}

\end{proof}

The second action, conjugation by a physical Pauli $Z_i$, is worth seeing directly too, as it is the step that turns a quantum error process into a classical random walk: a physical error becomes a single hop in the error-configuration space.
\begin{lemma}[Pauli conjugation as error vector shift]
\label{lem:pauli-conjugation-translation}
Let $O = \bar{X}A \in \mathcal{V}_X$. Conjugation by $Z_i$ yields 
$Z_iOZ_i \in \mathcal{V}_X$, and its classical representative satisfies:
\begin{equation}
    \Phi_X(Z_iOZ_i)(\mathbf{e}_X) = \Phi_X(O)(\mathbf{e}_X \oplus \hat{e}_i).
\end{equation}
Similarly, for $O = \bar{Z}B \in \mathcal{V}_Z$, conjugation by $X_i$ 
yields $X_iOX_i \in \mathcal{V}_Z$ with 
$\Phi_Z(X_iOX_i)(\mathbf{e}_Z) = \Phi_Z(O)(\mathbf{e}_Z \oplus \hat{e}_i)$.
\end{lemma}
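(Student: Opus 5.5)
We need to prove Lemma~\ref{lem:pauli-conjugation-translation}. By linearity of $\Phi_X$ and of conjugation it suffices to treat $O=\bar X A$ with $A\in\mathcal A_X$. The computation splits into two factors: the logical factor $\bar X$ and the stabilizer-algebra coefficient $A$.

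\emph{Logical factor.} Since $\bar X\in\{I,X\}^{\otimes n}$ has support $\ell_X$, the Pauli commutation relation gives
\[
Z_i\bar X Z_i=(-1)^{(\ell_X)_i}\bar X .
\]
On the classical side,
\[
x_{\mathrm{bare}}(\mathbf e\oplus\hat e_i)=(-1)^{(\ell_X)_i}\,x_{\mathrm{bare}}(\mathbf e).
\]
So this factor produces exactly the sign acquired by the bare logical function under the shift.

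\emph{Coefficient factor.} We have $Z_iOZ_i=(Z_i\bar XZ_i)(Z_iAZ_i)$. Conjugation by $Z_i$ is an algebra automorphism and maps each $S_a^X$ to $(-1)^{(H_X)_{ai}}S_a^X$, so it preserves $\mathcal A_X$; hence $Z_iOZ_i\in\mathcal V_X$. For the eigenvalue function, the same sign rule gives
\[
Z_i\Pi^X_sZ_i=\Pi^X_{s\oplus H_X\hat e_i}.
\]
Writing $A=\sum_s A(s)\Pi^X_s$, we get
\[
Z_iAZ_i=\sum_s A(s)\Pi^X_{s\oplus H_X\hat e_i},
\]
whose eigenvalue at $s'$ is $A(s'\oplus H_X\hat e_i)$. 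Evaluating at $s'=H_X\mathbf e$ gives $A(H_X(\mathbf e\oplus\hat e_i))$, using linearity of $H_X$.

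\emph{Combining.} Multiplying the two factors,
\[
\Phi_X(Z_iOZ_i)(\mathbf e)=(-1)^{(\ell_X)_i}x_{\mathrm{bare}}(\mathbf e)\,A(H_X(\mathbf e\oplus\hat e_i))=\Phi_X(O)(\mathbf e\oplus\hat e_i).
\]
The $Z$ sector follows by exchanging $X\leftrightarrow Z$.

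The only delicate point is well-definedness. $\Phi_X$ must be defined independently of the representation $\bar XA$, but $A$ is determined by $O$ because $\bar X$ is invertible. The eigenvalue function is also only defined on valid syndromes, and $s'\oplus H_X\hat e_i$ stays in $\Sigma_X=\operatorname{im}H_X$, so no issue arises there.
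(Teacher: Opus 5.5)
Your proof is correct and follows the paper's argument essentially step for step. You factor $Z_iOZ_i=\bar X\bigl((-1)^{(\ell_X)_i}Z_iAZ_i\bigr)$, match the sign $(-1)^{(\ell_X)_i}$ to the shift of $x_{\mathrm{bare}}$, and identify the eigenvalue function of $Z_iAZ_i$ as that of $A$ shifted by $H_X\hat e_i$; your explicit identity $Z_i\Pi^X_sZ_i=\Pi^X_{s\oplus H_X\hat e_i}$ and the well-definedness remark spell out the syndrome-shift step that the paper states in one line.
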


\begin{proof}[Proof idea]
Conjugation by $Z_i$ acts separately on the two factors of $O=\bar XA$. On $\bar X$
it only produces the sign $(-1)^{(\ell_X)_i}$, since $Z_i\bar XZ_i=(-1)^{(\ell_X)_i}\bar X$.
On $A\in\mathcal A_X$ it produces $A^{(i)}:=Z_iAZ_i$, whose syndrome eigenvalue
function is $A$'s shifted by $H_X\hat{e}_i$, since $Z_i$ anticommutes with the
$X$-stabilizer generators supported on qubit $i$. Reading $Z_iOZ_i=\bar X\bigl((-1)^{(\ell_X)_i}A^{(i)}\bigr)$
back through $\Phi_X$ therefore produces the sign $(-1)^{(\ell_X)_i}$ times the
shifted syndrome function. This is $\Phi_X(O)$ evaluated at $\mathbf e_X\oplus \hat{e}_i$:
the syndrome shift accounts for $A(H_X(\mathbf e_X\oplus \hat{e}_i))$, and the sign
$(-1)^{(\ell_X)_i}$ is the factor by which $x_{\mathrm{bare}}$ itself changes
under $\mathbf e_X\mapsto\mathbf e_X\oplus \hat{e}_i$. The full proof is deferred to Appendix~\ref{app:reduction}.
\end{proof}

With these two facts in hand, we now establish the main result of this subsection: the representation maps $\Phi_X, \Phi_Z$ intertwine the quantum and classical evolutions.

\begin{proposition}[Intertwining of propagators]
\label{prop:classical-repr}
$\mathcal{V}_X$ is invariant under the instantaneous generator 
$\mathcal{L}^*(t)$ for every $t \ge 0$. Moreover, for any $O \in \mathcal{V}_X$,
\begin{equation}
    \Phi_X\!\left(\mathcal{L}^*(t)(O)\right) = G_X(t)\Phi_X(O), 
\end{equation}
where $G_X(t)$ is the Markov generator defined in Equation~\eqref{eq:ctmc-generator}. Consequently, the propagators $M$ and $\mathcal{W}^*$ intertwine:
\begin{equation}\label{eq:intertwine}
 \Phi_X\!\left(\mathcal{W}^*(t,0)(O)\right) = M^{(X)}(t,0)\Phi_X(O).   
\end{equation}

The analogous statements hold in the $Z$-sector.
\end{proposition}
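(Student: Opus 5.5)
The plan is to prove the generator-level intertwining first and then lift it to propagators by a uniqueness argument for linear ODEs. Invariance of $\mathcal V_X$ under $\mathcal L^*(t)$ is exactly Proposition~\ref{prop:sector-closure}, so it only needs to be cited.

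For the generator identity, fix $O\in\mathcal V_X$ and start from the form in Corollary~\ref{cor:sector-dynamics-general}:
\[
\mathcal L^*(t)(O)=\sum_i \Gamma_i^{(X)}(t)\bigl(Z_iOZ_i-O\bigr).
\]
By Lemma~\ref{lem:pauli-conjugation-translation}, $Z_iOZ_i\in\mathcal V_X$, so $Z_iOZ_i-O\in\mathcal V_X$. Since $\Phi_X$ is linear, it suffices to compute it term by term. By the rate condition~\eqref{eq:rate-condition}, $\Gamma_i^{(X)}(t)\in\mathcal A_X$. Hence Lemma~\ref{lem:stab-alg-factor}, extended by linearity to all of $\mathcal V_X$, gives
\[
\Phi_X\bigl(\Gamma_i^{(X)}(t)(Z_iOZ_i-O)\bigr)(\mathbf e)=\Gamma_i^{(X)}(t)(H_X\mathbf e)\,\bigl[\Phi_X(Z_iOZ_i)(\mathbf e)-\Phi_X(O)(\mathbf e)\bigr].
\]
Lemma~\ref{lem:pauli-conjugation-translation} turns the bracket into $\Phi_X(O)(\mathbf e\oplus\hat e_i)-\Phi_X(O)(\mathbf e)$. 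By Definition~\ref{def:ctmc}, $\Gamma_i^{(X)}(t)(H_X\mathbf e)=r_i^{(X)}(H_X\mathbf e,t)$. Summing over $i$ reproduces~\eqref{eq:ctmc-generator}, so $\Phi_X\circ\mathcal L^*(t)=G_X(t)\circ\Phi_X$ on $\mathcal V_X$.

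One point to check in passing is that $\Phi_X$ is well defined. If $\bar XA=\bar XA'$, then $A=A'$ because $\bar X$ is invertible, so the representation is unique. Also, the functions $A(\cdot)$ are only evaluated on valid syndromes $H_X\mathbf e\in\Sigma_X$.

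For the propagators, fix $O\in\mathcal V_X$ and set $F(t):=\Phi_X(\mathcal W^*(t,0)(O))$. The subtlety is that~\eqref{eq:q-propgator} gives $\mathcal W^*$ with the generator acting on the right:
\[
\tfrac{d}{dt}\mathcal W^*(t,0)=\mathcal W^*(t,0)\mathcal L^*(t).
\]
So invariance of $\mathcal V_X$ under $\mathcal W^*(t,0)$ is not immediate from the instantaneous equation. I would handle this using finite-dimensionality. The space $\mathcal V_X$ is finite-dimensional and invariant under each $\mathcal L^*(t)$, so the restricted family $\mathcal L^*(t)|_{\mathcal V_X}$ generates, through the same right-multiplied ODE posed within $\mathrm{End}(\mathcal V_X)$, a propagator $\widetilde{\mathcal W}(t,0)$ on $\mathcal V_X$. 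This ODE has the same coefficients (locally bounded or measurable) as the full one. By uniqueness of solutions, $\widetilde{\mathcal W}(t,0)$ agrees with the restriction of $\mathcal W^*(t,0)$ to $\mathcal V_X$. This gives invariance and lets me work entirely inside $\mathcal V_X$.

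Next, transport through $\Phi_X$. Since $\Phi_X$ is an injective linear map, I can define the operator $K(t,0):=\Phi_X\widetilde{\mathcal W}(t,0)\Phi_X^{-1}$ on the image $\Phi_X(\mathcal V_X)\subseteq\mathbb R^{\Omega_X}$. By the generator identity, this image is invariant under $G_X(t)$. Then $K$ solves
\[
\tfrac{d}{dt}K=K\,G_X(t)\big|_{\mathrm{im}\,\Phi_X},\qquad K(0,0)=\mathrm{id}.
\]
The same image is invariant under $M^{(X)}$ by the identical uniqueness argument applied to the classical forward equation of Definition~\ref{def:ctmc-generators}. Hence $M^{(X)}(t,0)|_{\mathrm{im}\,\Phi_X}$ solves the same ODE with the same initial condition, and uniqueness yields~\eqref{eq:intertwine}.

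The main obstacle is this right-multiplication ordering. A naive "differentiate and substitute" argument would silently require that $\mathcal W^*(t,s)$ preserve $\mathcal V_X$ for intermediate times. The restriction-plus-uniqueness argument avoids that; if the time dependence is piecewise continuous, a time-ordered product limit gives the same conclusion more concretely. The $Z$-sector statement follows by exchanging $X\leftrightarrow Z$.
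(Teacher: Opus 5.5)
Your proof is correct and follows the paper's argument. You get the generator identity exactly as the paper does, from Corollary~\ref{cor:sector-dynamics-general}, the rate condition, Lemma~\ref{lem:stab-alg-factor} and Lemma~\ref{lem:pauli-conjugation-translation}, and you lift it to propagators by uniqueness for the right-multiplied linear ODE. Your restriction argument, showing that $\mathcal W^*(t,0)$ preserves $\mathcal V_X$, fills in a step the paper only asserts (``by time integration''). Conjugating by $\Phi_X^{-1}$ is unnecessary, though: the paper directly compares $\Phi_X\mathcal W^*(t,0)$ and $M^{(X)}(t,0)\Phi_X$ as two solutions of $\dot Y=Y\,\mathcal L^*(t)|_{\mathcal V_X}$ with $Y(0)=\Phi_X$.
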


\begin{figure}[h]
\centering
\begin{tikzcd}[row sep=5em, column sep=8em, ampersand replacement=\&]
O \in \mathcal{V}_X
  \arrow[r, "\Phi_X"]
  \arrow[d, "\mathcal{W}^*(t{,}0)"']
\& \Phi_X(O) \in \mathbb{R}^{\Omega_X}
  \arrow[d, "M^{(X)}(t{,}0)"] \\
\mathcal{W}^*(t{,}0)(O) \in \mathcal{V}_X
  \arrow[r, "\Phi_X"']
\& M^{(X)}(t{,}0)\,\Phi_X(O) \in \mathbb{R}^{\Omega_X}
\end{tikzcd}
\caption{The intertwining relation of Proposition~\ref{prop:classical-repr}. The left column is the Heisenberg evolution of an observable $O$ in the logical sector subspace $\mathcal V_X$ under a syndrome-based Lindbladian; the right column is the backward evolution of its classical representative under the induced $X$-sector Markov chain; the horizontal arrows are the classical representation map $\Phi_X$.}
\label{fig:intertwine}
\end{figure}

\begin{proof}[Proof idea]
The proof applies the two facts above termwise to $\Phi_X(O)$, then upgrades the generator identity to propagators by uniqueness of solutions to the associated linear ODE. Proof deferred to Appendix~\ref{app:reduction}.
\end{proof}

Proposition~\ref{prop:classical-repr} establishes that, under the representation maps, the Heisenberg evolution can be described by the corresponding classical Markov process.

\subsection{Evaluating expectation values}\label{subsec:expectation-values}

We now show that the quantum code-space inner product of the decoded logical observable and any element of $\mathcal{V}_X$ to evaluation of the classical representation at the vacuum error 
configuration $\mathbf{e}=\mathbf{0}$. This will be specifically useful for the time evolved decoded logical observable, which is also a member of $\mathcal{V}_X$.

\begin{lemma}[Classical representation of code space inner product] 
\label{lem:vacuum-evaluation}
Let \(O = \bar{X}F \in \mathcal{V}_X\). Then
\begin{equation}
    \left\langle T_X, O \right\rangle_{\mathrm{code}} = \Phi_X(O)(\mathbf{0}).
\end{equation}
Similarly, for \(O = \bar{Z}F \in \mathcal{V}_Z\),
\begin{equation}
    \left\langle T_Z, O \right\rangle_{\mathrm{code}} = \Phi_Z(O)(\mathbf{0}).
\end{equation}
\end{lemma}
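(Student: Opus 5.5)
We compute the inner product directly from the definitions. By Definition~\ref{def:reference-state},
\[
\langle T_X,O\rangle_{\mathrm{code}}=\operatorname{Tr}\bigl(\rho_{\mathrm{code}}T_X^\dagger O\bigr),
\]
and by Lemma~\ref{lem:algebraic-form}, $T_X=\bar X\varphi_X$.

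The first step is to simplify the product $T_X^\dagger O$. The operator $\varphi_X$ is Hermitian, being a real combination of the syndrome projectors. The bare logical $\bar X$ is an $X$-type Pauli string, so it is Hermitian and squares to $\mathbb I$. It also commutes with $\mathcal A_X$, as used in Lemma~\ref{lem:stab-alg-factor}. Hence
\[
T_X^\dagger O=\varphi_X\bar X\,\bar X F=\varphi_X F\in\mathcal A_X.
\]
The logical factor has cancelled, and we are left with a pure element of the $X$-stabilizer algebra.

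The second step is to evaluate $\operatorname{Tr}(\rho_{\mathrm{code}}A)$ for $A\in\mathcal A_X$. The state $\rho_{\mathrm{code}}$ is supported on the code space. Indeed, $U$ maps $\mathcal H_Q\otimes|0\rangle_{\mathrm{rest}}$ into $\Pi_{\mathrm{code}}\mathcal H$ (Definition~\ref{def:logical-subsystem}), so $\rho_{\mathrm{code}}=\Pi^X_{\mathbf 0}\rho_{\mathrm{code}}$. Every $X$-stabilizer acts as $+1$ there, so $\Pi^X_{\mathbf 0}$ fixes the code space. Using Eq.~\eqref{def:eigenvalue-functions}, $A\Pi^X_{\mathbf 0}=A(\mathbf 0)\Pi^X_{\mathbf 0}$, and therefore
\[
\operatorname{Tr}(\rho_{\mathrm{code}}A)=A(\mathbf 0)\operatorname{Tr}\rho_{\mathrm{code}}=A(\mathbf 0).
\]
Applying this to $A=\varphi_XF$ and using pointwise multiplication of eigenvalue functions, we get
\[
\langle T_X,O\rangle_{\mathrm{code}}=\varphi_X(\mathbf 0)F(\mathbf 0)=F(\mathbf 0),
\]
where the last equality uses the vacuum normalization $\varphi_X(\mathbf 0)=1$.

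The third step compares this with the classical side. By Definition~\ref{def:identification-map},
\[
\Phi_X(O)(\mathbf 0)=x_{\mathrm{bare}}(\mathbf 0)\,F(H_X\mathbf 0)=F(\mathbf 0),
\]
since $x_{\mathrm{bare}}(\mathbf 0)=1$. The two sides agree. For a general $O\in\mathcal V_X$, which is a finite sum of terms $\bar XF_j$, the identity follows by linearity, since both sides are linear in $O$. The $Z$-sector statement follows verbatim under $X\leftrightarrow Z$.

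The only point needing care is the support claim $\Pi^X_{\mathbf 0}\rho_{\mathrm{code}}=\rho_{\mathrm{code}}$. It holds because the code space is the joint $+1$ eigenspace of all stabilizers, including every element of $\mathcal S_X$. No property of the remainder state is needed beyond its image lying in the code space.
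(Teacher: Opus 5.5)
Your proposal is correct and follows essentially the same route as the paper's proof: reduce $T_X^\dagger O$ to $\varphi_X F\in\mathcal A_X$, use that $\rho_{\mathrm{code}}$ lies in the zero-syndrome sector so that $\operatorname{Tr}(\rho_{\mathrm{code}}A)=A(\mathbf 0)$, and then invoke $\varphi_X(\mathbf 0)=1$ and $x_{\mathrm{bare}}(\mathbf 0)=1$. Your computation $T_X^\dagger=\varphi_X\bar X$ needs only $\bar X^2=\mathbb I$, which is marginally leaner than the paper's appeal to self-adjointness of $T_X$, but it is the same argument.
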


\begin{proof}[Proof idea]
Since $\rho_{\mathrm{code}}$ is supported on the zero syndrome, any $\mathcal{A}_X$ element reduces to its value at $\mathbf{0}$, matching $\Phi_X$ at the vacuum configuration. Proof deferred to Appendix~\ref{app:reduction}.
\end{proof}

\subsection{Proof of the reduction identity}\label{subsec:main-reduction}
We now have all ingredients to prove the main result of this section, Proposition~\ref{prop:reduction-async-formal}.
\begin{proof}[Proof of  Proposition~\ref{prop:reduction-async-formal}]
We prove the \(X\)-sector statement; the \(Z\)-sector follows by exchanging
\(X\) and \(Z\).

Since \(T_X\in\mathcal V_X\) and \(\mathcal V_X\) is invariant under the
instantaneous generator $\mathcal{L}^*(t)$ by Proposition~\ref{prop:sector-closure},  by time integration this invariance is kept also by the 
propagator \(\mathcal W^*(t,0)\), thus
\begin{equation}
    \mathcal W^*(t,0)(T_X)\in\mathcal V_X.
\end{equation}
By the representation of the code space inner product Lemma~\ref{lem:vacuum-evaluation},
\begin{equation}\label{eq:inner-product-rep}
    \left\langle
        T_X,\mathcal W^*(t,0)(T_X)
    \right\rangle_{\mathrm{code}}
    =
    \Phi_X\!\left(
        \mathcal W^*(t,0)(T_X)
    \right)(\mathbf{0}).
\end{equation}
Using the propagator intertwining relation from
Proposition~\ref{prop:classical-repr} (Eq.~\eqref{eq:intertwine}), we obtain for the RHS of Equation~\eqref{eq:inner-product-rep}:
\begin{equation}
    \Phi_X\!\left(
        \mathcal W^*(t,0)(T_X)
    \right)(\mathbf{0})
    =
    \left(
        M^{(X)}(t,0)\Phi_X(T_X)
    \right)(\mathbf{0}).
\end{equation}
We plug this into Equation~\eqref{eq:inner-product-rep}, and use the definition $\Phi_X(T_X)=\Lambda_X$ (Def.~\ref{def:logical-error-phase}). We further use the definition of the classical propagator on the error sector error process, which is the same as
Eq.~\eqref{eq:classical-propagator} according to codeword-translation invariance (Claim~\ref{clm:translation-invariance}). This yields
\begin{equation}
    \left\langle
        T_X,\mathcal W^*(t,0)(T_X)
    \right\rangle_{\mathrm{code}}
    =
    \mathbb E\!\left[
        \Lambda_X(\xi_t^{(X)})
    \right],
\end{equation}
where we also used the fact that $\xi_0^{(X)}=\mathbf{0}$
By Lemma~\ref{lem:logical-label-phase},
\begin{equation}
    \Lambda_X(\mathbf e)
    =
    (-1)^{b_X(\mathbf e)},
\end{equation}
so
\begin{align}
    \left\langle
        T_X,\mathcal W^*(t,0)(T_X)
    \right\rangle_{\mathrm{code}}
    &=
    \mathbb E\!\left[
        (-1)^{b_X(\xi_t^{(X)})}
    \right]\\
    &=
    1
    -
    2\Pr\!\left[
        b_X(\xi_t^{(X)})=1
    \right],
\end{align}
The $Z$-sector identity follows identically.
\end{proof}

\subsection{From autocorrelation to quantum memory}
\label{subsec:memory}
The passage from decoded-logical autocorrelations to operational quantum
memory does not depend on whether the physical dynamics is continuous or
synchronous.  We therefore formulate it for an arbitrary physical CPTP map
\(\mathcal W_*\).

The only dynamical input needed is covariance under conjugation by the two
bare logical Paulis. Under this assumption, the induced effective logical
channel is a Pauli channel, and its \(X_Q\)- and \(Z_Q\)-eigenvalues are
exactly the corresponding decoded-logical autocorrelations. These
eigenvalues then give both lower and upper bounds on the worst-case encoded
fidelity.

For continuous-time syndrome-based Lindbladians, the required covariance
is verified below.  The synchronous analogue is established separately in
Section~\ref{sec:sync-reduction}; after that verification, the results of
this subsection apply verbatim.

First, we characterize the effective logical channel associated with the
physical map \(\mathcal W_*\) (Subsection~\ref{subsec:pauli-channel}).
Under covariance with respect to the two bare logical Paulis,
Lemma~\ref{lem:pauli-diagonality} shows that this effective channel is
diagonal in the carrier Pauli basis, hence is a Pauli channel. The same lemma
identifies its \(X_Q\)- and \(Z_Q\)-eigenvalues with the corresponding
decoded-logical autocorrelations.

Second, we relate these two eigenvalues to the worst-case encoded fidelity
(Subsection~\ref{subsec:fidelity-bound}). Proposition~\ref{prop:minimal-fidelity}
gives both lower and upper bounds on \(F_{\min}[\mathcal W_*]\) in terms of
the two decoded-logical autocorrelations. Combined with the corresponding
autocorrelation identities for the continuous or synchronous dynamics, these
bounds yield the classical--quantum memory implications in both directions.

\subsubsection{Pauli structure of the effective logical channel}
\label{subsec:pauli-channel}
The fidelity bound we are after depends on the effective logical channel
\(\mathcal{E}_L[\mathcal W_*]\) (Definition~\ref{def:effective-logical-channel}) --- the
composition of encoding, physical dynamics, and decoding that the logical qubit
experiences end to end. A priori this channel could mix \(X_Q,Y_Q,Z_Q\)
together in complicated ways; the next lemma shows that, under the
logical-Pauli covariance condition below, it does not: it is diagonal in the
Pauli basis. In the Heisenberg picture, this means that \(X_Q,Y_Q,Z_Q\) are
eigenvectors with eigenvalues \(\lambda_X,\lambda_Y,\lambda_Z\).

Importantly, \(\lambda_X\) and \(\lambda_Z\) are exactly the corresponding
decoded-logical autocorrelations. This is what allows us to convert the
classical memory bounds into bounds on the quantum worst-case fidelity.
Since our sector processes track only the \(X\)- and \(Z\)-sectors, they do
not determine \(\lambda_Y\), and therefore do not determine the exact
worst-case fidelity. Nevertheless, \(\lambda_X\) and \(\lambda_Z\) alone
provide the lower and upper bounds needed to establish the classical--quantum
memory equivalence below.

\begin{lemma}[Pauli covariance and the effective logical channel]
\label{lem:pauli-diagonality}
Let $\mathcal W_*:\mathcal T(\mathcal H)\to\mathcal T(\mathcal H)$
be a CPTP map whose Heisenberg adjoint satisfies
\begin{equation}\label{eq:bare-logical-covariance}
    \mathcal W^*(\bar P O\bar P)
    =
    \bar P\,\mathcal W^*(O)\,\bar P,
    \qquad
    \bar P\in\{\bar X,\bar Z\},
\end{equation}
for every \(O\in\mathcal B(\mathcal H)\).
Then the effective logical channel
\(\mathcal E_L[\mathcal W_*]\)
of Definition~\ref{def:effective-logical-channel} is a Pauli channel.
Equivalently,
\begin{equation}\label{eq:pauli-diagonal}
    \mathcal E_L^*[\mathcal W_*](X_Q)
    =
    \lambda_X X_Q,
    \qquad
    \mathcal E_L^*[\mathcal W_*](Y_Q)
    =
    \lambda_Y Y_Q,
    \qquad
    \mathcal E_L^*[\mathcal W_*](Z_Q)
    =
    \lambda_Z Z_Q.
\end{equation}
Moreover,
\begin{equation}\label{eq:generic-logical-eigenvalues}
    \lambda_X
    =
    \bigl\langle
        T_X,\mathcal W^*(T_X)
    \bigr\rangle_{\mathrm{code}},
    \qquad
    \lambda_Z
    =
    \bigl\langle
        T_Z,\mathcal W^*(T_Z)
    \bigr\rangle_{\mathrm{code}}.
\end{equation}
\end{lemma}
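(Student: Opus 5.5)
The plan is to show that each stage of the effective logical channel $\mathcal E_L^*[\mathcal W_*]=\mathcal E_{\mathrm{enc}}^*\circ\mathcal W^*\circ\mathcal D_{\mathrm{dec}}^*$ is covariant, up to sign, under conjugation by the logical Paulis. Then a symmetry argument forces diagonality in the carrier Pauli basis. Concretely, for $P_Q\in\{X_Q,Z_Q\}$ with bare representative $\bar P$, I aim to prove
\begin{equation*}
\mathcal E_L^*[\mathcal W_*](P_Q O_Q P_Q)=P_Q\,\mathcal E_L^*[\mathcal W_*](O_Q)\,P_Q \qquad\text{for all } O_Q.
\end{equation*}
Granting this, expand $\mathcal E_L^*(X_Q)=\sum_{\sigma}c_\sigma \sigma$ over $\sigma\in\{I,X_Q,Y_Q,Z_Q\}$. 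Covariance under $Z_Q$ flips the sign of the left side and kills the $I$ and $Z_Q$ components. Covariance under $X_Q$ preserves $X_Q$ and flips $Y_Q$, which kills the $Y_Q$ component. So $\mathcal E_L^*(X_Q)=\lambda_X X_Q$, and likewise for $Y_Q,Z_Q$. Together with unitality ($\mathcal E_L^*(I)=I$, since all three stages are unital adjoints of trace-preserving maps), this is exactly a Pauli channel.

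For covariance I check the three stages separately. \textbf{Decoding:} $\mathcal D_{\mathrm{dec}}^*(P_QO_QP_Q)=\mathcal E_{\mathrm{dec}}^*(U(P_Q\otimes I)U^\dagger\,U(O_Q\otimes I)U^\dagger\,U(P_Q\otimes I)U^\dagger)$. Here the encoding condition only identifies $U(P_Q\otimes I)U^\dagger$ with $\bar P$ on the code space, and this is the main obstacle. To get around it, I plan to use that the decoder's output is projected back onto (a subspace of) the code space before the unencoding. Each Kraus operator $R_X(s)\Pi_s^X$ maps into the zero-syndrome space. For the $Z$ stage followed by the $X$ stage, the composite Kraus operators map into $\Pi_{\mathrm{code}}\mathcal H$, so only $\Pi_{\mathrm{code}}(\cdot)\Pi_{\mathrm{code}}$ of the decoded observable matters. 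There I replace $U(P_Q\otimes I)U^\dagger$ by $\bar P$, and I need $\Pi_{\mathrm{code}}\bar P=\bar P\Pi_{\mathrm{code}}$, which holds because $\bar P$ commutes with all stabilizers. I then commute $\bar P$ through the Kraus operators. $\bar P$ commutes with every syndrome projector, since $\bar X$ is $X$-type and commutes with $\mathcal A_X$, and it commutes with $\mathcal A_Z$ because it commutes with the $Z$-stabilizer generators. It commutes with each $R_\sigma(s)$ up to a sign, and that sign cancels under two-sided conjugation. This gives $\mathcal D_{\mathrm{dec}}^*(P_QO_QP_Q)=\bar P\,\mathcal D_{\mathrm{dec}}^*(O_Q)\,\bar P$. \textbf{Dynamics:} this is the hypothesis \eqref{eq:bare-logical-covariance}. \textbf{Encoding:} $\mathcal E_{\mathrm{enc}}^*(\bar PO\bar P)=\langle0|U^\dagger\bar PO\bar PU|0\rangle$. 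Since $U(\mathcal H_Q\otimes|0\rangle)$ lies in the code space, $\bar PU|\psi,0\rangle=U(P_Q\otimes I)|\psi,0\rangle$, and this yields $P_Q\,\mathcal E_{\mathrm{enc}}^*(O)\,P_Q$. I would carefully state the code-space restrictions so that the replacement of $U(P_Q\otimes I)U^\dagger$ by $\bar P$ is justified at both ends.

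For the eigenvalues, write $\lambda_X=\tfrac12\mathrm{Tr}(X_Q\,\mathcal E_L^*(X_Q))$. Unwinding the encoding gives $\lambda_X=\mathrm{Tr}\bigl(\rho_{\mathrm{code}}\,U(X_Q\otimes I)U^\dagger\,\mathcal W^*(\mathcal D_{\mathrm{dec}}^*(X_Q))\bigr)$. First I show $\mathcal D_{\mathrm{dec}}^*(X_Q)=T_X$. Using the projection argument above, $\mathcal D_{\mathrm{dec}}^*(X_Q)=\mathcal E_{\mathrm{dec}}^{(X)*}\mathcal E_{\mathrm{dec}}^{(Z)*}(\bar X)$. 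Because $\bar X$ commutes with $\Pi^Z_s$ and with the $X$-type corrections $R_Z(s)$, the $Z$ stage fixes $\bar X$, leaving $\mathcal E_{\mathrm{dec}}^{(X)*}(\bar X)=T_X$. Next, $\rho_{\mathrm{code}}$ is supported on the code space, so $\rho_{\mathrm{code}}U(X_Q\otimes I)U^\dagger=\rho_{\mathrm{code}}\bar X$. Also $\bar X=\bar X\varphi_X$ on the code space, because $\varphi_X(\mathbf0)=1$ and Lemma~\ref{lem:algebraic-form} applies, so $\rho_{\mathrm{code}}\bar X=\rho_{\mathrm{code}}T_X^\dagger$. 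This gives $\lambda_X=\langle T_X,\mathcal W^*(T_X)\rangle_{\mathrm{code}}$, and the $Z$ case is symmetric.
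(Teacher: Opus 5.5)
Your proposal is correct and follows the paper's proof essentially step for step: covariance of each stage, with the decoder's code-space restriction used to replace $U(P_Q\otimes I)U^\dagger$ by $\bar P$, then the commutation/anticommutation argument for diagonality, and finally $\mathcal D_{\mathrm{dec}}^*(X_Q)=T_X$ together with $\rho_{\mathrm{code}}\bar X=\rho_{\mathrm{code}}T_X$ for the eigenvalues. The only cosmetic difference is that you apply the $Z$-sector decoding stage to $\bar X$ first, whereas the paper applies the $X$ stage first and then notes that $T_X$ is fixed by the $Z$ stage; the two orders are equivalent by Remark~\ref{rem:decoder-commute}.
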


\begin{proof}[Proof idea]
Each of the three maps composing \(\mathcal E_L^*[\mathcal W_*]\) respects
conjugation by the logical Paulis: \(\mathcal D_{\mathrm{dec}}^*\) carries
conjugation by \(X_Q\) to conjugation by \(\bar X\),
\(\mathcal W^*\) preserves it by Eq.~\eqref{eq:bare-logical-covariance}, and
\(\mathcal E_{\mathrm{enc}}^*\) carries it back to conjugation by \(X_Q\);
similarly for \(Z_Q\). Hence \(\mathcal E_L^*[\mathcal W_*]\) commutes with
conjugation by \(X_Q\) and \(Z_Q\), so it maps each carrier Pauli to an
operator with the same commutation relations with \(X_Q\) and \(Z_Q\), that
is, to a multiple of itself. For the eigenvalues, pulling the trace against
\(X_Q\) back through the encoding gives
\(\lambda_X=\operatorname{Tr}\bigl(\rho_{\mathrm{code}}\bar X\,\mathcal W^*(T_X)\bigr)\),
and \(\bar X\) may be replaced by \(T_X\) since the two agree on the code
space; similarly for \(\lambda_Z\). The full proof is given in
Appendix~\ref{app:reduction}.
\end{proof}

\subsubsection{Fidelity bound from Pauli channel structure}
\label{subsec:fidelity-bound}
With the effective logical channel identified as a Pauli channel, 
we can now bound the worst-case fidelity in terms of the effective logical channel's eigenvalues.  
In the next subsection we combine this with the classical reduction (Proposition~\ref{prop:reduction-async-formal}) to complete the proof 
of Theorem~\ref{thm:main-async-formal}.

\begin{proposition}[Worst-case fidelity bounds from logical autocorrelations]
\label{prop:minimal-fidelity}
Under the hypotheses of Lemma~\ref{lem:pauli-diagonality},
\begin{equation}\label{eq:generic-fidelity-bounds}
    \frac12
    \left(
        \bigl\langle T_X,\mathcal W^*(T_X)\bigr\rangle_{\mathrm{code}}
        +
        \bigl\langle T_Z,\mathcal W^*(T_Z)\bigr\rangle_{\mathrm{code}}
    \right)
    \le
    F_{\min}[\mathcal W_*]
    \le
    \frac12
    \left(
        1+
        \min_{\sigma\in\{X,Z\}}
        \bigl\langle T_\sigma,\mathcal W^*(T_\sigma)\bigr\rangle_{\mathrm{code}}
    \right).
\end{equation}
\end{proposition}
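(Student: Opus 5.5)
By Lemma~\ref{lem:pauli-diagonality}, the effective logical channel $\mathcal E:=\mathcal E_L[\mathcal W_*]$ is a Pauli channel. Its Heisenberg eigenvalues are $\lambda_X,\lambda_Y,\lambda_Z$ on $X_Q,Y_Q,Z_Q$, and $\lambda_X,\lambda_Z$ equal the two decoded-logical autocorrelations. The plan is therefore to compute $F(\psi;\mathcal W_*)$ explicitly in terms of these eigenvalues and then optimize over $\psi$.

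The first step is the fidelity formula. Write $|\psi\rangle\langle\psi|=\tfrac12(I+\vec r\cdot\vec\sigma)$ with $|\vec r|=1$. Since $\mathcal E$ is unital and diagonal in the Pauli basis (unitality follows from trace preservation, because $\mathcal E^*(I)=I$), we get $\mathcal E(|\psi\rangle\langle\psi|)=\tfrac12(I+\sum_P\lambda_P r_P P)$. Therefore
\[
F(\psi;\mathcal W_*)=\tfrac12\Bigl(1+\lambda_X r_X^2+\lambda_Y r_Y^2+\lambda_Z r_Z^2\Bigr).
\]
This is a convex combination of the three values $\tfrac12(1+\lambda_P)$, with weights $r_P^2$. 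Its minimum over the Bloch sphere is $F_{\min}=\tfrac12(1+\min_P\lambda_P)$, attained at a Pauli eigenstate.

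The upper bound follows at once. Evaluating at the eigenstates of $X_Q$ and of $Z_Q$ gives $F_{\min}\le\tfrac12(1+\min(\lambda_X,\lambda_Z))$, which is exactly the stated right-hand side.

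The lower bound requires controlling $\lambda_Y$, which the sector processes do not determine. The standard tool is complete positivity of a Pauli channel. Write $\mathcal E(\rho)=\sum_P p_P P\rho P$ with $p_P\ge0$ and $\sum_P p_P=1$. Then
\[
\lambda_X=p_I+p_X-p_Y-p_Z,\qquad
\lambda_Y=p_I-p_X+p_Y-p_Z,\qquad
\lambda_Z=p_I-p_X-p_Y+p_Z .
\]
It follows that $\lambda_X+\lambda_Z-\lambda_Y=p_I-p_X-p_Y-p_Z+2p_I-2p_I+\dots$; the clean form is $1+\lambda_Y-\lambda_X-\lambda_Z=4p_X\ge0$ wait—directly, $1-\lambda_X-\lambda_Z+\lambda_Y=4p_X\ge 0$ is not the needed inequality. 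What is needed is $\lambda_Y\ge\lambda_X+\lambda_Z-1$, which is equivalent to $1+\lambda_Y-\lambda_X-\lambda_Z=4p_X\ge0$.

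Using this, $\tfrac12(1+\lambda_Y)\ge\tfrac12(\lambda_X+\lambda_Z)$. We also need the same bound for the other two Pauli directions: $\tfrac12(1+\lambda_X)\ge\tfrac12(\lambda_X+\lambda_Z)$ holds because $\lambda_Z\le1$, and symmetrically for $Z$. Hence all three values $\tfrac12(1+\lambda_P)$ are at least $\tfrac12(\lambda_X+\lambda_Z)$, and so $F_{\min}\ge\tfrac12(\lambda_X+\lambda_Z)$.

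The main obstacle is this $Y$-direction step. It relies on translating complete positivity into the inequality $\lambda_Y\ge\lambda_X+\lambda_Z-1$; everything else is direct substitution using Lemma~\ref{lem:pauli-diagonality}.
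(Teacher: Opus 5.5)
Your proof is correct and is essentially the paper's argument. Both rest on writing the logical channel as $\sum_P p_P P\rho P$ with $p_P\ge0$, and the upper bound is identical. For the lower bound the paper just notes $F(\psi)\ge p_I\ge p_I-p_Y=\tfrac12(\lambda_X+\lambda_Z)$. You instead first compute $F_{\min}=\tfrac12(1+\min_P\lambda_P)$ exactly and then check each $\tfrac12(1+\lambda_P)$ against $\tfrac12(\lambda_X+\lambda_Z)$.

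One arithmetic slip needs fixing: $1+\lambda_Y-\lambda_X-\lambda_Z=4p_Y$, not $4p_X$. Delete the garbled intermediate line as well. Nonnegativity, which is all you use, still holds, so the argument stands.
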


\begin{proof}
By Lemma~\ref{lem:pauli-diagonality}, the effective logical channel is a single-qubit Pauli channel, 
\begin{equation}
    \mathcal E_L\mathcal W_* = \sum_{P\in\{I_Q,X_Q,Y_Q,Z_Q\}} p_P P\rho P,
\end{equation} with $p_I,p_X,p_Y,p_Z\ge0$ and $\sum_P p_P=1$. For any pure state $|\psi\rangle$, $F(\psi;\mathcal W_*) = \sum_P p_P|\langle\psi|P|\psi\rangle|^2 \ge p_I$, so $F_{\min}[\mathcal W_*]\ge p_I$. Since $\lambda_X=p_I+p_X-p_Y-p_Z$ and $\lambda_Z=p_I-p_X-p_Y+p_Z$, we have $(\lambda_X+\lambda_Z)/2 = p_I-p_Y \le p_I$, which proves $F_{\min}[\mathcal W_*] \ge (\lambda_X+\lambda_Z)/2$.

For the upper bounds, evaluating the fidelity on an $X_Q$-eigenstate $|x\rangle$ leaves only the $I_Q$ and $X_Q$ terms, giving $F(x;\mathcal W_*) = p_I+p_X = (1+\lambda_X)/2$, and hence $F_{\min}[\mathcal W_*] \le (1+\lambda_X)/2$. Likewise, evaluating on a $Z_Q$-eigenstate yields $F_{\min}[\mathcal W_*] \le (1+\lambda_Z)/2$. Finally, using Eq.~\eqref{eq:generic-logical-eigenvalues}, $\lambda_X = \bigl\langle T_X,\mathcal W_*(T_X)\bigr\rangle_{\mathrm{code}}$ and $\lambda_Z = \bigl\langle T_Z,\mathcal W_*(T_Z)\bigr\rangle_{\mathrm{code}}$, which yields the required result.
\end{proof}

\subsubsection{Specialization to Lindbladian evolution}
\label{subsec:memory-completion}

To apply the preceding results to the continuous-time dynamics, it remains
only to verify the logical-Pauli covariance
of Eq.~\eqref{eq:bare-logical-covariance} for a syndrome-based
Lindbladian.

\begin{lemma}[Logical-Pauli covariance of syndrome-based Lindbladian evolution]
\label{lem:async-pauli-covariance}
For every syndrome-based Lindbladian and every \(t\ge0\), its
Heisenberg-picture propagator satisfies
\begin{equation}\label{eq:async-pauli-covariance}
    \mathcal W^*(t,0)(\bar P O\bar P)
    =
    \bar P\,\mathcal W^*(t,0)(O)\,\bar P,
    \qquad
    \bar P\in\{\bar X,\bar Z\},
\end{equation}
for every \(O\in\mathcal B(\mathcal H)\).
\end{lemma}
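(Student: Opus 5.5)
The plan is to prove covariance at the level of the generator and then lift it to the propagator. Let \(\mathcal C_{\bar P}(O):=\bar P O\bar P\), a linear involutive automorphism of \(\mathcal B(\mathcal H)\). It suffices to show that \(\mathcal L^*(t)\circ\mathcal C_{\bar P}=\mathcal C_{\bar P}\circ\mathcal L^*(t)\) for every \(t\ge0\). Granting this, both \(t\mapsto \mathcal W^*(t,0)\circ\mathcal C_{\bar P}\) and \(t\mapsto \mathcal C_{\bar P}\circ\mathcal W^*(t,0)\) solve the linear ODE \(\tfrac{d}{dt}Y(t)=Y(t)\,\mathcal L^*(t)\) of Eq.~\eqref{eq:q-propgator} with initial value \(\mathcal C_{\bar P}\). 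For the second, write \(\mathcal C_{\bar P}\mathcal W^*\mathcal L^*=\mathcal C_{\bar P}\mathcal W^*\mathcal C_{\bar P}\,\mathcal L^*\,\mathcal C_{\bar P}\), or more simply note that \(\tilde Y:=\mathcal C_{\bar P}Y\mathcal C_{\bar P}\) solves the conjugated equation and use uniqueness. The same uniqueness argument for linear ODEs with locally integrable coefficients, already used in Proposition~\ref{prop:classical-repr}, then gives Eq.~\eqref{eq:async-pauli-covariance}.

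For the generator identity, we work term by term in the representation of Eq.~\eqref{eq:lindbladian-heisenberg}. The key fact is that every element of \(\mathcal A\) commutes with \(\bar X\) and \(\bar Z\), since logical operators commute with all stabilizers. Hence \(\bar P H(t)\bar P=H(t)\) and \(\bar P f\bar P=f\) for each syndrome function \(f=f_{P,i,\alpha}(t)\in\mathcal A\). A single-qubit Pauli \(P_i\) satisfies \(\bar P P_i\bar P=\pm P_i\).

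For the Hamiltonian part, \(\bar P\,i[H,O]\,\bar P=i[\bar P H\bar P,\bar P O\bar P]=i[H,\bar P O\bar P]\). For a dissipator with \(L=P_if\), we have \(\bar P L\bar P=\pm L\) and \(\bar P L^\dagger L\bar P=L^\dagger L\). Therefore
\[
\bar P\,\mathcal D[L]^*(O)\,\bar P=(\bar P L^\dagger\bar P)(\bar P O\bar P)(\bar P L\bar P)-\tfrac12\{\bar P L^\dagger L\bar P,\bar P O\bar P\}=\mathcal D[L]^*(\bar P O\bar P),
\]
because the two signs \(\pm\) cancel. This uses \(\bar P^2=\mathbb I\) and \(\bar P^\dagger=\bar P\), both valid since \(\bar X,\bar Z\) are Hermitian Pauli strings. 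Summing over all terms gives the generator covariance.

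This argument never uses the rate condition~\eqref{eq:rate-condition}; covariance holds for any Lindbladian whose jumps are Pauli times elements of \(\mathcal A\) and whose Hamiltonian lies in \(\mathcal A\). The only delicate point I expect is the ODE-uniqueness step when \(\mathcal L(t)\) is merely measurable and locally bounded in \(t\), as allowed in Remark~\ref{rem:time-dependent-noise}. There I would invoke Carath\'eodory uniqueness on the finite-dimensional space of superoperators. Alternatively, I would directly note that \(\mathcal C_{\bar P}\mathcal W^*(t,0)\mathcal C_{\bar P}\) is the propagator generated by \(\mathcal C_{\bar P}\mathcal L^*(t)\mathcal C_{\bar P}=\mathcal L^*(t)\), hence equals \(\mathcal W^*(t,0)\).
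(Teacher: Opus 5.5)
Your proposal is correct and matches the paper's proof: the paper also shows $\bar P L\bar P=\pm L$ for each jump $L=P_if$ with $f\in\mathcal A$, uses that the adjoint dissipator is invariant under $L\mapsto -L$ and that $H(t)\in\mathcal A$ commutes with $\bar P$, and then lifts the generator covariance to $\mathcal W^*(t,0)$ by uniqueness of the Heisenberg evolution. One small expository point: it is $\mathcal W^*(t,0)\circ\mathcal C_{\bar P}$ that needs the generator identity to satisfy $\tfrac{d}{dt}Y=Y\mathcal L^*(t)$, whereas $\mathcal C_{\bar P}\circ\mathcal W^*(t,0)$ satisfies it directly, so your remark attached to ``the second'' map is misplaced but harmless.
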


\begin{proof}
Consider a jump operator
    $L=P_i f_{P,i,\alpha}(t)$
of the syndrome-based Lindbladian. Since
\(f_{P,i,\alpha}(t)\in\mathcal A\) and the bare logical operators lie in
the stabilizer centralizer,
    $[\bar P,f_{P,i,\alpha}(t)]=0.$
Moreover, \(P_i\) and \(\bar P\) are Pauli strings, so
$\bar P P_i\bar P=\pm P_i.$
Hence $\bar P L\bar P=\pm L.$
The adjoint dissipator is invariant under \(L\mapsto-L\), and therefore
\[
    \mathcal D[L]^*(\bar P O\bar P)
    =
    \bar P\,\mathcal D[L]^*(O)\,\bar P.
\]
The Hamiltonian belongs to \(\mathcal A\), so
\([\bar P,H(t)]=0\), and its Heisenberg contribution obeys the same
covariance. Thus
\[
    \mathcal L^*(t)(\bar P O\bar P)
    =
    \bar P\,\mathcal L^*(t)(O)\,\bar P.
\]
Uniqueness of the Heisenberg evolution then gives
Eq.~\eqref{eq:async-pauli-covariance}.
\end{proof}

We now combine this covariance with the continuous-time
autocorrelation identities of Proposition~\ref{prop:reduction-async-formal}
to prove the memory statement.

\begin{proof}[Proof of Theorem~\ref{thm:main-async-formal}]
Fix \(t'\in[0,t]\). By Lemma~\ref{lem:async-pauli-covariance},
the channel \(\mathcal W_*(t',0)\) satisfies the covariance hypothesis of
Lemma~\ref{lem:pauli-diagonality}, so
Proposition~\ref{prop:minimal-fidelity} applies.

Suppose first that the \(X\)- and \(Z\)-sector processes respectively
\(q_X\)- and \(q_Z\)-remember \(b_X\) and \(b_Z\) up to time \(t\).
Proposition~\ref{prop:reduction-async-formal} then gives
\begin{equation}
    \bigl\langle T_X,\mathcal W^*(t',0)(T_X)\bigr\rangle_{\mathrm{code}}
    \ge 1-2q_X,
    \qquad
    \bigl\langle T_Z,\mathcal W^*(t',0)(T_Z)\bigr\rangle_{\mathrm{code}}
    \ge 1-2q_Z.
\end{equation}
The lower bound of Proposition~\ref{prop:minimal-fidelity} therefore yields
\begin{equation}
    F_{\min}[\mathcal W_*(t',0)]
    \ge
    1-q_X-q_Z.
\end{equation}
Since \(t'\in[0,t]\) was arbitrary, the encoded system is a quantum
\((q_X+q_Z)\)-memory up to time \(t\).

Conversely, suppose that the encoded system is a quantum
\(\memerr\)-memory up to time \(t\). Then
\begin{equation}
    F_{\min}[\mathcal W_*(t',0)]
    \ge
    1-\memerr
\end{equation}
for every \(t'\in[0,t]\). Combining this with the upper bound of
Proposition~\ref{prop:minimal-fidelity} gives
\begin{equation}
    1-\memerr
    \le
    \frac12\left(
        1+
        \min_{\sigma\in\{X,Z\}}
        \bigl\langle
            T_\sigma,\mathcal W^*(t',0)(T_\sigma)
        \bigr\rangle_{\mathrm{code}}
    \right),
\end{equation}
and hence
\begin{equation}
    \min_{\sigma\in\{X,Z\}}
    \bigl\langle
        T_\sigma,\mathcal W^*(t',0)(T_\sigma)
    \bigr\rangle_{\mathrm{code}}
    \ge
    1-2\memerr.
\end{equation}
Thus both sector autocorrelations are at least \(1-2\memerr\).
Using Proposition~\ref{prop:reduction-async-formal},
\begin{equation}
    \Pr\!\left[
        b_\sigma(\xi_{t'}^{(\sigma)})
        \ne
        b_\sigma(\xi_0^{(\sigma)})
    \right]
    \le
    \memerr,
    \qquad
    \sigma\in\{X,Z\}.
\end{equation}
Since \(t'\in[0,t]\) was arbitrary, each induced sector process
\(\memerr\)-remembers its logical-error indicator up to time \(t\).
\end{proof}

\subsection{Application: thermal memory in the 4D toric code}
\label{subsec:alicki}

We close the asynchronous analysis with a simple application of the
reduction.  In the 4D toric code, the thermal Davies dynamics considered by
Alicki et al.~\cite{alicki_thermal_2010} is syndrome-based, so their thermal
memory result can be recovered by combining 
Theorem~\ref{thm:main-async-formal} with the corresponding classical Peierls
estimate.

\begin{remark}[The thermal 4D toric-code memory of Alicki et al.]
\label{rem:alicki}
The thermal-memory result of Alicki et
al.~\cite{alicki_thermal_2010} follows from 
Theorem~\ref{thm:main-async-formal} together with a purely classical
Peierls argument.  We use the immediate thermal-state variant of
Theorem~\ref{thm:main-async-formal}, in which the induced sector
processes are initialized in their Gibbs stationary measures; the
proof is unchanged, with the sector identity averaged over the
initial configuration.  By Remark~\ref{rem:thermal}, the Davies
generator of the 4D toric-code Hamiltonian is syndrome-based, and
its two induced sector processes are the two self-dual copies of the
classical 4D $\mathbb Z_2$ gauge model.  It therefore suffices to
prove classical memory for one sector.

Let $\pi_\beta$ denote the Gibbs measure of this sector, and let
$(\xi_s)_{s\ge0}$ be the corresponding stationary Markov process.
Define
\begin{equation}
    B_L
    :=
    \left\{
        \text{the syndrome contains a loop of length at least }L/8
    \right\}.
\end{equation}
The Peierls argument of Ref.~\cite{alicki_thermal_2010} shows that,
for sufficiently large $\beta$,
\begin{equation}
    \pi_\beta(B_L)
    \le
    \operatorname{poly}(L)e^{-c_\beta L}
\end{equation}
for some $c_\beta>0$.

Choose the classical decoder by filling each syndrome loop shorter
than $L/8$ by a shortest surface, and extend it arbitrarily on
$B_L$.  Let $b$ be the resulting decoded logical bit.  The
homological argument of Ref.~\cite{alicki_thermal_2010} shows that
if $\mathbf x'=\mathbf x\oplus e_i$ differs from $\mathbf x$ by a
single plaquette flip and both
$\mathbf x,\mathbf x'\notin B_L$, then
\begin{equation}
    b(\mathbf x)=b(\mathbf x').
\end{equation}
Thus a change of the decoded logical bit requires the process to
encounter a configuration in $B_L$.  Since the process is
stationary and reversible, and its local jump rates are uniformly
bounded in $L$, the Peierls estimate implies
\begin{equation}
    \Pr\!\left[
        b(\xi_s)\ne b(\xi_0)
    \right]
    \le
    \operatorname{poly}(L)\,s\,e^{-c_\beta L}.
\end{equation}
In particular, for every $s\le t$, this sector
$q_L(t)$-remembers its decoded logical bit with
\begin{equation}
    q_L(t)
    \le
    \operatorname{poly}(L)\,t\,e^{-c_\beta L}.
\end{equation}

By self-duality, the same bound holds for the other CSS sector.
Theorem~\ref{thm:main-async-formal} therefore yields
\begin{equation}
    F_{\min}(t)
    \ge
    1-q_X(t)-q_Z(t)
    \ge
    1-\operatorname{poly}(L)\,t\,e^{-c_\beta L}.
\end{equation}
Hence, below a constant critical temperature, the 4D toric code
stores an encoded qubit for time $e^{\Omega(L)}$, recovering the
thermal-memory result of Ref.~\cite{alicki_thermal_2010}.
\end{remark}

\section{Synchronous quantum-to-classical reduction}
\label{sec:sync-reduction}

We now prove the discrete-time counterparts, 
Proposition~\ref{prop:reduction-sync-formal} and 
Theorem~\ref{thm:main-sync}, of the continuous-time reduction of
Section~\ref{sec:async}.  All algebraic objects
used in the reduction are unchanged: the decoded logical observables
\(T_X,T_Z\), the logical sector subspaces
\(\mathcal V_X,\mathcal V_Z\), the representation maps
\(\Phi_X,\Phi_Z\), and the logical-error phases
\(\Lambda_X,\Lambda_Z\) are those defined and analyzed in
Section~\ref{sec:async}. In particular,
Lemma~\ref{lem:logical-label-phase} gives
\begin{equation}\label{eq:sync-logical-phase}
    \Phi_\sigma(T_\sigma)(\mathbf e)
    =
    (-1)^{b_\sigma(\mathbf e)}.
\end{equation}

The difference is entirely dynamical.  The continuous-time
intertwining
\(
\Phi_\sigma\mathcal L^*
=
G_\sigma\Phi_\sigma
\)
is replaced by an intertwining between the synchronous quantum
channel and its induced discrete-time sector Markov chain. Subsection~\ref{subsec:sync-spca-realization} then proves 
Corollary~\ref{cor:sync-spca-memory}.

\subsection{One-step sector dynamics}
\label{subsec:sync-sector-dynamics}

We first determine the action of one synchronous round on the logical
sector subspaces.  Repeated application of
Lemma~\ref{lem:pauli-conjugation-translation} gives, for
\(\mathbf u^X\in\Omega_X\),
\begin{equation}\label{eq:sync-multibit-translation-X}
    \Phi_X\!\left(
        Z^{\mathbf u^X}O_XZ^{\mathbf u^X}
    \right)(\mathbf e)
    =
    \Phi_X(O_X)(\mathbf e\oplus\mathbf u^X),
    \qquad
    O_X\in\mathcal V_X,
\end{equation}
and symmetrically
\begin{equation}\label{eq:sync-multibit-translation-Z}
    \Phi_Z\!\left(
        X^{\mathbf u^Z}O_ZX^{\mathbf u^Z}
    \right)(\mathbf e)
    =
    \Phi_Z(O_Z)(\mathbf e\oplus\mathbf u^Z).
\end{equation}

\begin{proposition}[Closed discrete sector dynamics]
\label{prop:sync-sector-dynamics}
Let \(\mathcal Q_*\) be a syndrome-based quantum channel.
For every \(O_X\in\mathcal V_X\),
\begin{equation}\label{eq:sync-sector-action-X}
    \mathcal Q^*(O_X)
    =
    \sum_{\mathbf u^X\in\Omega_X}
    \mathsf P_{\mathbf u^X}^{(X)}
    Z^{\mathbf u^X}O_XZ^{\mathbf u^X},
\end{equation}
and therefore
\(
\mathcal Q^*(\mathcal V_X)\subseteq\mathcal V_X
\).
Similarly,
\begin{equation}\label{eq:sync-sector-action-Z}
    \mathcal Q^*(O_Z)
    =
    \sum_{\mathbf u^Z\in\Omega_Z}
    \mathsf P_{\mathbf u^Z}^{(Z)}
    X^{\mathbf u^Z}O_ZX^{\mathbf u^Z},
\end{equation}
and
\(
\mathcal Q^*(\mathcal V_Z)\subseteq\mathcal V_Z
\).
\end{proposition}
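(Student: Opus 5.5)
We compute $\mathcal Q^*(O_X)$ directly from the Kraus form of Definition~\ref{def:sync-syndrome-channel}. The adjoint is
\[
\mathcal Q^*(O_X)=\sum_{\mathbf u^X,\mathbf u^Z,\alpha} f_{\mathbf u^X,\mathbf u^Z,\alpha}^\dagger\, Z^{\mathbf u^X}X^{\mathbf u^Z}\, O_X\, X^{\mathbf u^Z}Z^{\mathbf u^X}\, f_{\mathbf u^X,\mathbf u^Z,\alpha},
\]
using that $Z^{\mathbf u^X}X^{\mathbf u^Z}$ is $K f^{-1}$ up to Hermitian conjugation. The plan has three steps: remove the $X$-type Pauli, move the $f$'s together, and collect terms into the sector probability operators.

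\emph{Step 1.} Write $O_X=\bar X A$ with $A\in\mathcal A_X$. The string $X^{\mathbf u^Z}$ commutes with $\bar X$, since both are $X$-type. It also commutes with every $X$-stabilizer, and hence with $A$. Therefore
\[
X^{\mathbf u^Z}O_XX^{\mathbf u^Z}=O_X.
\]
This is the multi-qubit version of the $P\in\{I,X\}$ case of Lemma~\ref{lem:jump-action}.

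\emph{Step 2.} Next, $Z^{\mathbf u^X}O_XZ^{\mathbf u^X}$ lies in $\mathcal V_X$, by iterating Lemma~\ref{lem:pauli-conjugation-translation} (equivalently Eq.~\eqref{eq:sync-multibit-translation-X}). Every element of $\mathcal A$ commutes with all of $\mathcal V_X$ (Lemma~\ref{lem:hamiltonian-invariance}). So $f^\dagger$ can be moved past this conjugated operator, giving
\[
f^\dagger f\, Z^{\mathbf u^X}O_XZ^{\mathbf u^X}
\]
for each Kraus term. Because $f$ lies in the full algebra $\mathcal A$ and not merely in $\mathcal A_X$, we must check that $Z$-stabilizers commute with $\bar X A$. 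They do: $\bar X$ is a logical operator and hence commutes with all stabilizers, and $A$ is built from $X$-stabilizers, which commute with $Z$-stabilizers by the CSS condition. This commutation is the step that needs the most care.

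\emph{Step 3.} The conjugated operator depends only on $\mathbf u^X$. Summing over $\mathbf u^Z$ and $\alpha$ therefore collects the coefficients into $\mathsf P_{\mathbf u^X}^{(X)}$, which gives Eq.~\eqref{eq:sync-sector-action-X}.

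For closure, the syndrome-based condition~\eqref{eq:sync-sector-condition} gives $\mathsf P_{\mathbf u^X}^{(X)}\in\mathcal A_X$. Lemma~\ref{lem:stab-alg-factor} then shows that multiplying an element of $\mathcal V_X$ by an element of $\mathcal A_X$ stays in $\mathcal V_X$. Each summand is therefore in $\mathcal V_X$, and so is the finite sum. The $Z$-sector statement follows by exchanging $X\leftrightarrow Z$.

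There is one point to watch in the $Z$ sector. The Kraus operators are ordered as $X^{\mathbf u^Z}Z^{\mathbf u^X}$, so conjugating $O_Z$ first by $X^{\mathbf u^Z}$ and then by the outer string could introduce ordering signs. These cancel: conjugation by $Z^{\mathbf u^X}$ acts trivially on $O_Z$, and Pauli strings that anticommute only produce a global sign in $K^\dagger(\cdot)K$, which appears twice and cancels.
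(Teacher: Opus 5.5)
Your proposal is correct and follows the paper's proof essentially step for step. You discard $X^{\mathbf u^Z}$ because it commutes with every element of $\mathcal V_X$, then use Eq.~\eqref{eq:sync-multibit-translation-X} and the fact that $\mathcal A$ commutes with $\mathcal V_X$ to bring $f^\dagger$ next to $f$, and finally collect the coefficients into $\mathsf P^{(X)}_{\mathbf u^X}\in\mathcal A_X$, which gives both the formula and closure; the remark about ``$Kf^{-1}$'' is unnecessary and slightly off ($f$ need not be invertible), since the adjoint form follows directly from $K^\dagger=f^\dagger Z^{\mathbf u^X}X^{\mathbf u^Z}$.
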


\begin{proof}
We prove the \(X\)-sector statement.
Using the Kraus representation of
Definition~\ref{def:sync-syndrome-channel},
\begin{align}
    \mathcal Q^*(O_X)
    &=
    \sum_{\mathbf u^X,\mathbf u^Z,\alpha}
    f_{\mathbf u^X,\mathbf u^Z,\alpha}^\dagger
    Z^{\mathbf u^X}X^{\mathbf u^Z}
    O_X
    X^{\mathbf u^Z}Z^{\mathbf u^X}
    f_{\mathbf u^X,\mathbf u^Z,\alpha}.
\end{align}
Every \(O_X\in\mathcal V_X\) commutes with \(X^{\mathbf u^Z}\), since
\(\bar X\) and the elements of \(\mathcal A_X\) are linear combinations
of \(X\)-type Pauli strings. Hence
\begin{equation}
    \mathcal Q^*(O_X)
    =
    \sum_{\mathbf u^X,\mathbf u^Z,\alpha}
    f_{\mathbf u^X,\mathbf u^Z,\alpha}^\dagger
    Z^{\mathbf u^X}O_XZ^{\mathbf u^X}
    f_{\mathbf u^X,\mathbf u^Z,\alpha}.
\end{equation}
By Equation~\eqref{eq:sync-multibit-translation-X},
\(Z^{\mathbf u^X}O_XZ^{\mathbf u^X}\in\mathcal V_X\). Every element of
\(\mathcal V_X\) commutes with \(\mathcal A\), because \(\bar X\) commutes
with every stabilizer and \(\mathcal A\) is Abelian. Therefore
\begin{align}
    \mathcal Q^*(O_X)
    &=
    \sum_{\mathbf u^X,\mathbf u^Z,\alpha}
    f_{\mathbf u^X,\mathbf u^Z,\alpha}^\dagger
    f_{\mathbf u^X,\mathbf u^Z,\alpha}
    Z^{\mathbf u^X}O_XZ^{\mathbf u^X}
    \\
    &=
    \sum_{\mathbf u^X\in\Omega_X}
    \mathsf P_{\mathbf u^X}^{(X)}
    Z^{\mathbf u^X}O_XZ^{\mathbf u^X}.
\end{align}
Each term is the product of
\(\mathsf P_{\mathbf u^X}^{(X)}\in\mathcal A_X\) with an element of
\(\mathcal V_X\), and therefore lies in \(\mathcal V_X\).
The \(Z\)-sector proof is identical.
\end{proof}

\subsection{Classical representation and intertwining}
\label{subsec:sync-intertwining}

\begin{proposition}[Synchronous quantum--classical intertwining]
\label{prop:sync-intertwining}
For every \(O_X\in\mathcal V_X\),
\begin{equation}\label{eq:sync-intertwining-X}
    \Phi_X\!\left(
        \mathcal Q^*(O_X)
    \right)
    =
    \mathsf B_X\Phi_X(O_X),
\end{equation}
and similarly
\begin{equation}
    \Phi_Z\!\left(
        \mathcal Q^*(O_Z)
    \right)
    =
    \mathsf B_Z\Phi_Z(O_Z).
\end{equation}
Consequently, for every integer \(r\ge0\),
\begin{equation}\label{eq:sync-propagator-intertwining}
    \Phi_\sigma\!\left(
        (\mathcal Q^*)^r(O_\sigma)
    \right)
    =
    \mathsf B_\sigma^r
    \Phi_\sigma(O_\sigma).
\end{equation}
\end{proposition}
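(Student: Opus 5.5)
The plan is to start from the explicit one-step form of the Heisenberg adjoint on the sector subspace given by Proposition~\ref{prop:sync-sector-dynamics}, namely
\[
\mathcal Q^*(O_X)=\sum_{\mathbf u^X\in\Omega_X}\mathsf P^{(X)}_{\mathbf u^X}\,Z^{\mathbf u^X}O_XZ^{\mathbf u^X},
\]
and to push $\Phi_X$ through each term. Since $\Phi_X$ is linear, it suffices to treat a single summand. Each summand is the product of an element $\mathsf P^{(X)}_{\mathbf u^X}\in\mathcal A_X$ with an element $Z^{\mathbf u^X}O_XZ^{\mathbf u^X}\in\mathcal V_X$. The latter lies in $\mathcal V_X$ by Eq.~\eqref{eq:sync-multibit-translation-X}; the former lies in $\mathcal A_X$ by the syndrome-based condition~\eqref{eq:sync-sector-condition}.

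For a single summand, I first apply Lemma~\ref{lem:stab-alg-factor}. It turns multiplication by $\mathsf P^{(X)}_{\mathbf u^X}$ into pointwise multiplication by the eigenvalue function $\mathsf P^{(X)}_{\mathbf u^X}(H_X\mathbf e)$, which by Definition~\ref{def:sync-sector-process} equals $p_X(\mathbf u^X\mid H_X\mathbf e)$. Next, I apply the multibit shift identity Eq.~\eqref{eq:sync-multibit-translation-X}, itself obtained by iterating Lemma~\ref{lem:pauli-conjugation-translation}. This turns the conjugation into evaluation at $\mathbf e\oplus\mathbf u^X$. Summing over $\mathbf u^X$ gives
\[
\Phi_X(\mathcal Q^*(O_X))(\mathbf e)=\sum_{\mathbf u^X}p_X(\mathbf u^X\mid H_X\mathbf e)\,\Phi_X(O_X)(\mathbf e\oplus\mathbf u^X),
\]
which is exactly $(\mathsf B_X\Phi_X(O_X))(\mathbf e)$ by Eq.~\eqref{eq:sync-backward-X}. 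The $Z$ sector follows by the exchange $X\leftrightarrow Z$, using Eq.~\eqref{eq:sync-multibit-translation-Z} and the $Z$ analogues of the two lemmas.

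For the iterated statement~\eqref{eq:sync-propagator-intertwining}, I will use induction on $r$. The case $r=0$ is trivial. For the inductive step, invariance $\mathcal Q^*(\mathcal V_\sigma)\subseteq\mathcal V_\sigma$ from Proposition~\ref{prop:sync-sector-dynamics} ensures that $(\mathcal Q^*)^r(O_\sigma)\in\mathcal V_\sigma$. The one-step identity then applies to it and gives
\[
\Phi_\sigma\bigl((\mathcal Q^*)^{r+1}O_\sigma\bigr)=\mathsf B_\sigma\,\Phi_\sigma\bigl((\mathcal Q^*)^rO_\sigma\bigr)=\mathsf B_\sigma^{r+1}\Phi_\sigma(O_\sigma).
\]

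There is essentially no hard step here. The only point needing care is that Lemma~\ref{lem:stab-alg-factor} requires its argument in the form $\bar X F$ with $F\in\mathcal A_X$. The conjugated operator $Z^{\mathbf u^X}O_XZ^{\mathbf u^X}$ must therefore be written in that form, namely as $\bar X$ times a sign times $Z^{\mathbf u^X}FZ^{\mathbf u^X}$. This is exactly what the proof of Lemma~\ref{lem:pauli-conjugation-translation} provides. The other point is that $\mathcal V_X$ is spanned by such products, so extending the identity linearly is legitimate.
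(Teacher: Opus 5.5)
Your proposal is correct and follows the paper's proof: expand $\mathcal Q^*(O_X)$ via Proposition~\ref{prop:sync-sector-dynamics}, push $\Phi_X$ through each summand using Lemma~\ref{lem:stab-alg-factor} and Eq.~\eqref{eq:sync-multibit-translation-X}, recognize $\mathsf B_X$, and iterate. Your explicit use of the invariance $\mathcal Q^*(\mathcal V_\sigma)\subseteq\mathcal V_\sigma$ in the induction step spells out what the paper leaves implicit in ``iterating it.''
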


\begin{proof}
For the \(X\) sector, Proposition~\ref{prop:sync-sector-dynamics},
Lemma~\ref{lem:stab-alg-factor}, and
Equation~\eqref{eq:sync-multibit-translation-X} give
\begin{align}
    \Phi_X\!\left(
        \mathcal Q^*(O_X)
    \right)(\mathbf e)
    &=
    \sum_{\mathbf u^X}
    \mathsf P_{\mathbf u^X}^{(X)}(H_X\mathbf e)
    \,
    \Phi_X(O_X)(\mathbf e\oplus\mathbf u^X)
    \\
    &=
    \sum_{\mathbf u^X}
    p_X(
        \mathbf u^X
        \mid
        H_X\mathbf e
    )
    \Phi_X(O_X)(\mathbf e\oplus\mathbf u^X)
    \\
    &=
    \bigl(
        \mathsf B_X\Phi_X(O_X)
    \bigr)(\mathbf e).
\end{align}
This proves the one-step identity.  Iterating it yields
Equation~\eqref{eq:sync-propagator-intertwining}.
The \(Z\) sector is identical.
\end{proof}

Unlike the continuous-time case, no differential equation is involved:
the propagator identity follows directly by iteration of the one-step
intertwining.

\begin{proof}[Proof of Proposition~\ref{prop:reduction-sync-formal}]
We prove the \(X\)-sector statement.

By Proposition~\ref{prop:sync-sector-dynamics},
\(
(\mathcal Q^*)^r(T_X)\in\mathcal V_X
\).
Lemma~\ref{lem:vacuum-evaluation} therefore gives
\begin{equation}
    \left\langle
        T_X,
        (\mathcal Q^*)^r(T_X)
    \right\rangle_{\mathrm{code}}
    =
    \Phi_X\!\left(
        (\mathcal Q^*)^r(T_X)
    \right)(\mathbf0).
\end{equation}
Using Proposition~\ref{prop:sync-intertwining},
\begin{align}
    \Phi_X\!\left(
        (\mathcal Q^*)^r(T_X)
    \right)(\mathbf0)
    &=
    \bigl(
        \mathsf B_X^r\Phi_X(T_X)
    \bigr)(\mathbf0)
    \\
    &=
    \mathbb E_{\mathbf0}\!\left[
        \Phi_X(T_X)(\xi_r^{(X)})
    \right].
\end{align}
By Lemma~\ref{lem:logical-label-phase},
\(
\Phi_X(T_X)=(-1)^{b_X}
\),
and since
\(
\xi_0^{(X)}=\mathbf0
\)
and
\(b_X(\mathbf0)=0\),
\begin{align}
    \left\langle
        T_X,
        (\mathcal Q^*)^r(T_X)
    \right\rangle_{\mathrm{code}}
    &=
    \mathbb E\!\left[
        (-1)^{
            b_X(\xi_r^{(X)})
            \oplus
            b_X(\xi_0^{(X)})
        }
    \right]
    \\
    &=
    1
    -
    2\Pr\!\left[
        b_X(\xi_r^{(X)})
        \ne
        b_X(\xi_0^{(X)})
    \right].
    \label{eq:sync-autocorrelation}
\end{align}
The \(Z\)-sector identity follows identically.
\end{proof}

\subsection{From discrete classical memory to quantum memory}
\label{subsec:sync-memory}

The operational part of the argument is the same as in the
continuous-time setting.  It remains only to verify that synchronous
syndrome-based channels satisfy the logical-Pauli covariance required
by Lemma~\ref{lem:pauli-diagonality}.

\begin{lemma}[Logical-Pauli covariance of synchronous evolution]
\label{lem:sync-pauli-covariance}
Let \(\mathcal Q_*\) be a syndrome-based quantum channel.
Then its Heisenberg adjoint satisfies
\begin{equation}\label{eq:sync-pauli-covariance}
    \mathcal Q^*(\bar P O\bar P)
    =
    \bar P\,\mathcal Q^*(O)\,\bar P,
    \qquad
    \bar P\in\{\bar X,\bar Z\},
\end{equation}
for every \(O\in\mathcal B(\mathcal H)\).
Consequently,
\begin{equation}
    (\mathcal Q^*)^r(\bar P O\bar P)
    =
    \bar P\,(\mathcal Q^*)^r(O)\,\bar P
\end{equation}
for every \(r\ge0\).
\end{lemma}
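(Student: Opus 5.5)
The proof will mirror that of Lemma~\ref{lem:async-pauli-covariance}, working Kraus operator by Kraus operator. Fix \(\bar P\in\{\bar X,\bar Z\}\) and a Kraus representation of \(\mathcal Q_*\) as in Definition~\ref{def:sync-syndrome-channel}, so that
\[
\mathcal Q^*(O)=\sum_{\mathbf u^X,\mathbf u^Z,\alpha}K_{\mathbf u^X,\mathbf u^Z,\alpha}^\dagger\,O\,K_{\mathbf u^X,\mathbf u^Z,\alpha}.
\]
The key step is to show that each Kraus operator satisfies \(\bar P K\bar P=\pm K\), with a sign that depends on the Kraus label but not on \(O\).

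To see this, write \(K=X^{\mathbf u^Z}Z^{\mathbf u^X}f\) with \(f\in\mathcal A\). The bare logical \(\bar P\) is a Pauli string in the centralizer of the stabilizer group. It therefore commutes with every stabilizer generator, and hence with every element of \(\mathcal A=\mathrm{Alg}(\mathcal S_X\cup\mathcal S_Z)\). In particular \(\bar P f\bar P=f\). On the other hand, \(\bar P\) and \(X^{\mathbf u^Z}Z^{\mathbf u^X}\) are both Pauli strings, so they commute or anticommute, and \(\bar P X^{\mathbf u^Z}Z^{\mathbf u^X}\bar P=\pm X^{\mathbf u^Z}Z^{\mathbf u^X}\). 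Since \(\bar P^2=\mathbb I\), we get
\[
\bar P K\bar P=(\bar P X^{\mathbf u^Z}Z^{\mathbf u^X}\bar P)(\bar P f\bar P)=\pm K.
\]
Taking adjoints, and using that \(\bar P\) is Hermitian, gives \(\bar P K^\dagger\bar P=\pm K^\dagger\) with the same sign.

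Inserting \(\bar P^2=\mathbb I\) then gives
\[
K^\dagger\,\bar P O\bar P\,K=\bar P(\bar P K^\dagger\bar P)\,O\,(\bar P K\bar P)\bar P=\bar P K^\dagger O K\bar P.
\]
The two signs multiply to \(+1\), exactly as the dissipator was invariant under \(L\mapsto -L\) in the continuous-time case. Summing over all Kraus labels yields Eq.~\eqref{eq:sync-pauli-covariance}.

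The iterated statement follows by induction on \(r\). The case \(r=0\) is trivial. For the inductive step,
\[
(\mathcal Q^*)^{r+1}(\bar P O\bar P)=\mathcal Q^*\bigl(\bar P(\mathcal Q^*)^r(O)\bar P\bigr)=\bar P(\mathcal Q^*)^{r+1}(O)\bar P,
\]
applying the one-step identity to the operator \((\mathcal Q^*)^r(O)\).

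There is no real obstacle here. The only point needing care is that \(\bar P\) commutes with all of \(\mathcal A\), including the products and linear combinations that make up \(f\). This follows because commutation with each generator in \(\mathcal S_X\cup\mathcal S_Z\) passes to the algebra they generate.
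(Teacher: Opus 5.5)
Your proposal is correct and follows essentially the same route as the paper's proof. Both show $\bar P K\bar P=\pm K$ for each Kraus operator, using that $\bar P$ commutes with $\mathcal A$ and that Pauli strings commute or anticommute; the signs then cancel in $K^\dagger(\cdot)K$, and the iterated statement follows by composition.
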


\begin{proof}
Each Kraus operator has the form
\begin{equation}
    K_{\mathbf u^X,\mathbf u^Z,\alpha}
    =
    X^{\mathbf u^Z}
    Z^{\mathbf u^X}
    f_{\mathbf u^X,\mathbf u^Z,\alpha},
    \qquad
    f_{\mathbf u^X,\mathbf u^Z,\alpha}\in\mathcal A.
\end{equation}
The bare logical operators commute with every element of
\(\mathcal A\), while any two Pauli strings either commute or
anticommute. Hence
\begin{equation}
    \bar P
    K_{\mathbf u^X,\mathbf u^Z,\alpha}
    \bar P
    =
    \pm
    K_{\mathbf u^X,\mathbf u^Z,\alpha}.
\end{equation}
Therefore, using the adjoint Kraus representation,
\begin{align}
    \mathcal Q^*(\bar P O\bar P)
    &=
    \sum_{\mathbf u^X,\mathbf u^Z,\alpha}
    K_{\mathbf u^X,\mathbf u^Z,\alpha}^\dagger
    \bar P O\bar P
    K_{\mathbf u^X,\mathbf u^Z,\alpha}
    \\
    &=
    \bar P
    \left(
        \sum_{\mathbf u^X,\mathbf u^Z,\alpha}
        K_{\mathbf u^X,\mathbf u^Z,\alpha}^\dagger
        O
        K_{\mathbf u^X,\mathbf u^Z,\alpha}
    \right)
    \bar P
    \\
    &=
    \bar P\,\mathcal Q^*(O)\,\bar P.
\end{align}
Covariance of \((\mathcal Q^*)^r\) follows by composition.
\end{proof}

\begin{proof}[Proof of Theorem~\ref{thm:main-sync}]
Fix \(0\le r\le m\). By
Lemma~\ref{lem:sync-pauli-covariance}, the physical channel
    $\mathcal W_*=\mathcal Q_*^r$
satisfies the covariance hypothesis of
Lemma~\ref{lem:pauli-diagonality}. Hence
Proposition~\ref{prop:minimal-fidelity} applies.

Suppose first that the two sector processes respectively
\(q_X\)- and \(q_Z\)-remember \(b_X\) and \(b_Z\) through round \(m\).
By Proposition~\ref{prop:reduction-sync-formal},
\begin{align}
    \left\langle
        T_X,(\mathcal Q^*)^r(T_X)
    \right\rangle_{\mathrm{code}}
    &\ge 1-2q_X,
    \\
    \left\langle
        T_Z,(\mathcal Q^*)^r(T_Z)
    \right\rangle_{\mathrm{code}}
    &\ge 1-2q_Z.
\end{align}
The lower bound of Proposition~\ref{prop:minimal-fidelity} gives
\[
    F_{\min}[\mathcal Q_*^r]
    \ge
    1-q_X-q_Z.
\]
Since \(r\le m\) was arbitrary, the encoded system is a quantum
\((q_X+q_Z)\)-memory through round \(m\).

Conversely, suppose that the encoded system is a quantum
\(\memerr\)-memory through round \(m\). Then
\[
    F_{\min}[\mathcal Q_*^r]\ge1-\memerr
\]
for every \(0\le r\le m\). By the \(X\)-upper bound of
Proposition~\ref{prop:minimal-fidelity} and 
Proposition~\ref{prop:reduction-sync-formal},
\begin{align}
    1-\memerr
    &\le
    \frac12\left(
        1+
        \left\langle
            T_X,(\mathcal Q^*)^r(T_X)
        \right\rangle_{\mathrm{code}}
    \right)
    \\
    &=
    1-
    \Pr\!\left[
        b_X(\xi_r^{(X)})
        \ne
        b_X(\xi_0^{(X)})
    \right].
\end{align}
Thus
\[
    \Pr\!\left[
        b_X(\xi_r^{(X)})
        \ne
        b_X(\xi_0^{(X)})
    \right]
    \le\memerr.
\]
The \(Z\)-sector is identical. Hence each induced sector process
\(\memerr\)-remembers its logical-error indicator through round \(m\).
\end{proof}

\subsection{From classical SPCAs to local quantum channels}
\label{subsec:sync-spca-realization}

We now prove Corollary~\ref{cor:sync-spca-memory}. The channel is
built from the quantum lifts of the two SPCA rules, which implement the
corresponding noiseless sector updates.

\begin{definition}[Quantum lift of an SPCA rule]
\label{def:spca-lift}
For \(\mathbf a:\Sigma_X\to\Omega_X\), the \emph{quantum lift} of
\(\mathbf a\) is the CPTP map
\begin{equation}
    \widehat{\mathbf a}^X_*(\rho)
    :=
    \sum_{s\in\Sigma_X}
    Z^{\mathbf a(s)}\,\Pi^X_s\,\rho\,\Pi^X_s\,Z^{\mathbf a(s)} .
\end{equation}
For \(\mathbf a:\Sigma_Z\to\Omega_Z\), the lift
\(\widehat{\mathbf a}^Z_*\) is defined symmetrically, with
\(X^{\mathbf a(s)}\) and \(\Pi^Z_s\).
\end{definition}

\begin{lemma}[Properties of the lift]
\label{lem:spca-lift}
Let \(\mathbf a:\Sigma_X\to\Omega_X\).
\begin{enumerate}[label=(\roman*),leftmargin=*]
    \item \(\widehat{\mathbf a}^X_*\) is a syndrome-based
    quantum channel, with induced transition probabilities
    \(p_X(\mathbf u\mid s)=\delta_{\mathbf u,\mathbf a(s)}\) and
    \(p_Z(\mathbf u\mid s)=\delta_{\mathbf u,\mathbf 0}\).
    \item Suppose that each \(a_i\) depends only on the syndrome bits of
    the \(X\)-checks supported in \(N_R(i)\). Then
    \(\widehat{\mathbf a}^X_*\) is \(\max(R,1)\)-local.
\end{enumerate}
The same statements hold with \(X\) and \(Z\) exchanged.
\end{lemma}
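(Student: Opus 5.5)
For (i), the plan is to rewrite the lift in the Kraus form of Definition~\ref{def:sync-syndrome-channel}. Take the index set to be $\alpha=s\in\Sigma_X$, with Kraus operators $K_s=Z^{\mathbf a(s)}\Pi^X_s$. This is of the form $X^{\mathbf u^Z}Z^{\mathbf u^X}f$ with $\mathbf u^Z=\mathbf 0$, $\mathbf u^X=\mathbf a(s)$ and $f=\Pi^X_s\in\mathcal A_X\subseteq\mathcal A$. Completeness is $\sum_s\Pi^X_s=\mathbb I$, as in the trace-preservation remark. Since $(\Pi_s^X)^\dagger\Pi_s^X=\Pi_s^X$, the sector probability operators are
\[
\mathsf P^{(X)}_{\mathbf u}=\sum_{s:\,\mathbf a(s)=\mathbf u}\Pi^X_s\in\mathcal A_X,
\qquad
\mathsf P^{(Z)}_{\mathbf u}=\delta_{\mathbf u,\mathbf 0}\,\mathbb I\in\mathcal A_Z .
\]
Hence the syndrome-based condition \eqref{eq:sync-sector-condition} holds. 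Evaluating the eigenvalue functions gives $\mathsf P^{(X)}_{\mathbf u}(s)=\delta_{\mathbf u,\mathbf a(s)}$ and $p_Z(\mathbf u\mid s)=\delta_{\mathbf u,\mathbf 0}$, which is (i).

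For (ii), the plan is to factor the channel into local pieces and use Claim~\ref{clm:sync-locality-H}. Write $s_i$ for the restriction of $s$ to the checks supported in $N_R(i)$, so that $a_i(s)=a_i(s_i)$. Since $Z^{\mathbf a(s)}=\prod_i Z_i^{a_i(s)}$, we get
\[
\widehat{\mathbf a}^X_*(\rho)=\sum_s\Bigl(\prod_i Z_i^{a_i(s_i)}\Bigr)\Pi_s\rho\,\Pi_s\Bigl(\prod_i Z_i^{a_i(s_i)}\Bigr).
\]
The key step is to show this equals $\widetilde{\mathcal U}\circ\mathcal M$. Here $\mathcal M(\rho)=\sum_s\Pi_s\rho\Pi_s$ is full $X$-syndrome dephasing. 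The map $\widetilde{\mathcal U}(\rho)=W\rho W^\dagger$ uses the operator $W=\sum_s Z^{\mathbf a(s)}\Pi_s$. It is unitary, because $Z^{\mathbf a(s)}$ maps the range of $\Pi_s$ to the range of $\Pi_{s\oplus H_X\mathbf a(s)}$. I would rather avoid $W$, however, since it is not a product of local controlled gates: applying $Z_i$ changes syndromes seen by other sites. The cleaner route is to work directly in the Heisenberg picture.

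Concretely, fix $O$ supported on $A$. Then
\[
\widehat{\mathbf a}^{X*}(O)=\sum_s\Pi_s Z^{\mathbf a(s)}OZ^{\mathbf a(s)}\Pi_s .
\]
Only the $Z_i$ with $i\in A$ fail to commute with $O$ up to sign, so we may write $Z^{\mathbf a(s)}OZ^{\mathbf a(s)}=Z^{\mathbf a_A(s)}OZ^{\mathbf a_A(s)}$, where $\mathbf a_A$ keeps only the coordinates in $A$. This operator depends on $s$ only through the syndrome bits of the checks touching $N_R(A)$; call this restricted syndrome $s_A$. Grouping the sum by $s_A$ uses $\sum_{s\mapsto s_A}\Pi_s=\Pi_{s_A}$, the product of check projectors for those checks. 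Then
\[
\widehat{\mathbf a}^{X*}(O)=\sum_{s_A}\Pi_{s_A}\,Z^{\mathbf a_A(s_A)}OZ^{\mathbf a_A(s_A)}\,\Pi_{s_A}.
\]
Here $Z^{\mathbf a_A(s_A)}OZ^{\mathbf a_A(s_A)}$ is supported in $A$, and $\Pi_{s_A}$ is supported on the union of the check supports.

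The main obstacle is bounding that support. A check supported in $N_R(i)$ lies within $N_R(A)$ by hypothesis, so it contributes nothing new. Checks that act on $A$ itself also enter, because the projectors $\Pi_{s_A}$ do not commute with $O$ in general. The supports of those checks are contained in $N_1(A)$ by the definition of the interaction graph. So the whole expression is supported in $N_{\max(R,1)}(A)$, and Claim~\ref{clm:sync-locality-H} gives $\max(R,1)$-locality. This is exactly where the $\max(R,1)$ in the statement comes from. Care is needed in the grouping step: it is valid only if the projectors for checks outside this set are summed out, which is legitimate because $Z^{\mathbf a_A}OZ^{\mathbf a_A}$ commutes with any check that does not touch $A$. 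The $Z$-sector statements follow by exchanging $X$ and $Z$.
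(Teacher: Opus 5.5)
Your proof is correct and follows the paper's own argument. Part (i) is the same Kraus-form check with $K_s=Z^{\mathbf a(s)}\Pi^X_s$. Part (ii) is the same Heisenberg-picture computation: you group $\sum_s\Pi^X_s Z^{\mathbf a_A(s)}OZ^{\mathbf a_A(s)}\Pi^X_s$ over the checks read by $\mathbf a_A$ together with the checks acting on $A$ (the paper's $M=M_1\cup M_2$), and sum out the remaining projectors, which commute with $Z^{\mathbf a_A}OZ^{\mathbf a_A}$.

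One fix is needed. Define $s_A$ as the restriction to exactly that set, not to all ``checks touching $N_R(A)$''. Checks of the latter kind can reach distance $R+1$ from $A$, so your support count would then give only $(R+1)$-locality. Separately, the discarded operator $W=\sum_s Z^{\mathbf a(s)}\Pi^X_s$ is not unitary in general, because $s\mapsto s\oplus H_X\mathbf a(s)$ need not be injective.
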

Proof deferred to Appendix~\ref{app:sync}.

\begin{lemma}[Sector dynamics of the lifted channel]
\label{lem:sync-spca-run}
Let \(\mathbf a^{(X)}:\Sigma_X\to\Omega_X\) and
\(\mathbf a^{(Z)}:\Sigma_Z\to\Omega_Z\), let
\(\widehat{\mathbf a}^{(X)}_*\) be the \(X\)-sector lift of
\(\mathbf a^{(X)}\) and \(\widehat{\mathbf a}^{(Z)}_*\) the \(Z\)-sector
lift of \(\mathbf a^{(Z)}\) (Definition~\ref{def:spca-lift}), let
\(\mathcal N_*\) be a synchronous Pauli noise layer, and let
\(
    \mathcal Q_*
    :=
    \mathcal N_*\circ\widehat{\mathbf a}^{(Z)}_*\circ\widehat{\mathbf a}^{(X)}_*
\).
Then \(\mathcal Q_*\) is a syndrome-based quantum channel,
with
\begin{equation}\label{eq:sync-spca-kernel}
    p_X(\mathbf u\mid s)
    =
    \pi_X\bigl(\mathbf u\oplus\mathbf a^{(X)}(s)\bigr),
    \qquad
    \pi_X(\mathbf v)
    :=
    \prod_{i=1}^n
    \bigl(\theta_i^{(X)}\bigr)^{v_i}
    \bigl(1-\theta_i^{(X)}\bigr)^{1-v_i},
\end{equation}
and symmetrically for the \(Z\) sector. Consequently the induced
\(X\)-sector error process has the update rule
\begin{equation}\label{eq:sync-spca-recursion}
    \xi_{r+1}^{(X)}
    =
    \xi_r^{(X)}
    \oplus
    \mathbf a^{(X)}\bigl(H_X\xi_r^{(X)}\bigr)
    \oplus
    \mathbf n_r,
    \qquad
    \xi_0^{(X)}=\mathbf 0,
\end{equation}
where the coordinates \(n_{r,i}\) are independent Bernoulli variables
with means \(\theta_i^{(X)}\), independent of
\(\xi_0^{(X)},\ldots,\xi_r^{(X)}\) and of
\(\mathbf n_0,\ldots,\mathbf n_{r-1}\).
\end{lemma}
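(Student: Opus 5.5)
The plan is to compute the Kraus representation of $\mathcal Q_*=\mathcal N_*\circ\widehat{\mathbf a}^{(Z)}_*\circ\widehat{\mathbf a}^{(X)}_*$ explicitly, check that it has the form of Definition~\ref{def:sync-syndrome-channel}, and read off the sector probability operators. Alternatively, and more economically, I will work in the Heisenberg picture on $\mathcal V_X$. Proposition~\ref{prop:sync-sector-dynamics} then gives, for each syndrome-based factor, a one-step formula. Composition of the adjoints is then the composition of the induced backward operators. This forces $p_X$ to be the convolution of the factors' kernels.

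\textbf{Step 1: syndrome-based form.} Each of the three factors is syndrome-based. For $\widehat{\mathbf a}^{(X)}_*$ and $\widehat{\mathbf a}^{(Z)}_*$ this is Lemma~\ref{lem:spca-lift}(i), and for $\mathcal N_*$ it is the remark following Definition~\ref{def:sync-pauli-noise}. For the composition I multiply Kraus operators, $K''K'K$, with each factor of the form $X^{\mathbf u^Z}Z^{\mathbf u^X}f$, $f\in\mathcal A$. I commute the $\mathcal A$-factors to the right through the Pauli strings. Conjugating an element of $\mathcal A$ by a Pauli string gives another element of $\mathcal A$, since the generators only acquire signs. I also reorder the $X$- and $Z$-strings at the cost of a global sign. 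The product therefore has the form $X^{\mathbf u^Z_{\rm tot}}Z^{\mathbf u^X_{\rm tot}}\tilde f$ with $\tilde f\in\mathcal A$ and $\mathbf u_{\rm tot}$ the XOR of the individual flips.

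\textbf{Step 2: the sector condition and the kernel.} Verifying Eq.~\eqref{eq:sync-sector-condition} directly from products of Kraus operators is messy, so I will instead identify $\mathsf P^{(X)}_{\mathbf u}$ through the Heisenberg action. By Proposition~\ref{prop:sync-intertwining} applied to each factor, on $\mathcal V_X$ we have
\[
\Phi_X\circ\mathcal Q^*=\mathsf B^{(\widehat{\mathbf a}^X)}_X\,\mathsf B^{(\widehat{\mathbf a}^Z)}_X\,\mathsf B^{(\mathcal N)}_X\,\Phi_X .
\]
This uses $\mathcal Q^*=\widehat{\mathbf a}^{(X)*}\widehat{\mathbf a}^{(Z)*}\mathcal N^*$. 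By Lemma~\ref{lem:spca-lift}(i), $\mathsf B^{(\widehat{\mathbf a}^X)}_X h(\mathbf e)=h(\mathbf e\oplus\mathbf a^{(X)}(H_X\mathbf e))$ and $\mathsf B^{(\widehat{\mathbf a}^Z)}_X=\mathrm{id}$. The noise operator is $\mathsf B^{(\mathcal N)}_X h(\mathbf e)=\sum_{\mathbf v}\pi_X(\mathbf v)h(\mathbf e\oplus\mathbf v)$, because the noise kernel is syndrome-independent. Composing gives the backward operator with kernel $\pi_X(\mathbf u\oplus\mathbf a^{(X)}(s))$.

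To turn this into a genuine statement about $\mathsf P^{(X)}_{\mathbf u}$, I will compute $\mathsf P^{(X)}_{\mathbf u}$ directly in the composed Kraus form. The lifts contribute the projectors $\Pi^X_s$ and $\Pi^Z_{s'}$. The $Z$-lift Kraus operators $X^{\mathbf a^{(Z)}(s')}\Pi^Z_{s'}$ sum to the identity in $\sum_{\mathbf u^Z,\alpha}f^\dagger f$, because $X$-strings commute with $\Pi^X_s$ and the $\Pi^Z_{s'}$ sum to $\mathbb I$. The $X$-marginal is then $\sum_s\pi_X(\mathbf u\oplus\mathbf a^{(X)}(s))\Pi^X_s\in\mathcal A_X$. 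Here the noise scalars $|c|^2$, summed over their $Z$-component, give $\pi_X$. This also verifies the syndrome-based condition, and the same argument with $X\leftrightarrow Z$ handles the $Z$ sector.

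\textbf{Step 3: the recursion.} Given the kernel, Eq.~\eqref{eq:sync-sector-transition-X} says that from $\xi_r^{(X)}=\mathbf e$ the increment is $\mathbf a^{(X)}(H_X\mathbf e)\oplus\mathbf v$, with $\mathbf v\sim\pi_X$ a product Bernoulli vector drawn afresh at each step. Setting $\mathbf n_r$ equal to this $\mathbf v$ gives the stated independence, by the Markov property and because the law of $\mathbf n_r$ does not depend on the state.

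The main obstacle is the bookkeeping in Step 2. The Kraus index $(\mathbf u^X,\mathbf u^Z,\alpha)$ of the composition is not unique, since different factor indices give the same total flip. One must sum correctly over the $Z$-lift's syndrome $s'$ and use that $\Pi^Z_{s'}$ commutes with $Z$-strings up to relabeling $s'$. I will handle this by pushing all $\Pi^Z$ and $X$-factors to one side before summing.
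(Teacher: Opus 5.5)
Your proposal is correct and, apart from the Heisenberg-picture detour, it is the paper's proof. You write each composed Kraus operator as $X^{\mathbf v^Z(P)\oplus\mathbf a^{(Z)}(s')}Z^{\mathbf v^X(P)\oplus\mathbf a^{(X)}(s)}f$, where $f$ is a phase times the square root of the noise weight of $P$ times $\Pi^Z_{s'}\Pi^X_s$; you then sum $f^\dagger f$ using $\sum_{s'}\Pi^Z_{s'}=\mathbb I$ and the product form of the noise weights to get $\sum_s\pi_X(\mathbf u\oplus\mathbf a^{(X)}(s))\Pi^X_s$, and read off the recursion. The only commutation this needs is that $\Pi^Z_{s'}$ commutes exactly with the $Z$-string $Z^{\mathbf a^{(X)}(s)}$, with no relabeling of $s'$. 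You are also right that the intertwining identity alone cannot determine $\mathsf P^{(X)}_{\mathbf u}$, so that detour can simply be dropped.
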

Proof deferred to Appendix~\ref{app:sync}.

\begin{proof}[Proof of Corollary~\ref{cor:sync-spca-memory}]
Let \(\mathcal Q_*:=\mathcal N_*\circ\widehat{\mathbf a}^{(Z)}_*\circ\widehat{\mathbf a}^{(X)}_*\);
it is syndrome-based by Lemma~\ref{lem:sync-spca-run}. Distances in the
interaction graph of the CSS code are at most those in the interaction graph
of \(\mathcal C_\sigma\), so Lemma~\ref{lem:spca-lift}(ii) applies with
\(R=R_\sigma\), and \(\widehat{\mathbf a}^{(\sigma)}_*\) is
\(\max(R_\sigma,1)\)-local. Since \(\mathcal N_*\) is \(0\)-local,
\(\mathcal Q_*\) is \(R\)-local with \(R:=\max(R_X,1)+\max(R_Z,1)\),
independent of \(L\).

Set \(p:=2\eta<\pcth\). Since \(\theta_{P,i}\le\eta\), we have
\(\theta_i^{(X)}=\theta_{Z,i}+\theta_{Y,i}\le p\). By
Lemma~\ref{lem:sync-spca-run}, the induced \(X\)-sector error process
follows Equation~\eqref{eq:sync-spca-recursion}. Realize the noise vectors
as follows: draw \(F_r\) by including each site independently with
probability \(p\), independently of the past, and given \(F_r\) set
\(n_{r,i}=1\) with probability \(\theta_i^{(X)}/p\), independently for
\(i\in F_r\), and \(n_{r,i}=0\) for \(i\notin F_r\) (for \(p=0\), take
\(F_r=\emptyset\) and \(\mathbf n_r=\mathbf 0\)). The coordinates of
\(\mathbf n_r\) are then independent with means \(\theta_i^{(X)}\) and
independent of the past, as Lemma~\ref{lem:sync-spca-run} requires, and
\(\operatorname{supp}\mathbf n_r\subseteq F_r\). Hence
\(\zeta_r:=\mathbf n_r\) is an admissible fault vector, and the induced
\(X\)-sector process is a run of \(\Psi_X\) at fault rate \(p\), started at
\(\mathbf 0\). By the memory hypothesis
(Definition~\ref{def:syndrome-pca-classical}), it \(q_X(m,L)\)-remembers
\(b_X\) through round \(m\) (Definition~\ref{def:classical-memory}). The
\(Z\) sector is identical, with
\(\theta_i^{(Z)}=\theta_{X,i}+\theta_{Y,i}\le p\). 
Theorem~\ref{thm:main-sync} then shows that the encoded system is a
quantum \(\bigl(q_X(m,L)+q_Z(m,L)\bigr)\)-memory through round \(m\).

Finally, if the code pair, both SPCAs and \(\mathcal N_*\) are translationally
invariant, translations permute the syndrome projectors and the Pauli strings
\(Z^{\mathbf a^{(X)}(s)}\), \(X^{\mathbf a^{(Z)}(s)}\) compatibly, so each
factor of \(\mathcal Q_*\), and hence \(\mathcal Q_*\) itself, commutes with
translations.
\end{proof}

\begin{remark}[Relation to the continuous-time construction]
The synchronous model treats one application of the local channel as an
ideal clocked update round.  It specifies the causal locality and sector
dynamics of the round without resolving that round into microscopic
continuous-time operations.
Section~\ref{sec:periodic} addresses this additional implementation problem.
There we construct a geometrically local periodically driven
Lindbladian whose local operations occur stochastically and whose code
qubits and classical registers are noisy, while retaining a global periodic
Reset--Read--Execute schedule.
\end{remark}

\section{Application: the 4D Toom rule}
\label{sec:toom}

We apply Corollary~\ref{cor:sync-spca-memory} to the
four-dimensional toric code equipped with the synchronous Toom rule.
Since that corollary takes only classical input, the whole section is
classical: we verify its hypotheses for the two component codes of the
4D toric code, and the quantum statement follows by invoking it.

The argument proceeds in three steps. First, we verify that the 4D Toom
rules define valid SPCAs for the two component codes
(Subsection~\ref{subsec:toom-geometry}). Second, we invoke
Ahn's~\cite[\S 7]{ahn_extending_2004} threshold analysis, which shows that
below a finite fault threshold the probability of a homologically
nontrivial decoded error at each round is exponentially small in the
linear system size, and deduce from it the retention hypothesis of 
Corollary~\ref{cor:sync-spca-memory}
(Subsection~\ref{subsec:toom-threshold}). Third, we invoke 
Corollary~\ref{cor:sync-spca-memory} to prove 
Theorem~\ref{thm:toom-sync-formal}. By self-duality, it suffices to describe
the $X$-sector explicitly; the $Z$-sector follows by the same
construction on the dual lattice.

The Toom rule is a deterministic local correction rule originally
defined on a $2$-dimensional lattice of classical spins, where each
site checks the values of its positively-oriented neighbors and flips
if all of them differ\cite{toom_stable_1980}. The asymmetry in the choice of positive
directions breaks detailed balance and gives the rule its
noise-suppressing properties. The 4D version we use here differs from
the original in a crucial way: rather than conditioning on neighboring
spin values directly, it conditions only on the syndrome --- the parity
of errors on boundary edges of each plaquette. Paired with a
stochastic-adversarial fault model, this makes it an SPCA in the sense
of Definition~\ref{def:syndrome-pca-classical}
(Lemma~\ref{lem:toom-pca-validation}), which is what allows Corollary~\ref{cor:sync-spca-memory} to be applied. In four dimensions,
plaquettes play the role of sites, edges carry the syndrome, and the
rule checks the two positively-oriented boundary edges of each
plaquette.

\subsection{Geometry and the Toom predicate}\label{subsec:toom-geometry}

We first fix the lattice geometry and the classical component codes,
then define the Toom correction rule on top of them.

\begin{definition}[4D toric code]
\label{def:4d-toric-geometry}
Let $\Lambda=(\mathbb Z/L\mathbb Z)^4$ be the four-dimensional
hypercubic lattice with periodic boundary conditions, and let
$\Delta_m(\Lambda)$ denote the set of $m$-cells. Bits reside on
plaquettes, which are also the sites of the SPCAs below; we index them
by $i\in\Delta_2(\Lambda)$, so that $n=|\Delta_2(\Lambda)|$ and
$\Omega_X=\Omega_Z=\mathbb F_2^{\Delta_2(\Lambda)}$. The $X$-checks are
associated with edges $e\in\Delta_1(\Lambda)$ and the $Z$-checks with
three-cubes $c\in\Delta_3(\Lambda)$: for
$\mathbf x\in\mathbb F_2^{\Delta_2(\Lambda)}$,
\begin{equation}
    [H_X\mathbf x]_e=\sum_{i\supset e}\mathbf x_i \pmod 2,
    \qquad
    [H_Z\mathbf x]_c=\sum_{i\subset c}\mathbf x_i \pmod 2 .
\end{equation}
The 4D toric code is the CSS code of Definition~\ref{def:css-code} with
these parity-check matrices; its stabilizers are
$S_e^X=\prod_{i\supset e}X_i$ and $S_c^Z=\prod_{i\subset c}Z_i$.
\end{definition}

The map $H_X$ is the mod-$2$ boundary map $\partial_2$ from $2$-chains
to $1$-chains, and $H_Z^T$ is the boundary map $\partial_3$ from
$3$-chains to $2$-chains, so the CSS condition $H_XH_Z^T=0$ is
$\partial_2\partial_3=0$. Hence $\mathcal C_X=\ker H_X$ is the space of
$2$-cycles, $\operatorname{row}(H_Z)$ is the space of $2$-boundaries,
and the $X$-logical classes of Definition~\ref{def:logical-values} are
the homology classes,
$\mathcal C_X/\operatorname{row}(H_Z)=H_2(\Lambda;\mathbb F_2)\cong\mathbb F_2^{6}$.
Dually, $\mathcal C_Z/\operatorname{row}(H_X)=H^2(\Lambda;\mathbb F_2)\cong\mathbb F_2^{6}$.
In both sectors we call $\mathbf c\in\mathcal C_\sigma$ homologically
nontrivial if $[\mathbf c]_\sigma\ne[\mathbf 0]_\sigma$.

With the geometry and component codes fixed, we now specify the
correction rule itself: the Toom predicate that each site evaluates
from the syndrome.

\begin{definition}[4D Toom predicate]
\label{def:toom-predicate}
Fix the standard coordinate orientation of $\Lambda$. If a plaquette
$i$ lies in the coordinate plane spanned by $(\hat u_\mu,\hat u_\nu)$,
let $e^+_{i,\mu}$ and $e^+_{i,\nu}$ be its two positively oriented
boundary edges. For $s\in\Sigma_X$, define the $X$-sector Toom
predicate by
\begin{equation}
    a_i^{(X)}(s)=s_{e^+_{i,\mu}}\,s_{e^+_{i,\nu}}.
\end{equation}
Equivalently, $a_i^{(X)}(s)=1$ if and only if both positively oriented
boundary edges of $i$ carry syndrome. The $Z$-sector predicate
$a_i^{(Z)}$ is defined by the same construction on the dual lattice.
\end{definition}

\begin{lemma}[4D Toom rules are valid SPCAs]
\label{lem:toom-pca-validation}
Equip $\Delta_2(\Lambda)$ with the graph metric in which two plaquettes
are adjacent when they share an edge. For every fault rate $p\in[0,1]$,
the predicates $\mathbf a^{(X)}=(a_i^{(X)})_{i\in\Delta_2(\Lambda)}$ of
Definition~\ref{def:toom-predicate} define a translationally invariant
$1$-local SPCA $\Psi_X$ for $\mathcal C_X$ at stochastic-adversarial
fault rate $p$ (Definition~\ref{def:syndrome-pca-classical}); dually,
$\mathbf a^{(Z)}$ defines such an SPCA $\Psi_Z$ for $\mathcal C_Z$.
\end{lemma}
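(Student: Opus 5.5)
The plan is to check the conditions of Definition~\ref{def:syndrome-pca-classical} one at a time, for the $X$ sector, and then transfer the result to the $Z$ sector by duality. The map $\Psi_X(\mathbf e)=\mathbf e\oplus\mathbf a^{(X)}(H_X\mathbf e)\oplus\zeta_r$ already has the required form. Each $a_i^{(X)}$ is a Boolean function defined on all of $\mathbb F_2^{\Delta_1(\Lambda)}\supseteq\Sigma_X$, so it restricts to $\operatorname{Im}H_X$. The stochastic-adversarial noise condition (the fault sets $F_r$ are i.i.d.\ Bernoulli$(p)$ site subsets with $\operatorname{supp}\zeta_r\subseteq F_r$) is a condition on the admissible runs, not on the rule. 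It therefore holds for every $p\in[0,1]$ by definition, and only the locality and translation-invariance claims need proof.

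For locality, I use the fact that $a_i^{(X)}$ depends only on $s_{e^+_{i,\mu}}$ and $s_{e^+_{i,\nu}}$. The $X$-check attached to an edge $e$ has support $\{j\in\Delta_2(\Lambda): j\supset e\}$, the plaquettes containing $e$. When $e=e^+_{i,\mu}$ (or $e^+_{i,\nu}$) is a boundary edge of $i$, every such $j$ shares the edge $e$ with $i$. Hence $j$ is at graph distance at most $1$ from $i$ in the stated metric, where plaquettes are adjacent iff they share an edge, and $j=i$ gives distance $0$. So each check read by $a_i^{(X)}$ is supported within distance $1$ of $i$, which is exactly $1$-locality.

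One point needs care. Definition~\ref{def:syndrome-pca-classical} measures distance in the interaction graph of $C$, where bits are adjacent if some parity check acts on both, while the lemma uses the edge-sharing metric. For $\mathcal C_X$ these coincide: two plaquettes lie in a common $X$-check iff they both contain the same edge. I will record this so that the stated metric is consistent with the definition. For the $Z$ sector the dual construction gives adjacency through shared 3-cubes, and the same argument applies on the dual lattice.

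Translation invariance follows from how the rule is built. A lattice translation $\tau$ of $\Lambda$ preserves coordinate orientations, so it maps plaquettes in the $(\mu,\nu)$ plane to plaquettes in the same plane. It also maps positively oriented boundary edges to positively oriented boundary edges, giving $e^+_{\tau i,\mu}=\tau e^+_{i,\mu}$. Since $\tau$ also commutes with $H_X=\partial_2$, we get $a^{(X)}_{\tau i}(\tau s)=a^{(X)}_i(s)$, i.e.\ $\Psi_X$ commutes with translations, and the i.i.d.\ fault model is translation invariant.

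For $\Psi_Z$, I use the Poincaré duality map of the 4D torus, which sends plaquettes to dual plaquettes, 3-cubes to dual edges, and $H_Z$ to the dual $\partial_2$. This identifies $\mathcal C_Z$ with the $X$-type code on the dual lattice, so the argument repeats verbatim. The only potential obstacle is this bookkeeping: verifying that "adjacent via a shared 3-cube" corresponds to "adjacent via a shared dual edge", so that $1$-locality holds in the $\mathcal C_Z$ interaction graph. This is a direct consequence of the duality correspondence between 3-cubes and dual edges.
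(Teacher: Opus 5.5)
Your proposal is correct and follows essentially the same route as the paper. Locality holds because each check read by $a_i^{(X)}$ is a boundary edge of $i$, whose support consists of plaquettes sharing that edge with $i$ (dually, 3-cubes containing $i$); the fault condition holds by construction, and translation invariance follows from the fixed coordinate orientation. Your remark that the $Z$-sector distance must be measured in the interaction graph of $\mathcal C_Z$ (plaquettes sharing a 3-cube, i.e.\ dual plaquettes sharing a dual edge) is a useful precision, since opposite faces of a 3-cube share no edge and the paper leaves this implicit in its use of $N_1(i)$.
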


\begin{proof}
We verify the $X$-sector; the $Z$-sector is dual. The update on
plaquette $i$ is $a_i^{(X)}(H_X\mathbf x)$, which depends only on the
syndrome bits of the checks $e^+_{i,\mu}$ and $e^+_{i,\nu}$. The check of an edge $e$ of $i$ is supported on the plaquettes
containing $e$, each of which is acted on together with $i$ by $S_e^X$,
and hence lies in $N_1(i)$. Dually, $a_i^{(Z)}$ reads two three-cubes
containing $i$, whose supports lie in $N_1(i)$ by the same argument
with $S_c^Z$. This is the syndrome-based update condition
with $R=1$, and the fault model is the stochastic-adversarial model of
Definition~\ref{def:syndrome-pca-classical} by construction. The
predicate refers only to the fixed coordinate orientation, so it
commutes with lattice translations.
\end{proof}

Lemma~\ref{lem:toom-pca-validation} gives the SPCA structure required
by Corollary~\ref{cor:sync-spca-memory}; the retention hypothesis
is supplied by Ahn's estimate in Subsection~\ref{subsec:toom-threshold}.

\subsection{Exponential memory threshold}\label{subsec:toom-threshold}

We now use Ahn's classical threshold estimate for the noisy synchronous
$4$D Toom PCA~\cite[§7.2.3]{ahn_extending_2004}. Ahn's argument is
formulated in terms of spacetime locations at which the prescribed
local update rule is violated. The same explanation-tree estimate
applies to the stochastic-adversarial model of
Definition~\ref{def:syndrome-pca-classical}: the set of actual
violations is contained in an i.i.d.\ Bernoulli fault set, so any
prescribed set of $k$ violating spacetime locations occurs with
probability at most $p^k$. We take the readout decoders
$\mathsf{Dec}_X,\mathsf{Dec}_Z$ of Definition~\ref{def:readout-decoder}
to be the decoders of the recovery step in Ahn's analysis.

\begin{proposition}[Ahn's encoded-step failure bound]
\label{prop:ahn-suppression}
Fix $\sigma\in\{X,Z\}$, let $(\mathbf e_r^{(\sigma)})_{r\ge0}$
be any run of the 4D Toom SPCA $\Psi_\sigma$ of
Lemma~\ref{lem:toom-pca-validation} at fault rate $p$, started at
$\mathbf e_0^{(\sigma)}=\mathbf 0$, and let
\begin{equation}
    \tau_\sigma
    :=
    \min\Bigl\{r\ge1:
    \bigl[\mathbf e_r^{(\sigma)}\oplus
    \mathsf{Dec}_\sigma\bigl(H_\sigma\mathbf e_r^{(\sigma)}\bigr)\bigr]_\sigma
    \ne[\mathbf 0]_\sigma\Bigr\}
\end{equation}
be the first round at which the decoded string is homologically
nontrivial. There exist constants $\pcth,c,C>0$, independent of $L$,
such that for every $p<\pcth$ and every $r\ge1$,
\begin{equation}
    \Pr\bigl[\tau_\sigma=r\mid \tau_\sigma\ge r\bigr]\le Ce^{-cL}.
\end{equation}
\end{proposition}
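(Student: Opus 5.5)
The plan is to import Ahn's explanation-tree (sparsity) analysis of the noisy synchronous 4D Toom PCA~\cite[\S 7.2.3]{ahn_extending_2004} essentially as a black box. The work lies in checking that its hypotheses hold for every run of $\Psi_\sigma$ in the sense of Definition~\ref{def:syndrome-pca-classical}, and that its conclusion is stated for the event $\{\tau_\sigma=r\}$ conditioned on survival. By self-duality of the 4D toric code (the $Z$-sector is the $X$-sector construction on the dual lattice), it suffices to treat $\sigma=X$.

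\textbf{Step 1 (fault domination).} I would first reduce the stochastic-adversarial model to the setting Ahn analyzes. In any run, the spacetime set of locations where the Toom update $\mathbf e_{r+1}=\mathbf e_r\oplus\mathbf a^{(X)}(H_X\mathbf e_r)$ is violated is contained in $\bigcup_r F_r\times\{r\}$. The sets $F_r$ are i.i.d.\ Bernoulli$(p)$ across sites and rounds, and independent of the past. Hence for any fixed finite set $S$ of spacetime locations,
\[
\Pr[S\subseteq \text{violations}]\le \Pr\bigl[S\subseteq \textstyle\bigcup_r F_r\times\{r\}\bigr]=p^{|S|}.
\]
This holds regardless of the adversarial choice of $\zeta_r$, including its dependence on history. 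This $p^{|S|}$ bound is the only probabilistic input in Ahn's counting.

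\textbf{Step 2 (deterministic geometric lemma).} Next, I would invoke Ahn's combinatorial result. If the decoded string at round $r$ is homologically nontrivial while it was trivial at all earlier rounds, then the fault set restricted to a bounded time window before $r$ contains a connected ``explanation'' structure of size at least $c_0L$. This step rests on the eroder property of Toom's rule applied to the 2-chain/1-boundary structure, together with the recovery decoder filling small syndrome loops. Taking $\mathsf{Dec}_X$ to be Ahn's recovery decoder, as fixed above, makes this applicable.

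\textbf{Step 3 (counting).} I would then union-bound over explanation trees of size $k\ge c_0L$ rooted in the window. There are at most $\operatorname{poly}(L)\,K^k$ such trees, with $K$ depending only on the bounded degree. Summing $\sum_{k\ge c_0L}\operatorname{poly}(L)(Kp)^k$ gives a bound of $Ce^{-cL}$ once $p<\pcth:=1/(2K)$. The polynomial factor is absorbed by shrinking $c$.

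\textbf{Step 4 (conditioning).} Conditioning on $\{\tau_\sigma\ge r\}$ is handled by restarting. Ahn's bound is stated for one encoded step starting from any state whose decoded class is trivial and whose residual error is sparse. The fault set of the current window is independent of $\mathcal F_{r-1}$, so the conditional bound follows from the unconditional one applied uniformly over admissible pasts.

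I expect this conditioning step to be the main obstacle. On $\{\tau_\sigma\ge r\}$ the configuration need not be sparse, so Ahn's sparsity invariant must be carried along. I would first try to state the geometric lemma uniformly over all histories, bounding $\Pr[\tau_\sigma=r,\ \tau_\sigma\ge r]$ directly. I would then divide by $\Pr[\tau_\sigma\ge r]\ge 1/2$, which holds for $r\le e^{cL/2}$. Beyond that range the claim becomes vacuous in applications, or one adjusts the constants.
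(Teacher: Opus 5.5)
Your Steps 1--3 coincide with everything the paper does for this proposition. The paper gives no argument beyond three things: citing Ahn's explanation-tree estimate \cite[\S 7.2.3]{ahn_extending_2004}; making exactly your Step 1 observation (the violation set lies inside the i.i.d.\ sets $F_r$, so any prescribed $k$ violating spacetime locations occur with probability at most $p^k$, whatever the adversary does); and taking $\mathsf{Dec}_\sigma$ to be Ahn's recovery decoders.

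The genuine gap is Step 4, which you flag but do not close. It concerns exactly the part of the statement that goes beyond an unconditional bound. Your first route fails for two reasons. First, the event $\{\tau_\sigma\ge r\}$ gives no sparsity of $\mathbf e_{r-1}^{(\sigma)}$. Second, the only fault set independent of the history up to $\mathbf e_{r-1}^{(\sigma)}$ is the single fresh set $F_{r-1}$. A size-$c_0L$ explanation of a failure at round $r$ generally reaches back many rounds, into precisely the history the conditioning constrains. A hazard bound uniform over admissible pasts is also not available. Under the stochastic-adversarial model every configuration is reachable with positive probability, including surviving ones adjacent to the decoder's decision boundary, and from those the next-round failure probability need not be exponentially small in $L$.

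Your fallback bounds $\Pr[\tau_\sigma=r]\le Ce^{-cL}$ and divides by $\Pr[\tau_\sigma\ge r]\ge 1/2$. This proves the conditional bound only for $r\lesssim e^{cL/2}$. No adjustment of constants reaches every $r\ge1$, because $\Pr[\tau_\sigma\ge r]$ may itself become exponentially small. The missing ingredient is a genuinely conditional estimate, e.g.\ $\Pr[\mathbf e_{r-1}^{(\sigma)}\text{ non-sparse},\ \tau_\sigma\ge r]\le Ce^{-cL}\Pr[\tau_\sigma\ge r]$ for all $r$. An unconditional union bound over explanation trees does not supply this. The paper does not prove it either; it reads Ahn's estimate as already being a per-round hazard bound conditional on survival.

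That said, the proposition is used only in the proof of Corollary~\ref{cor:toom-classical-memory}. There it enters solely through the products $\Pr[\tau_\sigma=j\mid\tau_\sigma\ge j]\Pr[\tau_\sigma\ge j]=\Pr[\tau_\sigma=j]$. So the time-uniform unconditional bound $\Pr[\tau_\sigma=r]\le Ce^{-cL}$ from the first half of your fallback already gives $\Pr[\tau_\sigma\le m]\le mCe^{-cL}$ for every $m$, and hence Theorem~\ref{thm:toom-sync-formal}, with no division by $\Pr[\tau_\sigma\ge r]$ at all. For this, root the explanation trees of Steps 2--3 at round $r$ and do not confine them to a window of bounded length; the count of $\operatorname{poly}(L)K^k$ trees of size $k$ is uniform in $r$ either way. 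The cleanest repair is therefore to state and prove the proposition in this unconditional form. As stated, with the conditioning and for all $r\ge1$, your argument does not establish it.
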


The important point is that Proposition~\ref{prop:ahn-suppression} is a
one-step estimate, rather than a bound on the logical error probability
that is uniform for arbitrarily long times. On a finite system, encoded
failures eventually accumulate. What Ahn's estimate provides is an
exponentially small hazard for such a failure at each round, as long as
the encoded information has survived the preceding rounds.
Consequently, over $m$ rounds the probability that an encoded failure
has occurred at least once grows at most linearly in $m$, giving a
bound of order $me^{-cL}$.

To connect this encoded-failure event to the retention hypothesis of
Corollary~\ref{cor:sync-spca-memory}, recall that readout is
determined by the logical-error indicator $b_\sigma$, which reads only
the component of the decoded class selected by $\mu_\sigma$. A label
error therefore requires a homologically nontrivial decoded string. The
converse fails, since $\mu_\sigma$ vanishes on a five-dimensional
subspace of the logical classes.

\begin{lemma}[A label error implies a homologically nontrivial decoded string]
\label{lem:toom-homological-label}
For $\sigma\in\{X,Z\}$ and every $\mathbf x\in\Omega_\sigma$,
\begin{equation}
    b_\sigma(\mathbf x)=1
    \quad\Longrightarrow\quad
    \bigl[\mathbf x\oplus\mathsf{Dec}_\sigma(H_\sigma\mathbf x)\bigr]_\sigma
    \ne[\mathbf 0]_\sigma .
\end{equation}
\end{lemma}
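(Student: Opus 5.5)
We argue by contraposition, using only the definitions of $b_\sigma$ and $\mu_\sigma$. Suppose the decoded string $\mathbf c:=\mathbf x\oplus\mathsf{Dec}_\sigma(H_\sigma\mathbf x)$ is homologically trivial, i.e.\ $[\mathbf c]_\sigma=[\mathbf 0]_\sigma$. We will conclude that $b_\sigma(\mathbf x)=0$, which is the contrapositive of the statement.

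First we note that $\mathbf c\in\mathcal C_\sigma$. This holds by the decoder property \eqref{eq:decoder-properties}, as observed after Definition~\ref{def:readout-decoder}, so the class $[\mathbf c]_\sigma$ is well defined. Next, by Definition~\ref{def:decoded-logical-value},
\[
b_\sigma(\mathbf x)=\mu_\sigma\bigl([\mathbf c]_\sigma\bigr)=\mu_\sigma\bigl([\mathbf 0]_\sigma\bigr).
\]
Since $\mu_\sigma$ is a linear map on the quotient $\mathcal C_\sigma/\operatorname{row}(H_{\bar\sigma})$, it sends the zero class to $0$. Here $\bar\sigma$ denotes the opposite sector, and the linearity is Definition~\ref{def:logical-values} together with Claim~\ref{clm:logical-values-valid}. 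Concretely, for the choice of Definition~\ref{def:logical-coordinate-identification}, any representative $\mathbf h\in\operatorname{row}(H_{\bar\sigma})$ of the zero class satisfies $\ell_\sigma\cdot\mathbf h=0$, because $H_{\bar\sigma}\ell_\sigma=\mathbf 0$. Hence $b_\sigma(\mathbf x)=0$, contradicting $b_\sigma(\mathbf x)=1$.

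For the 4D toric code, the identification of $[\cdot]_\sigma$ with homology (respectively cohomology) classes from Subsection~\ref{subsec:toom-geometry} translates ``$[\mathbf c]_\sigma\neq[\mathbf 0]_\sigma$'' into ``homologically nontrivial,'' matching the wording of the lemma.

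There is no real obstacle here. The only point needing care is that the zero coset really is $\operatorname{row}(H_{\bar\sigma})$, which is exactly the kernel of the quotient map. With that in place, well-definedness and linearity of $\mu_\sigma$ carry the whole argument.
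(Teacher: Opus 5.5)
Your proposal is correct and is essentially the paper's own proof: both observe that $b_\sigma(\mathbf x)=\mu_\sigma\bigl([\mathbf x\oplus\mathsf{Dec}_\sigma(H_\sigma\mathbf x)]_\sigma\bigr)$ and that a linear $\mu_\sigma$ sends the zero class to $0$. Your extra remarks (the decoded string lies in $\mathcal C_\sigma$, and $\ell_\sigma\cdot\mathbf h=0$ for $\mathbf h\in\operatorname{row}(H_{\bar\sigma})$) only spell out that well-definedness explicitly.
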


\begin{proof}
By Definition~\ref{def:decoded-logical-value},
$b_\sigma(\mathbf x)=\mu_\sigma\bigl([\mathbf x\oplus\mathsf{Dec}_\sigma(H_\sigma\mathbf x)]_\sigma\bigr)$,
and $\mu_\sigma([\mathbf 0]_\sigma)=0$ since $\mu_\sigma$ is linear
(Definition~\ref{def:logical-values}).
\end{proof}

Combining this observation with Ahn's exponentially small per-round
failure probability gives the retention hypothesis of 
Corollary~\ref{cor:sync-spca-memory}.

\begin{corollary}[Classical memory of the 4D Toom SPCAs]
\label{cor:toom-classical-memory}
For every $p<\pcth$ and every $m\ge0$, the 4D Toom SPCAs $\Psi_X$ and
$\Psi_Z$ at fault rate $p$, started at $\mathbf 0$, $Cme^{-cL}$-remember
$b_X$ and $b_Z$, respectively, through round $m$
(Definition~\ref{def:classical-memory}).
\end{corollary}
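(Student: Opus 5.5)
We combine Lemma~\ref{lem:toom-homological-label} with the per-round hazard bound of Proposition~\ref{prop:ahn-suppression} and a union bound over rounds. Fix $\sigma\in\{X,Z\}$, $p<\pcth$, and an arbitrary run $(\mathbf e_r^{(\sigma)})_{r\ge0}$ of $\Psi_\sigma$ at fault rate $p$ started at $\mathbf 0$. Since $\mathbf e_0^{(\sigma)}=\mathbf 0$ and $b_\sigma(\mathbf 0)=0$, Definition~\ref{def:classical-memory} requires
\[
\Pr\bigl[b_\sigma(\mathbf e_r^{(\sigma)})=1\bigr]\le Cme^{-cL}\qquad\text{for every } 0\le r\le m.
\]
The case $r=0$ is trivial, so we assume $r\ge1$.

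\textbf{Step 1 (reduction to a homological event).} By Lemma~\ref{lem:toom-homological-label}, the event $\{b_\sigma(\mathbf e_r^{(\sigma)})=1\}$ is contained in the event that the decoded string at round $r$ is homologically nontrivial. That event in turn implies $\tau_\sigma\le r$, by the definition of $\tau_\sigma$ in Proposition~\ref{prop:ahn-suppression} as the first such round. Hence
\[
\Pr\bigl[b_\sigma(\mathbf e_r^{(\sigma)})=1\bigr]\le\Pr[\tau_\sigma\le r]\le\Pr[\tau_\sigma\le m].
\]

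\textbf{Step 2 (telescoping the hazard).} Write
\[
\Pr[\tau_\sigma\le m]=\sum_{k=1}^{m}\Pr[\tau_\sigma=k]
=\sum_{k=1}^m \Pr[\tau_\sigma=k\mid\tau_\sigma\ge k]\,\Pr[\tau_\sigma\ge k].
\]
Bound each $\Pr[\tau_\sigma\ge k]$ by $1$ and each conditional hazard by $Ce^{-cL}$ using Proposition~\ref{prop:ahn-suppression}. This gives
\[
\Pr[\tau_\sigma\le m]\le Cme^{-cL},
\]
uniformly in $r\le m$. The bound holds for every run, since the proposition is stated for arbitrary runs (arbitrary admissible $\zeta_r$), so the SPCA $Cme^{-cL}$-remembers $b_\sigma$ in the sense of Definition~\ref{def:syndrome-pca-classical}. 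The terms with $\Pr[\tau_\sigma\ge k]=0$ contribute nothing, so conditioning on a null event causes no issue.

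\textbf{Main obstacle.} The arithmetic above is routine. The only delicate point is the input already packaged into Proposition~\ref{prop:ahn-suppression}: Ahn's explanation-tree estimate must apply to the stochastic-adversarial fault model, with the readout decoders chosen as Ahn's recovery decoders. We take this as given, since the paper states it before the corollary. A secondary check is that the constants $\pcth,c,C$ can be chosen uniformly for both sectors. This follows by taking the minimum threshold and the worse constants over $\sigma\in\{X,Z\}$, or directly from self-duality.
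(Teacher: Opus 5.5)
Your proposal is correct and matches the paper's proof essentially step for step. Lemma~\ref{lem:toom-homological-label} and the definition of $\tau_\sigma$ place the label-error event inside $\{\tau_\sigma\le r\}$, and telescoping the per-round hazard bound of Proposition~\ref{prop:ahn-suppression} then gives $\Pr[\tau_\sigma\le r]\le rCe^{-cL}\le mCe^{-cL}$; the paper sums only up to $r$ rather than $m$, which makes no difference.
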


\begin{proof}
Fix $\sigma\in\{X,Z\}$ and $1\le r\le m$. By
Proposition~\ref{prop:ahn-suppression},
\begin{equation}
    \Pr[\tau_\sigma\le r]
    =
    \sum_{j=1}^{r}
    \Pr[\tau_\sigma=j\mid \tau_\sigma\ge j]\,\Pr[\tau_\sigma\ge j]
    \le
    rCe^{-cL}
    \le
    mCe^{-cL}.
\end{equation}
Lemma~\ref{lem:toom-homological-label}, together with
$b_\sigma(\mathbf e_0^{(\sigma)})=b_\sigma(\mathbf 0)=0$ and the
definition of $\tau_\sigma$, gives
\begin{equation}
    \bigl\{b_\sigma(\mathbf e_r^{(\sigma)})\ne b_\sigma(\mathbf e_0^{(\sigma)})\bigr\}
    \subseteq
    \Bigl\{\bigl[\mathbf e_r^{(\sigma)}\oplus
    \mathsf{Dec}_\sigma\bigl(H_\sigma\mathbf e_r^{(\sigma)}\bigr)\bigr]_\sigma
    \ne[\mathbf 0]_\sigma\Bigr\}
    \subseteq
    \{\tau_\sigma\le r\},
\end{equation}
so
$\Pr\bigl[b_\sigma(\mathbf e_r^{(\sigma)})\ne b_\sigma(\mathbf e_0^{(\sigma)})\bigr]\le mCe^{-cL}$.
For $r=0$ this probability vanishes.
\end{proof}

Lemma~\ref{lem:toom-pca-validation} establishes the SPCA structure,
Proposition~\ref{prop:ahn-suppression} supplies the quantitative
threshold, and Lemma~\ref{lem:toom-homological-label} translates that
threshold into the retention hypothesis of 
Corollary~\ref{cor:sync-spca-memory}. Invoking that corollary proves
 Theorem~\ref{thm:toom-sync-formal}.

\begin{proof}[Proof of Theorem~\ref{thm:toom-sync-formal}]
We verify the hypotheses of Corollary~\ref{cor:sync-spca-memory}.
By Lemma~\ref{lem:toom-pca-validation}, \(\Psi_X\) and \(\Psi_Z\) are
SPCAs for the component codes of
Definition~\ref{def:4d-toric-geometry}. Their locality in the
interaction graph of the 4D toric code is as follows. The predicate
\(a_i^{(X)}\) reads the checks of the two edges
\(e^+_{i,\mu},e^+_{i,\nu}\) of \(i\); the check of an edge \(e\) of \(i\)
is supported on the plaquettes containing \(e\), each of which is acted
on together with \(i\) by \(S_e^X\), and hence lies in \(N_1(i)\).
Dually, \(a_i^{(Z)}\) reads the checks of two three-cubes containing
\(i\), whose supports lie in \(N_1(i)\) by the same argument with
\(S_c^Z\). Thus \(R_X=R_Z=1\), and the channel is
\(R'=2\)-local. The Toom predicates commute with lattice translations,
and hence so do their lifts.

Fix a designated logical qubit, with logical values \(\mu_X,\mu_Z\) as
in Definition~\ref{def:logical-coordinate-identification} and the
readout decoders chosen above. Since
Lemma~\ref{lem:toom-homological-label} uses only the linearity of
\(\mu_\sigma\), Corollary~\ref{cor:toom-classical-memory} applies to
this choice: for every \(p<\pcth\), both SPCAs \(Cme^{-cL}\)-remember
their logical-error indicators through round \(m\).

Set \(\eta_c:=\pcth/2\). If \(\theta_{P,i}\le\physrate<\eta_c\) for all
\(P\) and \(i\), then the sector flip probabilities satisfy
\(\theta_i^{(\sigma)}\le2\physrate=:p<\pcth\). 
Corollary~\ref{cor:sync-spca-memory} then gives
\begin{equation}
    \min_{0\le r\le m}
    F_{\min}\!\left[\mathcal Q_{\mathrm{Toom}*}^r\right]
    \ge 1-2Cme^{-cL}.
\end{equation}

\end{proof}

\begin{remark}[Autonomous version]
\label{rem:toom-lindbladian}
The same classical input also satisfies the hypotheses of 
Theorem~\ref{thm:periodic-formal}: the plaquette adjacency graph has
degree bounded independently of \(L\), and every check has weight
\(6\). That theorem therefore gives, in addition, a geometrically
local, translationally invariant periodic Lindbladian, with period
independent of \(L\), such that below a nonzero Pauli-noise rate the
encoded system is a quantum \(2Cme^{-cL}\)-memory at cycle boundaries
through cycle \(m\).
\end{remark}

\section{Periodic Lindbladians from synchronous SPCAs}
\label{sec:periodic}

Corollary~\ref{cor:sync-spca-memory} shows how a pair of local SPCAs with classical memory can be lifted directly to a synchronous quantum memory.  This is already useful, since most classical cellular-automaton memory constructions and threshold results are formulated in the synchronous setting.  From the quantum perspective, however, continuous-time Lindbladian dynamics is the more natural model for local many-body evolution.  The next question is therefore whether the same synchronous classical memories can also be used to construct quantum memories in continuous time.

In the synchronous construction, one application of the quantum channel implements one complete SPCA round: all local predicates are evaluated on the same input configuration, and the corresponding corrections are applied within a single synchronous update.  Passing from this dynamics to continuous time is not automatic.  Under a local Lindbladian, the elementary operations occur at random times rather than as one simultaneous update, and replacing the synchronous SPCA rule by local syndrome-conditioned jumps generally changes the resulting classical dynamics.  In particular, an operation occurring at one site may modify the syndrome before another site acts.  Thus a classical memory theorem for the synchronous SPCA cannot simply be applied to the continuous-time sector process.

Theorem~\ref{thm:periodic-formal} gives a partial resolution of this problem.  For any pair of local SPCAs satisfying the memory assumptions of Corollary~\ref{cor:sync-spca-memory}, we construct a geometrically local periodic Lindbladian whose quantum memory inherits the same classical retention bounds at the boundaries of its driving cycles.  The construction is still periodically driven, and hence retains a global temporal structure; it does not solve the stronger problem of realizing the SPCA memory with a time-independent Lindbladian.  It does, however, replace the ideal synchronous local update by continuous-time stochastic local dynamics, including noise on both the data and the auxiliary degrees of freedom.

The central statement needed for this bridge is classical.  Each sector of the constructed Lindbladian induces a continuous-time Markov process.  Denoting its data component by \(\xi_t^{(\sigma)}\), we sample it at the cycle boundaries,
\[
    \hat\xi_r^{(\sigma)}
    :=
    \xi_{r\Delta}^{(\sigma)} .
\]
We prove that the sampled process \((\hat\xi_r^{(\sigma)})_{r\ge0}\) is itself a run of the original SPCA \(\Psi_\sigma\), at an effective fault rate \(p_{\mathrm{eff}}\) that can be made arbitrarily small by choosing the implementation parameters appropriately and taking the physical noise sufficiently weak.  Once this is established, the SPCA memory threshold applies directly to the sampled sector process, and the quantum--classical reduction converts the resulting classical retention bound into the fidelity bound of Theorem~\ref{thm:periodic-formal}.

To formulate this construction, we first extend the syndrome-based reduction to Lindbladians acting on the data together with local classical registers.  We then construct the periodic dynamics and prove the sampled-process statement above.

The argument proceeds in four steps.

First, we extend the continuous-time quantum--classical reduction to
syndrome-based Lindbladians acting on the data together with local classical
registers (Subsection~\ref{subsec:register-extension}).  Each sector of the
augmented dynamics is then represented by a classical continuous-time Markov
process whose state contains both the data error configuration and the register
configuration.

Second, we construct the periodic Lindbladian from the two SPCA update rules
(Subsection~\ref{subsec:quantum-lift}).  The construction uses local auxiliary
registers to implement the synchronous correction rule through continuous-time
local dynamics, and we verify that the resulting Lindbladian belongs to the
register-extended syndrome-based class introduced in the first step.

Third, we prove the central classical statement of the construction
(Subsection~\ref{subsec:periodic-equivalence}).  For each sector, we sample the data
component of the induced continuous-time process at the correction-cycle
boundaries and show that the resulting discrete-time process is a valid run of
the original SPCA at an effective fault rate.  The possible deviations from the
ideal SPCA update are confined to rare, locally correlated failure sets, which
can be dominated cycle by cycle by the fresh independent fault sets allowed by
the SPCA noise model.

Finally, we combine this identification with the assumed memory property of the
input SPCAs (Subsection~\ref{subsubsec:memory-bound}).  Since the sampled sector
processes are admissible SPCA runs below the fault threshold, they inherit the
corresponding classical logical-retention bounds.  The classical-register
extension of the quantum--classical reduction then converts these bounds into
the quantum fidelity estimate of 
Theorem~\ref{thm:periodic-formal}.

\subsection{Syndrome-based Lindbladians with classical registers}
\label{subsec:register-extension}

The construction of this section acts on the data qubits together with
auxiliary qubits holding classical bits, and some of its jumps flip a
data qubit and a register at once. Such generators lie outside
Definition~\ref{def:syndrome-based-lindbladian}, Proposition~\ref{prop:reduction-async-formal}
extends to them without a new idea. We state the extension here; the
proofs are given in Appendix~\ref{app:registers}.

For each sector $\sigma\in\{X,Z\}$, let $\mathsf{Reg}_\sigma$ be a finite set of
register qubits, with $\mathsf{Reg}_X\cap\mathsf{Reg}_Z=\varnothing$, and let
$\mathcal H_{\mathrm{ext}}:=\mathcal H\otimes
(\mathbb C^2)^{\otimes(\mathsf{Reg}_X\sqcup\mathsf{Reg}_Z)}$. We call the $n$ qubits of $\mathcal{H}$ the data qubits to separate from the new register qubits. 
The registers
are read in the computational basis, so we adjoin their $Z$ operators to
the stabilizer algebras,
\begin{equation}
    \mathcal A^{\mathrm{ext}}
    :=
    \mathrm{Alg}\bigl(
        \mathcal A\cup\{Z_j:j\in\mathsf{Reg}_X\sqcup\mathsf{Reg}_Z\}
    \bigr),
    \qquad
    \mathcal A_\sigma^{\mathrm{ext}}
    :=
    \mathrm{Alg}\bigl(
        \mathcal A_\sigma\cup\{Z_j:j\in\mathsf{Reg}_\sigma\}
    \bigr),
\end{equation}
and enlarge the sector configuration spaces to
\begin{equation}
    \Omega_\sigma^{\mathrm{ext}}
    :=
    \Omega_\sigma\times\Omega_{\mathsf{Reg}_\sigma},
    \qquad
    \Omega_{\mathsf{Reg}_\sigma}
    :=
    \{0,1\}^{\mathsf{Reg}_\sigma}.
\end{equation}

Since $\mathcal A_\sigma$ is Abelian and the $Z_j$ act on distinct register
qubits, $\mathcal A_\sigma^{\mathrm{ext}}$ is Abelian. Its joint eigenspaces are
labeled by a data syndrome $s\in\Sigma_\sigma$ and a register configuration
$\mathbf c\in\Omega_{\mathsf{Reg}_\sigma}$: the eigenspace $(s,\mathbf c)$ consists of
the states with data syndrome $s$ and the registers of $\mathsf{Reg}_\sigma$ in
configuration $\mathbf c$, with the registers of the other sector arbitrary. Its
projector is $\Pi_s^\sigma\otimes|\mathbf c\rangle\langle\mathbf c|$, where
$|\mathbf c\rangle\langle\mathbf c|$ acts on $\mathsf{Reg}_\sigma$ and the identity
on the other registers. Every $F\in\mathcal A_\sigma^{\mathrm{ext}}$ acts on each
of them as a scalar, which we denote $F(s,\mathbf c)$, extending
Equation~\eqref{def:eigenvalue-functions}.

For a single-qubit data Pauli $P_i$, $P\in\{I,X,Y,Z\}$, and a Pauli string
$\Theta$ on the registers, write $\Theta=i^{k}X^{\mathbf v}Z^{\mathbf w}$ with
$k\in\{0,1,2,3\}$ and
$\mathbf v=(\mathbf v_X,\mathbf v_Z),\ \mathbf w=(\mathbf w_X,\mathbf w_Z)\in
\Omega_{\mathsf{Reg}_X}\times\Omega_{\mathsf{Reg}_Z}$, in the notation of
Definition~\ref{def:pauli-readout-corrections}. Only $\mathbf v$ changes the
register configuration: $X^{\mathbf v}$ flips the registers $j$ with $v_j=1$,
while $Z^{\mathbf w}$ only multiplies computational basis states by signs. The
\emph{$\sigma$-flip} of $P_i\Theta$ is the pair
$(\mathbf u,\mathbf v_\sigma)\in\Omega_\sigma\times\Omega_{\mathsf{Reg}_\sigma}$, where
$\mathbf u=\hat e_i$ if $P\in\{Z,Y\}$ for $\sigma=X$, or if $P\in\{X,Y\}$ for
$\sigma=Z$, and $\mathbf u=\mathbf 0$ otherwise.

\begin{definition}[Register-extended syndrome-based Lindbladian]
\label{def:classical-register-extension}
A Lindbladian on $\mathcal H_{\mathrm{ext}}$ is a
\emph{register-extended syndrome-based Lindbladian} if it has the form of
Definition~\ref{def:syndrome-based-lindbladian} with the following changes.
\begin{enumerate}[label=(\roman*)]
    \item The Hamiltonian $H(t)$ is a Hermitian element of
    $\mathcal A^{\mathrm{ext}}$.
        \item Each jump $P_if_{P,i,\alpha}(t)$ becomes
    $P_i\Theta_{P,i,\alpha}f_{P,i,\alpha}(t)$, where $\Theta_{P,i,\alpha}$ is a
    Pauli string on the registers and the syndrome function
    $f_{P,i,\alpha}(t)$ lies in $\mathcal A^{\mathrm{ext}}$. For $P=I$ we
    have $P_i=\mathbb I$, so the jump acts on no data qubit and the label
    $i$ serves only to index it; for instance, each register operation of
    Subsection~\ref{subsec:quantum-lift} is labeled by the site it
    serves.
    \item The flip-rate operators are indexed by the sector flip,
    \begin{equation}\label{eq:extended-flip-rates}
        \Gamma^{(\sigma)}_{\mathbf u,\mathbf v}(t)
        :=
        \sum_{(P,i,\alpha)\in J^{(\sigma)}_{\mathbf u,\mathbf v}}
        f_{P,i,\alpha}(t)^\dagger f_{P,i,\alpha}(t),
        \qquad
        (\mathbf u,\mathbf v)\ne(\mathbf 0,\mathbf 0),
    \end{equation}
        where $J^{(\sigma)}_{\mathbf u,\mathbf v}$ is the set of indices
    $(P,i,\alpha)$ of jumps whose $\sigma$-flip is $(\mathbf u,\mathbf v)$.
    A jump with index $(P,i,\alpha)$ has $\mathbf u=\hat e_i$ if $P$ flips
    qubit $i$ in sector $\sigma$, and $\mathbf u=\mathbf 0$ otherwise; in
    particular, a jump with $P=I$ has $\mathbf u=\mathbf 0$ whatever its
    label $i$.
    \item The flip-rate operators satisfy the \emph{extended rate condition}
    \begin{equation}\label{eq:extended-rate-condition}
        \Gamma^{(\sigma)}_{\mathbf u,\mathbf v}(t)\in\mathcal A_\sigma^{\mathrm{ext}}
        \qquad\text{for all }(\mathbf u,\mathbf v)\ne(\mathbf 0,\mathbf 0)
        \text{ and }t\ge0.
    \end{equation}
\end{enumerate}
\end{definition}

Its Heisenberg propagator $\mathcal W_{\mathrm{ext}}^*(t,0)$ is defined as in
Equation~\eqref{eq:q-propgator}.  Definition~\ref{def:syndrome-based-lindbladian} is the case
$\mathsf{Reg}_X=\mathsf{Reg}_Z=\varnothing$: every $\Theta$ is trivial, the only
flips are $(\hat{e}_i,\mathbf 0)$, and $\Gamma^{(\sigma)}_{\hat{e}_i,\mathbf 0}=\Gamma^{(\sigma)}_i$.

Pauli noise on the registers, extending Definition~\ref{def:pauli-noise}, is an
instance with scalar syndrome functions (since it is oblivious to the syndrome):
\begin{equation}\label{eq:extended-noise}
    \mathcal L_{\mathrm{noise}}^{\mathrm{ext}}
    :=
    \mathcal L_{\mathrm{noise}}
    +
    \sum_{j\in\mathsf{Reg}_X\sqcup\mathsf{Reg}_Z}
    \sum_{P\in\{X,Y,Z\}}
    \gamma_{P,j}^{\mathrm{reg}}\,
    \mathcal D[P_j],
    \qquad
    \gamma_{P,j}^{\mathrm{reg}}\ge0.
\end{equation}
The right term is the register noise. A register qubit $j\in\mathsf{Reg}_\sigma$ is flipped by $X_j$ and
$Y_j$ (regardless of $\sigma$ because the register qubits are classical), so the register noise term contributes $\nu_j\mathbb I$ to
$\Gamma^{(\sigma)}_{\mathbf 0,e_j}$, with
\begin{equation}\label{eq:register-noise-rates}
    \nu_j
    :=
    \gamma_{X,j}^{\mathrm{reg}}
    +
    \gamma_{Y,j}^{\mathrm{reg}}.
\end{equation}

The induced \emph{sector processes} are the continuous-time Markov chains on
$\Omega_\sigma^{\mathrm{ext}}$ , initialized at
$(\mathbf 0,\mathbf 0)$ with transitions
$(\mathbf e,\mathbf c)\mapsto(\mathbf e\oplus\mathbf u,\mathbf c\oplus\mathbf v)$
at rates
\begin{equation}\label{eq:classical-register-rate}
    r^{(\sigma)}_{\mathbf u,\mathbf v}(\mathbf e,\mathbf c,t)
    :=
    \Gamma^{(\sigma)}_{\mathbf u,\mathbf v}(t)(H_\sigma\mathbf e,\mathbf c)
    \ge0,
\end{equation}
which are well defined by the extended rate condition~\eqref{eq:extended-rate-condition}. We denote them by $(\xi_t^{(\sigma)},\mathbf c_t^{(\sigma)})$.
Their backward generators $G_\sigma^{\mathrm{ext}}(t)$ and propagators
$M_\sigma^{\mathrm{ext}}(t,t')$ are defined as in
Definition~\ref{def:ctmc-generators}, with the sum over single-bit flips
replaced by the sum over sector flips $(\mathbf u,\mathbf v)$. The registers are
initialized in the state $|\mathbf 0\rangle$, independently of the encoded data,
and we write
\begin{equation}
    \rho_{\mathrm{code}}^{\mathrm{ext}}
    :=
    \rho_{\mathrm{code}}\otimes|\mathbf 0\rangle\langle\mathbf 0|_{\mathrm{reg}},
    \qquad
    \langle O_1,O_2\rangle_{\mathrm{code}}^{\mathrm{ext}}
    :=
    \operatorname{Tr}\!\left(
        \rho_{\mathrm{code}}^{\mathrm{ext}}O_1^\dagger O_2
    \right),
\end{equation}
where operators on $\mathcal H$ are extended by the identity on the registers.
We fix $|\mathbf 0\rangle$ for definiteness; for any computational-basis state
$|\mathbf c\rangle$, $\mathbf c=(\mathbf c_X,\mathbf c_Z)$, the same statements
hold with the sector-$\sigma$ process started at $(\mathbf 0,\mathbf c_\sigma)$.

\begin{proposition}[Classical-register extension of Proposition~\ref{prop:reduction-async-formal}]
\label{prop:classical-register-reduction}
Let
\(\mathcal L_{\mathrm{ext}}(t)\)
be a register-extended syndrome-based Lindbladian with
Heisenberg propagator
\(\mathcal W_{\mathrm{ext}}^*(t,0)\).
For each sector \(\sigma\in\{X,Z\}\), let
\[
    \bigl(\xi_t^{(\sigma)},\mathbf c_t^{(\sigma)}\bigr)_{t\ge0}
\]
be the induced sector process. Then, for every \(\sigma\in\{X,Z\}\) and every \(t\ge0\),
\begin{align}
    \left\langle
        T_\sigma,
        \mathcal W_{\mathrm{ext}}^*(t,0)(T_\sigma)
    \right\rangle_{\mathrm{code}}^{\mathrm{ext}}
    &=
    \mathbb E\!\left[
        (-1)^{
            b_\sigma(\xi_t^{(\sigma)})
            \oplus
            b_\sigma(\xi_0^{(\sigma)})
        }
    \right]
    \nonumber\\
    &=
    1
    -
    2\Pr\!\left[
        b_\sigma(\xi_t^{(\sigma)})
        \ne
        b_\sigma(\xi_0^{(\sigma)})
    \right].
\label{eq:classical-register-autocorrelation}
\end{align}
\end{proposition}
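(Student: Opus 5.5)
We follow the proof of Proposition~\ref{prop:reduction-async-formal} step by step, replacing every algebraic object by its register-extended version. We work with the $X$ sector; the $Z$ sector follows by exchanging $X$ and $Z$ (and $\mathsf{Reg}_X$ with $\mathsf{Reg}_Z$).

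First, we define the extended sector subspace
\[
\mathcal V_X^{\mathrm{ext}}:=\mathrm{span}\{\bar X F: F\in\mathcal A_X^{\mathrm{ext}}\}.
\]
It contains $T_X=\bar X\varphi_X$ by Lemma~\ref{lem:algebraic-form}, since $\mathcal A_X\subseteq\mathcal A_X^{\mathrm{ext}}$. Every element of this subspace commutes with $\mathcal A^{\mathrm{ext}}$. The reason is that $\bar X$ acts only on data qubits and commutes with $\mathcal A$, and $\mathcal A^{\mathrm{ext}}$ is Abelian. Hence the Hamiltonian term vanishes on $\mathcal V_X^{\mathrm{ext}}$, exactly as in Lemma~\ref{lem:hamiltonian-invariance}. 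The analogue of Lemma~\ref{lem:jump-action} reads
\[
\mathcal D[P_i\Theta f]^*(O)=f^\dagger f\,\bigl((P_i\Theta)^\dagger O (P_i\Theta)-O\bigr),\qquad O\in\mathcal V_X^{\mathrm{ext}}.
\]
Conjugating $O=\bar X F$ by $P_i\Theta$ acts on each factor separately:
\begin{itemize}
\item The data Pauli acts trivially unless it contains $Z_i$, i.e.\ $P\in\{Z,Y\}$. In that case it gives the sign $(-1)^{(\ell_X)_i}$ and shifts the data syndrome by $H_X\hat e_i$.
\item The $X^{\mathbf v}$ part of $\Theta$ flips the $\mathsf{Reg}_X$ register $Z$'s in $F$ according to $\mathbf v_X$. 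The parts $X^{\mathbf v_Z}$ and $Z^{\mathbf w}$ act on operators that commute with everything in $O$, so they drop out.
\end{itemize}
Thus the conjugated operator depends on the jump only through its $X$-flip $(\mathbf u,\mathbf v_X)$. Grouping jumps by $X$-flip gives
\[
\mathcal L_{\mathrm{ext}}^*(t)(O)=\sum_{(\mathbf u,\mathbf v)\neq(\mathbf 0,\mathbf 0)}\Gamma^{(X)}_{\mathbf u,\mathbf v}(t)\bigl(C_{\mathbf u,\mathbf v}(O)-O\bigr),
\]
where $C_{\mathbf u,\mathbf v}$ is conjugation by $Z^{\mathbf u}X^{\mathbf v}$. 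The flip $(\mathbf 0,\mathbf 0)$ contributes zero. By the extended rate condition, each coefficient lies in $\mathcal A_X^{\mathrm{ext}}$, which gives closure of $\mathcal V_X^{\mathrm{ext}}$.

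Second, we define
\[
\Phi_X^{\mathrm{ext}}(\bar X F)(\mathbf e,\mathbf c):=x_{\mathrm{bare}}(\mathbf e)\,F(H_X\mathbf e,\mathbf c).
\]
To make this well defined, note that $\bar X F=0$ forces $F=0$, as $\bar X$ is invertible; so $\Phi_X^{\mathrm{ext}}$ is well defined and linear. We then redo Lemmas~\ref{lem:stab-alg-factor} and~\ref{lem:pauli-conjugation-translation}:
\begin{itemize}
\item Multiplication by $A\in\mathcal A_X^{\mathrm{ext}}$ becomes pointwise multiplication by $A(H_X\mathbf e,\mathbf c)$.
\item Conjugation by $Z^{\mathbf u}X^{\mathbf v}$ becomes the shift $(\mathbf e,\mathbf c)\mapsto(\mathbf e\oplus\mathbf u,\mathbf c\oplus\mathbf v)$.
\end{itemize}
The second item holds because $X_j Z_j X_j=-Z_j$, so the eigenvalue function of $X^{\mathbf v}FX^{\mathbf v}$ at $\mathbf c$ equals that of $F$ at $\mathbf c\oplus\mathbf v$.

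Combining these, $\Phi_X^{\mathrm{ext}}\mathcal L_{\mathrm{ext}}^*(t)=G_X^{\mathrm{ext}}(t)\Phi_X^{\mathrm{ext}}$ on $\mathcal V_X^{\mathrm{ext}}$. The ODE uniqueness argument of Proposition~\ref{prop:classical-repr} then upgrades this to
\[
\Phi_X^{\mathrm{ext}}\circ\mathcal W_{\mathrm{ext}}^*(t,0)=M_X^{\mathrm{ext}}(t,0)\circ\Phi_X^{\mathrm{ext}}.
\]
The backward equation has the same form, with the finite state space $\Omega_X^{\mathrm{ext}}$.

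Third, we need the analogue of Lemma~\ref{lem:vacuum-evaluation}: for $O=\bar X F$,
\[
\langle T_X,O\rangle^{\mathrm{ext}}_{\mathrm{code}}=\Phi_X^{\mathrm{ext}}(O)(\mathbf 0,\mathbf 0).
\]
The state $\rho^{\mathrm{ext}}_{\mathrm{code}}$ is supported on data syndrome $\mathbf 0$ and register configuration $\mathbf 0$. Hence $F$ reduces to the scalar $F(\mathbf 0,\mathbf 0)$ on it, and $\varphi_X$ reduces to $1$. What remains is
\[
\operatorname{Tr}(\rho_{\mathrm{code}}\bar X^\dagger\bar X)\,F(\mathbf 0,\mathbf 0)=F(\mathbf 0,\mathbf 0).
\]
This also matches $x_{\mathrm{bare}}(\mathbf 0)=1$.

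Finally, we assemble the pieces as in the proof of Proposition~\ref{prop:reduction-async-formal}. The function $\Phi_X^{\mathrm{ext}}(T_X)(\mathbf e,\mathbf c)=\Lambda_X(\mathbf e)=(-1)^{b_X(\mathbf e)}$ does not depend on $\mathbf c$, by Lemma~\ref{lem:logical-label-phase}. Evaluating the evolved function at $(\mathbf 0,\mathbf 0)$ therefore gives $\mathbb E[(-1)^{b_X(\xi_t^{(X)})}]$, which is the claimed identity.

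The main obstacle we expect is the bookkeeping in the extended Lemma~\ref{lem:jump-action}. There we must check that a single jump touching registers of both sectors, or carrying $Z$-type register factors, still contributes only through its $\sigma$-flip. We must also check that the phase $i^k$ and any anticommutation signs cancel, because the conjugation appears as $L^\dagger O L$.
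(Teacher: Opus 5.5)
Your proposal is correct and follows essentially the same route as the paper's proof. Both arguments extend $\mathcal V_X$ and $\Phi_X$ to carry the register configuration, show that each jump acts as $f^\dagger f$ times conjugation by $Z^{\mathbf u}X^{\mathbf v}$, group jumps by sector flip using the extended rate condition, intertwine with $G_X^{\mathrm{ext}}$ and upgrade via ODE uniqueness, and evaluate at $(\mathbf 0,\mathbf 0)$ using that $\Phi_X^{\mathrm{ext}}(T_X)=\Lambda_X$ does not depend on $\mathbf c$. Your differences are minor refinements: writing the conjugation as $L^\dagger O L$ covers non-Hermitian register strings, and you check explicitly that $\Phi_X^{\mathrm{ext}}$ is well defined, whereas the paper cites Lemma~\ref{lem:normalizing-jump-repr} for the shift.
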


\begin{corollary}[Classical-register extension of Theorem~\ref{thm:main-async-formal}]
\label{cor:classical-register-memory}
Let
\(\mathcal L_{\mathrm{ext}}(t)\)
be a register-extended syndrome-based Lindbladian, and let
\[
    \bigl(\xi_t^{(X)},\mathbf c_t^{(X)}\bigr)_{t\ge0},
    \qquad
    \bigl(\xi_t^{(Z)},\mathbf c_t^{(Z)}\bigr)_{t\ge0}
\]
be its induced sector processes.
Let
\(\mathcal E_L^{\mathrm{ext}}(t)\)
be the effective logical channel obtained by preparing the registers
in \(|\mathbf0\rangle\) at encoding and tracing them out at readout.

If the data components of the \(X\)- and \(Z\)-sector processes
respectively \(q_X\)- and \(q_Z\)-remember their logical-error
indicators \(b_X\) and \(b_Z\) up to time \(t\), then the encoded
system is a quantum \((q_X+q_Z)\)-memory up to time \(t\):
\begin{equation}\label{eq:extended-fidelity-bound}
    \inf_{0\le t'\le t}
    F_{\min}^{\mathrm{ext}}(t')
    \ge
    1-q_X-q_Z .
\end{equation}

Conversely, if the encoded system is a quantum
\(\memerr\)-memory up to time \(t\), then each data component
\((\xi_{t'}^{(\sigma)})_{0\le t'\le t}\)
\(\memerr\)-remembers its corresponding logical-error indicator:
\begin{equation}
    \sup_{0\le t'\le t}
    \Pr\!\left[
        b_\sigma(\xi_{t'}^{(\sigma)})
        \ne
        b_\sigma(\xi_0^{(\sigma)})
    \right]
    \le
    \memerr,
    \qquad
    \sigma\in\{X,Z\}.
\end{equation}
\end{corollary}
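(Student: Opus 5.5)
The statement to prove is Corollary~\ref{cor:classical-register-memory}. I would follow the proof of Theorem~\ref{thm:main-async-formal} verbatim, replacing each ingredient by its register-extended version. There are three inputs to assemble. The first is the autocorrelation identity of Proposition~\ref{prop:classical-register-reduction}, which we may assume. The second is a logical-Pauli covariance statement for $\mathcal W_{\mathrm{ext}}^*(t,0)$. The third is the Pauli-channel structure of the effective logical channel $\mathcal E_L^{\mathrm{ext}}(t)$, together with the fidelity sandwich of Proposition~\ref{prop:minimal-fidelity}, applied on the extended Hilbert space.

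\textbf{Step 1 (covariance).} I would show, as in Lemma~\ref{lem:async-pauli-covariance}, that
\[
\mathcal W_{\mathrm{ext}}^*(t,0)(\bar P O\bar P)=\bar P\,\mathcal W_{\mathrm{ext}}^*(t,0)(O)\,\bar P
\qquad\text{for } \bar P\in\{\bar X,\bar Z\} \text{ extended by the identity on the registers.}
\]
Each jump has the form $P_i\Theta_{P,i,\alpha}f_{P,i,\alpha}(t)$ with $f\in\mathcal A^{\mathrm{ext}}$. The operator $\bar P$ commutes with $\mathcal A$ and acts trivially on the registers, so it commutes with $f$ and with $\Theta$. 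It also commutes or anticommutes with $P_i$. Hence $\bar P L\bar P=\pm L$, and the adjoint dissipator is invariant under this sign. The Hamiltonian lies in $\mathcal A^{\mathrm{ext}}$ and therefore commutes with $\bar P$. Covariance of the generator then passes to the propagator by uniqueness of the ODE solution.

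\textbf{Step 2 (Pauli channel on the extended space).} The plan is to apply Lemma~\ref{lem:pauli-diagonality} to the extended setting. I would define extended encoding and decoding maps: encode by $\mathcal E_{\mathrm{enc}}\otimes|\mathbf 0\rangle\langle\mathbf 0|_{\mathrm{reg}}$, and decode by tracing out the registers before $\mathcal D_{\mathrm{dec}}$. The map $\mathcal W_*:=\operatorname{Tr}_{\mathrm{reg}}\circ\mathcal W_{\mathrm{ext},*}(t,0)\circ(\,\cdot\otimes|\mathbf 0\rangle\langle\mathbf 0|)$ is a CPTP map on $\mathcal H$ alone. Its Heisenberg adjoint is $O\mapsto\langle\mathbf 0|\mathcal W_{\mathrm{ext}}^*(t,0)(O\otimes\mathbb I)|\mathbf 0\rangle_{\mathrm{reg}}$.

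This is the step I expect to need care. By Step 1, and because $\bar P$ acts trivially on the registers, the reduced map $\mathcal W_*$ satisfies the covariance hypothesis of Lemma~\ref{lem:pauli-diagonality}. So the lemma applies directly to $\mathcal W_*$ on $\mathcal H$, giving $\mathcal E_L^{\mathrm{ext}}(t)=\mathcal E_L[\mathcal W_*]$. It also gives
\[
\lambda_\sigma=\langle T_\sigma,\mathcal W^*(T_\sigma)\rangle_{\mathrm{code}}=\langle T_\sigma,\mathcal W_{\mathrm{ext}}^*(t,0)(T_\sigma)\rangle_{\mathrm{code}}^{\mathrm{ext}},
\]
where the last equality is simply the definition of $\rho_{\mathrm{code}}^{\mathrm{ext}}$. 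I must check that the reduction commutes with conjugation by $\bar P$; this holds because $\bar P\otimes\mathbb I$ commutes with the register projector $|\mathbf 0\rangle\langle\mathbf 0|$.

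\textbf{Step 3 (conclusion).} Fix $t'\in[0,t]$. Proposition~\ref{prop:minimal-fidelity} bounds $F_{\min}^{\mathrm{ext}}(t')$ between $\tfrac12(\lambda_X+\lambda_Z)$ and $\tfrac12(1+\min_\sigma\lambda_\sigma)$. By Proposition~\ref{prop:classical-register-reduction},
\[
\lambda_\sigma=1-2\Pr\bigl[b_\sigma(\xi_{t'}^{(\sigma)})\ne b_\sigma(\xi_0^{(\sigma)})\bigr].
\]
For the forward direction, the $q_\sigma$-remembering hypothesis gives $\lambda_\sigma\ge1-2q_\sigma$, and the lower bound yields $F_{\min}^{\mathrm{ext}}(t')\ge1-q_X-q_Z$. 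For the converse, $1-\memerr\le\tfrac12(1+\lambda_\sigma)$ gives $\Pr[\cdot]\le\memerr$. Taking the infimum or supremum over $t'$ completes the proof.
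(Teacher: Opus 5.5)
Your proposal is correct and follows the paper's own proof essentially step for step. The paper likewise passes to the reduced data map obtained by preparing the registers in $|\mathbf 0\rangle$ and tracing them out, checks logical-Pauli covariance via $\bar P J\bar P=\pm J$ for every jump operator $J$ (using that $\bar P$ acts only on the data), applies Lemma~\ref{lem:pauli-diagonality} to identify $\lambda_\sigma$ with the extended autocorrelation, and concludes with Proposition~\ref{prop:minimal-fidelity} together with Equation~\eqref{eq:classical-register-autocorrelation}.
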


Both proofs are straightforward extensions of the non-extended version, and are deferred to Appendix~\ref{app:registers}

\subsection{Quantum algebraic lift: constructing the Lindbladian from the PCA}\label{subsec:quantum-lift}

The construction proceeds in two parts. We first augment the Hilbert
space with local ancilla registers that store the syndrome decision and
track whether the correction has been applied. We then define a periodic
Lindbladian cycling through reset, read, and execute stages at rate
$\kappa$.

\begin{definition}[SPCA registers and predicate projectors]
\label{def:predicate-augmented-space}
For each sector $\sigma\in\{X,Z\}$ and each data qubit
$i\in\{1,\ldots,n\}$, introduce a decision register $D_i^\sigma$ and an
execution register $E_i^\sigma$, and set
\begin{equation}
    \mathsf{Reg}_\sigma
    :=
    \{D_i^\sigma , E_i^\sigma :i=1 ,\ldots,n\}.
\end{equation}
We use the classical-register extension of Subsection~\ref{subsec:register-extension} with these
register sets.  All registers are initialized at time $t=0$ in the
all-zero computational-basis state, independently of the encoded
data, and are traced out at readout.  No reinitialization is assumed
at intermediate cycle boundaries.
For each site $i$, let $\Pi_i^X\in\mathcal A_X$ be the syndrome
projector whose eigenvalue function is the $X$-sector PCA predicate,
\begin{equation}
    \Pi_i^X(s)=a_i^{(X)}(s),
    \qquad s\in\Sigma_X,
\end{equation}
and define $\Pi_i^Z\in\mathcal A_Z$ analogously by
\begin{equation}
    \Pi_i^Z(s)=a_i^{(Z)}(s),
    \qquad s\in\Sigma_Z.
\end{equation}
Since every check read by $a_i^{(\sigma)}$ is supported in $N_R(i)$, $\Pi_i^\sigma$ is supported in $N_R(i)$.
\end{definition}

The execution flag $E_i^\sigma$ is essential: without it, a
site where the predicate fires would receive repeated corrections within
the execute stage; the flag blocks the execute jump once $E_i^\sigma=1$,
ensuring exactly one correction per cycle.

The two sectors are given disjoint ancilla registers so that the $X$- and
$Z$-corrections act on separate flags. This is the structural separation required by the extension
(Definition~\ref{def:classical-register-extension}), and it is what lets the 
reduction be applied to each sector independently; the explicit check of the
remaining hypotheses is given after Definition~\ref{def:periodic-schedule}.

With the augmented space in place, we define the generators that drive
the correction cycle.

\subsection{One correction cycle}
\label{subsec:cycle-mechanism}

A synchronous update rule cannot be implemented directly by local jumps.
Jumps act one at a time, at random times, so each jump is triggered by a
syndrome that the preceding jumps may already have changed. A site would
then respond to the partially updated configuration rather than to the
syndrome at the start of the round. The periodic construction avoids this
by separating reading the syndrome from acting on the data, with three
stages per cycle. In the Reset stage, the registers of every site are
cleared. In the Read stage, every site only reads: it evaluates its local
predicate \(a_i^{(\sigma)}\) on the syndrome and, if the predicate is
satisfied, records this by setting its decision register
\(D_i^\sigma=1\). No jump of this stage acts on the data, so in the
absence of noise neither the qubits nor the syndrome change during Read,
and every site reads the same syndrome. In the Execute stage, every site
with \(D_i^\sigma=1\) gets the chance to apply its correction, and its
execution flag \(E_i^\sigma\) ensures that the correction is applied only
once. Since Execute is conditioned on the registers written during Read,
and not on the current syndrome, corrections applied early in the stage
do not affect the decisions of other sites. In the ideal case, made
precise in Lemma~\ref{lem:deterministic-correctness} below, one cycle
therefore implements exactly one step of the synchronous update rule on
the data. See Figure~\ref{fig:stage-timeline} for illustration.

To control these events it is convenient to separate the randomness of
the dynamics from its state. The rate of every correction jump is either
\(\kappa\) or \(0\), depending only on the stage and on the current
configuration, and the rate of every noise jump is a constant.
Instead of thinking of each jump as occurring at a state-dependent rate,
we may therefore attach to each operation an independent Poisson clock
that ticks at a fixed rate, regardless of the state of the system, and
let a tick act only if the operation is allowed at that moment. A Read
clock of site \(i\), for example, keeps ticking at rate \(\kappa\)
throughout the cycle; a tick during the Read stage sets
\(D_i^\sigma=1\) if the predicate is satisfied and \(D_i^\sigma=0\), and
any other tick does nothing. The resulting process is the same induced
sector process (Claim~\ref{clm:clock-representation}), but the events
that matter now become events about individual clocks. Whether an
operation of site \(i\) gets a chance to act during its stage, or
whether noise strikes a given coordinate during the cycle, is decided by
the clocks alone, independently of what the process is doing.
Probabilities of such events are elementary: a clock of rate \(\kappa\)
has no tick during a stage of length \(\tau\) with probability
\(e^{-\kappa\tau}\). Clocks of different sites, and the same clock on
disjoint time intervals, are independent.

\begin{claim}[Clock representation]
\label{clm:clock-representation}
Fix a sector \(\sigma\). Give each site \(i\) four Poisson clocks of rate
\(\kappa\), its \emph{correction clocks}, one for each of its Reset-\(D\),
Reset-\(E\), Read and Execute operations, and give each data coordinate
\(q\) and each register \(q\in\mathsf{Reg}_\sigma\) a \emph{noise clock}
of rate \(\eta_q^{(\sigma)}\) or \(\nu_q\), respectively
(Equations~\eqref{eq:noise-flip-rates} and~\eqref{eq:register-noise-rates}),
all clocks being independent. Let a tick of a correction clock perform
its operation if the time lies in the operation's stage and the
operation's rate in Equation~\eqref{eq:classical-register-rate} is
nonzero at the current configuration, and have no effect otherwise; let
a tick of a noise clock flip its coordinate. Started at
\((\mathbf 0,\mathbf 0)\), the resulting process has the law of the
induced sector process \((\xi_t^{(\sigma)},\mathbf c_t^{(\sigma)})\).
\end{claim}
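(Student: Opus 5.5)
The plan is to verify that the clock construction produces a continuous-time Markov chain on \(\Omega_\sigma^{\mathrm{ext}}\) whose transition rates coincide with the rates \(r^{(\sigma)}_{\mathbf u,\mathbf v}(\mathbf e,\mathbf c,t)\) of Equation~\eqref{eq:classical-register-rate}. Given this, uniqueness of the law of a CTMC with bounded, piecewise-continuous time-dependent rates and a given initial condition (the standard construction cited in Definition~\ref{def:ctmc-generators}) identifies the two processes. The argument is the usual thinning or uniformization of Poisson processes. It has two parts: first show the clock process is Markov, then compute its instantaneous jump rates.

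\textbf{Markov property.} The clocked process is a deterministic, measurable functional of the initial state and the family of independent Poisson processes. The update at each tick depends only on three things: the current configuration, the current time (through the stage of the periodic schedule), and which clock ticked. Since the superposition of all clocks has a finite total rate \(4n\kappa+\sum_q\eta_q^{(\sigma)}+\sum_q\nu_q\), almost surely there are finitely many ticks on bounded intervals. No two clocks tick simultaneously, so the process is well defined. The independent-increment property of Poisson processes then implies that, conditional on the path up to time \(t\), the future clocks on \((t,\infty)\) are fresh independent Poisson processes. The future evolution is therefore a function of \((\xi_t,\mathbf c_t)\), the time \(t\), and independent randomness, which is the time-inhomogeneous Markov property.

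\textbf{Rates.} Over \([t,t+h]\), the probability that two or more ticks occur is \(O(h^2)\). The probability that a given clock of rate \(\lambda\) ticks exactly once is \(\lambda h+o(h)\). The infinitesimal rate of the transition \((\mathbf e,\mathbf c)\mapsto(\mathbf e\oplus\mathbf u,\mathbf c\oplus\mathbf v)\) is therefore the sum, over clocks whose effective tick produces that flip, of their rates. This sum must be compared with \(\Gamma^{(\sigma)}_{\mathbf u,\mathbf v}(t)(H_\sigma\mathbf e,\mathbf c)\), and the comparison splits by source.

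For noise, the flip \(\hat e_q\) (respectively \(e_q\) on a register) receives exactly \(\eta_q^{(\sigma)}\) (respectively \(\nu_q\)). These are the scalar contributions recorded in Equations~\eqref{eq:noise-flip-rates} and~\eqref{eq:register-noise-rates}.

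For each correction operation, its syndrome function is \(\sqrt{\kappa}\) times a projector in \(\mathcal A^{\mathrm{ext}}_\sigma\), multiplied by the stage indicator. Its contribution to \(\Gamma\) is thus \(\kappa\,\mathbf 1[\text{stage}]\,\mathbf 1[\text{enabled at }(H_\sigma\mathbf e,\mathbf c)]\), which is precisely the rate at which the gated clock produces that flip. One subtlety is the Read operation. It is implemented so that a tick writes the predicate value into \(D_i^\sigma\). The effective flip is then \(e_{D_i}\) exactly when \(D_i^\sigma\ne a_i^{(\sigma)}(H_\sigma\mathbf e)\), and there is no change otherwise. This matches the rule in the claim that a tick acts only if the corresponding rate is nonzero.

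Summing over all clocks, and using the additivity of \(\Gamma\) over jumps with the same \(\sigma\)-flip (Definition~\ref{def:classical-register-extension}(iii)), gives equality of the generators at every time. Ticks producing the null flip \((\mathbf 0,\mathbf 0)\) leave the state unchanged and do not affect the generator.

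\textbf{Main obstacle.} The main work is the bookkeeping in the rate comparison. One must check, for each operation of the schedule, that the Lindblad jump's \(\sigma\)-flip and its enabling projector exactly match the gating rule. This includes the execute jump, which flips data and \(E_i^\sigma\) jointly. The schedule itself is given in Definition~\ref{def:periodic-schedule}, so this is a finite case check. A second, minor point is the time-inhomogeneity at stage boundaries. Because the rates are piecewise constant with finitely many discontinuities per period, forward-equation uniqueness holds on each piece, and the pieces are glued together by the Markov property.
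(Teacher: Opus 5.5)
Your route is the paper's: superposition and thinning of independent Poisson clocks, identification of the resulting transition rates with \eqref{eq:classical-register-rate}, and uniqueness of a Markov process given its rates and initial state. Your treatment of the Markov property and of the stage boundaries spells out what the paper delegates to the citation.

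One step of your case check is wrong as written: the Read operation. The Read jump of Definition~\ref{def:stage-generators} is \(X_{D_i^\sigma}|0\rangle\langle 0|_{D_i^\sigma}\Pi_i^\sigma\). It only moves \(D_i^\sigma\) from \(0\) to \(1\), when the predicate holds; it never writes the predicate value into the register. During the Read stage, take a configuration with \(D_i^\sigma=1\). There the total rate of the flip \((\mathbf 0,e_{D_i^\sigma})\) is \(\kappa\cdot 0+\nu_{D_i^\sigma}=\nu_{D_i^\sigma}\), whatever the predicate. The Reset-\(D\) jump is switched off in this stage, so it contributes nothing. Under your ``write the predicate'' rule, a Read tick with \(D_i^\sigma=1\) and \(a_i^{(\sigma)}(H_\sigma\mathbf e)=0\) would reset the register at an extra rate \(\kappa\). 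The two generators would then differ, so your sentence that this ``matches the rule in the claim'' is false.

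The fix is immediate: apply your own general computation, in which each correction contribution is \(\kappa\) times the stage indicator times the eigenvalue of the enabling projector. This gives a Read tick that acts exactly when \(D_i^\sigma=0\) and \(a_i^{(\sigma)}=1\), which is what the claim's gating rule prescribes. The discrepancy would be invisible in Lemma~\ref{lem:deterministic-correctness}, where both registers are \(0\) at the start of Read. It does matter for the equality in law asserted here.
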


\begin{proof}
Each correction jump of Definition~\ref{def:stage-generators} has rate
\(\kappa\) times the indicator that its stage is active and that the
current configuration allows it, and each noise jump has a constant
rate. By superposition and thinning of independent Poisson
processes~\cite[\S2.6]{norris_markov_1997}, the clock process is a
Markov process whose rate for each transition is the sum of the rates
of the clocks that can perform it, which is
Equation~\eqref{eq:classical-register-rate}. A Markov process is
determined by its transition rates and its initial state.
\end{proof}

We now explicitly define the periodic Lindbladian.

\begin{definition}[Three generators]\label{def:stage-generators}

Let $\kappa > 0$ be the coupling rate.
The correction cycle is implemented by three time-independent
generators, applied according to the schedule below. In the $X$-sector:
\begin{enumerate}[label=(\alph*), leftmargin=*]
    \item \textbf{Reset:} Clears all auxiliary registers.
    \begin{equation} \mathcal{L}_{\mathrm{reset}}^{(X)} := \kappa \sum_{i=1}^n \Big( \mathcal{D}\Big[ X_{D_i^X} |1\rangle\langle 1|_{D_i^X} \Big] + \mathcal{D}\Big[ X_{E_i^X} |1\rangle\langle 1|_{E_i^X} \Big] \Big) \end{equation}
    \item \textbf{Read:} Flips the decision bit $D_i^X$ from $0$ to $1$
when the predicate $\Pi_i^X$ is satisfied and the decision register
is unset.
    \begin{equation} \mathcal{L}_{\mathrm{read}}^{(X)} := \kappa \sum_{i=1}^n \mathcal{D}\Big[ X_{D_i^X} |0\rangle\langle 0|_{D_i^X} \Pi_i^X \Big] \end{equation}
    \item \textbf{Execute:} Applies the $Z$-correction and flips the execution flag, ensuring a single application.
    \begin{equation} \mathcal{L}_{\mathrm{exec}}^{(X)} := \kappa \sum_{i=1}^n \mathcal{D}\Big[ Z_i X_{E_i^X} |1\rangle\langle 1|_{D_i^X} |0\rangle\langle 0|_{E_i^X} \Big] \end{equation}

\end{enumerate}
The $Z$-sector generators $\mathcal{L}_{\mathrm{read}}^{(Z)}$, $\mathcal{L}_{\mathrm{exec}}^{(Z)}$, and $\mathcal{L}_{\mathrm{reset}}^{(Z)}$ are defined symmetrically with \(\Pi_i^Z\in\mathcal A_Z\) denoting the quantum projector associated
with the \(Z\)-sector predicate \(a_i^{(Z)}(s)=1\). The combined generators acting on both sectors simultaneously are 
$\mathcal{L}_{\mathrm{read}} := \mathcal{L}_{\mathrm{read}}^{(X)} + 
\mathcal{L}_{\mathrm{read}}^{(Z)}$, and analogously for 
$\mathcal{L}_{\mathrm{exec}}$ and $\mathcal{L}_{\mathrm{reset}}$.

\end{definition}

Every jump of $\mathcal L_{\mathrm{PCA}}(t)$ acts on at most one data qubit
and on the registers of one site $i$, and its syndrome function depends only on syndrome
bits of checks within distance $R$ of $i$. Every jump is therefore supported in \(N_R(i)\) together with the
registers of \(i\), so \(\mathcal L_{\mathrm{PCA}}(t)\) is \(R\)-local in
the sense of Definition~\ref{def:local-lindbladian}, with \(R\)
independent of \(L\).

\begin{definition}[Periodic execution schedule]
\label{def:periodic-schedule}
Let $\tau>0$ be the duration of each of the three stages and let
$\Delta=3\tau$ be one complete correction cycle.  The correction
generator is
\begin{equation}
\mathcal L_{\mathrm c}(t)
=
\begin{cases}
\mathcal L_{\mathrm{reset}},
&
t\in[m\Delta,m\Delta+\tau),
\\[1mm]
\mathcal L_{\mathrm{read}},
&
t\in[m\Delta+\tau,m\Delta+2\tau),
\\[1mm]
\mathcal L_{\mathrm{exec}},
&
t\in[m\Delta+2\tau,(m+1)\Delta),
\end{cases}
\end{equation}
for $m\in\mathbb N_0$.
We use the local Pauli-noise model of
Equation~\eqref{eq:extended-noise}, with
\begin{equation}
    0\le\gamma_{P,i}\le\physrate
    \qquad\text{and}\qquad
    0\le\gamma_{P,j}^{\mathrm{reg}}\le\physrate
\end{equation}
for every $P\in\{X,Y,Z\}$, every data qubit $i$, and every register
qubit $j\in\mathsf{Reg}_X\sqcup\mathsf{Reg}_Z$.
We denote this instance of the extended noise generator by
$\mathcal L_{\mathrm{noise}}^{\mathrm{aug}}$.
Together,
\begin{equation}
    \mathcal L_{\mathrm{PCA}}(t)
    :=
    \mathcal L_{\mathrm c}(t)
    +
    \mathcal L_{\mathrm{noise}}^{\mathrm{aug}}.
\end{equation}
\end{definition}

We write $\mathcal W_{\mathrm{PCA}}^*(t,t')$ for the Heisenberg
propagator generated by $\mathcal L_{\mathrm{PCA}}(t)$, as in
Equation~\eqref{eq:q-propgator}.  Over one complete cycle it
factors into the three stage propagators,
\begin{equation}
    \mathcal W_{\mathrm{PCA}}^*\bigl((m+1)\Delta,m\Delta\bigr)
    =
    \mathcal W_{\mathrm{reset}}^*(\tau)
    \circ
    \mathcal W_{\mathrm{read}}^*(\tau)
    \circ
    \mathcal W_{\mathrm{exec}}^*(\tau),
\end{equation}
independently of $m$, where the physical Pauli-noise generator acts
throughout all three stages.
The timing is illustrated in Figure~\ref{fig:stage-timeline}.

\begin{figure}[h]
\centering
\begin{tikzpicture}[>=Stealth, font=\small]

\pgfmathsetmacro{\pB}{3.6}
\pgfmathsetmacro{\pC}{7.2}
\pgfmathsetmacro{\pD}{10.8}
\pgfmathsetmacro{\hi}{0.45}
\pgfmathsetmacro{\yD}{-1.5}
\pgfmathsetmacro{\yE}{-3.1}
\pgfmathsetmacro{\tax}{-4.4}
\pgfmathsetmacro{\fD}{1.0}
\pgfmathsetmacro{\rD}{4.7}
\pgfmathsetmacro{\fE}{1.6}
\pgfmathsetmacro{\rE}{8.3}

\fill[gray!12] (0,-0.5)   rectangle (\pB,-4.1);
\fill[gray!22] (\pB,-0.5) rectangle (\pC,-4.1);
\fill[gray!12] (\pC,-0.5) rectangle (\pD,-4.1);
\draw[gray!55,thin,dashed] (\pB,-0.5)--(\pB,-4.1);
\draw[gray!55,thin,dashed] (\pC,-0.5)--(\pC,-4.1);

\node[font=\small\bfseries] at ({0.5*\pB},       0){Reset};
\node[font=\small\bfseries] at ({0.5*(\pB+\pC)}, 0){Read};
\node[font=\small\bfseries] at ({0.5*(\pC+\pD)}, 0){Execute};

\draw[<->,thin] (0,   0.38)--(\pB, 0.38) node[midway,above,font=\scriptsize]{$\tau$};
\draw[<->,thin] (\pB, 0.38)--(\pC, 0.38) node[midway,above,font=\scriptsize]{$\tau$};
\draw[<->,thin] (\pC, 0.38)--(\pD, 0.38) node[midway,above,font=\scriptsize]{$\tau$};

\node[anchor=east,font=\small]           at (-0.15,{\yD+0.22}){$D_i^\sigma$};
\node[anchor=east,font=\scriptsize,gray] at (-0.15,{\yD-0.18}){(decision)};
\draw[thick]
  (0,{\yD+\hi})--(\fD,{\yD+\hi})--(\fD,\yD)--(\rD,\yD)--(\rD,{\yD+\hi})--(\pD,{\yD+\hi});
\node[font=\scriptsize] at ({0.5*\fD},       {\yD+\hi+0.14}){$1$};
\node[font=\scriptsize] at ({0.5*(\fD+\rD)}, {\yD-0.17}){$0$};
\node[font=\scriptsize] at ({0.5*(\rD+\pD)}, {\yD+\hi+0.14}){$1$};
\node[font=\scriptsize,text=black!55,align=center]
  at ({\rD+0.9},{\yD+\hi+0.50}){predicate\\fires};
\draw[->,thin,gray!70]
  ({\rD+0.6},{\yD+\hi+0.34})--({\rD+0.05},{\yD+\hi+0.07});

\node[anchor=east,font=\small]           at (-0.15,{\yE+0.22}){$E_i^\sigma$};
\node[anchor=east,font=\scriptsize,gray] at (-0.15,{\yE-0.18}){(execution)};
\draw[thick]
  (0,{\yE+\hi})--(\fE,{\yE+\hi})--(\fE,\yE)--(\rE,\yE)--(\rE,{\yE+\hi})--(\pD,{\yE+\hi});
\node[font=\scriptsize] at ({0.5*\fE},       {\yE+\hi+0.14}){$1$};
\node[font=\scriptsize] at ({0.5*(\fE+\rE)}, {\yE-0.17}){$0$};
\node[font=\scriptsize] at ({0.5*(\rE+\pD)}, {\yE+\hi+0.14}){$1$};
\node[font=\scriptsize,text=black!55,align=center]
  at ({\rE-0.3},{\yE-0.58}){$Z_i$ applied\\to data qubit};
\draw[->,thin,gray!70]
  ({\rE-0.2},{\yE-0.40})--({\rE-0.05},{\yE-0.06});

\draw[->] (0,\tax)--({\pD+0.3},\tax) node[right]{$t$};
\draw (0,\tax)--(0,{\tax-0.12});
\draw (\pB,\tax)--(\pB,{\tax-0.12});
\draw (\pC,\tax)--(\pC,{\tax-0.12});
\draw (\pD,\tax)--(\pD,{\tax-0.12});
\node[below,font=\scriptsize] at (0,   {\tax-0.12}){$m\Delta$};
\node[below,font=\scriptsize] at (\pB, {\tax-0.12}){$m\Delta+\tau$};
\node[below,font=\scriptsize] at (\pC, {\tax-0.12}){$m\Delta+2\tau$};
\node[below,font=\scriptsize] at (\pD, {\tax-0.12}){$(m+1)\Delta$};

\end{tikzpicture}

\caption{One correction cycle of the periodic Lindbladian at a site
where $a_i^{(\sigma)}=1$ (correction required), shown for a site that
was also corrected in the preceding cycle, so that both registers enter
the cycle set.
\textbf{Reset:} both registers return to $0$ when their reset clocks
fire, whatever state the preceding cycle left them in.
\textbf{Read:} $D_i^\sigma$ latches the local predicate ($0\!\to\!1$)
when its Poisson clock fires within the Read stage.
\textbf{Execute:} once $D_i^\sigma=1$, the Pauli correction ($Z_i$ in
the $X$-sector) is applied to the data qubit and $E_i^\sigma$ flips
$0\!\to\!1$, ensuring a single application.  The registers remain set
at the cycle boundary and are cleared by the next Reset stage.  If an
   operation does not occur within its stage, the update at that site may
   deviate from the SPCA rule (Lemma~\ref{lem:deterministic-correctness}).}
\label{fig:stage-timeline}
\end{figure}

\begin{claim}\label{clm:pca-register-extended}
The Lindbladian $\mathcal L_{\mathrm{PCA}}(t)$ of
Definition~\ref{def:periodic-schedule} is a register-extended syndrome-based
Lindbladian (Definition~\ref{def:classical-register-extension}) with the
register sets $\mathsf{Reg}_\sigma$ of
Definition~\ref{def:predicate-augmented-space}.
\end{claim}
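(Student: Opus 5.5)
The verification is a term-by-term check of the four conditions of Definition~\ref{def:classical-register-extension}. The generator $\mathcal L_{\mathrm{PCA}}(t)$ is a sum of the stage generators of Definition~\ref{def:stage-generators}, switched on by the schedule of Definition~\ref{def:periodic-schedule}, plus the extended noise $\mathcal L_{\mathrm{noise}}^{\mathrm{aug}}$. Since the register-extended class is closed under sums (the flip-rate operators add, exactly as for Definition~\ref{def:syndrome-based-lindbladian}), it suffices to check each jump family separately, and to do so at each fixed $t$.

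There is no Hamiltonian, so condition (i) holds with $H=0$. The noise jumps are scalar multiples of Paulis, so by the discussion after Equation~\eqref{eq:extended-noise} they satisfy (ii)--(iv) with scalar rates. For the correction jumps, we rewrite each jump operator in the form $P_i\Theta f$ with $f\in\mathcal A^{\mathrm{ext}}$. The rank-one register projectors $|0\rangle\langle0|_j=(\mathbb I+Z_j)/2$ and $|1\rangle\langle1|_j=(\mathbb I-Z_j)/2$ lie in $\mathcal A_\sigma^{\mathrm{ext}}$ for $j\in\mathsf{Reg}_\sigma$, and $\Pi_i^\sigma\in\mathcal A_\sigma\subseteq\mathcal A_\sigma^{\mathrm{ext}}$. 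This gives the following identifications in the $X$ sector:
\begin{itemize}
\item the Reset jumps are $I\cdot X_{D_i^X}\cdot|1\rangle\langle1|_{D_i^X}$ and $I\cdot X_{E_i^X}\cdot|1\rangle\langle1|_{E_i^X}$;
\item the Read jump is $I\cdot X_{D_i^X}\cdot(|0\rangle\langle0|_{D_i^X}\Pi_i^X)$;
\item the Execute jump is $Z_i\cdot X_{E_i^X}\cdot(|1\rangle\langle1|_{D_i^X}|0\rangle\langle0|_{E_i^X})$.
\end{itemize}
In each case the syndrome function lies in $\mathcal A_X^{\mathrm{ext}}$ and $\Theta$ is a single register $X$. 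The $Z$ sector is symmetric, with $X_i$ replacing $Z_i$ and $\Pi_i^Z\in\mathcal A_Z$. This settles condition (ii), including the time dependence, since the stage indicator is just a scalar prefactor.

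The substantive step is the extended rate condition (iv). The concern is that a jump built for one sector also induces a flip in the other sector, and its syndrome function, which lies in the first sector's algebra, need not lie in $\mathcal A_{\sigma'}^{\mathrm{ext}}$ for the other sector $\sigma'$. So for each sector $\sigma$ we list the $\sigma$-flips of every jump.

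Take an $X$-sector correction jump and look at its $Z$-sector flip. Its data Pauli is $I$ or $Z_i$, neither of which flips in the $Z$ sector, so $\mathbf u=\mathbf 0$. Its register string acts only on $\mathsf{Reg}_X$, so its $\mathsf{Reg}_Z$ component is $\mathbf v_Z=\mathbf 0$. Hence its $Z$-flip is $(\mathbf 0,\mathbf 0)$, which is excluded in Equation~\eqref{eq:extended-flip-rates}, and these jumps impose no constraint on the $Z$ sector. This is precisely where the disjointness of $\mathsf{Reg}_X$ and $\mathsf{Reg}_Z$ and the choice of $Z_i$ (not $Y_i$) as the correction are used. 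Symmetrically, $Z$-sector jumps contribute nothing to the $X$ sector.

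Within sector $X$, each jump has a distinct nonzero flip:
\begin{itemize}
\item the two Reset jumps have flips $(\mathbf 0,e_{D_i^X})$ and $(\mathbf 0,e_{E_i^X})$;
\item the Read jump has flip $(\mathbf 0,e_{D_i^X})$;
\item the Execute jump has flip $(\hat e_i,e_{E_i^X})$.
\end{itemize}
Each flip-rate operator $\Gamma^{(X)}_{\mathbf u,\mathbf v}$ is therefore a sum of terms $\kappa\,\mathbf 1_{\mathrm{stage}}(t)f^\dagger f$ with $f\in\mathcal A_X^{\mathrm{ext}}$, plus scalar noise rates. It lies in $\mathcal A_X^{\mathrm{ext}}$ because this algebra is closed under products and adjoints. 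Reset and Read share a flip only at different times, and the noise adds scalars.

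Finally, the noise $Y$-jumps on data enter both sectors only through scalar rates. Register noise on a register $j\in\mathsf{Reg}_{\sigma'}$ has $\sigma$-flip $(\mathbf 0,\mathbf 0)$ for the other sector $\sigma\neq\sigma'$, so it also contributes nothing there. This completes condition (iv) and hence the claim.
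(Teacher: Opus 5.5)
Your proof is correct and follows essentially the same route as the paper: you write each correction jump as $P_i\Theta f$ with $f\in\mathcal A_\sigma^{\mathrm{ext}}$ using $|0\rangle\langle0|=(\mathbb I+Z)/2$, $|1\rangle\langle1|=(\mathbb I-Z)/2$ and $\Pi_i^\sigma\in\mathcal A_\sigma$, you observe that the other sector's correction jumps have trivial flip $(\mathbf 0,\mathbf 0)$ because the register sets are disjoint and the corrections are $Z_i$ or $X_i$ rather than $Y_i$, and you note that the noise enters only through scalar rates. One small slip: you say the flips within the $X$ sector are all distinct, but Reset-$D$ and Read share the flip $(\mathbf 0,e_{D_i^X})$, as your own list shows; this does not matter, because distinctness is never needed when $\mathcal A_X^{\mathrm{ext}}$ is closed under sums.
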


\begin{proof}
 Absorbing $\kappa$ via
$\kappa\mathcal D[L]=\mathcal D[\sqrt\kappa L]$, each correction jump of
Definition~\ref{def:stage-generators} has the form $P_i\Theta f$ required by
Definition~\ref{def:classical-register-extension}: a single-qubit data Pauli
$P_i$, possibly $\mathbb I$, a Pauli string $\Theta$ on the registers, and a
syndrome function $f\in\mathcal A^{\mathrm{ext}}$. In the $X$-sector, Reset-$D$
has $P_i=\mathbb I$, $\Theta=X_{D_i^X}$ and $f=\sqrt\kappa\,|1\rangle\langle1|_{D_i^X}$;
Reset-$E$ has $P_i=\mathbb I$, $\Theta=X_{E_i^X}$ and
$f=\sqrt\kappa\,|1\rangle\langle1|_{E_i^X}$; Read has $P_i=\mathbb I$,
$\Theta=X_{D_i^X}$ and $f=\sqrt\kappa\,|0\rangle\langle0|_{D_i^X}\Pi_i^X$; and
Execute has $P_i=Z_i$, $\Theta=X_{E_i^X}$ and
$f=\sqrt\kappa\,|1\rangle\langle1|_{D_i^X}|0\rangle\langle0|_{E_i^X}$. Since
$|0\rangle\langle0|=(\mathbb I+Z)/2$, $|1\rangle\langle1|=(\mathbb I-Z)/2$ and
$\Pi_i^X\in\mathcal A_X$, these syndrome functions lie in
$\mathcal A_X^{\mathrm{ext}}$. The $Z$-sector correction jumps have
$P_i\in\{\mathbb I,X_i\}$ and $\Theta$ supported on $\mathsf{Reg}_Z$, so their
$X$-sector flip is $(\mathbf 0,\mathbf 0)$ and they enter no
$\Gamma^{(X)}_{\mathbf u,\mathbf v}(t)$. The noise jumps, of the form in
Equation~\eqref{eq:extended-noise}, have scalar syndrome functions. Hence every
$\Gamma^{(X)}_{\mathbf u,\mathbf v}(t)$ is a sum of products $f^\dagger f$ with
$f\in\mathcal A_X^{\mathrm{ext}}$ and of scalars, and the extended rate
condition holds. The $Z$-sector is identical with $X$ and $Z$ exchanged.
\end{proof}

\subsection{Sampled sector processes as SPCA runs}\label{subsec:periodic-equivalence}

By Claim~\ref{clm:pca-register-extended}, the periodic Lindbladian
$\mathcal L_{\mathrm{PCA}}(t)$ is a register-extended syndrome-based
Lindbladian. For each sector $\sigma\in\{X,Z\}$ it therefore induces a
classical Markov process $(\xi_t^{(\sigma)},\mathbf c_t^{(\sigma)})$ on
$\Omega_\sigma^{\mathrm{ext}}$, with the transition rates of
Equation~\eqref{eq:classical-register-rate}, started at $(\mathbf 0,\mathbf 0)$;
here $\mathbf c_t^{(\sigma)}$ collects the registers $D_i^\sigma,E_i^\sigma$, and
we write $D_i^\sigma,E_i^\sigma$ also for their bits. Throughout this
subsection we fix a sector $\sigma$.

Our goal is to show that the data component of this process, sampled at cycle
boundaries, is a trajectory of the SPCA $\Psi_\sigma$ of
Definition~\ref{def:syndrome-pca-classical}: in each cycle it performs the
deterministic update $\mathbf a^{(\sigma)}$ except on a random set of sites,
and this set contains each site independently with a small probability and is
drawn afresh in every cycle. The argument has three steps. First, a site at
which all register operations of a cycle take place and no noise occurs nearby
performs exactly its SPCA update in that cycle
(Lemma~\ref{lem:deterministic-correctness}). Second, the sites where this may
fail form a \emph{failure set}; it is rare, independent of the past, and
correlated only between nearby sites
(Lemma~\ref{lem:local-failure-set}). Third, since the SPCA allows arbitrary
deviations on its fault sites, it suffices to enlarge the failure set to
a set with independent sites; a standard domination theorem does this at a
slightly larger rate (Proposition~\ref{prop:sampled-sector-spca}).

\begin{definition}[Sampled process and deviation]
\label{def:sampled-deviation}
The \emph{sampled data process} at cycle boundaries is
\begin{equation}
    \hat\xi_r^{(\sigma)}:=\xi_{r\Delta}^{(\sigma)},
    \qquad r\in\mathbb N_0,
\end{equation}
and the \emph{deviation} of cycle $r$ from one deterministic SPCA step is
\begin{equation}
    \zeta_r^{(\sigma)}
    :=
    \hat\xi_{r+1}^{(\sigma)}
    \oplus
    \hat\xi_r^{(\sigma)}
    \oplus
    \mathbf a^{(\sigma)}\!\left(H_\sigma\hat\xi_r^{(\sigma)}\right).
    \label{eq:implementation-deviation}
\end{equation}
\end{definition}

Here $\mathbf a^{(\sigma)}$ is the update rule of $\Psi_\sigma$
(Definition~\ref{def:syndrome-pca-classical}), so
$\hat\xi_r^{(\sigma)}\oplus\mathbf a^{(\sigma)}(H_\sigma\hat\xi_r^{(\sigma)})$ is
the configuration that one noiseless SPCA step would produce from
$\hat\xi_r^{(\sigma)}$. The deviation $\zeta_r^{(\sigma)}$ is the difference
between this configuration and the one actually reached at the end of
the cycle; it vanishes at exactly those sites where the cycle performs
the SPCA update. Comparing with
Definition~\ref{def:syndrome-pca-classical}, the sampled process is a
run of $\Psi_\sigma$ at fault rate $p$ precisely when the deviations are
supported in fault sets with the properties required there.

\begin{definition}[Local neighborhoods]
\label{def:local-neighbourhoods}
For a site $i$, let $\mathcal J_\sigma(i)$ be the set of checks $j$ such that
$a_i^{(\sigma)}$ depends on the syndrome bit $s_j$, and let $H_{\sigma,j}$ be
row $j$ of $H_\sigma$. The \emph{data neighborhood} and the
\emph{neighborhood} of $i$ are
\begin{equation}
    \mathcal N_\sigma(i)
    :=
    \{i\}
    \cup
    \bigcup_{j\in\mathcal J_\sigma(i)}\operatorname{supp}(H_{\sigma,j}),
    \qquad
    \widetilde{\mathcal N}_\sigma(i)
    :=
    \mathcal N_\sigma(i)\sqcup\{D_i^\sigma,E_i^\sigma\}.
    \label{eq:implementation-neighborhood}
\end{equation}
The \emph{influence set} of a data coordinate or register $q$ is
$\mathcal I_\sigma(q):=\{i:q\in\widetilde{\mathcal N}_\sigma(i)\}$, and the
\emph{overlap graph} $\mathcal G_\sigma$ is the graph on sites in which $i\ne j$
are adjacent when
$\widetilde{\mathcal N}_\sigma(i)\cap\widetilde{\mathcal N}_\sigma(j)\ne\varnothing$.
\end{definition}

The data neighborhood consists of site $i$ and every data coordinate read by
the predicate $a_i^{(\sigma)}$; the influence set of $q$ consists of the sites
whose update can be affected by a change of $q$. We show that their sizes are bounded.

\begin{lemma}[Locality bound]
\label{lem:locality-bound}
There is a constant $K<\infty$, independent of $L$, such that for both sectors
\begin{equation}
    \max_i\bigl|\widetilde{\mathcal N}_\sigma(i)\bigr|\le K,
    \qquad
    \max_q\bigl|\mathcal I_\sigma(q)\bigr|\le K,
    \label{eq:implementation-locality-bound}
\end{equation}
where $q$ ranges over data coordinates and the registers in
$\mathsf{Reg}_\sigma$. In particular, $\mathcal G_\sigma$ has maximal degree at
most $K^2$.
\end{lemma}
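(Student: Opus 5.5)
The plan is to obtain both bounds from two inputs. The first is the bounded degree of the interaction graph of the CSS code, whose maximum degree we call $d$; it is independent of $L$ by the hypothesis of Theorem~\ref{thm:periodic-formal}. The second is the $R_\sigma$-locality of the SPCA predicates (Definition~\ref{def:syndrome-pca-classical}). Throughout, distances are measured in the interaction graph. The ball $N_r(i)$ then has at most $V_r:=\sum_{k=0}^{r}d^k$ elements, uniformly in $i$ and $L$.

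\textbf{Step 1: the bound on $|\widetilde{\mathcal N}_\sigma(i)|$.} By the syndrome-based update condition, every check $j\in\mathcal J_\sigma(i)$ has $\operatorname{supp}(H_{\sigma,j})\subseteq N_{R_\sigma}(i)$. It follows that $\mathcal N_\sigma(i)\subseteq N_{R_\sigma}(i)$, which gives $|\mathcal N_\sigma(i)|\le V_{R_\sigma}$. Adding the two registers gives $|\widetilde{\mathcal N}_\sigma(i)|\le V_{R_\sigma}+2$.

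\textbf{Step 2: the bound on the influence sets.} We split into two cases according to the type of $q$.
\begin{itemize}[leftmargin=*]
\item If $q$ is a register $D_k^\sigma$ or $E_k^\sigma$, then the registers in $\widetilde{\mathcal N}_\sigma(i)$ are only $D_i^\sigma,E_i^\sigma$, so $\mathcal I_\sigma(q)=\{k\}$.
\item If $q$ is a data coordinate, then $q\in\mathcal N_\sigma(i)\subseteq N_{R_\sigma}(i)$ means $i$ lies within distance $R_\sigma$ of $q$. Since the graph distance is symmetric, $\mathcal I_\sigma(q)\subseteq N_{R_\sigma}(q)$, and hence $|\mathcal I_\sigma(q)|\le V_{R_\sigma}$.
\end{itemize}
Setting $K:=\max_\sigma V_{R_\sigma}+2$ establishes Equation~\eqref{eq:implementation-locality-bound} for both sectors.

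\textbf{Step 3: the degree bound on $\mathcal G_\sigma$.} Suppose $j$ is adjacent to $i$ in $\mathcal G_\sigma$. Then there is some $q\in\widetilde{\mathcal N}_\sigma(i)\cap\widetilde{\mathcal N}_\sigma(j)$, which means $j\in\mathcal I_\sigma(q)$ for some $q\in\widetilde{\mathcal N}_\sigma(i)$. Thus every neighbor of $i$ lies in the set $\bigcup_{q\in\widetilde{\mathcal N}_\sigma(i)}\mathcal I_\sigma(q)$. This is a union of at most $K$ sets, each of size at most $K$, so the degree of $i$ is at most $K^2$.

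\textbf{Main obstacle.} The only delicate point is fixing the metric. The SPCA locality radius $R_\sigma$ is stated in the interaction graph of the component code $\mathcal C_\sigma$, whereas the bounded-degree hypothesis may refer to the CSS interaction graph. As noted in the proof of Corollary~\ref{cor:sync-spca-memory}, CSS distances are at most component-code distances, so $N_{R_\sigma}$ in the component graph is contained in $N_{R_\sigma}$ in the CSS graph. With this inclusion, the volume bound from the bounded-degree hypothesis suffices in either reading.
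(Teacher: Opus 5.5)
Your proof is correct and follows the paper's argument essentially step for step. The shared steps are the ball-volume bound $\sum_{k\le R} d^k$ from bounded degree, the inclusion $\mathcal N_\sigma(i)\subseteq N_R(i)$, and symmetry of the graph distance giving $\mathcal I_\sigma(q)\subseteq N_R(q)$ for data coordinates, together with singleton influence sets for registers and counting the neighbors of $i$ in $\mathcal G_\sigma$ through shared coordinates of $\widetilde{\mathcal N}_\sigma(i)$. Your remark distinguishing component-code from CSS interaction-graph distances is correct bookkeeping that the paper leaves implicit.
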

\begin{proof}
Let $d_{\max}$ be the maximal degree of the interaction graph, which is
independent of $L$, and let $K_0:=\sum_{k=0}^{R}d_{\max}^k$, so that
$|N_R(q)|\le K_0$ for every data coordinate $q$. Every check read by
$a_i^{(\sigma)}$ is supported in $N_R(i)$, hence
$\mathcal N_\sigma(i)\subseteq N_R(i)$ and
$|\widetilde{\mathcal N}_\sigma(i)|\le K_0+2$. If a data coordinate $q$
lies in $\mathcal N_\sigma(i)$, then $d(i,q)\le R$, so
$\mathcal I_\sigma(q)\subseteq N_R(q)$ and $|\mathcal I_\sigma(q)|\le K_0$;
for a register $q\in\{D_i^\sigma,E_i^\sigma\}$,
$\mathcal I_\sigma(q)=\{i\}$. Set $K:=K_0+2$. Finally, if $j\ne i$ is
adjacent to $i$ in $\mathcal G_\sigma$, then
$\widetilde{\mathcal N}_\sigma(i)$ and $\widetilde{\mathcal N}_\sigma(j)$
share a coordinate $q$, which is a data coordinate since each register
belongs to a single site, and $j\in\mathcal I_\sigma(q)$. There are at
most $K$ choices of $q$ and at most $K$ choices of $j$ for each, so the
degree of $\mathcal G_\sigma$ is at most $K^2$.
\end{proof}

We first show that a site at which the cycle is ideal, in the sense of
Subsection~\ref{subsec:cycle-mechanism}, performs exactly its SPCA
update. In terms of the clocks of Claim~\ref{clm:clock-representation},
the ideal cycle is the following.

\begin{lemma}[Deterministic correctness of one cycle]
\label{lem:deterministic-correctness}
Fix a cycle $r$ and a site $i$. Suppose that during the cycle
$[r\Delta,(r+1)\Delta)$ each correction clock of $i$ ticks at least once in its
stage, and no noise clock of a coordinate in $\widetilde{\mathcal N}_\sigma(i)$
ticks. Then $\zeta_{r,i}^{(\sigma)}=0$, that is,
\begin{equation}
    \hat\xi_{r+1,i}^{(\sigma)}
    =
    \hat\xi_{r,i}^{(\sigma)}
    \oplus
    a_i^{(\sigma)}\!\left(H_\sigma\hat\xi_r^{(\sigma)}\right).
    \label{eq:local-cycle-correctness}
\end{equation}
No assumption is made on the values of $D_i^\sigma$ and $E_i^\sigma$ at time
$r\Delta$.
\end{lemma}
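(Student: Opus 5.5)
We follow the clock representation of Claim~\ref{clm:clock-representation} and trace the state of the coordinates in $\widetilde{\mathcal N}_\sigma(i)$ stage by stage through the cycle $[r\Delta,(r+1)\Delta)$. The first step is to record which jumps can change these coordinates. By hypothesis no noise clock of a coordinate in $\widetilde{\mathcal N}_\sigma(i)$ ticks. The only correction jumps acting on the registers $D_i^\sigma,E_i^\sigma$ are the four operations of site $i$ itself. The only correction jumps acting on a data coordinate $q$ are Execute jumps of site $q$. Such jumps occur only in the Execute stage.

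\textbf{Reset stage.} Both reset clocks of $i$ tick at least once. Consider the last tick of the Reset-$D$ clock in this stage. Either $D_i^\sigma$ was already $0$, or this tick (or an earlier one) sets it to $0$. Once $D_i^\sigma=0$, nothing in the Reset stage can make it $1$, since no other jump touches it and there is no register noise. Hence $D_i^\sigma=E_i^\sigma=0$ at time $r\Delta+\tau$, whatever their values at $r\Delta$.

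\textbf{Read stage.} No correction jump of this stage acts on data, and no noise acts on $\mathcal N_\sigma(i)$. The data coordinates entering the checks of $\mathcal J_\sigma(i)$ are therefore constant throughout Read. The claim to establish is that they also equal their values at $r\Delta$. This is the subtle point and the main obstacle: coordinates outside $\widetilde{\mathcal N}_\sigma(i)$ may change by noise, but they do not enter the checks read by $a_i^{(\sigma)}$. Coordinates inside are changed neither by noise (hypothesis) nor by corrections during Reset and Read. Thus, from time $r\Delta$ up to the start of Execute, the syndrome bits $s_j$ with $j\in\mathcal J_\sigma(i)$ equal those of $H_\sigma\hat\xi_r^{(\sigma)}$. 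Since $a_i^{(\sigma)}$ depends only on these bits, $\Pi_i^\sigma$ evaluates to $a_i:=a_i^{(\sigma)}(H_\sigma\hat\xi_r^{(\sigma)})$ throughout Read.

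If $a_i=1$, the first Read tick sets $D_i^\sigma=1$. It then stays $1$: the Read jump is blocked by $|0\rangle\langle0|$, and there is no noise. If $a_i=0$, the Read rate vanishes and $D_i^\sigma=0$ persists. In both cases $D_i^\sigma=a_i$ and $E_i^\sigma=0$ at time $r\Delta+2\tau$.

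\textbf{Execute stage.} The Execute clock of $i$ ticks at least once. If $D_i^\sigma=1$, the first tick flips data coordinate $i$ and sets $E_i^\sigma=1$. Later ticks are blocked by the $|0\rangle\langle0|_{E_i^\sigma}$ factor, and the registers are untouched by noise. If $D_i^\sigma=0$, the rate vanishes and coordinate $i$ is not flipped by site $i$. Coordinate $i$ lies in $\mathcal N_\sigma(i)$, so no noise flips it. Other sites' Execute jumps act only on their own data coordinates. Hence coordinate $i$ flips exactly $a_i$ times over the cycle. This gives $\hat\xi_{r+1,i}^{(\sigma)}=\hat\xi_{r,i}^{(\sigma)}\oplus a_i$, i.e.\ $\zeta^{(\sigma)}_{r,i}=0$, which is \eqref{eq:local-cycle-correctness}.
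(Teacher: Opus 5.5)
Your proof is correct and follows essentially the same route as the paper's. Both trace the cycle stage by stage through Reset, Read and Execute in the clock representation, and both rest on the same key observation: since no data coordinate in $\mathcal N_\sigma(i)$ can change before Execute, the predicate is read on $H_\sigma\hat\xi_r^{(\sigma)}$, and the flag $E_i^\sigma$ then ensures coordinate $i$ flips exactly $a_i^{(\sigma)}(H_\sigma\hat\xi_r^{(\sigma)})$ times.
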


\begin{proof}
During the Reset stage the first tick of each reset clock of $i$ sets its
register to $0$, whatever its value at $r\Delta$, and without noise ticks on
the registers both stay $0$ until the Read stage. Before the Execute stage the
data change only through noise ticks, and there are none in
$\mathcal N_\sigma(i)$, which contains the supports of the checks read by
$a_i^{(\sigma)}$; hence throughout the Read stage the predicate at $i$ equals
$a_i^{(\sigma)}(H_\sigma\hat\xi_r^{(\sigma)})$. If this bit is $1$, the first Read
tick sets $D_i^\sigma=1$; otherwise $D_i^\sigma$ stays $0$. During the Execute
stage $D_i^\sigma$ is frozen and $E_i^\sigma=0$ at its start. If $D_i^\sigma=1$,
the first Execute tick flips data coordinate $i$ and sets $E_i^\sigma=1$, after
which further ticks have no effect; if $D_i^\sigma=0$, Execute ticks have no
effect. Execute ticks of other sites flip other coordinates, and no noise clock
of coordinate $i$ ticks. The net change of coordinate $i$ over the cycle is
therefore $a_i^{(\sigma)}(H_\sigma\hat\xi_r^{(\sigma)})$.
\end{proof}

In particular, a cycle in which the hypotheses of
Lemma~\ref{lem:deterministic-correctness} hold at every site has
$\zeta_r^{(\sigma)}=\mathbf 0$ and performs exactly one noiseless SPCA step.

\begin{definition}[Failure set]
\label{def:local-failure-set}
The \emph{failure set} $S_r^{(\sigma)}$ of cycle $r$ is the set of sites
at which the hypothesis of Lemma~\ref{lem:deterministic-correctness} fails: some
correction clock of $i$ does not tick in its stage of cycle $r$, or some noise
clock of a coordinate in $\widetilde{\mathcal N}_\sigma(i)$ ticks during
$[r\Delta,(r+1)\Delta)$.
\end{definition}

A correction clock misses its stage with probability $e^{-\kappa\tau}$, and
each noise clock has rate at most $2\physrate$, since its rate is a sum of two
Pauli rates. With the locality range $K$ as in Lemma~\ref{lem:locality-bound}, set
\begin{equation}
    \delta
    :=
    \min\!\left\{
        1,\,
        4e^{-\kappa\tau}
        +K\bigl(1-e^{-2\physrate\Delta}\bigr)
    \right\}.
    \label{eq:seed-probability}
\end{equation}

Throughout, the letter $p$ is reserved for fault rates of an SPCA, and $\delta$
for this uniform per-site, per-cycle failure-probability bound.

\begin{lemma}[Properties of the failure set]
\label{lem:local-failure-set}
For every cycle $r$:
\begin{enumerate}[label=(\roman*)]
    \item $\operatorname{supp}(\zeta_r^{(\sigma)})\subseteq S_r^{(\sigma)}$;
    \item $\Pr[i\in S_r^{(\sigma)}]\le \delta$ for every site $i$;
    \item $S_r^{(\sigma)}$ is independent of the restrictions of all the Poisson
    processes to $[0,r\Delta)$, and hence of the process up to time
    $r\Delta$;
    \item the indicators of $S_r^{(\sigma)}$ on two disjoint sets of sites with
    no edge of $\mathcal G_\sigma$ between them are independent.
\end{enumerate}
\end{lemma}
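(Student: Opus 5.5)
The plan is to work entirely in the clock representation of Claim~\ref{clm:clock-representation}. There the induced sector process is a deterministic function of the initial state $(\mathbf 0,\mathbf 0)$ together with a family of independent Poisson clocks, namely the four correction clocks per site and one noise clock per data coordinate and per register. The failure set $S_r^{(\sigma)}$ of Definition~\ref{def:local-failure-set} is then, by construction, a function only of the restrictions of these clocks to the window $[r\Delta,(r+1)\Delta)$. Whether site $i$ fails is decided by two kinds of clocks: the correction clocks of $i$, and the noise clocks of the coordinates in $\widetilde{\mathcal N}_\sigma(i)$. Each of the four items will follow from this single structural observation.

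Item (i) is the contrapositive of Lemma~\ref{lem:deterministic-correctness}. If $i\notin S_r^{(\sigma)}$, the hypotheses of that lemma hold at $i$, so $\zeta_{r,i}^{(\sigma)}=0$. That lemma makes no assumption on the register values at $r\Delta$ or on the behaviour of other sites, so it applies pathwise and (i) needs nothing more.

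For item (ii) I will use a union bound. There are four events of the form ``correction clock of $i$ misses its stage,'' and each has probability $e^{-\kappa\tau}$. The remaining events are ``noise clock of $q$ ticks in the cycle'' for $q\in\widetilde{\mathcal N}_\sigma(i)$. There are at most $K$ such $q$ by Lemma~\ref{lem:locality-bound}. Each noise clock has rate at most $2\physrate$: for data this is $\eta_q^{(\sigma)}=\gamma_{Z,q}+\gamma_{Y,q}$ or its $Z$-sector analogue, and for registers it is $\nu_q$. So each ticks during a window of length $\Delta$ with probability at most $1-e^{-2\physrate\Delta}$. Summing gives the bound $\delta$ of Eq.~\eqref{eq:seed-probability}, and the cap at $1$ is trivial.

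Item (iii) follows from the independent-increments property of Poisson processes. The restrictions of every clock to $[0,r\Delta)$ and to $[r\Delta,(r+1)\Delta)$ are independent, and the clocks are mutually independent. Since $S_r^{(\sigma)}$ is measurable with respect to the latter restrictions, it is independent of the former. The process up to time $r\Delta$ is a measurable function of the initial state and the clocks on $[0,r\Delta)$, because the rule ``a tick acts if allowed'' is deterministic given the path. Hence $S_r^{(\sigma)}$ is independent of that process as well.

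Item (iv) is the step that needs care, because I must verify that non-adjacency in $\mathcal G_\sigma$ really yields disjoint sets of clocks. Let $A,B$ be disjoint sets of sites with no edge between them. The indicators on $A$ depend only on two families: the correction clocks of sites in $A$, and the noise clocks of coordinates in $\bigcup_{i\in A}\widetilde{\mathcal N}_\sigma(i)$; the same holds for $B$. The correction clocks are distinct because $A\cap B=\varnothing$. The noise-clock sets are disjoint because, for $i\in A$ and $j\in B$, the absence of an edge means exactly that $\widetilde{\mathcal N}_\sigma(i)\cap\widetilde{\mathcal N}_\sigma(j)=\varnothing$. Thus the two collections of indicators are functions of disjoint families of independent clocks, and are therefore independent. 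I expect no genuine obstacle here; the only thing to double-check is that no clock outside $\widetilde{\mathcal N}_\sigma(i)$ enters the definition of failure at $i$, which is immediate from Definition~\ref{def:local-failure-set}.
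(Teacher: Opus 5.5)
Your proposal is correct and follows the paper's proof essentially step for step. Item (i) is the contrapositive of Lemma~\ref{lem:deterministic-correctness}. Item (ii) is the union bound over the four correction clocks and at most $K$ noise clocks, each of rate at most $2\physrate$. Items (iii) and (iv) use the fact that $\{i\in S_r^{(\sigma)}\}$ is determined by the correction clocks of $i$ and the noise clocks of $\widetilde{\mathcal N}_\sigma(i)$, restricted to $[r\Delta,(r+1)\Delta)$, together with independence of Poisson clocks over disjoint time windows and over disjoint clock families (non-adjacency in $\mathcal G_\sigma$ meaning exactly disjoint neighbourhoods).
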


\begin{proof}
(i) This is Lemma~\ref{lem:deterministic-correctness}: at a site outside
$S_r^{(\sigma)}$ its hypothesis holds, so $\zeta_{r,i}^{(\sigma)}=0$.

(ii) The four correction clocks of $i$ each miss their stage with probability
$e^{-\kappa\tau}$. The noise clock of a coordinate has rate at most
$2\physrate$, so it ticks during the cycle with probability at most
$1-e^{-2\physrate\Delta}$, and $\widetilde{\mathcal N}_\sigma(i)$ has at most $K$
coordinates. A union bound gives $\delta$.

(iii) $S_r^{(\sigma)}$ is determined by the clocks restricted to
$[r\Delta,(r+1)\Delta)$, and restrictions of Poisson processes to disjoint time
intervals are independent. The process up to time $r\Delta$ is determined by
the restrictions to $[0,r\Delta)$.

(iv) The event $\{i\in S_r^{(\sigma)}\}$ is determined by the correction clocks
of $i$ and the noise clocks of the coordinates in
$\widetilde{\mathcal N}_\sigma(i)$. For two disjoint sets of sites with no edge
between them, these families of clocks are disjoint, and the indicators are
functions of disjoint families of independent clocks.
\end{proof}

If the sites of $S_r^{(\sigma)}$ were independent, then, after enlarging
it independently at each site to density exactly $\delta$, items
(i)--(iii) would already make it an SPCA fault set at fault rate
$\delta$, and the sampled process a run of $\Psi_\sigma$. They are not
independent: two sites whose neighborhoods share a coordinate fail together
when that coordinate is flipped by noise. The SPCA, however, allows arbitrary
deviations on its fault sites, so it suffices to find a larger set, with
independent sites, that contains $S_r^{(\sigma)}$. Because the correlations of
$S_r^{(\sigma)}$ are local and its sites are rarely in it, such a set exists
with only a slightly larger density.

\begin{proposition}[The sampled sector process is an SPCA trajectory]
\label{prop:sampled-sector-spca}
There exists a function
\begin{equation}
    \effrate:[0,1]\to[0,1],
    \qquad
    \effrate(\delta)\xrightarrow[\delta\to0]{}0,
\end{equation}
depending only on $K$, such that the sampled data process
$(\hat\xi_r^{(\sigma)})_{r\ge0}$ is a trajectory of the SPCA
$\Psi_\sigma$ at fault rate $\effrate(\delta)$. That is, there are random sets
$F_r^{(\sigma)}\subseteq\{1,\dots,n\}$, $r\ge0$, such that
\begin{enumerate}[label=(\roman*)]
    \item each $F_r^{(\sigma)}$ contains each site independently with probability
    $\effrate(\delta)$;
    \item each $F_r^{(\sigma)}$ is independent of the process up to time $r\Delta$
    together with $F_0^{(\sigma)},\dots,F_{r-1}^{(\sigma)}$;
    \item $\operatorname{supp}(\zeta_r^{(\sigma)})\subseteq F_r^{(\sigma)}$ for every
    $r$.
\end{enumerate}
\end{proposition}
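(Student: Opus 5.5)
We work cycle by cycle. By Lemma~\ref{lem:local-failure-set}(iv), the indicator field $(\mathbf 1\{i\in S_r^{(\sigma)}\})_i$ is a dependent random field with dependency graph $\mathcal G_\sigma$, of maximal degree at most $K^2$ (Lemma~\ref{lem:locality-bound}). By (ii), it has marginals at most $\delta$. We apply the Liggett--Schonmann--Stacey theorem~\cite{liggett_domination_1997} to its complement. The field $Y_i:=\mathbf 1\{i\notin S_r^{(\sigma)}\}$ is a $K^2$-dependent field with $\Pr[Y_i=1]\ge 1-\delta$. The theorem gives a function $\rho_K(\delta)\to 1$ as $\delta\to 0$, depending only on the degree bound, such that $Y$ stochastically dominates the i.i.d.\ Bernoulli$(\rho_K(\delta))$ product measure. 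We set $\effrate(\delta):=1-\rho_K(\delta)$, extended by $\effrate=1$ where needed. Equivalently, the product measure of density $\effrate(\delta)$ dominates the law of $S_r^{(\sigma)}$ in the set-inclusion order. Strassen's theorem~\cite{strassen_existence_1965} then yields a monotone coupling: a random set $F_r^{(\sigma)}\supseteq S_r^{(\sigma)}$ with i.i.d.\ sites of density $\effrate(\delta)$. The function depends only on $K$, as required.

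We must build the coupling so that (ii) holds, and this is the main obstacle. The plan is to realize $F_r^{(\sigma)}$ as a measurable function of $S_r^{(\sigma)}$ and an auxiliary uniform variable $U_r$, with the $U_r$ independent of each other and of all Poisson clocks. We use a regular conditional distribution of $F$ given $S$ from the Strassen coupling. Since $S_r^{(\sigma)}$ is a function of the clocks on $[r\Delta,(r+1)\Delta)$ (Lemma~\ref{lem:local-failure-set}(iii)), the pair $(S_r,U_r)$ is independent of the clocks on $[0,r\Delta)$ and of $(S_{r'},U_{r'})_{r'<r}$. Hence it is independent of the process up to $r\Delta$ and of $F_0^{(\sigma)},\dots,F_{r-1}^{(\sigma)}$, which are functions of those. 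The marginal law of $F_r^{(\sigma)}$ is exactly the product measure, giving (i).

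For (iii), Lemma~\ref{lem:local-failure-set}(i) gives $\operatorname{supp}\zeta_r^{(\sigma)}\subseteq S_r^{(\sigma)}\subseteq F_r^{(\sigma)}$. The SPCA model (Definition~\ref{def:syndrome-pca-classical}) allows $\zeta_r$ to depend on the full history and on $F_r$. So the sampled process, which by Definition~\ref{def:sampled-deviation} obeys $\hat\xi_{r+1}=\hat\xi_r\oplus\mathbf a^{(\sigma)}(H_\sigma\hat\xi_r)\oplus\zeta_r$ with $\hat\xi_0=\mathbf 0$, is a run of $\Psi_\sigma$ at rate $\effrate(\delta)$.

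Two checks remain. First, the SPCA definition requires $F_r$ to be independent of $\mathbf e_0,\dots,\mathbf e_r$; this holds since $\hat\xi_0,\dots,\hat\xi_r$ are determined by the process up to time $r\Delta$. Second, $\delta$ need not be exactly the marginal. Lemma~\ref{lem:local-failure-set}(ii) gives only an upper bound, and LSS needs only the lower bound $1-\delta$ on $\Pr[Y_i=1]$, so this is fine.
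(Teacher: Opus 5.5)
Your proposal is correct and follows the paper's proof essentially step for step. It applies the Liggett--Schonmann--Stacey theorem to the complement indicators $\mathbf 1\{i\notin S_r^{(\sigma)}\}$ on $\mathcal G_\sigma$ (degree at most $K^2$), realizes a Strassen monotone coupling from $S_r^{(\sigma)}$ and fresh per-cycle randomness $U_r$, gets item~(iii) from Lemma~\ref{lem:local-failure-set}(i) and item~(ii) from Lemma~\ref{lem:local-failure-set}(iii), and sets $\effrate:=1$, $F_r^{(\sigma)}=[n]$ outside the LSS range. The only quibble is terminology: what Lemma~\ref{lem:local-failure-set}(iv) provides, and what LSS needs, is $1$-dependence with respect to the degree-$K^2$ graph $\mathcal G_\sigma$, not ``$K^2$-dependence'' in the usual sense.
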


\begin{proof}[Proof idea]
The Liggett--Schonmann--Stacey theorem~\cite{liggett_domination_1997} provides,
for each cycle, a set of independent sites containing $S_r^{(\sigma)}$, built
from $S_r^{(\sigma)}$ and fresh auxiliary randomness. Item~(iii) then follows
from Lemma~\ref{lem:local-failure-set}(i), and item~(ii) from
Lemma~\ref{lem:local-failure-set}(iii). Proof deferred to
Appendix~\ref{app:pca}.
\end{proof}

Items~(i)--(iii) are exactly the SPCA fault model: the sites where a cycle may
deviate from the deterministic rule form a fresh set with independent sites
of density $\effrate(\delta)$, and the deviation on that set is
unrestricted. It may depend on the history, for instance through register
values left by an earlier cycle when a reset is missed.

\subsection{Classical memory transfer and memory bound}\label{subsubsec:memory-bound}

We now transfer the assumed SPCA threshold to the sampled sector processes.
We keep the notation of Subsection~\ref{subsec:periodic-equivalence}: for each
sector $\sigma$, $\hat\xi_r^{(\sigma)}$ is the sampled data process of the sector
process induced by $\mathcal L_{\mathrm{PCA}}(t)$, and $\effrate(\delta)$ is the
fault rate of the SPCA it realizes (Proposition~\ref{prop:sampled-sector-spca}),
a function of the failure probability $\delta$ of
Equation~\eqref{eq:seed-probability}. 

\begin{claim}[Classical memory transfer]
\label{clm:memory-transfer}
Fix a sector $\sigma$, and suppose that for every $p<\pcth$ the SPCA
$\Psi_\sigma$ at fault rate $p$ $q_\sigma(m,L)$-remembers $b_\sigma$
through round $m$ (Definition~\ref{def:syndrome-pca-classical}). If
$\effrate(\delta)<\pcth$, then the sampled data process
$q_\sigma(m,L)$-remembers $b_\sigma$ through round $m$
(Definition~\ref{def:classical-memory}).
\end{claim}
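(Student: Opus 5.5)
The claim follows almost directly from Proposition~\ref{prop:sampled-sector-spca} combined with the definition of SPCA memory. We fix the sector $\sigma$ and set $p:=\effrate(\delta)$, which by hypothesis satisfies $p<\pcth$. Proposition~\ref{prop:sampled-sector-spca} gives random sets $F_r^{(\sigma)}$ with three properties. Each contains every site independently with probability $p$. Each is independent of the process up to time $r\Delta$ and of $F_0^{(\sigma)},\dots,F_{r-1}^{(\sigma)}$. Finally, $\operatorname{supp}\zeta_r^{(\sigma)}\subseteq F_r^{(\sigma)}$. By Definition~\ref{def:sampled-deviation} we can rewrite the sampled process exactly as
\[
\hat\xi_{r+1}^{(\sigma)}=\hat\xi_r^{(\sigma)}\oplus\mathbf a^{(\sigma)}\bigl(H_\sigma\hat\xi_r^{(\sigma)}\bigr)\oplus\zeta_r^{(\sigma)} ,
\]
with $\hat\xi_0^{(\sigma)}=\xi_0^{(\sigma)}=\mathbf 0$.

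The first step is to check that this is a run of $\Psi_\sigma$ at fault rate $p$ in the sense of Definition~\ref{def:syndrome-pca-classical}. The update rule is the prescribed $\mathbf a^{(\sigma)}$. The fault locations $F_r^{(\sigma)}$ are i.i.d.\ Bernoulli$(p)$. They are independent of $\hat\xi_0,\dots,\hat\xi_r$, since these are functions of the process up to time $r\Delta$, and independent of the earlier fault sets. The fault vector $\zeta_r^{(\sigma)}$ is allowed to be arbitrary and history-dependent, subject only to $\operatorname{supp}\zeta_r^{(\sigma)}\subseteq F_r^{(\sigma)}$, and this support condition is exactly what the proposition provides. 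One point to verify here: the definition demands independence of $F_r$ from $\mathbf e_0,\dots,\mathbf e_r$ and $F_0,\dots,F_{r-1}$ only, while $\zeta_r$ may depend on the full continuous-time history and on auxiliary randomness. That is permitted, because the definition leaves $\zeta_r$ otherwise unrestricted.

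The second step applies the memory hypothesis. Since $p<\pcth$, the SPCA $\Psi_\sigma$ started at $\mathbf 0$ $q_\sigma(m,L)$-remembers $b_\sigma$ through round $m$. By definition, this means every run from $\mathbf 0$ satisfies
\[
\sup_{0\le r\le m}\Pr\bigl[b_\sigma(\hat\xi_r^{(\sigma)})\ne b_\sigma(\hat\xi_0^{(\sigma)})\bigr]\le q_\sigma(m,L).
\]
Applying this to our run gives precisely Definition~\ref{def:classical-memory} for the sampled data process with $\mathcal I=\{0,\dots,m\}$.

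The only point requiring care is the first step, namely that the probability space carrying the auxiliary randomness used in Proposition~\ref{prop:sampled-sector-spca} still realizes the sampled process with its original law. This holds because the auxiliary variables are adjoined by coupling and leave the marginal law of $(\hat\xi_r^{(\sigma)})$ unchanged, so the memory probabilities computed on the enlarged space are those of the original process.
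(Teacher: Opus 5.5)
Your proof is correct and follows the paper's argument: Proposition~\ref{prop:sampled-sector-spca} makes the sampled process a run of $\Psi_\sigma$ started at $\mathbf 0$ at fault rate $\effrate(\delta)<\pcth$, and the memory hypothesis applies to every such run. Your additional checks spell out what the paper's one-line proof leaves implicit. These are that $F_r^{(\sigma)}$ is independent of $\hat\xi_0^{(\sigma)},\dots,\hat\xi_r^{(\sigma)}$, that a history-dependent $\zeta_r^{(\sigma)}$ is allowed, and that the auxiliary coupling randomness does not change the law of the sampled process.
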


\begin{proof}
By Proposition~\ref{prop:sampled-sector-spca}, $(\hat\xi_r^{(\sigma)})_{r\ge0}$
is a run of $\Psi_\sigma$ at fault rate $\effrate(\delta)<\pcth$, started at
$\hat\xi_0^{(\sigma)}=\mathbf 0$, and the hypothesis applies to every such
run.
\end{proof}

We finish by choosing the implementation parameters so that
$\effrate(\delta)<\pcth$, and then applying the register-extended fidelity
bound.

\begin{proof}[Proof of Theorem~\ref{thm:periodic-formal}]
Let $K$ be as in Lemma~\ref{lem:locality-bound} and $\effrate$ as in
Proposition~\ref{prop:sampled-sector-spca}. Since $\effrate(\delta)\to0$ as
$\delta\to0$, there is a $\delta_c>0$, depending only on $K$ and $\pcth$, such
that $\effrate(\delta)<\pcth$ for every $\delta\le\delta_c$. By
Equation~\eqref{eq:seed-probability}, it then suffices to make
$\delta\le\delta_c$. First choose $\kappa\tau$ large enough that
$4e^{-\kappa\tau}\le\delta_c/2$. With $\kappa$ and $\tau$, hence $\Delta=3\tau$,
fixed, choose $\eta_c>0$ small enough that
\begin{equation}
    K\bigl(1-e^{-2\eta_c\Delta}\bigr)\le\frac{\delta_c}{2}.
\end{equation}
These choices are independent of $L$. Every noise rate $\physrate<\eta_c$ then
gives $\delta\le\delta_c$, hence $\effrate(\delta)<\pcth$. Applying
Claim~\ref{clm:memory-transfer} to each sector $\sigma$ gives, for every
$0\le r\le m$,
\begin{equation}
    \Pr\!\left[
    b_\sigma\!\left(\hat\xi_r^{(\sigma)}\right)
    \ne
    b_\sigma\!\left(\hat\xi_0^{(\sigma)}\right)
\right]
\le
q_\sigma(m,L).
\end{equation}
Corollary~\ref{cor:classical-register-memory} at the cycle boundary
$t_r=r\Delta$ yields
\begin{align}
    F_{\min}^{\mathrm{ext}}(t_r)
    &\ge
    1
    -\Pr\!\left[
        b_X\!\left(\hat\xi_r^{(X)}\right)
        \ne
        b_X\!\left(\hat\xi_0^{(X)}\right)
    \right]
    -\Pr\!\left[
        b_Z\!\left(\hat\xi_r^{(Z)}\right)
        \ne
        b_Z\!\left(\hat\xi_0^{(Z)}\right)
    \right]
    \nonumber\\
    &\ge 1-q_X(m,L)-q_Z(m,L).
\end{align}
for every $0\le r\le m$. By Corollary~\ref{cor:classical-register-memory},
\(F_{\min}^{\mathrm{ext}}\) is the worst-case fidelity of the effective
logical channel in which the registers are prepared in
\(|\mathbf 0\rangle\) at encoding and traced out at readout, which is the
effective logical channel of this dynamics. Hence the augmented encoded system is a quantum
\(\bigl(q_X(m,L)+q_Z(m,L)\bigr)\)-memory at the cycle boundaries in the sense of
Definition~\ref{def:quantum-memory-cycles}.

The bound does not depend on the initial register values.
Lemma~\ref{lem:deterministic-correctness} makes no assumption on the
register values at the start of a cycle, and the failure set of
Definition~\ref{def:local-failure-set}, together with the properties of
Lemma~\ref{lem:local-failure-set}, is defined by the clocks alone. A
missed Reset, which would leave a register at whatever value it held, is
itself one of the failure events. Hence the same bound holds for
registers starting in any state.

$\mathcal L_{\mathrm{PCA}}(t)$ is $R$-local by the paragraph following Definition~\ref{def:periodic-schedule}. If the CSS code family and the two
SPCA families are translationally invariant, so is the constructed
Lindbladian family.
\end{proof}

\section{The classical problem}
\label{sec:classical-problem}

We isolate here the purely classical problem that is sufficient for the
quantum-memory constructions of Corollary~\ref{cor:sync-spca-memory}
and Theorem~\ref{thm:periodic-formal}.
The problem involves only binary linear codes, local syndrome-based
dynamics, and a classical logical readout. No continuous-time dynamics
is needed.

For each system size $L$, let
\begin{equation}
    H_X\in\mathbb F_2^{m_X\times n},
    \qquad
    H_Z\in\mathbb F_2^{m_Z\times n},
\end{equation}
with $n=n(L)$, satisfy
\begin{equation}
    H_XH_Z^T=0.
\end{equation}
Write
\begin{equation}
    \mathcal C_X:=\ker H_X,
    \qquad
    \mathcal C_Z:=\ker H_Z,
    \qquad
    \Omega_X=\Omega_Z:=\mathbb F_2^n.
\end{equation}
The bit and parity-check locations are equipped with a bounded-degree
interaction geometry, uniform in $L$, and the parity checks have uniformly
bounded spatial support.

We first specify the classical logical degree of freedom to be protected.
Choose
\begin{equation}
    \ell_X\in\mathcal C_Z,
    \qquad
    \ell_Z\in\mathcal C_X,
    \qquad
    \ell_X\cdot\ell_Z=1.
\end{equation}
These vectors define conjugate binary logical coordinates
\begin{equation}
    \mu_X:
    \mathcal C_X/\operatorname{row}(H_Z)\longrightarrow\mathbb F_2,
    \qquad
    \mu_X([\mathbf c]_X):=\ell_X\cdot\mathbf c,
\end{equation}
and
\begin{equation}
    \mu_Z:
    \mathcal C_Z/\operatorname{row}(H_X)\longrightarrow\mathbb F_2,
    \qquad
    \mu_Z([\mathbf c]_Z):=\ell_Z\cdot\mathbf c.
\end{equation}
The condition $\ell_X\cdot\ell_Z=1$ ensures that the two coordinates
correspond to the two conjugate logical directions of the same logical
degree of freedom.

For each sector $\sigma\in\{X,Z\}$, fix a readout decoder
\begin{equation}
    \mathsf{Dec}_\sigma:
    \Sigma_\sigma\longrightarrow\Omega_\sigma,
    \qquad
    H_\sigma\mathsf{Dec}_\sigma(s)=s,
    \qquad
    \mathsf{Dec}_\sigma(\mathbf0)=\mathbf0,
\end{equation}
where $\Sigma_\sigma=\operatorname{im}H_\sigma$. The corresponding
decoded logical value is
\begin{equation}
    b_\sigma(\mathbf e)
    :=
    \mu_\sigma\!\left(
        \left[
            \mathbf e
            \oplus
            \mathsf{Dec}_\sigma(H_\sigma\mathbf e)
        \right]_\sigma
    \right),
    \qquad
    \mathbf e\in\Omega_\sigma.
    \label{eq:classical-problem-b}
\end{equation}
The decoder is used only to define the readout criterion and need not
be local.

The dynamical object to be constructed is instead local. For each sector
$\sigma$, choose an $R$-local syndrome-based update
\begin{equation}
    \mathbf a^{(\sigma)}
    =
    \bigl(
        a_1^{(\sigma)},\ldots,a_n^{(\sigma)}
    \bigr),
\end{equation}
where each $a_i^{(\sigma)}$ depends only on syndrome bits whose parity
checks lie within a fixed distance $R$ of site $i$, with $R$ independent
of $L$. At stochastic-adversarial fault rate $p$, the resulting SPCA
trajectory is
\begin{equation}
    \mathbf e_{r+1}^{(\sigma)}
    =
    \mathbf e_r^{(\sigma)}
    \oplus
    \mathbf a^{(\sigma)}
        \!\left(H_\sigma\mathbf e_r^{(\sigma)}\right)
    \oplus
    \zeta_r^{(\sigma)}.
    \label{eq:classical-problem-spca}
\end{equation}
At each round $r$, a fault set
$F_r^{(\sigma)}\subseteq\{1,\ldots,n\}$ is sampled by including every
site independently with probability $p$, independently of the process
through round $r$ and of all previous fault sets. The deviation
$\zeta_r^{(\sigma)}$ may otherwise be arbitrary, including
history-dependent, subject only to
\begin{equation}
    \operatorname{supp}\zeta_r^{(\sigma)}
    \subseteq
    F_r^{(\sigma)}.
\end{equation}
This is precisely the stochastic-adversarial SPCA fault model of
Definition~\ref{def:syndrome-pca-classical}.

The classical problem is to construct such local update rules in both
sectors and prove a nonzero fault threshold. Concretely, one seeks a
constant
\begin{equation}
    \pcth>0,
\end{equation}
independent of $L$, and retention bounds
$q_X(m,L),q_Z(m,L)$ such that, for every $p<\pcth$, every sector
$\sigma\in\{X,Z\}$, and every run of
Eq.~\eqref{eq:classical-problem-spca} initialized at
\begin{equation}
    \mathbf e_0^{(\sigma)}=\mathbf0,
\end{equation}
one has
\begin{equation}
    \max_{0\le r\le m}
    \Pr\!\left[
        b_\sigma\!\left(\mathbf e_r^{(\sigma)}\right)
        \ne
        b_\sigma\!\left(\mathbf e_0^{(\sigma)}\right)
    \right]
    \le
    q_\sigma(m,L).
    \label{eq:classical-problem-retention}
\end{equation}
Since $b_\sigma(\mathbf0)=0$, this is equivalently
\begin{equation}
    \max_{0\le r\le m}
    \Pr\!\left[
        b_\sigma\!\left(\mathbf e_r^{(\sigma)}\right)=1
    \right]
    \le
    q_\sigma(m,L).
\end{equation}
The desired memory-time regime is one in which
$q_\sigma(m,L)\to0$ as $L\to\infty$ while the admissible number of
rounds $m=m(L)$ diverges.

This formulation deliberately starts the SPCA from the vacuum
$\mathbf0$. The variable $\mathbf e^{(\sigma)}$ is the error
configuration relative to the initially stored codeword, rather than the
stored codeword itself. By the codeword-translation invariance established
in Claim~\ref{clm:translation-invariance}, this normalization loses no
information about retention of an arbitrary initial logical value. Thus
the classical input required by the quantum construction is exactly the
vacuum stability statement~\eqref{eq:classical-problem-retention}; one
does not separately need to prove stability for every initial codeword.

Solving this classical problem for both sectors is the entire classical
input to both quantum constructions. In the synchronous setting, 
Corollary~\ref{cor:sync-spca-memory} lifts the two update rules to a local
syndrome-based quantum channel whose induced sector processes, under Pauli
noise of rate $\eta$, are runs of the same SPCAs at fault rate $2\eta$; a
nonzero $\pcth$ satisfying Eq.~\eqref{eq:classical-problem-retention}
therefore gives the physical threshold $\eta<\pcth/2$ directly. In the
asynchronous setting, Section~\ref{sec:periodic} implements the same rules
by a periodic syndrome-based Lindbladian, and
Proposition~\ref{prop:sampled-sector-spca} shows that its sector processes,
sampled at cycle boundaries, are runs of the same SPCAs at an effective
fault rate that can be made arbitrarily small by weakening the physical
noise and choosing the implementation parameters appropriately; 
Theorem~\ref{thm:periodic-formal} then again yields a nonzero physical
noise threshold. In both cases the quantum memory error is
$q_X(m,L)+q_Z(m,L)$, at cycle boundaries in the periodic case.

\section*{Acknowledgments}
We thank Shankar Balasubramanian, Tal Barak,  
Margarita Davydova and Ethan Lake for useful discussions.    
D.A. acknowledges the generous support of the advanced ERC grant number 101201649. 
The research was supported by the ISF Quantum Science and Technology (2074/19) and by the VATAT Research Focus Centers National Quantum Science and Technology Program. Grant no. 134814

\newpage
\appendix

\section{Supporting Proofs for Section~\ref{sec:setup}}
\label{app:setup}
\begin{proof}[Proof of Claim~\ref{clm:translation-invariance}]
By definition of the decoder (Definition~\ref{def:readout-decoder}) the strings $\mathbf x$ and
$T_{\mathbf w}(\mathbf x)$ carry the same syndrome. The transition rates of
Definition~\ref{def:ctmc} depend on the string only through its syndrome, and
the noise rates $\eta_i^{(X)}$ not at all, so the rate of a flip at site $i$
agrees at the two strings. Using this in Eq.~\eqref{eq:ctmc-generator},
together with
$T_{\mathbf w}(\mathbf x)\oplus \hat{e}_i=T_{\mathbf w}(\mathbf x\oplus \hat{e}_i)$,
\begin{align}
    \bigl(G_X(t)h_X\bigr)\bigl(T_{\mathbf w}(\mathbf x)\bigr)
    &=
    \sum_i
    \left[
        r_i^{(X)}\bigl(H_XT_{\mathbf w}(\mathbf x),t\bigr)+\eta_i^{(X)}
    \right]
    \bigl(
        h_X(T_{\mathbf w}(\mathbf x)\oplus \hat{e}_i)
        -h_X(T_{\mathbf w}(\mathbf x))
    \bigr)
    \nonumber\\
    &=
    \sum_i
    \left[
        r_i^{(X)}(H_X\mathbf x,t)+\eta_i^{(X)}
    \right]
    \bigl(
        (h_X\circ T_{\mathbf w})(\mathbf x\oplus \hat{e}_i)
        -(h_X\circ T_{\mathbf w})(\mathbf x)
    \bigr)
    \nonumber\\
    &=
    \Bigl(G_X(t)\bigl(h_X\circ T_{\mathbf w}\bigr)\Bigr)(\mathbf x),
\end{align}
which is Eq.~\eqref{eq:generator-intertwining}. Taking $\mathbf w=\mathbf w^X_0$, let $\widetilde G_X(t)$ denote the generator
of the transformed process
$\xi_t^{(X)}=T_{\mathbf w^X_0}(\chi_t^{(X)})$. Since
$T_{\mathbf w^X_0}^{-1}=T_{\mathbf w^X_0}$, for every
$h_X:\Omega_X\to\mathbb R$,
\begin{align}
\bigl(\widetilde G_X(t)h_X\bigr)(\mathbf x)
&=
\Bigl(
G_X(t)(h_X\circ T_{\mathbf w^X_0})
\Bigr)\bigl(T_{\mathbf w^X_0}(\mathbf x)\bigr)
\\
&=
\Bigl(
(G_X(t)h_X)\circ T_{\mathbf w^X_0}
\Bigr)\bigl(T_{\mathbf w^X_0}(\mathbf x)\bigr)
\\
&=
\bigl(G_X(t)h_X\bigr)(\mathbf x),
\end{align}
where the second equality is
Eq.~\eqref{eq:generator-intertwining}. Hence
$\widetilde G_X(t)=G_X(t)$.

Moreover,
$
    \xi_0^{(X)}
    =
    T_{\mathbf w^X_0}(\chi_0^{(X)})
    =
    T_{\mathbf w^X_0}(\mathbf w^X_0)
    =
    \mathbf 0.
$
Thus the relative process has generator $G_X(t)$ and therefore the same
backward transition operators $M^{(X)}(t,t')$ as the original sector
process. Its deterministic initial state is $\mathbf 0$, independently of
$\mathbf w^X_0$. Hence its law is independent of the initially stored codeword.
\end{proof}

\begin{proof}[Proof of Claim~\ref{clm:sync-locality-H}]
Fix \(A\) and write \(N:=N_R(A)\). Suppose
\(\mathcal Q^*(O_A\otimes\mathbb I_{\bar A})=\tilde O\otimes\mathbb I_{\bar N}\)
for every \(O_A\). Fix a state \(\omega\) on \(\mathcal H_{\bar N}\) and
set
\(\mathcal Q_{A*}(\tau):=\operatorname{Tr}_{\bar A}\mathcal Q_*(\tau\otimes\omega)\).
For every state \(\rho\),
\(
    \operatorname{Tr}\bigl[O_A\operatorname{Tr}_{\bar A}\mathcal Q_*(\rho)\bigr]
    =
    \operatorname{Tr}\bigl[\tilde O\operatorname{Tr}_{\bar N}\rho\bigr]
\),
which depends on \(\rho\) only through \(\operatorname{Tr}_{\bar N}\rho\).
Applying this to \(\rho\) and to
\(\operatorname{Tr}_{\bar N}\rho\otimes\omega\) gives
Equation~\eqref{eq:sync-locality-S}. Conversely, if
Equation~\eqref{eq:sync-locality-S} holds, then
\(X\mapsto\operatorname{Tr}[\mathcal Q^*(O_A\otimes\mathbb I_{\bar A})X]\)
coincides on states, and hence by linearity on all operators, with
\(X\mapsto\operatorname{Tr}[O_A\,\mathcal Q_{A*}(\operatorname{Tr}_{\bar N}X)]\).
It therefore vanishes on \(\ker\operatorname{Tr}_{\bar N}\), whose
Hilbert--Schmidt orthogonal complement consists of the operators
supported on \(N\).
\end{proof}

\section{Supporting Proofs for Section~\ref{sec:async}}
\label{app:reduction}

\begin{proof}[Proof of Lemma~\ref{lem:algebraic-form}]
We evaluate the $X$-sector observable. Substituting the decoder error scalars from Definition~\ref{def:decoder-signs} into Definition~\ref{def:decoded-observables} yields:
\begin{equation}
    T_X = \sum_{s \in \Sigma_X} \Pi_s^X \varphi_X(s) \bar{X} \Pi_s^X.
\end{equation}
Because $\bar{X}$ is an $X$-type Pauli logical, it commutes with every 
$X$-stabilizer generator, and hence with every element of $\mathcal{A}_X$, 
including the syndrome projectors $\Pi_s^X$. The operator $\bar{X}$ factors out of the summation:
\begin{equation}
    T_X = \bar{X} \sum_{s \in \Sigma_X} \varphi_X(s) \Pi_s^X.
\end{equation}
Recognizing the sum $\sum_{s \in \Sigma_X} \varphi_X(s)\,\Pi_s^X$ as the 
operator $\varphi_X$ from Definition~\ref{def:decoder-signs} establishes 
$T_X = \bar{X}\varphi_X$. The $Z$-sector proof follows by symmetry.
\end{proof}

\begin{lemma}[The stabilizer algebra commutes with the sector subspaces]
\label{lem:hamiltonian-invariance}
Every $F\in\mathcal A$ satisfies $[F,T]=0$ for all $T\in\mathcal V_X\cup\mathcal V_Z$.
In particular, the Hamiltonian term of a syndrome-based Lindbladian vanishes on
$\mathcal V_X$ and $\mathcal V_Z$.
\end{lemma}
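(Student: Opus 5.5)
The plan is to reduce everything to commutation of Pauli strings, using linearity in both arguments. Both $\mathcal A$ and $\mathcal V_X\cup\mathcal V_Z$ are spanned by products of simple generators. $\mathcal A=\mathrm{Alg}(\mathcal S_X\cup\mathcal S_Z)$ is spanned by finite products of stabilizer generators $S_a^X,S_b^Z$. Every element of $\mathcal V_X$ is a linear combination of operators $\bar X A$ with $A$ a product of $X$-stabilizer generators, and similarly every element of $\mathcal V_Z$ is a combination of $\bar Z B$. Since the commutator is bilinear, it suffices to show that each stabilizer generator $S\in\mathcal S_X\cup\mathcal S_Z$ commutes with $\bar X A$ and with $\bar Z B$ for such monomials. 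Commuting with the generators then gives commuting with all of $\mathcal A$, because the set of operators commuting with a fixed $T$ is itself an algebra.

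For a generator $S$ and $T=\bar X A$, I would use the identity $[S,\bar X A]=[S,\bar X]A+\bar X[S,A]$. The second commutator vanishes because $\mathcal A$ is Abelian (Definition~\ref{def:css-code}) and $A\in\mathcal A_X\subseteq\mathcal A$. For the first commutator, I use the fact already employed in Claim~\ref{clm:logical-values-valid} and Lemma~\ref{lem:async-pauli-covariance}: the bare logical $\bar X$ commutes with every stabilizer. Concretely, $H_Z\ell_X=\mathbf 0$ gives commutation with every $S_b^Z$, and both $\bar X$ and $S_a^X$ are $X$-type, so they commute trivially. The same argument applies to $\bar Z B$ with $X\leftrightarrow Z$.

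The Hamiltonian statement then follows directly. Since $H(t)\in\mathcal A$ is Hermitian (Definition~\ref{def:syndrome-based-lindbladian}), $i[H(t),O]=0$ for every $O\in\mathcal V_X\cup\mathcal V_Z$. There is no real obstacle here. The only point that needs care is justifying that $\bar X$ lies in the stabilizer centralizer. I would cite the encoding condition (Definition~\ref{def:logical-subsystem}): $\bar X$ acts as a logical operator that preserves the code space. Together with the standard CSS fact that a Pauli string preserving the code space must commute with all stabilizers, this gives the required $H_Z\ell_X=\mathbf 0$.
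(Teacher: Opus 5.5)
Your proposal is correct and follows the paper's proof essentially verbatim: the generators of $\mathcal A$ commute with $A\in\mathcal A_X$ because $\mathcal A$ is Abelian, and with $\bar X$ because the bare logical lies in the stabilizer centralizer. This extends to all of $\mathcal A$ and, by linearity, to all of $\mathcal V_X$ (and symmetrically $\mathcal V_Z$). The one addition is that you derive the centralizer property from the encoding condition, whereas the paper takes it as part of what it means for $\bar X,\bar Z$ to be CSS logical Pauli strings; for $k>1$ your argument only needs $\bar X$ to map some nonzero code vector into the code space, which the encoding condition does provide.
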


\begin{proof}
Take $T=\bar XA\in\mathcal V_X$ with $A\in\mathcal A_X$. The algebra $\mathcal A$ is
generated by the stabilizers in $\mathcal S_X\cup\mathcal S_Z$. These commute with one
another, hence with $A$, and they commute with $\bar X$, since the bare logical lies
in the centralizer of the stabilizer group. Hence every $F\in\mathcal A$ commutes
with $T$. The $\mathcal V_Z$ case follows by symmetry.
\end{proof}

\begin{proof}[Proof of Lemma~\ref{lem:jump-action}]
We prove the $X$-sector statement. Write $O_X=\bar XA$ with $A\in\mathcal A_X$. We
first show that $P_iO_XP_i\in\mathcal V_X$. For $P\in\{I,X\}$, $P_i$ commutes with
$\bar X$ and with $A$, so $P_iO_XP_i=O_X$. For $P=Z$, $Z_i\bar XZ_i=\pm\bar X$ and
$Z_iAZ_i\in\mathcal A_X$, since conjugation by $Z_i$ only changes the signs of the
$X$-stabilizer generators supported on qubit $i$. For $P=Y$, writing $Y_i=iX_iZ_i$ and
using that $X_i$ commutes with $O_X$ gives $Y_iO_XY_i=Z_iO_XZ_i$. By
Lemma~\ref{lem:hamiltonian-invariance}, $f$ and $f^\dagger$ commute with both $O_X$ and
$P_iO_XP_i$. Since $P_i$ is a Hermitian unitary, $(P_if)^\dagger(P_if)=f^\dagger f$,
and Equation~\eqref{eq:dissipator-adjoint} gives
\begin{equation}
    \mathcal D[P_if]^*(O_X)
    =
    f^\dagger P_iO_XP_i f-\tfrac12\bigl\{f^\dagger f,O_X\bigr\}
    =
    f^\dagger f\,\bigl(P_iO_XP_i-O_X\bigr).
\end{equation}
Substituting the three cases for $P_iO_XP_i$ yields the claim. The $Z$-sector
statement follows by exchanging $X$ and $Z$.
\end{proof}

\begin{proof}[Proof of Lemma~\ref{lem:pauli-conjugation-translation}]
We prove the $X$-sector statement. Conjugation by $Z_i$ gives 
$Z_iOZ_i = Z_i\bar{X}AZ_i$. Using the bare logical commutation relation 
$Z_i\bar{X}Z_i = (-1)^{(\ell_X)_i}\bar{X}$ and defining 
$A^{(i)} := Z_iAZ_i \in \mathcal{A}_X$:
\begin{equation}
    Z_iOZ_i = \bar{X}\bigl((-1)^{(\ell_X)_i}A^{(i)}\bigr) \in \mathcal{V}_X.
\end{equation}
Applying $\Phi_X$:
\begin{equation}
    [\Phi_X(Z_iOZ_i)](\mathbf{e}_X) 
    = x_{\mathrm{bare}}(\mathbf{e}_X)(-1)^{(\ell_X)_i}A^{(i)}(H_X\mathbf{e}_X).
\end{equation}
Since $Z_i$ anticommutes with each $X$-stabilizer generator supported on 
qubit $i$, conjugation by $Z_i$ shifts the syndrome argument by $H_X\hat{e}_i$, where $\hat{e}_i \in \mathbb{F}_2^n$ denotes the $i$-th standard basis vector:
\begin{equation}
    A^{(i)}(H_X\mathbf{e}_X) = A(H_X\mathbf{e}_X \oplus H_X\hat{e}_i) 
    = A(H_X(\mathbf{e}_X \oplus \hat{e}_i)).
\end{equation}
Concurrently, $x_{\mathrm{bare}}(\mathbf{e}_X \oplus \hat{e}_i) 
= (-1)^{(\ell_X)_i}x_{\mathrm{bare}}(\mathbf{e}_X)$. Substituting both 
identities reconstructs the shifted functional:
\begin{equation}
    [\Phi_X(Z_iOZ_i)](\mathbf{e}_X) 
    = x_{\mathrm{bare}}(\mathbf{e}_X \oplus \hat{e}_i)\,A(H_X(\mathbf{e}_X \oplus \hat{e}_i)) 
    = [\Phi_X(O)](\mathbf{e}_X \oplus \hat{e}_i).
\end{equation}
The $Z$-sector proof follows by symmetry.
\end{proof}

\begin{lemma}[Classical representation of normalizing-Pauli jumps]
\label{lem:normalizing-jump-repr}
Let $\sigma$ be a Hermitian Pauli that normalizes the stabilizer algebra,
$\sigma\mathcal{A}_X\sigma=\mathcal{A}_X$, and satisfies $\sigma\bar X\sigma=\pm\bar X$.
Then conjugation by $\sigma$ induces a translation $\pi_\sigma:\Omega_X\to\Omega_X$
with $\Phi_X(\sigma O\sigma)=\Phi_X(O)\circ\pi_\sigma$ for every $O\in\mathcal{V}_X$.
Consequently, for any syndrome function $f\in\mathcal{A}_X$ the adjoint dissipator acts on
$\mathcal{V}_X$ as
\begin{equation}
    \mathcal{D}[\sigma f]^*(O)=\bigl(f^\dagger f\bigr)\bigl(\sigma O\sigma-O\bigr)\in\mathcal{V}_X,
\end{equation}
and its image under $\Phi_X$ is the single Markov transition
\begin{equation}
    \Phi_X\!\bigl(\mathcal{D}[\sigma f]^*(O)\bigr)(\mathbf{e})
    =\bigl(f^\dagger f\bigr)(H_X\mathbf{e})\,
    \bigl[\Phi_X(O)(\pi_\sigma\mathbf{e})-\Phi_X(O)(\mathbf{e})\bigr].
\end{equation}
\end{lemma}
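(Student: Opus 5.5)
The plan is to mirror the proofs of Lemma~\ref{lem:pauli-conjugation-translation} and Lemma~\ref{lem:jump-action}, replacing $Z_i$ by a general normalizing Pauli $\sigma$. The argument has three steps.

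\textbf{Step 1: construct $\pi_\sigma$.} First I will identify the translation. Since $\sigma$ is a Pauli string, it either commutes or anticommutes with each $X$-stabilizer generator $S_a^X$. Let $\mathbf t_\sigma\in\mathbb F_2^{m_X}$ record the anticommutations. Then $\sigma\Pi_s^X\sigma=\Pi_{s\oplus\mathbf t_\sigma}^X$, so for $A\in\mathcal A_X$ the eigenvalue function satisfies $(\sigma A\sigma)(s)=A(s\oplus\mathbf t_\sigma)$.

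For this to be a translation on $\Omega_X$, I need $\mathbf t_\sigma\in\Sigma_X$, say $\mathbf t_\sigma=H_X\mathbf v_\sigma$. I also need the sign $\sigma\bar X\sigma=\pm\bar X$ to be reproduced by $x_{\mathrm{bare}}$. I will take $\mathbf v_\sigma$ to be the $Z$-part of $\sigma$: write $\sigma\propto X^{\mathbf a}Z^{\mathbf v}$. Commutation with $X$-type operators is determined only by $\mathbf v$. Hence $\mathbf t_\sigma=H_X\mathbf v$ and $\sigma\bar X\sigma=(-1)^{\ell_X\cdot\mathbf v}\bar X$.

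Set $\pi_\sigma(\mathbf e):=\mathbf e\oplus\mathbf v$. For $O=\bar XA$ this gives $\sigma O\sigma=\bar X\bigl((-1)^{\ell_X\cdot\mathbf v}\sigma A\sigma\bigr)\in\mathcal V_X$. Applying $\Phi_X$ yields
\[
x_{\mathrm{bare}}(\mathbf e)(-1)^{\ell_X\cdot\mathbf v}A\bigl(H_X(\mathbf e\oplus\mathbf v)\bigr)=\Phi_X(O)(\mathbf e\oplus\mathbf v),
\]
by the same algebra as in Lemma~\ref{lem:pauli-conjugation-translation}. Linearity then extends this from the spanning elements $\bar XA$ to all of $\mathcal V_X$. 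Equivalently, the identity follows by iterating Lemma~\ref{lem:pauli-conjugation-translation} over the $Z$-factors, as in Eq.~\eqref{eq:sync-multibit-translation-X}, while noting that $X$-factors act trivially on $\mathcal V_X$.

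\textbf{Step 2: the dissipator.} Next, I use that $f\in\mathcal A_X\subseteq\mathcal A$. By Lemma~\ref{lem:hamiltonian-invariance}, $f$ and $f^\dagger$ commute with every element of $\mathcal V_X$, and in particular with both $O$ and $\sigma O\sigma$. Since $\sigma$ is a Hermitian unitary, $(\sigma f)^\dagger(\sigma f)=f^\dagger f$. Substituting into Eq.~\eqref{eq:dissipator-adjoint} gives
\[
\mathcal D[\sigma f]^*(O)=f^\dagger\sigma O\sigma f-\tfrac12\{f^\dagger f,O\}=f^\dagger f(\sigma O\sigma-O).
\]
Membership in $\mathcal V_X$ then follows from Lemma~\ref{lem:stab-alg-factor}, because $f^\dagger f\in\mathcal A_X$. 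That lemma also gives the final display: multiplication by $f^\dagger f$ becomes multiplication by its eigenvalue $(f^\dagger f)(H_X\mathbf e)$, and Step 1 turns $\sigma O\sigma$ into evaluation at $\pi_\sigma\mathbf e$.

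\textbf{Main obstacle.} The only delicate point is Step 1, which must produce a genuine translation of $\Omega_X$ that is consistent with both the syndrome shift and the logical sign. Choosing $\mathbf v$ as the $Z$-support of $\sigma$ handles both simultaneously. The stated hypotheses that $\sigma$ normalizes $\mathcal A_X$ and maps $\bar X\mapsto\pm\bar X$ then hold automatically for Paulis; I will remark that they are used only to guarantee that $\sigma O\sigma\in\mathcal V_X$.
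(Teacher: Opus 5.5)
Your proposal is correct and follows essentially the same route as the paper: collapse the adjoint dissipator using $(\sigma f)^\dagger(\sigma f)=f^\dagger f$ and the fact that $\mathcal A_X$ commutes with $\mathcal V_X$, then read conjugation by $\sigma$ as a fixed shift of syndrome labels matched against the sign change of $x_{\mathrm{bare}}$, and finish with Lemma~\ref{lem:stab-alg-factor}. Your explicit choice of the translation vector as the $Z$-support $\mathbf v$ of $\sigma$ sharpens the paper's implicit $b_\sigma$. It pins down a preimage of the syndrome shift $H_X\mathbf v$ that also reproduces the logical sign $(-1)^{\ell_X\cdot\mathbf v}$; a preimage differing from $\mathbf v$ by a codeword $\mathbf c$ with $\ell_X\cdot\mathbf c=1$ would give the wrong sign.
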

Lemma~\ref{lem:pauli-conjugation-translation} is the special case $\sigma=Z_i$,
$\pi_\sigma(\mathbf{e})=\mathbf{e}\oplus \hat{e}_i$. The proof uses only that $\mathcal{A}_X$
is an abelian stabilizer algebra normalized by $\sigma$ and that $\Phi_X$ sends its
elements to their eigenvalue functions; the statement therefore holds with
$\mathcal{A}_X$ and $\Phi_X$ replaced by any abelian stabilizer algebra and its
associated representation, as we use in Section~\ref{sec:periodic}.

\begin{proof}
Since $\sigma$ is Hermitian with $\sigma^2=\mathbb{I}$ and every $f\in\mathcal{A}_X$
commutes with $\mathcal{V}_X$, the adjoint dissipator collapses to
$\mathcal{D}[\sigma f]^*(O)=f^\dagger\sigma O\sigma f-\tfrac12\{f^\dagger f,O\}
=(f^\dagger f)(\sigma O\sigma-O)$,
where $\sigma O\sigma\in\mathcal{V}_X$: writing $O=\bar X A$, normalization gives
$\sigma O\sigma=(\sigma\bar X\sigma)(\sigma A\sigma)=(\pm\bar X)(\sigma A\sigma)$ with
$\sigma A\sigma\in\mathcal{A}_X$. Conjugation by $\sigma$ flips the sign of exactly those
generators of $\mathcal{A}_X$ that it anticommutes with, and so permutes the joint
eigenspaces of $\mathcal{A}_X$ by a fixed shift of their eigenvalue labels. Under the
parametrization by which $\Phi_X$ reads off these labels, this permutation is a translation
$\pi_\sigma:\mathbf{e}\mapsto\mathbf{e}\oplus b_\sigma$ of the configuration space, so that
$(\sigma A\sigma)(H_X\mathbf{e})=A(H_X\pi_\sigma\mathbf{e})$. For a data Pauli $\sigma=Z_i$
the anticommuting generators are the $X$-stabilizers supported on qubit $i$, giving
$b_\sigma=\hat{e}_i$ and the syndrome shift $H_X\hat{e}_i$. Matching $\sigma\bar X\sigma=\pm\bar X$
against $x_{\mathrm{bare}}(\pi_\sigma\mathbf{e})=\pm x_{\mathrm{bare}}(\mathbf{e})$ then gives
$\Phi_X(\sigma O\sigma)=\Phi_X(O)\circ\pi_\sigma$, and the transition formula follows by
applying Lemma~\ref{lem:stab-alg-factor} to $f^\dagger f\in\mathcal{A}_X$.
\end{proof}

\begin{proof}[Proof of Proposition~\ref{prop:classical-repr}]
We prove the $X$-sector statement. Fix $O = \bar{X}F \in \mathcal{V}_X$. 
By Corollary~\ref{cor:sector-dynamics-general}, at any time $t$,
\begin{equation}
    \mathcal{L}^*(t)(O) = \sum_{i=1}^n
    \Gamma_i^{(X)}(t)
    \bigl(Z_iOZ_i - O\bigr),
\end{equation}
where $\Gamma_i^{(X)}(t)$ is the flip-rate operator of Equation~\eqref{eq:flip-rates},
which lies in $\mathcal{A}_X$ by the rate condition~\eqref{eq:rate-condition}.
By Lemma~\ref{lem:pauli-conjugation-translation}, $Z_iOZ_i \in \mathcal{V}_X$.
Because both $O$ and $Z_iOZ_i$ belong to $\mathcal{V}_X$, we apply 
Lemma~\ref{lem:stab-alg-factor} to each term individually:
\begin{align}
    \Phi_X\!\left(\Gamma_i^{(X)}(t)\, Z_iOZ_i\right)(\mathbf{e}_X) 
    &= \Gamma_i^{(X)}(t)(H_X\mathbf{e}_X)\, \Phi_X(Z_iOZ_i)(\mathbf{e}_X), \\
    \Phi_X\!\left(\Gamma_i^{(X)}(t)\, O\right)(\mathbf{e}_X) 
    &= \Gamma_i^{(X)}(t)(H_X\mathbf{e}_X)\, \Phi_X(O)(\mathbf{e}_X).
\end{align}
By linearity of $\Phi_X$:
\begin{equation}
    \Phi_X\!\left(\Gamma_i^{(X)}(t)(Z_iOZ_i - O)\right)(\mathbf{e}_X) 
    = \Gamma_i^{(X)}(t)(H_X\mathbf{e}_X) 
    \left[\Phi_X(Z_iOZ_i)(\mathbf{e}_X) - \Phi_X(O)(\mathbf{e}_X)\right].
\end{equation}
By Definition~\ref{def:ctmc}, $\Gamma_i^{(X)}(t)(H_X\mathbf{e}_X) = 
r_i^{(X)}(H_X\mathbf{e}_X, t)$. By Lemma~\ref{lem:pauli-conjugation-translation}, 
$\Phi_X(Z_iOZ_i)(\mathbf{e}_X) = \Phi_X(O)(\mathbf{e}_X \oplus \hat{e}_i)$. 
Substituting and summing over $i$ reproduces the classical generator:
\begin{equation}
    \Phi_X\!\left(\mathcal{L}^*(t)(O)\right)(\mathbf{e}_X) 
    = \sum_{i=1}^n r_i^{(X)}(H_X\mathbf{e}_X, t)\,
    \bigl[\Phi_X(O)(\mathbf{e}_X \oplus \hat{e}_i) - \Phi_X(O)(\mathbf{e}_X)\bigr]
    = G_X(t)\Phi_X(O).
\end{equation}

It remains to pass from generators to propagators. Differentiating
$\Phi_X\mathcal{W}^*(t,0)$ and using the linearity of $\Phi_X$
together with Eq.~\eqref{eq:q-propgator} shows it satisfies the linear
initial value problem $\frac{d}{dt}Y(t) = Y(t)\mathcal{L}^*(t)$,
$Y(0)=\Phi_X$. Differentiating $M^{(X)}(t,0)\Phi_X$ and using its own
defining equation gives
$\frac{d}{dt}\bigl(M^{(X)}(t,0)\Phi_X\bigr) = M^{(X)}(t,0)\bigl(G_X(t)\Phi_X\bigr)$,
which by the generator identity $G_X(t)\Phi_X = \Phi_X\mathcal{L}^*(t)$
just established equals $M^{(X)}(t,0)\Phi_X\,\mathcal{L}^*(t)$; so
$M^{(X)}(t,0)\Phi_X$ solves the same initial value problem. Uniqueness
of solutions implies equality for all $t \ge 0$, giving
$\Phi_X(\mathcal{W}^*(t,0)(O)) = M^{(X)}(t,0)\Phi_X(O)$ for any
$O \in \mathcal{V}_X$. The $Z$-sector statement follows by exchanging
$X$ and $Z$.
\end{proof}

\begin{proof}[Proof of Lemma~\ref{lem:vacuum-evaluation}]
We prove the \(X\)-sector statement. Write \(T_X = \bar{X}\varphi_X\) and 
\(O = \bar{X}F\). Since $\varphi_X$ and $\bar{X}$ are both Hermitian and 
commute, $T_X$ is self-adjoint. Since \(\varphi_X, F \in \mathcal{A}_X\) 
both commute with \(\bar{X}\), and \(\bar{X}^2 = \mathbbm{1}\):
\begin{equation}
    T_X^\dagger O = T_X O = \bar{X}\varphi_X \bar{X} F = \varphi_X F.
\end{equation}
Since $\rho_{\mathrm{code}}$ is supported on the code space, 
$\Pi_{\mathbf{0}}^X \rho_{\mathrm{code}} = \rho_{\mathrm{code}}$. 
For any $A \in \mathcal{A}_X$, $A\Pi_{\mathbf{0}}^X = A(\mathbf{0})\Pi_{\mathbf{0}}^X$, 
so $\mathrm{Tr}(\rho_{\mathrm{code}} A) = A(\mathbf{0})$. Applying this to 
$\varphi_X F \in \mathcal{A}_X$:
\begin{equation}
    \left\langle T_X, O \right\rangle_{\mathrm{code}} 
    = \mathrm{Tr}(\rho_{\mathrm{code}}\,\varphi_X F) 
    = \varphi_X(\mathbf{0}) F(\mathbf{0}) = F(\mathbf{0}),
\end{equation}
where we used $\varphi_X(\mathbf{0}) = 1$ 
(Definition~\ref{def:decoder-signs}). On the other hand, by 
definition of $\Phi_X$ and since $x_{\mathrm{bare}}(\mathbf{0}) = 1$ 
and $H_X\mathbf{0} = \mathbf{0}$:
\begin{equation}
    \Phi_X(O)(\mathbf{0}) = x_{\mathrm{bare}}(\mathbf{0})\,F(H_X\mathbf{0}) 
    = F(\mathbf{0}).
\end{equation}
The \(Z\)-sector proof follows by symmetry.
\end{proof}

\section{Supporting Proofs for Section~\ref{sec:sync-reduction}}
\label{app:sync}

\begin{proof}[Proof of Lemma~\ref{lem:spca-lift}]
(i) The Kraus operators \(Z^{\mathbf a(s)}\Pi^X_s\), \(s\in\Sigma_X\),
have the form of Equation~\eqref{eq:sync-kraus} with
\(\mathbf u^X=\mathbf a(s)\), \(\mathbf u^Z=\mathbf 0\), auxiliary index
\(s\), and syndrome function \(\Pi^X_s\in\mathcal A_X\subseteq\mathcal A\).
Completeness is \(\sum_{s\in\Sigma_X}\Pi^X_s=\mathbb I\). The sector
probability operators are
\begin{equation}
    \mathsf P^{(X)}_{\mathbf u}
    =
    \sum_{s\in\Sigma_X:\,\mathbf a(s)=\mathbf u}\Pi^X_s
    \in\mathcal A_X,
    \qquad
    \mathsf P^{(Z)}_{\mathbf 0}=\mathbb I,
    \qquad
    \mathsf P^{(Z)}_{\mathbf u}=0\ \ (\mathbf u\neq\mathbf 0),
\end{equation}
and \(\Pi^X_s(s')=\delta_{s,s'}\) gives the stated transition
probabilities.

(ii) We verify the condition of Claim~\ref{clm:sync-locality-H}. Fix
a set of qubits $A$ and an operator \(O\) supported on \(A\). Let \(M_1\)
be the set of \(X\)-checks supported in \(N_R(i)\) for some \(i\in A\),
let \(M_2\) be the set of \(X\)-checks whose support meets \(A\), and
let \(M:=M_1\cup M_2\). Every check in \(M_1\) is supported in
\(N_R(A)\). Every check in \(M_2\) is supported in \(N_1(A)\), since any
two qubits acted on by the same generator are adjacent. For
\(t\in\mathbb F_2^{M}\) and \(t'\in\mathbb F_2^{M^c}\), where \(M^c\) is
the set of remaining \(X\)-checks, let
\begin{equation}
    \Pi_{M,t}
    :=
    \prod_{c\in M}\frac{\mathbb I+(-1)^{t_c}S_c^X}{2},
    \qquad
    \Pi_{M^c,t'}
    :=
    \prod_{c\in M^c}\frac{\mathbb I+(-1)^{t'_c}S_c^X}{2},
\end{equation}
so that \(\Pi^X_s=\Pi_{M,s|_M}\,\Pi_{M^c,s|_{M^c}}\) for every
\(s\in\mathbb F_2^{m_X}\).

For \(i\notin A\), \(Z_i\) commutes with \(O\), so
\(Z^{\mathbf a(s)}OZ^{\mathbf a(s)}=Z^{\mathbf a_A(s)}OZ^{\mathbf a_A(s)}\),
where \(\mathbf a_A(s)\) agrees with \(\mathbf a(s)\) on \(A\) and
vanishes outside \(A\). By hypothesis there is a function
\(\mathbf g:\mathbb F_2^{M}\to\Omega_X\) with
\(\mathbf a_A(s)=\mathbf g(s|_M)\) for every \(s\in\Sigma_X\). Set
\(Y_t:=Z^{\mathbf g(t)}OZ^{\mathbf g(t)}\), which is supported on \(A\).
Since \(\Pi^X_s=0\) for \(s\notin\Sigma_X\),
\begin{equation}
    \widehat{\mathbf a}^{X*}(O)
    =
    \sum_{s\in\Sigma_X}
    \Pi^X_s\,Z^{\mathbf a(s)}OZ^{\mathbf a(s)}\,\Pi^X_s
    =
    \sum_{t\in\mathbb F_2^{M}}
    \sum_{t'\in\mathbb F_2^{M^c}}
    \Pi_{M,t}\,\Pi_{M^c,t'}\,Y_t\,\Pi_{M^c,t'}\,\Pi_{M,t}.
\end{equation}
The checks in \(M^c\) do not act on \(A\), so \(\Pi_{M^c,t'}\) commutes
with \(Y_t\) and with \(\Pi_{M,t}\). Using
\(\Pi_{M^c,t'}^2=\Pi_{M^c,t'}\) and
\(\sum_{t'}\Pi_{M^c,t'}=\mathbb I\),
\begin{equation}
    \widehat{\mathbf a}^{X*}(O)
    =
    \sum_{t\in\mathbb F_2^{M}}
    \Pi_{M,t}\,Y_t\,\Pi_{M,t},
\end{equation}
which is supported on \(A\) together with the supports of the checks in
\(M\), hence on \(N_{\max(R,1)}(A)\).
\end{proof}

\begin{proof}[Proof of Lemma~\ref{lem:sync-spca-run}]
For a Pauli string \(P=P_1\otimes\cdots\otimes P_n\), let
\(\theta(P):=\prod_{i=1}^n\theta_{P_i,i}\), with
\(\theta_{I,i}:=1-\theta_{X,i}-\theta_{Y,i}-\theta_{Z,i}\), and write
\(P=c_PX^{\mathbf v^Z(P)}Z^{\mathbf v^X(P)}\) with \(|c_P|=1\). The Kraus
operators of \(\mathcal Q_*\) are
\(
    \sqrt{\theta(P)}\,P\,
    X^{\mathbf a^Z(s')}\Pi^Z_{s'}\,
    Z^{\mathbf a^X(s)}\Pi^X_s
\),
for Pauli strings \(P\), \(s\in\Sigma_X\) and \(s'\in\Sigma_Z\). Since
\(\Pi^Z_{s'}\) commutes with \(Z\)-type Pauli strings, each equals
\begin{equation}
    X^{\mathbf v^Z(P)\oplus\mathbf a^Z(s')}\,
    Z^{\mathbf v^X(P)\oplus\mathbf a^X(s)}\,
    f,
    \qquad
    f:=c\sqrt{\theta(P)}\,\Pi^Z_{s'}\Pi^X_s\in\mathcal A,
\end{equation}
for a phase \(c\), which is the form of
Equation~\eqref{eq:sync-kraus}. Since
\(f^\dagger f=\theta(P)\,\Pi^X_s\Pi^Z_{s'}\), summing over all Kraus
operators with \(\mathbf v^X(P)\oplus\mathbf a^X(s)=\mathbf u\) and
using \(\sum_{s'\in\Sigma_Z}\Pi^Z_{s'}=\mathbb I\) gives
\begin{equation}
    \mathsf P^{(X)}_{\mathbf u}
    =
    \sum_{s\in\Sigma_X}
    \Biggl(
        \sum_{P:\ \mathbf v^X(P)=\mathbf u\oplus\mathbf a^X(s)}
        \theta(P)
    \Biggr)
    \Pi^X_s
    =
    \sum_{s\in\Sigma_X}
    \pi_X\bigl(\mathbf u\oplus\mathbf a^X(s)\bigr)\,
    \Pi^X_s
    \in\mathcal A_X,
\end{equation}
because \(v^X_i(P)=1\) exactly when \(P_i\in\{Z,Y\}\), which has
probability \(\theta_i^{(X)}\), independently over \(i\). Evaluating at
\(s\) gives Equation~\eqref{eq:sync-spca-kernel}, and the \(Z\) sector is
symmetric. By Definition~\ref{def:sync-sector-process}, the induced
\(X\)-sector process moves from \(\mathbf e\) to
\(\mathbf e\oplus\mathbf a^X(H_X\mathbf e)\oplus\mathbf v\) with
probability \(\pi_X(\mathbf v)\), independently of the past, which is
Equation~\eqref{eq:sync-spca-recursion}.
\end{proof}

\begin{proof}[Proof of Lemma~\ref{lem:pauli-diagonality}]
We work throughout in the Heisenberg picture. We prove the statements
involving \(X_Q\); the \(Z_Q\) case is identical.

We first record the covariance of the physical decoding channel. Since
\(\bar X\) commutes with every syndrome projector, and every readout
correction \(R_\sigma(s)\) is a Pauli string,
\[
    R_\sigma(s)\bar X
    =
    \pm\bar X R_\sigma(s).
\]
The two signs cancel in each term of the adjoint decoding channel, so
for every physical observable \(O\),
\begin{equation}\label{eq:physical-decoder-X-covariance}
    \mathcal E_{\mathrm{dec}}^*(\bar X O\bar X)
    =
    \bar X\,\mathcal E_{\mathrm{dec}}^*(O)\,\bar X.
\end{equation}

The physical decoder maps every syndrome sector into the code space.
Hence its adjoint depends only on the restriction of an observable to
the code space:
\begin{equation}\label{eq:decoding-code-restriction}
    \mathcal E_{\mathrm{dec}}^*(O)
    =
    \mathcal E_{\mathrm{dec}}^*
    \left(
        \Pi_{\mathrm{code}}O\Pi_{\mathrm{code}}
    \right).
\end{equation}
Using the encoding condition~\eqref{eq:encoding-condition}, for every
logical observable \(O_Q\),
\[
\begin{aligned}
&\Pi_{\mathrm{code}}
U(X_QO_QX_Q\otimes I_{\mathrm{rest}})
U^\dagger
\Pi_{\mathrm{code}}
\\
&\qquad=
\bar X\,
\Pi_{\mathrm{code}}
U(O_Q\otimes I_{\mathrm{rest}})
U^\dagger
\Pi_{\mathrm{code}}
\bar X .
\end{aligned}
\]
Equations~\eqref{eq:physical-decoder-X-covariance} and
\eqref{eq:decoding-code-restriction} therefore give
\begin{equation}\label{eq:logical-decoder-X-covariance}
    \mathcal D_{\mathrm{dec}}^*(X_QO_QX_Q)
    =
    \bar X\,
    \mathcal D_{\mathrm{dec}}^*(O_Q)\,
    \bar X.
\end{equation}

The encoding adjoint satisfies the converse relation,
\begin{equation}\label{eq:encoding-X-covariance}
    \mathcal E_{\mathrm{enc}}^*(\bar X O\bar X)
    =
    X_Q\,\mathcal E_{\mathrm{enc}}^*(O)\,X_Q.
\end{equation}
Indeed, on the initialized encoded subspace the bare logical \(\bar X\)
acts as \(X_Q\) on the designated logical qubit, by
Eq.~\eqref{eq:encoding-condition}.

Combining Eq.~\eqref{eq:logical-decoder-X-covariance}, the covariance
hypothesis~\eqref{eq:bare-logical-covariance}, and
Eq.~\eqref{eq:encoding-X-covariance}, we obtain
\begin{align}
    \mathcal E_L^*[\mathcal W_*](X_QO_QX_Q)
    &=
    \mathcal E_{\mathrm{enc}}^*
    \left(
        \mathcal W^*
        \left(
            \mathcal D_{\mathrm{dec}}^*(X_QO_QX_Q)
        \right)
    \right)
    \\
    &=
    \mathcal E_{\mathrm{enc}}^*
    \left(
        \mathcal W^*
        \left(
            \bar X\mathcal D_{\mathrm{dec}}^*(O_Q)\bar X
        \right)
    \right)
    \\
    &=
    \mathcal E_{\mathrm{enc}}^*
    \left(
        \bar X
        \mathcal W^*
        \left(
            \mathcal D_{\mathrm{dec}}^*(O_Q)
        \right)
        \bar X
    \right)
    \\
    &=
    X_Q\,
    \mathcal E_L^*[\mathcal W_*](O_Q)\,
    X_Q.
\end{align}
Similarly,
\begin{equation}
    \mathcal E_L^*[\mathcal W_*](Z_QO_QZ_Q)
    =
    Z_Q\,
    \mathcal E_L^*[\mathcal W_*](O_Q)\,
    Z_Q.
\end{equation}

Applying these identities to \(O_Q=X_Q\) shows that
\(\mathcal E_L^*[\mathcal W_*](X_Q)\) commutes with \(X_Q\) and
anticommutes with \(Z_Q\). Hence it is proportional to \(X_Q\):
\[
    \mathcal E_L^*[\mathcal W_*](X_Q)=\lambda_X X_Q.
\]
The same argument gives
\[
    \mathcal E_L^*[\mathcal W_*](Y_Q)=\lambda_Y Y_Q,
    \qquad
    \mathcal E_L^*[\mathcal W_*](Z_Q)=\lambda_Z Z_Q.
\]
Since \(\mathcal E_L[\mathcal W_*]\) is CPTP, this is a single-qubit
Pauli channel.

It remains to identify \(\lambda_X\) and \(\lambda_Z\). By Pauli
orthogonality,
\begin{align}
    \lambda_X
    &=
    \frac12
    \operatorname{Tr}_Q
    \left[
        X_Q\,
        \mathcal E_L^*[\mathcal W_*](X_Q)
    \right]
    \\
    &=
    \frac12
    \operatorname{Tr}_Q
    \left[
        X_Q\,
        \mathcal E_{\mathrm{enc}}^*
        \left(
            \mathcal W^*
            \left(
                \mathcal D_{\mathrm{dec}}^*(X_Q)
            \right)
        \right)
    \right].
\end{align}

The physical decoder maps into the code space, where
\(U(X_Q\otimes I_{\mathrm{rest}})U^\dagger\) agrees with \(\bar X\).
Hence
\[
    \mathcal D_{\mathrm{dec}}^*(X_Q)
    =
    \mathcal E_{\mathrm{dec}}^*(\bar X).
\]
Using
\[
    \mathcal E_{\mathrm{dec}}^*
    =
    \mathcal E_{\mathrm{dec}}^{(Z)*}
    \circ
    \mathcal E_{\mathrm{dec}}^{(X)*},
\]
Definition~\ref{def:decoded-observables}, and the fact that \(T_X\) commutes
with every \(\Pi^Z_{s'}\) and is invariant under conjugation by every
\(R_Z(s')\) (Remark~\ref{rem:decoder-commute}),
\[
    \mathcal D_{\mathrm{dec}}^*(X_Q)
    =
    \mathcal E_{\mathrm{dec}}^{(Z)*}(T_X)
    =
    T_X.
\]
Therefore, by adjointness of the encoding map,
\begin{equation}
    \lambda_X
    =
    \frac12
    \operatorname{Tr}
    \left[
        \mathcal E_{\mathrm{enc}}(X_Q)\,
        \mathcal W^*(T_X)
    \right].
\end{equation}
The encoding condition and Definition~\ref{def:reference-state} give
\[
    \mathcal E_{\mathrm{enc}}(X_Q)
    =
    2\rho_{\mathrm{code}}\bar X.
\]
Thus
\[
    \lambda_X
    =
    \operatorname{Tr}
    \left[
        \rho_{\mathrm{code}}\bar X\,
        \mathcal W^*(T_X)
    \right].
\]
Finally, \(\rho_{\mathrm{code}}\) is supported on the trivial
\(X\)-syndrome sector and \(R_X(\mathbf0)=\mathbb I\), so
\[
    \rho_{\mathrm{code}}T_X
    =
    \rho_{\mathrm{code}}\bar X.
\]
Hence
\[
    \lambda_X
    =
    \operatorname{Tr}
    \left[
        \rho_{\mathrm{code}}T_X\,
        \mathcal W^*(T_X)
    \right]
    =
    \bigl\langle
        T_X,\mathcal W^*(T_X)
    \bigr\rangle_{\mathrm{code}}.
\]
The \(Z_Q\) identity follows identically.
\end{proof}

\section{Supporting Proofs for the classical-register extension}
\label{app:registers}

We now prove Proposition~\ref{prop:reduction-async-formal}. We start with proving Lemma~\ref{lem:extended-jump-action} and then use it for the proof; we will make use here of the register-extensions of sector subspaces and representation maps, defined as follows:   
\begin{equation}
    \mathcal V_X^{\mathrm{ext}}
    :=
    \operatorname{span}\{\bar XF:F\in\mathcal A_X^{\mathrm{ext}}\},
    \qquad
    [\Phi_X^{\mathrm{ext}}(\bar XF)](\mathbf e,\mathbf c)
    :=
    x_{\mathrm{bare}}(\mathbf e)\,F(H_X\mathbf e,\mathbf c),
\end{equation}
extended linearly, and their $Z$-sector analogues. These are the
objects of Definitions~\ref{def:sector-subspaces}
and~\ref{def:identification-map} with the register configuration
$\mathbf c$ carried along as an additional argument.

\begin{lemma}[Action of a single register-extended jump]
\label{lem:extended-jump-action}
Let $P\in\{I,X,Y,Z\}$, let $\Theta$ be a Hermitian Pauli string on the
registers, let $f\in\mathcal A^{\mathrm{ext}}$ be a syndrome function, and let
$(\mathbf u,\mathbf v)$ be the $X$-sector flip of $P_i\Theta$. For every
$O\in\mathcal V_X^{\mathrm{ext}}$,
\begin{equation}
    \mathcal D[P_i\Theta f]^*(O)
    =
    f^\dagger f\,
    \bigl(
        Z^{\mathbf u}X^{\mathbf v}\,O\,X^{\mathbf v}Z^{\mathbf u}
        -
        O
    \bigr),
\end{equation}
and $Z^{\mathbf u}X^{\mathbf v}\,O\,X^{\mathbf v}Z^{\mathbf u}\in
\mathcal V_X^{\mathrm{ext}}$. The $Z$-sector statement is identical
with $X$ and $Z$ exchanged on the data.
\end{lemma}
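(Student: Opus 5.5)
The plan is to follow the proof of Lemma~\ref{lem:jump-action} almost verbatim, with the register string $\Theta$ carried along. Three ingredients are needed:
\begin{enumerate}[label=(\alph*)]
\item every $f\in\mathcal A^{\mathrm{ext}}$ commutes with every $O\in\mathcal V_X^{\mathrm{ext}}$;
\item $P_i\Theta$ is, up to a phase, a Hermitian unitary;
\item conjugation $O\mapsto (P_i\Theta)^\dagger O (P_i\Theta)$ acts on $\mathcal V_X^{\mathrm{ext}}$ exactly as conjugation by $Z^{\mathbf u}X^{\mathbf v}$.
\end{enumerate}
With these in hand, the adjoint dissipator in Equation~\eqref{eq:dissipator-adjoint} collapses as before. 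One has $(P_i\Theta f)^\dagger O (P_i\Theta f)=f^\dagger\,\Theta^\dagger P_i O P_i\Theta\, f$. Moving $f^\dagger$ through the conjugated operator and using $\Theta^\dagger\Theta=\mathbb I$ gives $\mathcal D[P_i\Theta f]^*(O)=f^\dagger f\,(\Theta^\dagger P_iOP_i\Theta-O)$.

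For (a), take $O=\bar XF$ with $F\in\mathcal A_X^{\mathrm{ext}}$. The algebra $\mathcal A^{\mathrm{ext}}$ is generated by the stabilizers and by the register $Z_j$, for all registers in both sectors. These generators commute with $\bar X$, which acts only on data and lies in the stabilizer centralizer. They also commute with $F$, since $F$ lies in an algebra generated by commuting stabilizers and register $Z$'s, and all of these commute with one another. This is the register analogue of Lemma~\ref{lem:hamiltonian-invariance}.

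For (b), write $\Theta=i^kX^{\mathbf v}Z^{\mathbf w}$. The phase is irrelevant under conjugation, and $\Theta$ is Hermitian by assumption.

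The substantive step is (c). I would split $\Theta$ into its four pieces $X^{\mathbf v_X}$, $X^{\mathbf v_Z}$, $Z^{\mathbf w_X}$ and $Z^{\mathbf w_Z}$, according to $\mathsf{Reg}_X$ versus $\mathsf{Reg}_Z$, and treat each one.
\begin{itemize}
\item The factors $Z^{\mathbf w}$ commute with $\bar X$ and with $\mathcal A_X^{\mathrm{ext}}$, so they act trivially.
\item The factor $X^{\mathbf v_Z}$ acts on $\mathsf{Reg}_Z$ registers. No element of $\mathcal A_X^{\mathrm{ext}}$ touches these registers, since the register sets are disjoint. Hence this factor also acts trivially.
\item The factor $X^{\mathbf v_X}$ normalizes $\mathcal A_X^{\mathrm{ext}}$, flipping the sign of $Z_j$ for each $j$ with $v_j=1$. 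It therefore keeps $O$ inside $\mathcal V_X^{\mathrm{ext}}$.
\item The data part $P_i$ is handled exactly as in Lemma~\ref{lem:jump-action}. For $P\in\{I,X\}$ it acts trivially, and for $P\in\{Z,Y\}$ it acts as $Z_i$, using $Y_i=iX_iZ_i$ and the fact that $X_i$ commutes with $O$. This yields the factor $Z^{\mathbf u}$.
\end{itemize}
Since the data and register pieces act on disjoint tensor factors, their conjugations compose and can be reordered freely. This gives $Z^{\mathbf u}X^{\mathbf v}OX^{\mathbf v}Z^{\mathbf u}$, where $\mathbf v$ denotes the sector-$X$ part $\mathbf v_X$, matching the $\sigma$-flip. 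Membership in $\mathcal V_X^{\mathrm{ext}}$ follows because conjugation by $Z_i$ and by $X^{\mathbf v_X}$ sends $\bar X$ to $\pm\bar X$ and maps $\mathcal A_X^{\mathrm{ext}}$ into itself, as in Lemma~\ref{lem:normalizing-jump-repr}.

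The only point needing care, and the expected main obstacle, is bookkeeping. One must check that the other-sector register flips $\mathbf v_Z$ drop out completely, so that the flip label is exactly $(\mathbf u,\mathbf v_X)$. The $Z$-sector statement follows by exchanging $X$ and $Z$ on the data, while keeping the register roles the same, since registers are always read in the $Z$ basis.
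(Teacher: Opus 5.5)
Your proposal is correct and follows the paper's proof essentially step for step: $\mathcal A^{\mathrm{ext}}$ commutes with $\mathcal V_X^{\mathrm{ext}}$, conjugation by $P_i\Theta$ reduces to conjugation by $Z^{\mathbf u}X^{\mathbf v}$ (the $\mathsf{Reg}_Z$ factors and the register-$Z$ factors drop out), and the adjoint dissipator collapses because $P_i\Theta$ is a Hermitian unitary. The only cosmetic difference is that you split $\Theta$ by sector as $X^{\mathbf v}Z^{\mathbf w}$, whereas the paper treats single-register factors one at a time and handles $Y_j=iX_jZ_j$ explicitly.
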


\begin{proof}
Write $O=\bar XF$ with $F\in\mathcal A_X^{\mathrm{ext}}$. First,
$\mathcal A^{\mathrm{ext}}$ commutes with $O$. The generators of
$\mathcal A$ commute with $\bar X$ and with the data part of $F$, as in
Lemma~\ref{lem:hamiltonian-invariance}. Each register operator $Z_j$
commutes with $\bar X$ and with the data part of $F$, since their
supports are disjoint, and with the register generators of
$\mathcal A_X^{\mathrm{ext}}$.

Next, conjugation by $P_i\Theta$ acts on $O$ as conjugation by
$Z^{\mathbf u}X^{\mathbf v}$. The data factor $P_i$ acts as in the proof
of Lemma~\ref{lem:jump-action}. Up to a sign, which cancels in the
conjugation, $\Theta$ is a product of single-register Paulis. Those on
registers in $\mathsf{Reg}_Z$ act on tensor factors on which $O$ is the
identity, and a factor $Z_j$ with $j\in\mathsf{Reg}_X$ commutes with $O$.
For $j\in\mathsf{Reg}_X$, $X_j$ anticommutes with $Z_j$ and commutes with
every other factor of $O$, and $Y_j=iX_jZ_j$ gives
$Y_jOY_j=X_jOX_j$ because $Z_j$ commutes with $O$. Conjugation by
$Z_i$ maps $\bar X$ to $\pm\bar X$ and $\mathcal A_X^{\mathrm{ext}}$ into
itself, and conjugation by $X_j$ maps $Z_j$ to $-Z_j$, so
$Z^{\mathbf u}X^{\mathbf v}OX^{\mathbf v}Z^{\mathbf u}\in
\mathcal V_X^{\mathrm{ext}}$.

Finally, $L:=P_i\Theta$ is a Hermitian unitary, so $(Lf)^\dagger(Lf)=f^\dagger f$,
and $f,f^\dagger$ commute with both $O$ and $LOL$ by the first step.
Equation~\eqref{eq:dissipator-adjoint} gives
$\mathcal D[Lf]^*(O)=f^\dagger LOLf-\tfrac12\{f^\dagger f,O\}
=f^\dagger f\,(LOL-O)$.
\end{proof}

\begin{proof}[Proof of Proposition~\ref{prop:classical-register-reduction}]
We prove the $X$-sector statement. Let $O\in\mathcal V_X^{\mathrm{ext}}$.
The Hamiltonian term vanishes on $O$, since
$H(t)\in\mathcal A^{\mathrm{ext}}$ commutes with $O$ by the first step of
the proof of Lemma~\ref{lem:extended-jump-action}. Applying that lemma to
every jump, the jumps with trivial $X$-sector flip vanish, and the
remaining ones group by their flip into
\begin{equation}
    \mathcal L_{\mathrm{ext}}^*(t)(O)
    =
    \sum_{(\mathbf u,\mathbf v)\ne(\mathbf 0,\mathbf 0)}
    \Gamma^{(X)}_{\mathbf u,\mathbf v}(t)
    \bigl(
        Z^{\mathbf u}X^{\mathbf v}\,O\,X^{\mathbf v}Z^{\mathbf u}
        -
        O
    \bigr).
\end{equation}
By the extended rate condition, each
$\Gamma^{(X)}_{\mathbf u,\mathbf v}(t)$ lies in the abelian algebra
$\mathcal A_X^{\mathrm{ext}}$, which commutes with $\bar X$. The proof of
Lemma~\ref{lem:stab-alg-factor} therefore applies verbatim with
$\mathcal A_X^{\mathrm{ext}}$ and $\Phi_X^{\mathrm{ext}}$: each term lies
in $\mathcal V_X^{\mathrm{ext}}$, and multiplication by
$\Gamma^{(X)}_{\mathbf u,\mathbf v}(t)$ becomes multiplication by the rate
$r^{(X)}_{\mathbf u,\mathbf v}(\mathbf e,\mathbf c,t)$ of
Equation~\eqref{eq:classical-register-rate}. Lemma~\ref{lem:normalizing-jump-repr},
applied to the Pauli $Z^{\mathbf u}X^{\mathbf v}$ with
$\mathcal A_X^{\mathrm{ext}}$ and $\Phi_X^{\mathrm{ext}}$ in place of
$\mathcal A_X$ and $\Phi_X$, gives
$\Phi_X^{\mathrm{ext}}(Z^{\mathbf u}X^{\mathbf v}OX^{\mathbf v}Z^{\mathbf u})
=\Phi_X^{\mathrm{ext}}(O)\circ\pi$ with
$\pi(\mathbf e,\mathbf c)=(\mathbf e\oplus\mathbf u,\mathbf c\oplus\mathbf v)$;
the register component of the shift holds because conjugation by $X_j$
changes the eigenvalue of $Z_j$ from $(-1)^{c_j}$ to $(-1)^{c_j\oplus1}$.
Hence $\mathcal V_X^{\mathrm{ext}}$ is invariant and
$\Phi_X^{\mathrm{ext}}\circ\mathcal L_{\mathrm{ext}}^*(t)
=G_X^{\mathrm{ext}}(t)\circ\Phi_X^{\mathrm{ext}}$ on it, and the
ODE-uniqueness argument of Proposition~\ref{prop:classical-repr} gives
\begin{equation}\label{eq:extended-intertwining}
    \Phi_X^{\mathrm{ext}}\!\left(
        \mathcal W_{\mathrm{ext}}^*(t,0)(O)
    \right)
    =
    M_X^{\mathrm{ext}}(t,0)\,\Phi_X^{\mathrm{ext}}(O).
\end{equation}

For $O=\bar XF\in\mathcal V_X^{\mathrm{ext}}$, the proof of
Lemma~\ref{lem:vacuum-evaluation} extends directly: the code state
selects the trivial data syndrome and the register state selects
$\mathbf c=\mathbf 0$, so
$\langle T_X,O\rangle_{\mathrm{code}}^{\mathrm{ext}}
=[\Phi_X^{\mathrm{ext}}(O)](\mathbf 0,\mathbf 0)$.
Moreover $[\Phi_X^{\mathrm{ext}}(T_X)](\mathbf e,\mathbf c)=\Lambda_X(\mathbf e)$
does not depend on $\mathbf c$. Applying the evaluation identity to
$O=\mathcal W_{\mathrm{ext}}^*(t,0)(T_X)$ and then
Equation~\eqref{eq:extended-intertwining} gives
$[M_X^{\mathrm{ext}}(t,0)\Lambda_X](\mathbf 0,\mathbf 0)$, which by the definition
of $M_X^{\mathrm{ext}}$ is $\mathbb E[\Lambda_X(\xi_t^{(X)})]$ for the process
started at $(\mathbf 0,\mathbf 0)$, and by Lemma~\ref{lem:logical-label-phase}
equals $\mathbb E[(-1)^{b_X(\xi_t^{(X)})}]$.
\end{proof}

The register extension preserves the product structure used by
Lemma~\ref{lem:pauli-diagonality}, so both ingredients of the fidelity bound
of Subsection~\ref{subsec:memory} carry over to it. This is the form in which
Section~\ref{sec:periodic} uses them.

\begin{proof}[Proof of Corollary~\ref{cor:classical-register-memory}]
Fix \(t'\ge0\) and let \(\widetilde{\mathcal W}_*\) be the reduced data
evolution
\begin{equation}
    \widetilde{\mathcal W}_*(\rho)
    :=
    \operatorname{Tr}_{\mathrm{reg}}\!\left[
        \mathcal W_{\mathrm{ext},*}(t',0)
        \bigl(\rho\otimes|\mathbf 0\rangle\langle\mathbf 0|_{\mathrm{reg}}\bigr)
    \right],
    \qquad
    \widetilde{\mathcal W}^*(O)
    =
    \langle\mathbf 0|\,
    \mathcal W_{\mathrm{ext}}^*(t',0)\bigl(O\otimes\mathbb I_{\mathrm{reg}}\bigr)
    \,|\mathbf 0\rangle .
\end{equation}
Preparing the registers in \(|\mathbf 0\rangle\) at encoding and tracing
them out at readout gives
\(\mathcal E_L^{\mathrm{ext}}(t')=\mathcal E_L[\widetilde{\mathcal W}_*]\).

We check that \(\widetilde{\mathcal W}^*\) satisfies
Eq.~\eqref{eq:bare-logical-covariance}. Let \(\bar P\in\{\bar X,\bar Z\}\).
The bare logical \(\bar P\) commutes with \(\mathcal A^{\mathrm{ext}}\),
hence with \(H(t)\) and with every syndrome function, and it commutes or
anticommutes with every Pauli factor \(P_i\Theta\). Hence
\(\bar PJ\bar P=\pm J\) for every jump operator \(J\) of
\(\mathcal L_{\mathrm{ext}}(t)\), and since \(\mathcal D[-J]=\mathcal D[J]\),
the generator commutes with \(\bar P\)-conjugation. Therefore
\(\mathcal W_{\mathrm{ext}}^*(t',0)\) does too, and since \(\bar P\) acts on
the data only, so does \(\widetilde{\mathcal W}^*\).

Lemma~\ref{lem:pauli-diagonality} therefore shows that
\(\mathcal E_L^{\mathrm{ext}}(t')\) is a Pauli channel, with
\[
    \lambda_X
    =
    \bigl\langle T_X,\widetilde{\mathcal W}^*(T_X)\bigr\rangle_{\mathrm{code}}
    =
    \bigl\langle T_X,\mathcal W_{\mathrm{ext}}^*(t',0)(T_X)\bigr\rangle_{\mathrm{code}}^{\mathrm{ext}},
\]
where the second equality is the definition of
\(\rho_{\mathrm{code}}^{\mathrm{ext}}\); similarly for \(\lambda_Z\).
Proposition~\ref{prop:minimal-fidelity}, applied to
\(\widetilde{\mathcal W}_*\), together with
Equation~\eqref{eq:classical-register-autocorrelation}, now gives both
directions exactly as in the proof of Theorem~\ref{thm:main-async-formal}.
\end{proof}

\section{Supporting Proofs for Section~\ref{sec:periodic}}
\label{app:pca}

\begin{proof}[Proof of Proposition~\ref{prop:sampled-sector-spca}]
Fix a cycle \(r\), and set
\begin{equation}
    X_i:=\mathbf 1\{i\notin S_r^{(\sigma)}\},
    \qquad i\in[n].
\end{equation}
By Lemma~\ref{lem:local-failure-set}(ii),
\begin{equation}
    \Pr[X_i=1]\ge 1-\delta
\end{equation}
for every \(i\). Moreover, by
Lemma~\ref{lem:local-failure-set}(iv), if
\(A,B\subseteq[n]\) are at graph distance greater than \(1\) in
\(\mathcal G_\sigma\), then the random vectors
\((X_i)_{i\in A}\) and \((X_j)_{j\in B}\) are independent.
Thus \(X=(X_i)_{i\in[n]}\) is a \(1\)-dependent Bernoulli field on
\(\mathcal G_\sigma\). By Lemma~\ref{lem:locality-bound},
\(\mathcal G_\sigma\) has degree at most \(K^2\), uniformly in \(L\).

The Liggett--Schonmann--Stacey theorem
\cite{liggett_domination_1997} therefore gives constants
\(\delta_*>0\) and a function
\begin{equation}
    \effrate:[0,\delta_*]\longrightarrow[0,1],
    \qquad
    \effrate(\delta)\xrightarrow[\delta\to0]{}0,
\end{equation}
depending only on \(K^2\), with the following property. For
\(\delta\le\delta_*\), let
\begin{equation}
    Y=(Y_i)_{i\in[n]},
    \qquad
    Y_i\overset{\mathrm{iid}}{\sim}
    \operatorname{Ber}\bigl(1-\effrate(\delta)\bigr).
\end{equation}
Then \(X\) stochastically dominates \(Y\), meaning that
\begin{equation}
    \Pr[X\in A]\ge \Pr[Y\in A]
\end{equation}
for every increasing event \(A\subseteq\{0,1\}^n\).

By Strassen's theorem~\cite{strassen_existence_1965}, \(X\) and \(Y\) may be coupled so that
\begin{equation}
    Y_i\le X_i
    \qquad\text{for every }i.
\end{equation}
We realize this coupling by sampling \(Y\), conditional
on the realized value of \(X\), using fresh randomness \(U_r\)
 independent of all the
Poisson clocks and of the auxiliary randomness used
in earlier cycles.
Define
\begin{equation}
    F_r^{(\sigma)}
    :=
    \{i:Y_i=0\}.
\end{equation}
Since \(Y\) has the product law \(\nu_\delta\), the indicators
\(\mathbf 1\{i\in F_r^{(\sigma)}\}=1-Y_i\) are independent Bernoulli
variables of parameter \(\effrate(\delta)\), proving item~(i).

Moreover, if \(i\in S_r^{(\sigma)}\), then \(X_i=0\), and hence
\(Y_i=0\) because \(Y_i\le X_i\). Therefore
\begin{equation}
    S_r^{(\sigma)}\subseteq F_r^{(\sigma)}.
\end{equation}
Combining this with Lemma~\ref{lem:local-failure-set}(i) gives
\begin{equation}
    \operatorname{supp}\zeta_r^{(\sigma)}
    \subseteq
    S_r^{(\sigma)}
    \subseteq
    F_r^{(\sigma)},
\end{equation}
which proves item~(iii).

Finally, by Lemma~\ref{lem:local-failure-set}(iii),
\(S_r^{(\sigma)}\) depends only on the driving randomness of cycle \(r\),
which is independent of the randomness from all earlier cycles. Choose
the auxiliary variables \(U_0,U_1,\ldots\) independently across cycles
and independently of the sector process. Since \(F_r^{(\sigma)}\) is
constructed from \(S_r^{(\sigma)}\) and \(U_r\) alone, it is independent
of the sector trajectory up to time \(r\Delta\) and of the earlier sets
    $F_0^{(\sigma)},\ldots,F_{r-1}^{(\sigma)}.$
This proves item~(ii).

For \(\delta>\delta_*\), define \(\effrate(\delta):=1\) and take
\(F_r^{(\sigma)}=[n]\). Then items~(i)--(iii) hold trivially.
\end{proof}

\newpage
\phantomsection
\addcontentsline{toc}{section}{\refname}
\bibliographystyle{apsrev4-1}
\bibliography{bibfile_final}
\end{document}